%% file: ys-202605-008-no-go-theorem-for-spin-boson.tex
\documentclass[11pt,a4paper]{article}

\usepackage{fontspec}
\usepackage[titletoc,title,header]{appendix}

\usepackage{amsmath}
\usepackage{amssymb}
\usepackage{amsthm}
\usepackage{mathtools}
\usepackage{xparse}    % for \NewDocumentCommand

\usepackage[margin=1in]{geometry}

\usepackage{graphicx}
\usepackage{xcolor}

\usepackage{hyperref}
\hypersetup{
  colorlinks=true,
  linkcolor=blue,
  citecolor=blue,
  urlcolor=blue,
  pdfauthor={Yoshitsugu Sekine},
  pdftitle={No-Go Theorem for Quasiparticle BEC in the Spin-Boson Model}
}

\usepackage[]{natbib}
\theoremstyle{plain}
\newtheorem{thm}{Theorem}[section]
\newtheorem{lem}[thm]{Lemma}
\newtheorem{prop}[thm]{Proposition}
\newtheorem{cor}[thm]{Corollary}

\theoremstyle{definition}
\newtheorem{defn}[thm]{Definition}
\newtheorem{ex}[thm]{Example}

\theoremstyle{remark}
\newtheorem{rem}[thm]{Remark}

\input{mycommands.tex}

\title{No-Go Theorem for Quasiparticle BEC in the Spin-Boson Model}

\author{%
Yoshitsugu Sekine\\{\small\texttt{4429sekine@gmail.com}}%
}

\date{\today}

\begin{document}

\maketitle

\begin{abstract}
We analyze the possibility of Bose-Einstein condensation (BEC) at finite temperature in the spin-boson model within the frameworks of functional integral representations and the resolvent algebra. Because a sesquilinear form arising from the zero mode appears, analogous to the case of the free Bose gas, a BEC-type component is also formally present in the spin-boson model. However, according to \cite{VIYukalov001}, quasiparticles do not undergo BEC, so we have to exclude this possibility. A no-go theorem for BEC is obtained, but we find that the spin-boson model is not a model that follows the criterion in \cite{VIYukalov001}.

\noindent\textbf{Keywords:} resolvent algebra, Bose-Einstein condensation, IR divergence, spin-boson model
\end{abstract}

\setcounter{tocdepth}{3}
\tableofcontents

\section{Introduction}\label{expedition0012078}

The spin-boson model is a basic model describing the linear interaction between a two-level quantum system and a Bose field, and it has an important role in the analysis of non-relativistic quantum field theory, dissipative two-state systems, infrared problems, and quantum phase transitions. For a massless Bose field, infinitely many low-energy quanta can contribute with finite energy, so infrared divergence directly governs the structure of ground states and KMS states. While this point is shared with the Nelson model, the spin-boson model exhibits structures different from a simple linear perturbation of a free field because internal degrees of freedom and spin-flip symmetry enter the analysis of infrared singularities and phase transitions.

For the finite-temperature spin-boson model, Fannes, Nachtergaele, and Verbeure\cite{FannesNachtergaeleVerbeure1} analyzed the one-dimensional Bose field in the framework of \(\oacstar\)-algebras and proved uniqueness of the KMS state at all temperatures and the absence of spontaneous breaking of spin-reflection symmetry. Spohn\cite{HerbertSpohn001} also defined the ground state from the zero-temperature limit of finite-temperature KMS states and, in the Ohmic case, analyzed the critical coupling, the order parameter, and the divergence of the boson number by exploiting the correspondence with the one-dimensional long-range Ising model. This line of work developed into the problem of existence and non-existence of ground states for infrared-singular spin-boson models, leading to the non-perturbative existence result of Hasler--Hinrichs--Siebert\cite{HaslerHinrichsSiebert001} and to Betz--Hinrichs--Kraft--Polzer's analysis of ground-state non-existence at large coupling and the Ising-type phase transition\cite{BetzHinrichsKraftPolzer001}.

These studies mainly focus on ground states, infrared divergence, spin degrees of freedom, and Ising-type correlations. The focus of the present paper is the zero mode of the Bose field in finite-temperature KMS states. In the free Bose gas, the infinite-volume limit together with the chemical-potential limit \(\smchemicalpotential \uparrow 0\) produces a singular sesquilinear form corresponding to the zero mode, which is tied to non-uniqueness of KMS states, the center of the representation algebra, ergodic decomposition, and off-diagonal long-range order. The resolvent algebra\cite{BuchholzGrundling2} is a natural tool for treating this type of singular direction as a \(\oacstar\)-algebraic formulation of the Bose field, and its ideal structure carries information separating physically admissible directions from directions to be removed.

This paper starts from the finite-temperature functional integral representation. The free spin part is represented by a two-state continuous-time Markov process with periodic boundary conditions, namely the spin loop measure. The free Bose field is represented by a \(\sminvtemperature\)-periodic Gaussian field. Since the interaction is a linear functional of the Bose field along a spin path, after Gaussian integration over the Bose field the quadratic part of the characteristic functional retains the same covariance as the free Bose field, while the spin degrees of freedom remain as a weight and a phase factor. This computation shows that the same zero-mode sesquilinear form \(\opform{q}_{\txtbsn,0,\sminvtemperature}\) as in the free Bose gas appears formally also in the spin-boson model.

However, the formal zero-mode factor does not immediately imply physical BEC. In the spin-boson model particle number is not conserved, and the picture in which particle number is controlled by a chemical potential as in the free Bose gas cannot be applied directly. Moreover, because of infrared divergence, the physical test-function space is restricted to the intersection with the domain of some functional \(\mathsf{m}\). As a result, one must distinguish whether a direction that appears to be a zero mode remains as a regular direction on the algebra of physical quantities, or disappears because of the absence of off-diagonal long-range order. We formulate this distinction in terms of field two-point functions, order parameters, and the ideal structure of the resolvent algebra.

The main conclusions are as follows. For the KMS state after removal of the infrared cutoff, the two-point off-diagonal long-range order detects only the zero-mode form \(\opform{q}_{\txtbsn,0,\sminvtemperature}\). If the physical test-function space can detect the zero mode, then absence of off-diagonal long-range order, vanishing of the condensate density \(\smnumberdensity_{\txtbsn,0}(\sminvtemperature)\), the order-parameter criterion, the self-consistent quasiparticle equilibrium condition for the centered field, and the zero-mode disappearance condition are equivalent. Hence in a phase where off-diagonal long-range order vanishes, the formally appearing BEC zero-mode factor becomes trivial in the characteristic functional and does not remain as an independent physical condensation direction. Note that the self-consistent quasiparticle equilibrium-state condition for the centered field corresponds to the Hamiltonian design criterion in \cite{VIYukalov001}; however, we find that the spin-boson model is not a model that follows this criterion.

Finally, we organize this conclusion in terms of the resolvent algebra. Infrared-divergent directions lie outside the physical test-function space and are removed by taking the quotient by the infrared ideal. On this physical quotient algebra, the KMS state is regular and the GNS representation is faithful. On the other hand, BEC directions are auxiliary objects recorded as directions inside the physical test-function space on which \(\opform{q}_{\txtbsn,0,\sminvtemperature}\) is positive. If the zero-mode disappearance condition holds, this set of directions reduces to the empty set and the corresponding BEC ideal also becomes trivial. This is the meaning of the no-go theorem for BEC in the finite-temperature spin-boson model proved in this paper.

\section{Main Results}\label{main-results}

\subsection{Space Setting and Hamiltonian}\label{space-setting-and-hamiltonian}

Let the basic complex Hilbert space and its real subspace be \[\sphilb{H}_{\txtbsn}
=
\fun{\lp^{2}}{\fldreal^{d}},
\quad
\sphilb{H}_{\txtbsn,\txtreal}
=
\fun{\lp_{\txtreal}^{2}}{\fldreal^{d}}
=
\fun{\lp^{2}}{\fldreal^{d};\fldreal}.\] To simplify the discussion, or when discussing BEC, we may restrict the dimension to \(d=3\); in that case this will always be stated explicitly. We set the symplectic form to be \(\sigma(f,g)=\opimag \bkt{f}{g}_{\sphilb{H}_{\txtbsn}}\). As the real symplectic space \((X,\sigma)\), we take \(X\) to be the realification of the complex Hilbert space \(\sphilb{H}_{\txtbsn}\), and define the inner product on this real Hilbert space by \(\opreal \bkt{f}{g}_{\sphilb{H}_{\txtbsn}}\).

For a positive real number \(s>0\), let the one-particle Hamiltonian (dispersion relation) defined on momentum space \(\greuctr{k}{d}\) be \(\omega(k)=\omega_s(k)=\abs{k}^{s}\). For a non-positive chemical potential \(\smchemicalpotential\leq 0\), let the non-negative self-adjoint operator be \[K_{\sminvtemperature,\smchemicalpotential}
=
\coth \frac{\sminvtemperature \rbk{\omega - \smchemicalpotential}}{2}.\] For the associated non-degenerate non-negative symmetric sesquilinear form \(\opform{q}_{\txtbsn,\txtnonzero,\sminvtemperature,\smchemicalpotential}\), define the associated inner-product space and its completion by \[\sphilb{D}_{\txtbsn,\sminvtemperature,\smchemicalpotential}
=
\pairbk{\fun{\opformdomain}{\opform{q}_{\txtbsn,\txtnonzero,\sminvtemperature,\smchemicalpotential}},\opform{q}_{\txtbsn,\txtnonzero,\sminvtemperature,\smchemicalpotential}},
\quad
\sphilb{H}_{\txtbsn,\sminvtemperature,\smchemicalpotential}
=
\gtclos{\sphilb{D}_{\txtbsn,\sminvtemperature,\smchemicalpotential}}^{\opform{q}_{\txtbsn,\txtnonzero,\sminvtemperature,\smchemicalpotential}}.\]

In general, define the bosonic Fock space over the Hilbert space \(\sphilb{H}_{\txtbsn}\) by \[\fun{\spfock_{\txtbsn}}{\sphilb{H}_{\txtbsn}}
=
\bigoplus_{n=0}^{\infty}
\bigotimes_{\txtsym}^{n}
\sphilb{H}_{\txtbsn},\] and for any \(f\in \sphilb{H}_{\txtbsn}\) let \(\opfockcran_{\txtfock}(f)\) denote the creation and annihilation operators on the bosonic Fock space. The Segal field operator is defined by \[\opfocksegal_{\txtfock}(f)
=
\frac{1}{\sqrt{2}}
\rbk{\opfockcr_{\txtfock}(f) + \opfockan_{\txtfock}(f)},\] and the Weyl operator on Fock space is defined by \(\opfockweyl_{\txtfock}(f)=\napiernum^{\imunit \opfocksegal_{\txtfock}(f)}\). Next, for a non-expansive operator \(V\), define the second quantized operator of the second kind \(\fun{\opfocksndqnt_{\txtbsn}}{V}\) by \[\fun{\opfocksndqnt_{\txtbsn}}{V}
=
\bigoplus_{n=0}^{\infty} \bigotimes_{\txtsym}^{n} V,\] and define the second quantized operator of the first kind \(\physham_{\txtbsn,\txtfr}=\fun{\opfocksndqntdiff_{\txtbsn}}{\omega}\) by \[\physham_{\txtbsn,\txtfr}
=
-\imunit
\fnrestr{\opod{t} \opfocksndqnt (\napiernum^{\imunit t \omega})}{t=0}.\] The ground state of the second quantized operator of the first kind is the Fock vacuum \(\opfockvac_{\txtbsn}\).

The Hilbert space of the total system for the spin-boson model is \[\spfock_{\txtspinboson}
=
\fldcmp^{2} \otimes \spfock_{\txtbsn}, \quad
\spfock_{\txtbsn}
=
\spfock_{\txtbsn}(\sphilb{H}_{\txtbsn})
=
\bigoplus_{n = 0}^{\infty} \bigotimes_{\txtsym}^{n} \sphilb{H}_{\txtbsn},
\quad
\sphilb{H}_{\txtbsn}
=
\fun{\lp^{2}}{\fldreal^{d}},\] where \(\bigoplus\) denotes the direct sum, \(\bigotimes_{\txtsym}^{n}\) denotes the symmetric tensor product, and the zeroth tensor power is \(\bigotimes_{\txtsym}^{0} \sphilb{H}_{\txtbsn}= \fldcmp\).

Following the textbook\cite{LorincziHiroshimaBetz3}, we define the Hamiltonian as follows. Let the coupling constant be \(\physcplconst\geq 0\) and let \((\qmpaulispin_{i})\) denote the Pauli matrices. The Hamiltonian of the non-interacting system (free Hamiltonian) is \[\physham_{0}
=
-\physcplconst \qmpaulispin_{x} \otimes 1
+1 \otimes \physham_{\txtbsn,\txtfr}.\] For an infrared-cutoff source \(\varrho_{\kappa}\in \fun{\dsttempered_{\txtreal}}{\fldreal^{d}}\), set \(\mathsf{m}_{\kappa}= \omega^{-\frac{3}{2}} \varrho_{\kappa}\) and define the spin-boson Hamiltonian by \[\physham_{\txtspinboson,\kappa}
=
\physham_{0}
+\qmpaulispin_{z} \otimes \fun{\phi_{\txtfock}}{\omega \mathsf{m}_{\kappa}}.\] By the Kato--Rellich theorem, this is a self-adjoint operator bounded from below. In the case \(\kappa=0\) where the cutoff is removed, we simply omit \(\kappa\) from the notation. Since the spin-boson model is a low-energy effective theory, the condition corresponding to an ultraviolet cutoff is introduced here implicitly as a constraint on the source function.

For the source, we further impose non-negativity as a tempered distribution and \(\varrho\in \dom \omega^{-\onehalf} \cap \dom \inv{\omega}\), and assume the infrared singularity condition \(\varrho\notin \dom \omega^{-\frac{3}{2}}\). We impose \(\varrho_{\kappa}\in \dom{\omega^{-\frac{1}{2}}}\) and \(\varrho_{\kappa}\in \dom{\omega^{-1}}\) for every \(\kappa\geq 0\), and also impose convergence \(\varrho_{\kappa}\to \varrho\) as tempered distributions as \(\kappa\to 0\). In particular, the infrared cutoff is defined by the cutoff with the defining function \(\varrho_{\kappa}= \varrho \fndef{\kappa \leq \abs{k}}\). We further assume that the Fourier transform of \(\varrho\) is an ordinary function, regard \(\mathsf{m}\) as an ordinary function on the Fourier-transform side as well, and denote by the same symbols the functionals induced by these objects. In particular, the domain of the functional \(\mathsf{m}\) on \(\fun{\lp^{2}}{\fldreal^{d}}\) is denoted by \(\dom \mathsf{m}\). In general, the cutoff domain \(\dom \mathsf{m}_{\kappa}\) is the whole space, while the cutoff-free domain \(\dom \mathsf{m}\) is a subspace of \(\fun{\lp^{2}}{\fldreal^{d}}\).

\subsection{Weyl Algebra}\label{weyl-algebra}

Let \(\sphilb{H}\) be a complex Hilbert space, and define the bilinear map \(\sigma\) as the symplectic form by \[\sigma
\colon \sphilb{H} \times \sphilb{H}
\to
\fldreal;
\quad
\sigma(f,g)
=
\opimag \bkt{f}{g}_{\sphilb{H}}.\] For any \(f,g\in \sphilb{H}\), define the \(\oacstar\)-algebra \[\oaweyl
=
\oaweyl(\sphilb{H}, \sigma)
=
\oacstar
\set{\opfockweyl(f)}{f \in \sphilb{H}}\] generated by \(\opfockweyl(f)\) satisfying the Weyl relations \begin{equation}
\begin{aligned}
\faadj{\opfockweyl(f)}
&=
\opfockweyl(-f), \\
\opfockweyl(f) \opfockweyl(g)
&=
\napiernum^{-\frac{\imunit}{2} \opimag \bkt{f}{g}_{\sphilb{H}}}
\opfockweyl(f+g)
\end{aligned}
\end{equation} We also use the Weyl algebra \(\oaweyl(\sphilb{D})\) for a subspace \(\sphilb{D}\subset \sphilb{H}\). If a representation \(\pairbk{\sphilb{H}_{\repn},\repn}\) realizes \(t\mapsto \repn(\opfockweyl(tf))\) as strongly continuous for every \(f\in\sphilb{D}\), then it is called a regular representation on \(\sphilb{D}\). A state is also called regular if its GNS representation is regular.

\subsection{Resolvent Algebra}\label{expedition0012083}

Following \cite{DetlevBuchholz001}, we recall the definition and basic properties of the resolvent algebra. Let \((X,\sigma)\) be a symplectic space. Let \(\oaresolventalgebra_0\) be the universal unital \(\ast\)-algebra generated by the set \(\set{\oaresolvent(\lambda,f)}{\lambda \in \fldmultiplicativegroup{\fldreal}, f \in \sphilb{H}}\), and assume in particular the following resolvent relations: \begin{align}
\oaresolvent(\lambda,0)
&=
-\frac{\imunit}{\lambda} \idone, \\ % (1)
\faadj{\oaresolvent(\lambda,f)}
&=
\oaresolvent(-\lambda,f), \\ % (2)
\nu \oaresolvent(\nu \lambda, \nu f)
&=
\oaresolvent(\lambda, f), \\ % (3)
\oaresolvent(\lambda,f) - \oaresolvent(\mu,f)
&=
\imunit
(\mu - \lambda)
\oaresolvent(\lambda,f) \cdot \oaresolvent(\mu,f) \\ % (4)
&=
\imunit
(\mu - \lambda)
\oaresolvent(\mu,f) \cdot \oaresolvent(\lambda,f), \\ % (4)
\commutator{\oaresolvent(\lambda,f)}{\oaresolvent(\mu,g)}
&=
\imunit
\sigma(f,g)
\oaresolvent(\lambda,f)
\oaresolvent(\mu,g)^2
\oaresolvent(\lambda,f), \label{expedition0012052} \\ % (5)
\oaresolvent(\lambda,f)
\oaresolvent(\mu,g)
&=
\oaresolvent(\lambda+\mu, f+g)
\cdot
\rbkleft{\oaresolvent(\lambda,f)} \\
&\quad\rbkright{+
\oaresolvent(\mu,g)
+\imunit \sigma(f,g) \oaresolvent(\lambda,f)^2 \oaresolvent(\mu,g)} % (6)
\end{align} In particular, condition \eqref{expedition0012052} implies that \(\oaresolvent(\lambda,f)\) and \(\oaresolvent(\mu,f)\) commute when the entries \(f\) are equal.

The \(\ast\)-algebra obtained by introducing an appropriate norm on \(\oaresolventalgebra_0\) and completing it is called the abstract resolvent algebra, or simply the resolvent algebra. In particular, \cite[P.2730, Theorem 3.6 (iii)]{BuchholzGrundling2} gives \(\norm{\oaresolvent(\lambda,f)}= \frac{1}{\abs{\lambda}}\). Moreover, if a representation \(\oarepn\) satisfies \(\Ker \oarepn(\oaresolvent(1,f))=\setone{0}\) for every \(f\in S\), then it is called a regular representation on \(S\). A state is also called regular if its GNS representation is regular.

\begin{prop}[\cite{BuchholzGrundling2}]\label{expedition0011838}
Let $\pairbk{X,\sigma}$ be a symplectic space of arbitrary dimension, and let $S\subset X$ be a non-degenerate finite-dimensional subspace.
\begin{enumerate}
\item
The norms of the full resolvent algebra $\oaresolventalgebra(X,\sigma)$ and the subalgebra $\oaresolventalgebra(X,\sigma)$ coincide on the $\ast$-subalgebra
$$\oastaralgebra
\set{\oaresolvent(\lambda,f)}
{f \in S, \lambda \in \fldreal \setminus \setone{0}}.$$ In particular, $\oaresolventalgebra(S,\sigma)\subset \oaresolventalgebra(X,\sigma)$ holds.

\item
The full resolvent algebra is the inductive limit of the net $\fml{\oaresolventalgebra(S,\sigma)}{X \subset S}$ over non-degenerate finite-dimensional subspaces $S\subset X$.

\item
Every regular representation of the full resolvent algebra $\oaresolventalgebra(X,\sigma)$ is faithful.
\end{enumerate}
In particular, the center of the full resolvent algebra is trivial.
\end{prop}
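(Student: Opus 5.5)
This proposition is quoted from \cite{BuchholzGrundling2}. I would reconstruct its proof in three stages, following the structure of that paper: first the finite-dimensional building blocks, then the inductive limit, then faithfulness of regular representations.

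\emph{Finite-dimensional case.} For a non-degenerate finite-dimensional $S$, choose a symplectic basis so that $(S,\sigma)\cong(\fldreal^{2n},\sigma_{\mathrm{std}})$. By the Stone--von Neumann theorem, every regular representation of $\oaresolventalgebra(S,\sigma)$ is a multiple of the Schr\"odinger representation $\pi_0$. The relations force $\oaresolvent(\lambda,f)\mapsto (\imunit\lambda-\phi(f))^{-1}$ in any representation in which the field $\phi(f)$ exists.

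The key step is to show that the universal $C^{\ast}$-norm on $\oaresolventalgebra_0(S)$ is attained on the regular representations together with the "partially singular" ones. The latter are obtained by letting some of the $\phi(f)$ become infinite, so that $\oaresolvent(\lambda,f)\mapsto 0$. I would classify the irreducible representations by the subspaces $T\subset S$ on which the resolvents vanish. This gives an explicit composition series: the ideal generated by the compact operators $\pi_0^{-1}(\mathcal{K})$, then successive quotients indexed by lower-dimensional symplectic spaces. The norm on the finite-dimensional algebra is therefore determined intrinsically by the resolvent relations.

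\emph{Part (i).} Given $S\subset X$, split $X=S\oplus S^{\sigma}$ with $S^{\sigma}$ the symplectic complement. Any representation of $\oaresolventalgebra(S)$ extends to one of $\oaresolventalgebra(X)$, for example by tensoring with a representation of $\oaresolventalgebra(S^{\sigma})$, such as a product state. Hence the universal norm on $\oaresolventalgebra(X)$ restricted to the $\ast$-algebra generated by $\oaresolvent(\lambda,f)$, $f\in S$, dominates the norm of $\oaresolventalgebra(S)$. The reverse inequality is automatic from universality. This gives the embedding $\oaresolventalgebra(S,\sigma)\subset\oaresolventalgebra(X,\sigma)$.

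\emph{Part (ii).} Every generator $\oaresolvent(\lambda,f)$ lies in some $\oaresolventalgebra(S)$: take any non-degenerate finite-dimensional $S\ni f$, which exists since $\sigma$ is non-degenerate on $X$. The family of such $S$ is directed, and by (i) the embeddings are isometric. So the union is a dense $\ast$-subalgebra, and the completion is the inductive limit.

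\emph{Part (iii) and triviality of the centre.} I expect this to be the main obstacle. Let $\pi$ be regular with $\ker\pi\neq0$. Using (ii) and the fact that ideals of inductive limits are inductive limits of their intersections, some $\ker\pi\cap\oaresolventalgebra(S)$ is non-zero. From the finite-dimensional ideal structure, every non-zero ideal of $\oaresolventalgebra(S)$ contains an element of the minimal ideal, which is essentially $\mathcal{K}(L^2(\fldreal^n))$, or at least a product of resolvents built from it. But a regular $\pi$ restricts on $\oaresolventalgebra(S)$ to a multiple of $\pi_0$, modulo the $S^{\sigma}$ part. Since $\pi_0$ is faithful on the compacts and the resolvents have trivial kernel, this forces a contradiction.

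The delicate point is to show that every non-zero ideal meets the compact ideal, i.e.\ that the compact ideal is essential in each $\oaresolventalgebra(S)$. I would prove this by using products $\oaresolvent(\lambda_1,f_1)\cdots\oaresolvent(\lambda_{2n},f_{2n})$ over a basis, which lie in $\mathcal{K}$ in $\pi_0$ and act as approximate units on any non-zero element.

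Finally, for the centre: the universal representation dominates a faithful regular representation, namely the Fock-type one, by (iii). In that representation the image is irreducible, since it is irreducible already on each $\oaresolventalgebra(S)$ via $\pi_0$. The centre therefore maps to scalars, and by faithfulness it is trivial.
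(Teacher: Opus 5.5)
Since the paper only cites Buchholz--Grundling, your sketch has to stand on its own, and it fails at the step you yourself flag as delicate. Everything in (iii), and hence the triviality of the centre, reduces to one finite-dimensional claim: the Schr\"odinger representation $\pi_0$ is faithful on the \emph{abstract} algebra $\mathcal{R}(S,\sigma)$. Once that is known, a regular $\pi$ restricts on each $\mathcal{R}(S,\sigma)$ to a multiple of $\pi_0$ (Stone--von Neumann). It is therefore isometric there and, by (ii), everywhere.

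Rephrasing this as ``the compact ideal $\mathcal{J}$ is essential'' changes nothing. Grant that $\pi_0$ is injective on $\mathcal{J}$ with $\pi_0(\mathcal{J})=\mathcal{K}$. Irreducibility of $\pi_0$ then makes the annihilator of $\mathcal{J}$ equal to $\ker\pi_0$, so essentiality \emph{is} faithfulness of $\pi_0$.

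Your proposed mechanism for essentiality is false. Take $S=\mathbb{R}^2$ with symplectic basis $q,p$ and fix $x\in\mathbb{R}$. The assignment $R(\lambda,aq)\mapsto(i\lambda-ax)^{-1}$, $R(\lambda,f)\mapsto 0$ for $f\notin\mathbb{R}q$, respects all resolvent relations, so it is a character $\chi$. It kills $E_\lambda=(i\lambda)^2R(\lambda,q)R(\lambda,p)$ but not $R(1,q)$, so $\|E_\lambda R(1,q)-R(1,q)\|\geq|i-x|^{-1}$ for every $\lambda$. Products of resolvents over a basis are an approximate unit for $\mathcal{J}$ only, never for the whole algebra; otherwise $\mathcal{R}(S,\sigma)=\mathcal{J}$, which is non-unital. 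Nor does your composition series help: having $\mathcal{K}$ at the bottom is equally compatible with $\mathcal{K}$ being a direct summand.

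The missing idea is that every irreducible representation $\rho$ is weakly contained in $\pi_0$. One way to get it:
\begin{itemize}
\item The commutator relation makes $\ker\rho(R(\lambda,f))$ a reducing subspace. So in an irreducible $\rho$, each $\rho(R(\cdot,f))$ is either injective or zero.
\item The product relation then shows that the non-vanishing directions form a subspace $Y$. Hence $\rho$ factors through the $*$-homomorphism $\theta_Y$ that kills $R(\lambda,f)$ for $f\notin Y$, followed by a regular irreducible representation of $\mathcal{R}(Y,\sigma|_Y)$.
\item States of such $\rho$ are weak$^*$ limits of vector states of $\pi_0$. Conjugate by $W(tg)$ with $g\in Y^\sigma$ and let $t\to\infty$. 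This pushes $\phi(f)$ to infinity whenever $\sigma(g,f)\neq0$, while commuting with the resolvents $R(\lambda,f)$, $f\in Y$.
\item Iterate over a basis of $Y^\sigma$, and localise the joint spectrum of the fields in the radical of $Y$.
\end{itemize}
This gives $\ker\pi_0\subseteq\ker\rho$ for all irreducible $\rho$, i.e.\ $\pi_0$ is faithful. This weak-containment step is the real content of (iii), and it is absent from your plan.

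Two smaller points. First, in (i), extending an arbitrary representation of $\mathcal{R}(S,\sigma)$ by ``tensoring with a representation of $\mathcal{R}(S^\sigma,\sigma)$, such as a product state'' is not automatic. You must assign images to the mixed resolvents $R(\lambda,f+h)$, $f\in S$, $0\neq h\in S^\sigma$, and verify the commutator and product relations. That is delicate when the given representation is non-regular. The clean route is the homomorphism $\theta_S$ above (equivalently, tensoring with the singular character of $\mathcal{R}(S^\sigma,\sigma)$ that kills every non-zero direction). It sends $R(\lambda,f+h)\mapsto\delta_{h,0}R(\lambda,f)$ and is a left inverse of the canonical map $\mathcal{R}(S,\sigma)\to\mathcal{R}(X,\sigma)$. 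That map is therefore isometric, with no representation theory needed.

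Second, for the centre: when $\dim X=\infty$, the Fock representation restricted to $\mathcal{R}(S,\sigma)$ is $\pi_0\otimes 1$ with infinite multiplicity, so it is not irreducible on each $\mathcal{R}(S,\sigma)$. Irreducibility on $\mathcal{R}(X,\sigma)$ comes instead from the Weyl operators lying in the strong closure of the represented resolvents.
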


\subsection{Finite-Temperature Setting}\label{expedition0011278}

Here again we use the setting adopted in \cite{YoshitsuguSekine004}. For inverse temperature \(\sminvtemperature>0\), define the time interval corresponding to the periodicity of the KMS state by \[S_{\sminvtemperature}
=
\closedinterval{-\frac{\sminvtemperature}{2}}{\frac{\sminvtemperature}{2}}.\]

Let \(\smnumberdensity_{\txtbsn,0}(\sminvtemperature)\) denote the density of the condensate at inverse temperature \(\sminvtemperature>0\). Define the non-closed non-negative symmetric bilinear form corresponding to the condensed component by \[\opform{q}_{\txtbsn,0,\sminvtemperature}(f)
=
2 (2 \pi)^d \smnumberdensity_{\txtbsn,0}(\sminvtemperature)
\abs{\faftr{f}(0)}^{2},
\quad
\opformdomain(\opform{q}_{\txtbsn,0,\sminvtemperature})
=
\fun{\lp^{1}}{\fldreal^{d}}
\cap
\fun{\lp^{2}}{\fldreal^{d}}.\] Define the subspace \(\sphilb{D}_{\txtbsn,0,\sminvtemperature,\smchemicalpotential}\) by \[\sphilb{D}_{\txtbsn,0,\sminvtemperature,\smchemicalpotential}
=
\opformdomain(\opform{q}_{\txtbsn,0,\sminvtemperature})
\cap
\sphilb{H}_{\txtbsn,\sminvtemperature,\smchemicalpotential}.\] When the chemical potential is \(0\), set \(\sphilb{D}_{\txtbsn,0,\sminvtemperature}=\sphilb{D}_{\txtbsn,0,\sminvtemperature,0}\), and define the physical one-particle space by \[\sphilb{D}_{\txtbsn,\txtphys,\sminvtemperature}
=
\dom \mathsf{m}
\cap
\sphilb{D}_{\txtbsn,0,\sminvtemperature}.\] Furthermore, for arbitrary \(f,g\in \sphilb{D}_{\txtbsn,0,\sminvtemperature}\), define the sesquilinear form including the BEC component by \[\fun{\opform{q}_{\txtbsn,\txtbec,\sminvtemperature}}{f,g}
=
\fun{\opform{q}_{\txtbsn,0,\sminvtemperature}}{f,g}
+\fun{\opform{q}_{\txtbsn,\txtnonzero,\sminvtemperature}}{f,g}.\] In one-variable notation, \[\fun{\opform{q}_{\txtbsn,\txtbec,\sminvtemperature}}{f}
=
\fun{\opform{q}_{\txtbsn,0,\sminvtemperature}}{f}
+
\fun{\opform{q}_{\txtbsn,\txtnonzero,\sminvtemperature}}{f}.\] In this paper we mainly deal with \(\oaweyl(\sphilb{D}_{\txtbsn,\txtphys,\sminvtemperature})\), \(\oaresolventalgebra(\sphilb{D}_{\txtbsn,\txtphys,\sminvtemperature},\sigma)\), and the algebra of all physical quantities obtained by tensoring it with the spin matrix algebra \(M_2\).

\subsection{Main Results}\label{main-results-1}

Let \(\oastate[\psi_{\txtspinboson,\sminvtemperature}]\) denote the KMS state of the infinite system after removal of the infrared cutoff.

\begin{thm}[Zero-mode decomposition of KMS states]
Assume the thermodynamic limit in which the BEC zero mode appears for the free Bose gas. Then, for arbitrary $f\in\sphilb{D}_{\txtbsn,\txtphys,\sminvtemperature}$ and $s\in\fldreal$, the centered physical field $\opfocksegal_{\txtphys,\sminvtemperature}(f)$ satisfies
$$\begin{aligned}
&\fun{\psi_{\txtspinboson,\sminvtemperature}}
{\napiernum^{\imunit s\opfocksegal_{\txtphys,\sminvtemperature}(f)}}
\\
&=
\fnexp{-\frac{s^2}{4}\fun{\opform{q}_{\txtbsn,0,\sminvtemperature}}{f}}
\fnexp{-\frac{s^2}{4}\fun{\opform{q}_{\txtbsn,\txtnonzero,\sminvtemperature}}{f}}
\sqfun{\prbexp_{\msrprb_{\txtspin,\txtspinboson,\sminvtemperature}}}
{\fnexp{-\imunit s\widetilde{\mathsf{Y}}_{\sminvtemperature,f}}}
\end{aligned}$$
Here $\widetilde{\mathsf{Y}}_{\sminvtemperature,f}$ is a centered real random variable determined by the spin path. In particular, the quadratic form corresponding to BEC is the form $\opform{q}_{\txtbsn,0,\sminvtemperature}$ inherited from the free Bose gas, and the spin-boson interaction does not generate a new zero-mode form.
\end{thm}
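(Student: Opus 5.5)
The plan is to work at finite volume with an infrared cutoff \(\kappa>0\) and chemical potential \(\smchemicalpotential<0\), write the Gibbs expectation of the Weyl operator through the Feynman--Kac type functional integral, and then take the limits \(\kappa\to 0\), the thermodynamic limit, and \(\smchemicalpotential\uparrow 0\) in the order under which the free Bose gas develops its zero mode. The free spin part is represented by the periodic two-state jump process (the spin loop measure) on \(S_{\sminvtemperature}\). The Bose field is represented by the \(\sminvtemperature\)-periodic Gaussian field whose equal-time covariance is the form \(\opform{q}_{\txtbsn,\txtnonzero,\sminvtemperature,\smchemicalpotential}\), i.e.\ \(K_{\sminvtemperature,\smchemicalpotential}\). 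The observable \(\napiernum^{\imunit s\opfocksegal_{\txtfock}(f)}\) inserted at time \(0\) becomes the factor \(\napiernum^{\imunit s\phi_0(f)}\). The interaction gives the factor \(\napiernum^{-\int_{S_\sminvtemperature}\sigma_t\,\phi_t(\omega\mathsf{m}_\kappa)\,dt}\), which is linear in the field along the spin path \(\sigma_t\in\{\pm1\}\).

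The key computation is the Gaussian integral over the field for a fixed spin path. The exponent is a linear functional of the field, with imaginary part \(s\phi_0(f)\) and real part \(-\int\sigma_t\phi_t(\omega\mathsf{m}_\kappa)dt\), so completing the square gives three pieces.
\begin{enumerate}
\item The pure \(f\)--\(f\) term \(\exp(-\frac{s^2}{4}\opform{q}_{\txtbsn,\txtnonzero,\sminvtemperature,\smchemicalpotential}(f))\). It is independent of the spin path and is exactly the free-Bose factor.
\item The source--source term \(\exp(\frac12\iint\sigma_t\sigma_u W_\kappa(t-u))\). It does not depend on \(s\), is cancelled by the partition function, and becomes the weight defining the interacting spin measure \(\msrprb_{\txtspin,\txtspinboson,\sminvtemperature}\).
\item The cross term, which is a phase \(\exp(-\imunit s\mathsf{Y}_{\sminvtemperature,f})\) with \(\mathsf{Y}_{\sminvtemperature,f}=\int_{S_\sminvtemperature}\sigma_t\,C_t(f,\omega\mathsf{m}_\kappa)\,dt\). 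Here \(C_t\) is the periodic time-\(t\) covariance. The pairing \(C_t(f,\omega\mathsf{m})\) is finite precisely because \(f\in\dom\mathsf{m}\); this is where the physical space \(\sphilb{D}_{\txtbsn,\txtphys,\sminvtemperature}\) enters.
\end{enumerate}
Centering then finishes the identity. The centered field \(\opfocksegal_{\txtphys,\sminvtemperature}(f)\) is \(\opfocksegal(f)\) minus its mean \(-\prbexp[\mathsf{Y}_{\sminvtemperature,f}]\). Subtracting the mean replaces \(\mathsf{Y}\) by \(\widetilde{\mathsf{Y}}_{\sminvtemperature,f}=\mathsf{Y}_{\sminvtemperature,f}-\prbexp[\mathsf{Y}_{\sminvtemperature,f}]\). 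This yields the stated identity with \(\opform{q}_{\txtbsn,\txtnonzero}\) but without the zero-mode factor, still at finite volume and with cutoffs.

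Next I pass to the limits. Only the spin-independent Gaussian factor involves the chemical potential and the volume. For that factor I invoke the free Bose gas thermodynamic limit assumed in the statement. In that limit, \(\opform{q}_{\txtbsn,\txtnonzero,\sminvtemperature,\smchemicalpotential}(f)\) converges to \(\opform{q}_{\txtbsn,0,\sminvtemperature}(f)+\opform{q}_{\txtbsn,\txtnonzero,\sminvtemperature}(f)\) for \(f\in L^1\cap L^2\). The macroscopic occupation of the \(k=0\) mode produces the term \(2(2\pi)^d\smnumberdensity_{\txtbsn,0}|\hat f(0)|^2\). Splitting the exponential into two factors gives the product in the theorem. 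The final claim, that no new zero-mode form is generated, follows from the structure of the computation. Every \(s^2\) term comes from the free covariance, and the interaction enters only through a phase linear in \(s\) and through the change of measure. So the only quadratic form in \(s\) is the free one.

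The main obstacle is controlling the spin-dependent pieces as \(\kappa\to0\) and in the thermodynamic limit. Two things are needed.
\begin{itemize}
\item The measures \(\msrprb_{\txtspin,\txtspinboson,\sminvtemperature,\kappa}\) must converge even though \(\varrho\notin\dom\omega^{-3/2}\). Here I would use that the pair potential \(W(t-u)\) involves only \(\omega^{-1/2}\varrho\) and \(\omega^{-1}\varrho\), both finite by assumption, together with the \(\sminvtemperature\)-periodic kernel.
\item The random variable \(\widetilde{\mathsf{Y}}_{\sminvtemperature,f}\) must converge in law, and it must not pick up any contribution at \(k=0\) that would add to the quadratic form. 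Boundedness of the spin, \(|\sigma_t|=1\), together with \(f\in\dom\mathsf{m}\) gives a uniform bound on \(|\mathsf{Y}|\). Dominated convergence then handles the expectation of the phase.
\end{itemize}
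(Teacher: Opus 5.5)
Your outline follows the paper's route. You integrate out the field for a fixed spin path at finite $L$, $\mu<0$, $\kappa>0$, and turn the $s$-independent source--source term into the density of the interacting spin measure. The free-gas limit then produces $\mathsf{q}_0+\mathsf{q}_{\mathrm{nz}}$ (here $\mathsf{q}_0=\mathsf{q}_{\mathrm{b},0,\beta}$ and $\mathsf{q}_{\mathrm{nz}}=\mathsf{q}_{\mathrm{b},\mathrm{nz},\beta}$), the cutoff is removed using only $\omega^{-1/2}\varrho$ and $\omega^{-1}\varrho$, and you center. The paper centers only after all limits, by differentiating Proposition~\ref{expedition0012118} at $s=0$; centering earlier is harmless. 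What is missing is the step that carries the content of the theorem: showing that the spin-dependent factors do not see the zero mode.

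\textbf{The spin factors do depend on $\mu$ and $L$.} Your claim that only the spin-independent Gaussian factor involves $\mu$ and $L$ is not correct. The weight and the phase $\mathsf{Y}_{\beta,f}$ are built from the same covariance, and in the limit that covariance is $\mathsf{q}_0+\mathsf{q}^{\mathrm{E}}_{\mathrm{nz}}$, exactly as written in \eqref{expedition0012469}. Put $M=\int_{S_\beta}\sigma_t\,dt$. The zero-mode part would then add a factor $\exp(\tfrac14\mathsf{q}_0(\omega\mathsf{m}_\kappa)M^2)$ to the weight and a term proportional to $\mathsf{q}_0(f,\omega\mathsf{m}_\kappa)M$ to $\mathsf{Y}_{\beta,f}$.

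\textbf{The structural argument does not exclude a new zero-mode form.} ``Every $s^2$ term comes from the free covariance'' is true of the explicit Gaussian prefactor only. The extra term in $\mathsf{Y}_{\beta,f}$ would depend on $f$ only through $\widehat f(0)$. Then $\mathbb{E}[\mathsf{Y}_{\beta,f}\mathsf{Y}_{\beta,\tau_x g}]$ would contain an $x$-independent piece depending only on $\widehat f(0)$, $\widehat g(0)$ and $\mathbb{E}[M^2]$. The two-point function equals $\tfrac12\mathsf{q}(f,\tau_xg)+\mathbb{E}[\mathsf{Y}_{\beta,f}\mathsf{Y}_{\beta,\tau_xg}]$, so the long-distance limit in Proposition~\ref{expedition0012500} would acquire a zero-mode contribution generated by the interaction. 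Your uniform bound on $|\mathsf{Y}|$ and dominated convergence cannot rule this out, because such a term is itself bounded for each fixed $\kappa$.

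\textbf{The missing observation.} This is Proposition~\ref{expedition0012099}. For $\kappa>0$ the source $\omega\mathsf{m}_\kappa=\omega^{-1/2}\varrho\,\mathbf{1}_{\{|k|\ge\kappa\}}$ vanishes near $k=0$, so $\mathsf{q}_0(f,\omega\mathsf{m}_\kappa)=\mathsf{q}_0(\omega\mathsf{m}_\kappa)=0$. The zero mode therefore decouples from the spin path, and both spin pieces involve only $\mathsf{q}^{\mathrm{E}}_{\mathrm{nz}}$.

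\textbf{Order of limits.} This decoupling also fixes the order. Take $L\to\infty$ and $\mu\to 0$ at fixed $\kappa>0$, drop the zero-mode couplings, and only then let $\kappa\downarrow 0$ in objects defined through $\mathsf{q}^{\mathrm{E}}_{\mathrm{nz}}$ alone (Propositions~\ref{expedition0012482} and \ref{expedition0012485}). The zero-mode question cannot be postponed until after the cutoff is removed, because $\mathsf{q}_0(f,\omega\mathsf{m})$ is then undefined. In the typical infrared-singular case $\widehat\varrho(0)\neq 0$, and $\omega^{-1/2}\widehat\varrho$ blows up at $k=0$.
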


\begin{thm}[BEC no-go theorem via off-diagonal long-range order]
Assume that the physical test-function space $\sphilb{D}_{\txtbsn,\txtphys,\sminvtemperature}$ can detect the zero mode in the sense of Definition \ref{expedition0012134}. Then, for arbitrary $f,g\in \sphilb{D}_{\txtbsn,\txtphys,\sminvtemperature}$,
$$\lim_{\abs{x}\to\infty}
\fun{\oastate[\psi_{\txtspinboson,\sminvtemperature}]}
{\opfocksegal(j_0f)\opfocksegal(j_0\tau_xg)}
=
\onehalf\fun{\opform{q}_{\txtbsn,0,\sminvtemperature}}{f,g}$$
holds. Moreover, the following conditions are equivalent.
\begin{enumerate}
\item
For all $f\in \sphilb{D}_{\txtbsn,\txtphys,\sminvtemperature}$, $\fun{\opform{q}_{\txtbsn,0,\sminvtemperature}}{f}=0$ holds.

\item
For all $f,g\in \sphilb{D}_{\txtbsn,\txtphys,\sminvtemperature}$, the two-point off-diagonal long-range order is $0$.

\item
The condition $\smnumberdensity_{\txtbsn,0}(\sminvtemperature)=0$ holds.

\item
The centered physical field defines a self-consistent quasiparticle equilibrium state.

\item
The order-parameter criterion $\lim_{L\to\infty}\mathsf{o}_{\txtspinboson,\sminvtemperature,L}^{(1)}=1$ holds.
\end{enumerate}
Thus, the absence of off-diagonal long-range order is equivalent to the disappearance of the BEC zero mode on the physical test-function space.
\end{thm}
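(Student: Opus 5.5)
The plan has two parts. First I compute the off-diagonal limit from the zero-mode decomposition theorem stated just before. Then I close the equivalences in a cycle, using the definitions of detectability, self-consistent quasiparticle equilibrium, and the order parameter.

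\textbf{The off-diagonal limit.} The previous theorem gives the characteristic function of the centered physical field as
\[
\exp\!\bigl(-\tfrac{s^2}{4}(\opform{q}_{\txtbsn,0,\sminvtemperature}(f)+\opform{q}_{\txtbsn,\txtnonzero,\sminvtemperature}(f))\bigr)\,\Phi_f(s),
\]
where $\Phi_f(s)$ denotes the spin-path expectation of $\exp(-\imunit s\widetilde{\mathsf{Y}}_{\sminvtemperature,f})$.

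I replace $f$ by $f+g$ and differentiate twice at $s=0$. Polarization then gives
\[
\psi_{\txtspinboson,\sminvtemperature}\bigl(\opfocksegal(j_0f)\opfocksegal(j_0g)\bigr)
=\tfrac12\opform{q}_{\txtbsn,0,\sminvtemperature}(f,g)+\tfrac12\opform{q}_{\txtbsn,\txtnonzero,\sminvtemperature}(f,g)+C_{\txtspin}(f,g)+\tfrac{\imunit}{2}\sigma(f,g)\text{-type term},
\]
where $C_{\txtspin}$ is the covariance of $\widetilde{\mathsf{Y}}$ under the spin loop measure, and the centering removes first moments.

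Now substitute $g\mapsto\tau_x g$.
\begin{enumerate}
\item The zero-mode term is translation invariant, because $\widehat{\tau_xg}(0)=\hat g(0)$.
\item The nonzero-mode term is $\int\overline{\hat f}\,\hat g\,e^{\imunit kx}\coth(\sminvtemperature\omega/2)\,dk$. Its integrand is $L^1$ because $f,g\in\sphilb{D}_{\txtbsn,0,\sminvtemperature}$, so it tends to $0$ by Riemann--Lebesgue.
\item The symplectic term tends to $0$ by the same argument.
\item The spin covariance term has the form $\int\overline{\hat f}\,\mathsf{m}\,\hat g\,e^{\imunit kx}\cdots$ against a bounded spin-correlation kernel. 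Because $f,g\in\dom\mathsf{m}$, this is again an $L^1$ Fourier integral and vanishes at infinity.
\end{enumerate}
This yields the displayed limit.

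\textbf{The equivalences.} I would prove them as the cycle (1)$\Leftrightarrow$(2)$\Leftrightarrow$(3), and then relate (4) and (5) to (1).
\begin{enumerate}
\item (1)$\Leftrightarrow$(2): immediate from the limit formula together with polarization of a non-negative form.
\item (3)$\Rightarrow$(1): trivial from the definition of $\opform{q}_{\txtbsn,0,\sminvtemperature}$.
\item (1)$\Rightarrow$(3): this is exactly where Definition~\ref{expedition0012134} is used. Detectability should provide some $f\in\sphilb{D}_{\txtbsn,\txtphys,\sminvtemperature}$ with $\hat f(0)\neq0$. Then $\opform{q}_{\txtbsn,0,\sminvtemperature}(f)=2(2\pi)^d\smnumberdensity_{\txtbsn,0}\abs{\hat f(0)}^2=0$ forces $\smnumberdensity_{\txtbsn,0}=0$.
\item (4): I expect the self-consistent quasiparticle equilibrium condition to state that the centered field's characteristic functional equals the Gaussian-times-spin part with no zero-mode factor, i.e.\ that it is regular and factorizes through $\opform{q}_{\txtbsn,\txtnonzero,\sminvtemperature}$. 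Comparing with the previous theorem, this holds if and only if $\exp(-\tfrac{s^2}{4}\opform{q}_{\txtbsn,0,\sminvtemperature}(f))=1$ for all $s$ and $f$, which is (1).
\item (5): for the order parameter $\mathsf{o}^{(1)}_{\txtspinboson,\sminvtemperature,L}$, I expect it to be a normalized ratio of finite-volume two-point functions. Its $L\to\infty$ limit equals $1$ exactly when the zero-mode contribution $\tfrac12\opform{q}_{\txtbsn,0,\sminvtemperature}$ vanishes in the ratio. So I would compute this limit using the same decomposition and the uniform convergence of the nonzero-mode parts.
\end{enumerate}

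\textbf{Main obstacle.} I expect the hardest step to be the decay of the spin-covariance term $C_{\txtspin}(f,\tau_xg)$. The spin loop measure correlations are in imaginary time, but the spatial dependence enters only through $\mathsf{m}$ paired with $f$ and $g$. I would try to write $\widetilde{\mathsf{Y}}_{\sminvtemperature,f}=\int_{S_\sminvtemperature}\xi_t\,\langle h_t,f\rangle\,dt$, with $h_t$ built from $\omega\mathsf{m}$ and the thermal propagator. Then $C_{\txtspin}(f,\tau_xg)$ is a bounded double time integral of $\langle h_t,f\rangle\overline{\langle h_{t'},\tau_xg\rangle}$. The factor $\langle h_{t'},\tau_x g\rangle$ is a Fourier integral of an $L^1$ function, since $g\in\dom\mathsf{m}$, so it tends to $0$ pointwise and is uniformly bounded. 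Dominated convergence over $t$ and $t'$ then gives decay.
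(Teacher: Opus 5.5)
Your derivation of the off-diagonal limit and of (1)--(4) follows the paper's route. The paper (Proposition~\ref{expedition0012500}) differentiates the two-variable functional \eqref{expedition0012486} with $h_1=f$, $h_2=\tau_xg$, rather than polarizing the one-variable centered formula. The ingredients are the same: $\widehat{\tau_xg}(0)=\hat g(0)$, Riemann--Lebesgue for the nonzero-mode covariance and for the kernel $\mathsf{q}^{\mathrm{E}}_{\mathrm{b},\mathrm{nz},\beta}(j_0\tau_xg,j_u(\omega\mathsf{m}))$, and bounded convergence over spin paths.

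There is one slip. The left-hand side involves the \emph{uncentered} field $\phi(j_0f)$, so its spin contribution is $\mathbb{E}[\mathsf{Y}_{\beta,f}\mathsf{Y}_{\beta,g}]=C_{\mathrm{spin}}(f,g)+\ell(f)\ell(g)$, with expectations under the interacting spin-path measure. It is not the covariance alone, and centering does not remove the extra term. The limit is unaffected, because $\ell(\tau_xg)\to0$ by exactly the argument of your last paragraph. That paragraph is also the correct version of your item 4: the spin term is a time integral of a product of two pairings, not a single Fourier integral in $\overline{\hat f}\hat g$.

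For (4), the paper identifies the zero-mode occupation of the centered field with $\rho_{\mathrm{b},0}(\beta)$ (Remark~\ref{expedition0012421}), giving (4)$\Leftrightarrow$(3). Your reading via triviality of the zero-mode Gaussian factor gives (4)$\Leftrightarrow$(1), which is equivalent under detectability.

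The genuine gap is (5). The order parameter is not a ratio of two-point functions. It is a resolvent expectation in the \emph{finite-volume} state:
\[
\mathsf{o}^{(1)}_{\mathrm{SB},\beta,L}
=\mathsf{i}\,\psi_{\mathrm{SB},\beta,\mu_L,L}\bigl(R(1,\mathsf{b}^{(1)}_L)\bigr)
=\int_0^\infty e^{-r}\,\psi_{\mathrm{SB},\beta,\mu_L,L}\bigl(e^{\mathsf{i}r\phi(j_0\mathsf{b}^{(1)}_L)}\bigr)\,dr,
\qquad
\mathsf{b}^{(1)}_L=V^{-1}\mathbf{1}_{I_L^d}.
\]
Both the state and the test function vary with $L$, so the fixed-$f$ infinite-volume formulas you intend to use do not apply. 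Nor is the mechanism a convergence of nonzero-mode parts.

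The missing idea is that in finite volume $\mathsf{b}^{(1)}_L$ is a pure zero-momentum mode. Its nonzero-mode form and its cross term with $\omega\mathsf{m}$ therefore vanish identically by \eqref{expedition0012576}. The spin factor is exactly $1$, and the characteristic function is the explicit zero-mode Gaussian $\exp\bigl(-\tfrac{r^2}{4}(y_L+1)N_0(y_L)/V\bigr)$.

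With the fixed-density choice of $\mu_L$, one has $(y_L+1)N_0(y_L)/V=2N_0(y_L)/V+1/V\to 2\rho_{\mathrm{b},0}(\beta)$. Dominated convergence then gives
\[
\lim_{L\to\infty}\mathsf{o}^{(1)}_{\mathrm{SB},\beta,L}=\int_0^\infty e^{-r-\frac12\rho_{\mathrm{b},0}(\beta)r^2}\,dr .
\]
This equals $1$ if and only if $\rho_{\mathrm{b},0}(\beta)=0$, since otherwise the integrand lies strictly below $e^{-r}$ for $r>0$. This is Proposition~\ref{expedition0012575}, and it is what closes (3)$\Leftrightarrow$(5). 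Without it your cycle never reaches condition (5).
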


\begin{thm}[Infrared quotient and BEC ideal in the resolvent algebra]
Describe the physical quantities of the Bose field by the resolvent algebra. For the smallest closed two-sided ideal $\oaideal{J}_{\txtspinboson,\txtirsingular}$ that kills the infrared directions,
$$\setquot{\oa{A}_{\txtspinboson,0,\sminvtemperature}}
{\oaideal{J}_{\txtspinboson,\txtirsingular}}
\eqalgisom
\oa{A}_{\txtspinboson,\txtphys,\sminvtemperature}$$
holds. On the physical quotient algebra, the KMS state is regular and its GNS representation is faithful.

If the zero-mode disappearance condition holds, then the set of BEC directions inside the physical test-function space is empty, and the corresponding BEC ideals satisfy $\oaideal{J}_{\txtbsn,\txtbec}= \setone{0}$ and $\oaideal{J}_{\txtspinboson,\txtbec}= \setone{0}$.
\end{thm}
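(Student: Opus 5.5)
The plan is to realize the quotient map concretely through the GNS representation of the cutoff-free KMS state $\oastate[\psi_{\txtspinboson,\sminvtemperature}]$. We then identify its kernel with $\oaideal{J}_{\txtspinboson,\txtirsingular}$ using the faithfulness of regular representations in Proposition \ref{expedition0011838}.

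\textbf{Step 1: the algebras and the IR ideal.} Set
\[
\oa{A}_{\txtspinboson,0,\sminvtemperature}=M_2\otimes\oaresolventalgebra(\sphilb{D}_{\txtbsn,0,\sminvtemperature},\sigma),
\qquad
\oa{A}_{\txtspinboson,\txtphys,\sminvtemperature}=M_2\otimes\oaresolventalgebra(\sphilb{D}_{\txtbsn,\txtphys,\sminvtemperature},\sigma).
\]
Let $\oaideal{J}_{\txtspinboson,\txtirsingular}$ be the closed two-sided ideal generated by $1\otimes\oaresolvent(\lambda,f)$ with $f\in\sphilb{D}_{\txtbsn,0,\sminvtemperature}\setminus\dom\mathsf{m}$.

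\textbf{Step 2: the GNS representation kills IR directions.} Take the characteristic functional from the zero-mode decomposition theorem, now without centering. Its phase contains $\mathsf{m}$ paired with $f$. After removing the cutoff, for $f\notin\dom\mathsf{m}$ this factor, averaged over the spin loop measure, makes $\psi(\opfockweyl(tf))=0$ for all $t\neq0$. This is the usual infrared-catastrophe mechanism, and I would obtain it as the limit $\kappa\to0$ of the cutoff formulas.

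Next write $\oaresolvent(\lambda,f)$ as a Laplace transform $\mp\imunit\int_0^{\infty}\napiernum^{-\abs{\lambda} t}\opfockweyl(\pm tf)\,dt$ in the appropriate sense, realized in the representation. Using this, one gets $\psi(A\,\oaresolvent(\lambda,f)\,B)=0$ for $A,B$ in the generating set. Hence $\pi_\psi(\oaresolvent(\lambda,f))=0$, so $\oaideal{J}_{\txtspinboson,\txtirsingular}\subset\ker\pi_\psi$.

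By contrast, for $f\in\sphilb{D}_{\txtbsn,\txtphys,\sminvtemperature}$ the characteristic functional $s\mapsto\psi(\napiernum^{\imunit s\opfocksegal(f)})$ is continuous. It is a product of Gaussians in $\opform{q}_{\txtbsn,0,\sminvtemperature}$ and $\opform{q}_{\txtbsn,\txtnonzero,\sminvtemperature}$ with a spin average. Therefore the restriction of $\pi_\psi$ to $\oa{A}_{\txtspinboson,\txtphys,\sminvtemperature}$ is regular.

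\textbf{Step 3: identify the quotient.} Consider the natural map
\[
\oa{A}_{\txtspinboson,\txtphys,\sminvtemperature}\to\setquot{\oa{A}_{\txtspinboson,0,\sminvtemperature}}{\oaideal{J}_{\txtspinboson,\txtirsingular}}.
\]

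For surjectivity, use the product relation (6). It reduces any product containing a resolvent of a direction $f+g$ with $f\in\dom\mathsf{m}$, $g\notin\dom\mathsf{m}$ to terms containing a factor in the ideal, because $f+g\notin\dom\mathsf{m}$. So modulo the ideal only physical generators survive.

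For injectivity, compose with $\pi_\psi$, which factors through the quotient by Step 2. On $\oaresolventalgebra(\sphilb{D}_{\txtbsn,\txtphys,\sminvtemperature},\sigma)$ this is a regular representation, hence faithful by Proposition \ref{expedition0011838}(iii). Since $M_2$ is simple and finite-dimensional, faithfulness passes to the tensor product. A $\ast$-homomorphism that is injective after composition is injective, which gives the isomorphism and the regularity and faithfulness claims simultaneously.

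\textbf{Step 4: the BEC ideals.} Define the set of BEC directions as $\{f\in\sphilb{D}_{\txtbsn,\txtphys,\sminvtemperature}:\opform{q}_{\txtbsn,0,\sminvtemperature}(f)>0\}$. Let $\oaideal{J}_{\txtbsn,\txtbec}$ and $\oaideal{J}_{\txtspinboson,\txtbec}$ be the closed ideals generated by the associated resolvents, or by their images under the zero-mode decomposition. The zero-mode disappearance condition is condition (1) of the previous theorem, which gives $\opform{q}_{\txtbsn,0,\sminvtemperature}\equiv0$ on the physical space. So this set is empty, and an ideal generated by the empty set is $\setone{0}$.

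\textbf{Main obstacle.} The delicate point is Step 2's claim $\pi_\psi(\oaresolvent(\lambda,f))=0$ for IR-singular $f$. This requires controlling the $\kappa\to0$ limit of the spin-averaged phase uniformly enough to exchange it with the Laplace integral. I would first try dominated convergence using $\abs{\psi(\opfockweyl(tf))}\le1$ and pointwise vanishing for $t\neq0$. The secondary difficulty is that $\sphilb{D}_{\txtbsn,\txtphys,\sminvtemperature}$ may be symplectically degenerate or infinite-dimensional. There I would invoke the inductive-limit statement, Proposition \ref{expedition0011838}(ii), over nondegenerate finite-dimensional subspaces.
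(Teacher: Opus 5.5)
\textbf{Step 2 is a genuine gap, and the injectivity half of Step 3 rests on it.}

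You need a state on the large algebra $\oa{A}_{\txtspinboson,0,\sminvtemperature}$ whose GNS representation annihilates $\oaresolvent(\lambda,f)$ for $f\notin\dom\mathsf{m}$. The paper's cutoff-free KMS state exists only over $\sphilb{D}_{\txtbsn,\txtphys,\sminvtemperature}$; outside it the cross term has no finite value. The limit mechanism you propose also does not give zero:
\begin{itemize}
\item Under $\nu_{\txtspin,\sminvtemperature}$ each of the two constant spin paths has mass $1$ (the $m=0$ term of the jump expansion). The interacting density $\mathsf{D}/\smpartitionfunc$ is strictly positive and invariant under $\prbprocess^{\txtspin}\mapsto-\prbprocess^{\txtspin}$. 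So both constant paths carry the same probability $p_\kappa$, and $p_\kappa\to p>0$ by Proposition \ref{expedition0012482}.
\item The thermal kernel $\frac{\napiernum^{-\abs{u}\omega}+\napiernum^{-(\sminvtemperature-\abs{u})\omega}}{1-\napiernum^{-\sminvtemperature\omega}}$ integrates over $S_{\sminvtemperature}$ to $2\omega^{-1}$. Hence on these paths the spin variable for $sf$ equals $\pm s\opreal\mathsf{m}_\kappa(f)$.
\item The cutoff characteristic function of $\opfocksegal(f)$ therefore contains the term $2p_\kappa\fnexp{-\frac{s^2}{4}\opform{q}_{\txtbsn,\txtbec,\sminvtemperature}(f)}\cos\rbk{s\opreal\mathsf{m}_\kappa(f)}$.
\end{itemize}
If $\opreal\mathsf{m}_\kappa(f)$ diverges, this term oscillates instead of tending to $0$. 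If it does not diverge, nothing forces vanishing at all. So ``pointwise vanishing for $t\neq0$ plus dominated convergence'' cannot be carried out.

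Passing to resolvents, Riemann--Lebesgue in the Laplace variable could kill the oscillation. But that works only when $\abs{\opreal\mathsf{m}_\kappa(f)}\to\infty$, which $f\notin\dom\mathsf{m}$ does not guarantee. You would also still have to control the mixed expectations $\psi(A\,\oaresolvent(\lambda,f)\,B)$ for some weak-$\ast$ cluster point of the cutoff states.

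\textbf{The missing idea is that the isomorphism needs no state at all; this is how the paper argues.}
\begin{itemize}
\item Send $\oaresolvent(\lambda,f)\mapsto\oaresolvent(\lambda,f)$ for $f\in\sphilb{D}_{\txtbsn,\txtphys,\sminvtemperature}$ and $\oaresolvent(\lambda,f)\mapsto 0$ otherwise. This respects every resolvent relation. In the product relation, if exactly one of $f,g$ is non-physical then so is $f+g$ and both sides vanish; if both are, the left side and the bracket vanish.
\item By universality this gives a surjective $\ast$-homomorphism $\Theta_{\txtbsn,\txtirsingular}$ onto $\oaresolventalgebra(\sphilb{D}_{\txtbsn,\txtphys,\sminvtemperature},\sigma)$ that vanishes on $\oaideal{J}_{\txtbsn,\txtirsingular}$.
\item Conversely, the natural map from $\oaresolventalgebra(\sphilb{D}_{\txtbsn,\txtphys,\sminvtemperature},\sigma)$ to the quotient is onto. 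Non-physical resolvents already lie in the ideal, and ranges of $\ast$-homomorphisms are closed, so relation (6) is not needed.
\item The map induced by $\Theta_{\txtbsn,\txtirsingular}$ inverts this map on generators. Hence $\Ker\Theta_{\txtbsn,\txtirsingular}=\oaideal{J}_{\txtbsn,\txtirsingular}$, and tensoring with $M_2$ gives the stated isomorphism.
\item The KMS state is then needed only on $\oa{A}_{\txtspinboson,\txtphys,\sminvtemperature}$, where it is defined. Regularity there plus Proposition \ref{expedition0011838}(iii) gives faithfulness. Then $\psi_{\txtspinboson,\sminvtemperature}^{\mathrm{res}}\circ\Theta_{\txtspinboson,\txtirsingular}$ has GNS kernel exactly $\oaideal{J}_{\txtspinboson,\txtirsingular}$.
\end{itemize}
Your Step 4 coincides with the paper's argument. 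Your degeneracy worry is moot: $\sigma=\opimag\bkt{\cdot}{\cdot}$ is nondegenerate on any complex subspace, and Proposition \ref{expedition0011838}(iii) holds in arbitrary dimension.
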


\section{Functional Integral Representation of the Spin Component at Finite Temperature}\label{functional-integral-representation-of-the-spin-component-at-finite-temperature}

To construct the KMS state of the free Hamiltonian and then construct the KMS state of the full system as its perturbation, we first discuss the KMS state of the free Hamiltonian. Since the free Hamiltonian is the sum of the spin and the free field, it suffices to consider the functional integral representation of each component; here we discuss the spin component. The discussion is modeled on the zero-temperature version in the textbook \cite{LorincziHiroshimaBetz3}.

\begin{defn}
Using the eigenvalues of the $z$ component $\qmpaulispin_z$ of the Pauli matrices, define the spin state space by $\ringratint_{2}
=
\setone{\pm 1}$. In particular, $\qmpaulispin
\in \ringratint_{2}$ may be regarded as an eigenvalue of $\qmpaulispin_z$. Let the spin Hamiltonian be $\physham_{\txtspin}
= - \physcplconst \qmpaulispin_{x}$, and identify the spin space $\sphilb{H}_{\txtspin}$ with
$$\sphilb{H}_{\txtspin}
=
\fldcmp^{2}
=
\fun{\lp^{2}}{\ringratint_{2}}$$
equipped with counting measure. Under this identification, the Pauli matrices act as
$$\funrbk{\qmpaulispin_{z} f}{\qmpaulispin}
= \qmpaulispin f(\qmpaulispin),
\quad
\funrbk{\qmpaulispin_{x} f}{\qmpaulispin}
=
f(-\qmpaulispin).$$
\end{defn}

\begin{defn}
On the probability space $\pairbk{\mathcal{S}_{\txtspin},
\mblfmlborel_{\txtspin},
\msrprb_{\txtpoisson}}$, let $\fml{\prbpoissonprocess_t}{t \geq 0}$ be a Poisson process of intensity $\physcplconst$. Thus $\prbpoissonprocess_0
= 0$, and for arbitrary $t
\geq 0$ and $n
\in \ringratint_{\txtnonneg}$,
$$\msrprb_{\txtpoisson}(\prbpoissonprocess_{t} = n)
=
\napiernum^{-\physcplconst t}
\frac{(\physcplconst t)^n}{n!}$$
holds; the process has independent and stationary increments. In particular, $t
\mapsto
\prbpoissonprocess_t$ is a càdlàg jump process, and the spin flips at each jump. For $\qmpaulispin
\in \ringratint_{2}$, define the spin process by
$$\prbprocess_{\qmpaulispin}^{\txtspin}
=
\fml{\prbprocess_{\qmpaulispin,t}^{\txtspin}}{t \geq 0}
=
\fml{\qmpaulispin (-1)^{\prbpoissonprocess_{t}}}
{t \geq 0}.$$ When the initial spin $\qmpaulispin$ is not displayed explicitly, write simply $\prbprocess^{\txtspin}
=
\fml{\prbprocess_{t}^{\txtspin}}{t \geq 0}$.
\end{defn}

\begin{prop}
The spin process $\fml{\prbprocess_{\qmpaulispin,t}^{\txtspin}}
{t \geq 0}$ is a Markov process. In particular, its generator $L_{\physcplconst}$ is
$$(L_{\physcplconst} f)(\qmpaulispin)
=
\physcplconst \rbk{f(-\qmpaulispin) - f(\qmpaulispin)}
=
\physcplconst (\qmpaulispin_{x} - 1)
f(\qmpaulispin),$$
the Markov semigroup is $\funrbk{P_{\physcplconst,t} f}{\qmpaulispin}
=
\sqfun{\prbexp_{\msrprb_{\txtpoisson}}}
{\fun{f}{\prbprocess_{\qmpaulispin,t}^{\txtspin}}}$, and the transition probability is
$$p_t(\qmpaulispin_{1},\qmpaulispin_{2})
=
\fun{\msrprb_{\txtpoisson}}
{\prbprocess_{\qmpaulispin_{1},t}^{\txtspin} = \qmpaulispin_2}
=
\onehalf
\rbk{1 + \qmpaulispin_{1} \qmpaulispin_{2} \napiernum^{-2 \physcplconst t}},
\quad
\begin{dcases}
p_t(\qmpaulispin,\qmpaulispin)
&=
\onehalf \rbk{1 + \napiernum^{-2 \physcplconst t}}, \\
p_t(\qmpaulispin,-\qmpaulispin)
&=
\onehalf \rbk{1 - \napiernum^{-2 \physcplconst t}}
\end{dcases}.$$
Thus this is a two-state continuous-time Markov chain.
\end{prop}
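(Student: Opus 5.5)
The process $\prbprocess_{\qmpaulispin,t}^{\txtspin} = \qmpaulispin (-1)^{\prbpoissonprocess_t}$ is a deterministic function of the Poisson process, so I would derive everything from the independent, stationary increments of $\prbpoissonprocess$.

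\textbf{Markov property.} Fix $0 \le s \le t$. Write
$$\prbprocess_{\qmpaulispin,t}^{\txtspin} = \prbprocess_{\qmpaulispin,s}^{\txtspin}\,(-1)^{\prbpoissonprocess_t-\prbpoissonprocess_s}.$$
The increment $\prbpoissonprocess_t-\prbpoissonprocess_s$ is independent of $\mathcal{F}_s=\sigma(\prbpoissonprocess_u : u\le s)$. It is also distributed as $\prbpoissonprocess_{t-s}$. Hence, for bounded $f$ on $\ringratint_2$,
$$\mathbb{E}\bigl[f(\prbprocess_{\qmpaulispin,t}^{\txtspin})\mid\mathcal{F}_s\bigr]=(P_{\physcplconst,t-s}f)(\prbprocess_{\qmpaulispin,s}^{\txtspin}).$$
Here the right-hand side is $\sqfun{\prbexp_{\msrprb_{\txtpoisson}}}{f(\eta(-1)^{\prbpoissonprocess_{t-s}})}$ evaluated at $\eta=\prbprocess_{\qmpaulispin,s}^{\txtspin}$. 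This simultaneously gives the Markov property (with respect to the natural filtration) and time homogeneity. It also identifies the semigroup $P_{\physcplconst,t}$. The semigroup law $P_{t+s}=P_tP_s$ follows from the same splitting.

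\textbf{Transition probability.} We have $\prbprocess_{\qmpaulispin_1,t}^{\txtspin}=\qmpaulispin_2$ iff $(-1)^{\prbpoissonprocess_t}=\qmpaulispin_1\qmpaulispin_2$. Since $\prbpoissonprocess_t$ is Poisson with mean $\physcplconst t$,
$$\mathbb{E}[(-1)^{\prbpoissonprocess_t}]=\sum_n e^{-\physcplconst t}\frac{(-\physcplconst t)^n}{n!}=e^{-2\physcplconst t}.$$
Combining this with $\mathbf{1}\{(-1)^{N}=\epsilon\}=\tfrac12(1+\epsilon(-1)^N)$ gives
$$p_t(\qmpaulispin_1,\qmpaulispin_2)=\tfrac12\bigl(1+\qmpaulispin_1\qmpaulispin_2e^{-2\physcplconst t}\bigr).$$
The two displayed special cases are the choices $\qmpaulispin_1\qmpaulispin_2=\pm1$.

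\textbf{Generator.} Write
$$(P_{\physcplconst,t}f)(\qmpaulispin)=p_t(\qmpaulispin,\qmpaulispin)f(\qmpaulispin)+p_t(\qmpaulispin,-\qmpaulispin)f(-\qmpaulispin).$$
Differentiating at $t=0$ using $\frac{d}{dt}\tfrac12(1\pm e^{-2\physcplconst t})|_{t=0}=\mp\physcplconst$ yields
$$(L_{\physcplconst}f)(\qmpaulispin)=\physcplconst\bigl(f(-\qmpaulispin)-f(\qmpaulispin)\bigr).$$
Since $(\qmpaulispin_x f)(\qmpaulispin)=f(-\qmpaulispin)$ under the identification of the preceding definition, this equals $\physcplconst(\qmpaulispin_x-1)f$. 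The state space is finite, so the derivative exists in operator norm. Equivalently, $P_{\physcplconst,t}=e^{t\physcplconst(\qmpaulispin_x-1)}$. One can check this directly from $\qmpaulispin_x^2=1$: it gives $e^{t\physcplconst\qmpaulispin_x}=\cosh(\physcplconst t)+\qmpaulispin_x\sinh(\physcplconst t)$, which reproduces the matrix $p_t$.

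Since the state space has two points, this is a two-state continuous-time Markov chain. No real obstacle is expected. The only point needing care is stating the Markov property with respect to the Poisson filtration, which contains the filtration generated by the spin process, and noting that the tower property passes it to the smaller filtration.
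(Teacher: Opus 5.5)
Your proof is correct and takes essentially the same route as the paper: the Markov property from the independent stationary Poisson increments, the transition probabilities from the Poisson law, and the generator by differentiating $P_{\mathsf{g},t}$ at $t=0$. Your use of $\mathbb{E}[(-1)^{N_t}]=\mathsf{e}^{-2\mathsf{g}t}$ is just a compact form of the paper's separate even/odd sums. The explicit conditional-expectation and tower-property argument, and the cross-check $P_{\mathsf{g},t}=\mathsf{e}^{t\mathsf{g}(\sigma_x-1)}$, add rigor the paper leaves implicit without changing the approach.
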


\begin{proof}
At each jump the sign is reversed, and the increments of the Poisson process are independent and stationary; hence $\fml{\prbprocess_{\qmpaulispin,t}^{\txtspin}}
{t \geq 0}$ is a time-homogeneous Markov process. Indeed, for any bounded function $f$,
$$\funrbk{P_{\physcplconst,t} f}{\qmpaulispin}
=
\sqfun{\prbexp_{\msrprb_{\txtpoisson}}}
{\fun{f}{\qmpaulispin (-1)^{\prbpoissonprocess_t}}}
=
\sum_{n=0}^{\infty}
f(\qmpaulispin (-1)^n)
\napiernum^{-\physcplconst t}
\frac{(\physcplconst t)^n}{n!}$$
holds. Therefore, by the probabilities of an even and an odd number of jumps,
$$\sum_{m=0}^{\infty}
\napiernum^{-\physcplconst t}
\frac{(\physcplconst t)^{2m}}{(2m)!}
=
\onehalf \rbk{1 + \napiernum^{-2 \physcplconst t}},
\quad
\sum_{m=0}^{\infty}
\napiernum^{-\physcplconst t}
\frac{(\physcplconst t)^{2m+1}}{(2m+1)!}
=
\onehalf \rbk{1 - \napiernum^{-2 \physcplconst t}},$$
we obtain the displayed formula for the transition probability $p_t(\qmpaulispin_1,\qmpaulispin_2)$. Differentiating at $t \downarrow 0$ gives
$$\fnrestr{\frac{1}{t}\rbk{P_{\physcplconst,t}f - f}}{t = 0}
=
\physcplconst \rbk{f(-\qmpaulispin) - f(\qmpaulispin)}
=
\physcplconst (\qmpaulispin_{x} - 1)f(\qmpaulispin),$$
so the generator is $L_{\physcplconst}$.
\end{proof}

\begin{defn}
Let $D(S_{\sminvtemperature};
\ringratint_{2})$ be the space of càdlàg paths on the periodic time interval $S_{\sminvtemperature}$ with values in $\ringratint_{2}$. Let $\msrlaw_{\txtspin,\sminvtemperature,\qmpaulispin}$ be the law of the shifted process $\fml{\prbprocess_{\qmpaulispin,t+\frac{\sminvtemperature}{2}}^{\txtspin}}
{t \in S_{\sminvtemperature}}$ starting from $\qmpaulispin$ at time $-\frac{\sminvtemperature}{2}$, and define the spin loop measure at finite temperature by
$$\fun{\nu_{\txtspin,\sminvtemperature}}{A}
=
\napiernum^{\physcplconst \sminvtemperature}
\sum_{\qmpaulispin \in \ringratint_{2}}
\fun{\msrlaw_{\txtspin,\sminvtemperature,\qmpaulispin}}
{A \cap \setone{\prbprocess_{\qmpaulispin,\sminvtemperature}^{\txtspin} = \qmpaulispin}}.$$
\end{defn}

The spin loop measure is not a probability measure but a finite positive measure. The event \(\setone{\prbprocess_{\qmpaulispin,\sminvtemperature}^{\txtspin} = \qmpaulispin}\) imposes the periodic condition that identifies the two endpoints of the interval, and corresponds to the finite-temperature trace.

For notation for time-ordered jump times, write \(\nfoldvar{t}{n}
=
(t_1,\cdots,t_n)
\in I_{\sminvtemperature,<}^{n}\) for \(n\) times satisfying \[-\frac{\sminvtemperature}{2}
< t_1
< \cdots
< t_n
< \frac{\sminvtemperature}{2}.\] Let \(I_{\sminvtemperature,\leq}^{n}\) denote the corresponding set with \(<\) replaced by \(\leq\), and define the path \(\prbprocess_{(\qmpaulispin,\nfoldvar{t}{n})}^{\txtspin}\) with initial value \(\qmpaulispin\) by \[\prbprocess_{(\qmpaulispin,\nfoldvar{t}{n}),t}^{\txtspin}
=
\qmpaulispin (-1)^{\abscard{\set{j}{t_j \leq t}}}.\]

\begin{prop}
For any bounded measurable function $F$ on $D(S_{\sminvtemperature};\ringratint_{2})$,
$$\int_{D(S_{\sminvtemperature};\ringratint_{2})}
\fun{F}{\prbprocess^{\txtspin}}
\opdmsr{\fun{\nu_{\txtspin,\sminvtemperature}}{\prbprocess^{\txtspin}}}
=
\sum_{\qmpaulispin \in \ringratint_{2}}
\sum_{m=0}^{\infty}
\physcplconst^{2m}
\int_{I_{\sminvtemperature,<}^{2m}}
\fun{F}{\prbprocess_{(\qmpaulispin,\nfoldvar{t}{2m})}^{\txtspin}}
\opdmsr{t_{1}}
\cdots
\opdmsr{t_{2m}}$$
holds, and for the total mass one obtains
$$\fun{\msr{\nu_{\txtspin,\sminvtemperature}}}
{D(S_{\sminvtemperature};\ringratint_{2})}
=
\sqfun{\trace_{\fun{\lp^{2}}{\ringratint_{2}}}}{\napiernum^{-\sminvtemperature \physham_{\txtspin}}}
=
2 \cosh (\physcplconst \sminvtemperature).$$
In particular, the normalized thermal spin-path probability measure is
$$\msrprb_{\txtspin,\sminvtemperature}
=
\frac{1}{2 \cosh (\physcplconst \sminvtemperature)}
\msr{\nu_{\txtspin,\sminvtemperature}}.$$
\end{prop}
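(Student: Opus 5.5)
The plan is to decompose the spin loop measure $\nu_{\txtspin,\sminvtemperature}$ according to the initial spin $\qmpaulispin$ and the number $n$ of jumps of the underlying Poisson process on the time interval of length $\sminvtemperature$. By the definition of $\nu_{\txtspin,\sminvtemperature}$, only paths with $\prbprocess^{\txtspin}_{\qmpaulispin,\sminvtemperature}=\qmpaulispin$ contribute. Since $\prbprocess^{\txtspin}_{\qmpaulispin,\sminvtemperature}=\qmpaulispin(-1)^{\prbpoissonprocess_{\sminvtemperature}}$, this restricts to an even number $n=2m$ of jumps.

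Conditionally on $\prbpoissonprocess_{\sminvtemperature}=n$, the standard order-statistics property of the Poisson process says that the jump times are distributed as the order statistics of $n$ independent uniform points on $S_{\sminvtemperature}$. Hence the conditional density of $(t_1,\dots,t_n)$ on $I^{n}_{\sminvtemperature,<}$ is $n!/\sminvtemperature^{n}$. Combining this with $\msrprb_{\txtpoisson}(\prbpoissonprocess_{\sminvtemperature}=n)=\napiernum^{-\physcplconst\sminvtemperature}(\physcplconst\sminvtemperature)^n/n!$, the joint weight of $n$ jumps at times $\nfoldvar{t}{n}$ is $\napiernum^{-\physcplconst\sminvtemperature}\physcplconst^{n}\,\opdmsr{t_1}\cdots\opdmsr{t_n}$ on $I^n_{\sminvtemperature,<}$. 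The path $\prbprocess^{\txtspin}$ is then exactly $\prbprocess^{\txtspin}_{(\qmpaulispin,\nfoldvar{t}{n})}$; the shift by $\sminvtemperature/2$ in the definition of $\msrlaw_{\txtspin,\sminvtemperature,\qmpaulispin}$ only relabels the time axis onto $S_{\sminvtemperature}$. The prefactor $\napiernum^{\physcplconst\sminvtemperature}$ cancels $\napiernum^{-\physcplconst\sminvtemperature}$, and restricting to $n=2m$ and summing over $\qmpaulispin$ gives the displayed series for bounded measurable $F$. Interchanging sum and integral is justified by dominated convergence, since $\abs{F}$ is bounded and the series below converges.

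For the total mass, take $F\equiv1$. The volume of $I^{2m}_{\sminvtemperature,<}$ is $\sminvtemperature^{2m}/(2m)!$, so the mass equals
\[
2\sum_{m\ge0}\frac{(\physcplconst\sminvtemperature)^{2m}}{(2m)!}=2\cosh(\physcplconst\sminvtemperature).
\]
To identify this with the trace, note that $\physham_{\txtspin}=-\physcplconst\qmpaulispin_x$ has eigenvalues $\mp\physcplconst$. Hence $\trace\,\napiernum^{-\sminvtemperature\physham_{\txtspin}}=\napiernum^{\physcplconst\sminvtemperature}+\napiernum^{-\physcplconst\sminvtemperature}$, which is the same value. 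As a cross-check via the previous proposition, $\napiernum^{\physcplconst\sminvtemperature}\sum_{\qmpaulispin}p_{\sminvtemperature}(\qmpaulispin,\qmpaulispin)=\napiernum^{\physcplconst\sminvtemperature}(1+\napiernum^{-2\physcplconst\sminvtemperature})$, again $2\cosh(\physcplconst\sminvtemperature)$. This reflects the Feynman--Kac identity $\napiernum^{-\sminvtemperature\physham_{\txtspin}}=\napiernum^{\physcplconst\sminvtemperature}P_{\physcplconst,\sminvtemperature}$, which holds because $\physham_{\txtspin}=-L_{\physcplconst}-\physcplconst$. Normalizing by the total mass yields the probability measure $\msrprb_{\txtspin,\sminvtemperature}$.

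The only step requiring care is the conditional uniform distribution of the jump times and the bookkeeping of the time shift onto $S_{\sminvtemperature}$. I would state the order-statistics fact as a standard property of Poisson processes rather than reprove it. Jump times exactly at the endpoints, the difference between $<$ and $\leq$, form a null set and do not affect the integral.
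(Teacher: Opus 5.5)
Your proposal is correct and follows essentially the same route as the paper. Both arguments decompose by initial spin and number of jumps, use periodicity to force an even number $2m$ of jumps, let the jump-time density $\napiernum^{-\physcplconst\sminvtemperature}\physcplconst^{n}\,dt_1\cdots dt_n$ cancel against the prefactor $\napiernum^{\physcplconst\sminvtemperature}$, and identify the total mass $2\cosh(\physcplconst\sminvtemperature)$ with $\operatorname{Tr}\napiernum^{-\sminvtemperature\physham_{\txtspin}}=\napiernum^{\physcplconst\sminvtemperature}+\napiernum^{-\physcplconst\sminvtemperature}$. The only differences are cosmetic: you justify the jump-time density via the order-statistics property, which the paper takes for granted, and you get the mass from the simplex volume $\sminvtemperature^{2m}/(2m)!$ rather than directly from the probability of an even number of Poisson jumps, which is the same computation.
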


\begin{proof}
We express the spin loop measure in terms of jump times. The periodicity condition $\prbprocess_{(\qmpaulispin,\nfoldvar{t}{n}),-\frac{\sminvtemperature}{2}}^{\txtspin}
= \prbprocess_{(\qmpaulispin,\nfoldvar{t}{n}),\frac{\sminvtemperature}{2}}^{\txtspin}$ is equivalent to $n$ being even. Thus the number of jumps must be even. For a Poisson process of intensity $\physcplconst$, the density of $n$ jump times is $\napiernum^{- \physcplconst \sminvtemperature}
\physcplconst^{n}
\opdmsr{t_1}
\cdots
\opdmsr{t_n}$, while the measure $\nu_{\txtspin,\sminvtemperature}$ multiplies the whole expression by $\napiernum^{\physcplconst \sminvtemperature}$, so the exponential factor of the Poisson process cancels. Hence for any bounded measurable function $F$,
$$\int_{D(S_{\sminvtemperature};\ringratint_{2})}
\fun{F}{\prbprocess^{\txtspin}}
\opdmsr{\fun{\nu_{\txtspin,\sminvtemperature}}{\prbprocess^{\txtspin}}}
=
\sum_{\qmpaulispin \in \ringratint_{2}}
\sum_{m=0}^{\infty}
\physcplconst^{2m}
\int_{I_{\sminvtemperature,<}^{2m}}
\fun{F}{\prbprocess_{(\qmpaulispin,\nfoldvar{t}{2m})}^{\txtspin}}
\opdmsr{t_{1}}
\cdots
\opdmsr{t_{2m}}$$
holds.

We compute the total mass of the measure $\msr{\nu_{\txtspin,\sminvtemperature}(\prbprocess^{\txtspin})}$. By definition,
$$\fun{\msr{\nu_{\txtspin,\sminvtemperature}}}
{D(S_{\sminvtemperature};\ringratint_{2})}
=
\napiernum^{\physcplconst \sminvtemperature}
\sum_{\qmpaulispin \in \ringratint_{2}}
\fun{\msrlaw_{\txtspin,\sminvtemperature,\qmpaulispin}}
{\prbprocess_{\qmpaulispin,\sminvtemperature}^{\txtspin} = \qmpaulispin}.$$
Since the time difference is $\sminvtemperature$,
$$\fun{\msrlaw_{\txtspin,\sminvtemperature,\qmpaulispin}}
{\prbprocess_{\qmpaulispin,\sminvtemperature}^{\txtspin} = \qmpaulispin}
=
\sum_{m=0}^{\infty}
\napiernum^{-\physcplconst \sminvtemperature}
\frac{(\physcplconst \sminvtemperature)^{2m}}{(2m!)}
=
\napiernum^{-\physcplconst \sminvtemperature}
\cosh (\physcplconst \sminvtemperature),$$
and therefore
$$\fun{\msr{\nu_{\txtspin,\sminvtemperature}}}
{D(S_{\sminvtemperature};\ringratint_{2})}
=
\napiernum^{\physcplconst \sminvtemperature}
\cdot
2
\napiernum^{-\physcplconst \sminvtemperature}
\cosh (\physcplconst \sminvtemperature)
=
2 \cosh (\physcplconst \sminvtemperature).$$
On the other hand,
$$\sqfun{\trace_{\fun{\lp^{2}}{\ringratint_{2}}}}{\napiernum^{\sminvtemperature \physcplconst \qmpaulispin_{x}}}
=
\napiernum^{\physcplconst \sminvtemperature}
+\napiernum^{-\physcplconst \sminvtemperature}
=
2 \cosh (\physcplconst \sminvtemperature),$$
and hence
$$\fun{\msr{\nu_{\txtspin,\sminvtemperature}}}
{D(S_{\sminvtemperature};\ringratint_{2})}
=
\sqfun{\trace_{\fun{\lp^{2}}{\ringratint_{2}}}}{\napiernum^{-\sminvtemperature \physham_{\txtspin}}}$$
also holds.
\end{proof}

\begin{prop}
For a time-ordered sequence $\nfoldvar{s}{n}
\in I_{\sminvtemperature,<}^{n}$, let $f_{1},\cdots,f_n
\colon \ringratint_{2}
\to \fldcmp$ be bounded functions, and denote the corresponding multiplication operators by the same symbols $f_j$. Then
\begin{equation}
\begin{aligned}
&\int_{D(S_{\sminvtemperature};\ringratint_{2})}
\prod_{j=1}^{n}
\fun{f_j}{\prbprocess_{s_j}^{\txtspin}}
\opdmsr{\msrprb_{\txtspin,\sminvtemperature}}
\\ %%%%%%%%%%%%%%%%
&=
\frac{1}{2 \cosh (\physcplconst \sminvtemperature)}
\sqfun{\trace_{\fun{\lp^{2}}{\ringratint_{2}}}}
{\napiernum^{\physcplconst \qmpaulispin_{x} \rbk{s_1 + \frac{\sminvtemperature}{2}}}
f_1
\napiernum^{\physcplconst \qmpaulispin_{x} \rbk{s_2-s_1}}
f_2
\cdots
\napiernum^{\physcplconst \qmpaulispin_{x} \rbk{s_n-s_{n-1}}}
f_{n}
\napiernum^{\physcplconst \qmpaulispin_{x} \rbk{\frac{\sminvtemperature}{2} - s_{n}}}}.
\end{aligned}
\end{equation}
holds. In terms of the transition probability $p_t$, this can be written as
\begin{equation}
\begin{aligned}
&\int_{D(S_{\sminvtemperature};\ringratint_{2})}
\prod_{j=1}^{n}
\fun{f_j}{\prbprocess_{s_j}^{\txtspin}}
\opdmsr{\msrprb_{\txtspin,\sminvtemperature}}
\\ %%%%%%%%%%%%%%%%
&=
\frac{\napiernum^{\physcplconst \sminvtemperature}}{2 \cosh (\physcplconst \sminvtemperature)}
\\ %%%%%%%%%%%%%%%%
&\quad\times
\sum_{\qmpaulispin_{0},\cdots,\qmpaulispin_{n} \in \ringratint_{2}}
p_{s_1 + \frac{\sminvtemperature}{2}}(\qmpaulispin_{0},\qmpaulispin_{1})
f_1(\qmpaulispin_{1})
p_{s_{2} - s_{1}}(\qmpaulispin_{1},\qmpaulispin_{2})
\cdots
p_{s_{n}-s_{n-1}}(\qmpaulispin_{n-1},\qmpaulispin_{n})
f_{n}(\qmpaulispin_{n})
p_{\frac{\sminvtemperature}{2} - s_{n}}(\qmpaulispin_{n},\qmpaulispin_{0}).
\end{aligned}
\end{equation}
\end{prop}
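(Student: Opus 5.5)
The plan is to prove the second (transition-probability) form first and then recognize it as a trace of matrix products.

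\emph{Step 1: Markov property.} Write $\msrprb_{\txtspin,\sminvtemperature}=\nu_{\txtspin,\sminvtemperature}/(2\cosh(\physcplconst\sminvtemperature))$. By definition, $\nu_{\txtspin,\sminvtemperature}$ is $\napiernum^{\physcplconst\sminvtemperature}$ times the sum over initial spins $\qmpaulispin_0$ of the law of the shifted process started at time $-\frac{\sminvtemperature}{2}$, restricted to the event that the path returns to $\qmpaulispin_0$ at time $\frac{\sminvtemperature}{2}$. For fixed $\qmpaulispin_0$, insert the times $-\frac{\sminvtemperature}{2}<s_1<\cdots<s_n<\frac{\sminvtemperature}{2}$. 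By the Markov property and time homogeneity of the preceding proposition, the joint law of $(\prbprocess_{s_1},\dots,\prbprocess_{s_n},\prbprocess_{\sminvtemperature/2})$ factorizes into a product of transition probabilities:
\[
p_{s_1+\frac{\sminvtemperature}{2}}(\qmpaulispin_0,\qmpaulispin_1)\,p_{s_2-s_1}(\qmpaulispin_1,\qmpaulispin_2)\cdots p_{\frac{\sminvtemperature}{2}-s_n}(\qmpaulispin_n,\qmpaulispin_0).
\]
Multiply by $\prod_j f_j(\qmpaulispin_j)$, sum over $\qmpaulispin_0,\dots,\qmpaulispin_n$, and apply the prefactor $\napiernum^{\physcplconst\sminvtemperature}/(2\cosh(\physcplconst\sminvtemperature))$. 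This gives the second formula.

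\emph{Step 2: transition kernel as a matrix exponential.} Show that $p_t(\qmpaulispin_1,\qmpaulispin_2)$ is the matrix element of $\napiernum^{tL_{\physcplconst}}=\napiernum^{-\physcplconst t}\napiernum^{\physcplconst t\qmpaulispin_x}$. Use the explicit formula $\napiernum^{\physcplconst t\qmpaulispin_x}=\cosh(\physcplconst t)+\qmpaulispin_x\sinh(\physcplconst t)$, whose diagonal entries are $\cosh(\physcplconst t)$ and off-diagonal entries are $\sinh(\physcplconst t)$. Multiplying by $\napiernum^{-\physcplconst t}$ reproduces $\frac12(1\pm\napiernum^{-2\physcplconst t})$ from the preceding proposition. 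Equivalently, note that $L_{\physcplconst}=\physcplconst(\qmpaulispin_x-1)$ generates the semigroup $P_{\physcplconst,t}$.

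\emph{Step 3: reassembly into a trace.} The sum over intermediate $\qmpaulispin_j$ with the diagonal multiplication operators $f_j$ inserted is exactly matrix multiplication, and the closing $\qmpaulispin_0$ sum is the trace. The scalar factors $\napiernum^{-\physcplconst t}$ contribute
\[
\napiernum^{-\physcplconst\left[(s_1+\frac{\sminvtemperature}{2})+(s_2-s_1)+\cdots+(\frac{\sminvtemperature}{2}-s_n)\right]}=\napiernum^{-\physcplconst\sminvtemperature},
\]
which cancels the prefactor $\napiernum^{\physcplconst\sminvtemperature}$. This yields the first formula.

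The main obstacle is Step 1: justifying the factorization under the loop measure. Conditioning on the endpoint event $\{\prbprocess_{\sminvtemperature/2}=\qmpaulispin_0\}$ has to be handled as a joint finite-dimensional distribution rather than a conditional law. This is handled by the Markov property of the preceding proposition applied to the cylinder function $\prod_j f_j(\prbprocess_{s_j})\,\fndef{\prbprocess_{\sminvtemperature/2}=\qmpaulispin_0}$. Alternatively, it can be checked via the jump-time expansion of the previous proposition.
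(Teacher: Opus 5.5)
Your proposal is correct and follows the same route as the paper. In both, the Markov property under each $\msrlaw_{\txtspin,\sminvtemperature,\qmpaulispin_0}$, with the return event included in the finite-dimensional distribution, gives the transition-kernel formula, and identifying the kernel with a matrix exponential turns the cyclic sum into the trace. Your Step 2 is more careful than the paper's proof, which writes $(\napiernum^{t\physcplconst\qmpaulispin_x}g)(\qmpaulispin)=\sum_{\eta}p_t(\qmpaulispin,\eta)g(\eta)$ and omits the factor $\napiernum^{\physcplconst t}$. Your bookkeeping $p_t(\qmpaulispin_1,\qmpaulispin_2)=\napiernum^{-\physcplconst t}(\napiernum^{\physcplconst t\qmpaulispin_x})_{\qmpaulispin_1\qmpaulispin_2}$, with the scalars multiplying to $\napiernum^{-\physcplconst\sminvtemperature}$ against the prefactor $\napiernum^{\physcplconst\sminvtemperature}$, is exactly what makes the two displayed formulas agree.
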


\begin{proof}
By the definition of the thermal spin-path probability measure $\msrprb_{\txtspin,\sminvtemperature}
=
\frac{1}{2 \cosh (\physcplconst \sminvtemperature)}
\msr{\nu_{\txtspin,\sminvtemperature}}$, the expectation is
$$\frac{\napiernum^{\physcplconst \sminvtemperature}}{2 \cosh (\physcplconst \sminvtemperature)}
\sum_{\qmpaulispin_0 \in \ringratint_2}
\int_{D(S_{\sminvtemperature};\ringratint_2)}
\prod_{j=1}^n
\fun{f_j}{\prbprocess_{s_j}^{\txtspin}}
\cdot
\fun{\fndef{\setone{\prbprocess_{\qmpaulispin_0,\sminvtemperature}^{\txtspin}=\qmpaulispin_0}}}
{\prbprocess_{s_j}^{\txtspin}}
\opdmsr{\fun{\msrlaw_{\txtspin,\sminvtemperature,\qmpaulispin_0}}
{\prbprocess_{s_j}^{\txtspin}}}.$$
Applying the Markov property, according to the definition of the time-ordered sequence, on each interval $\rbk{-\frac{\sminvtemperature}{2},s_1}, \rbk{s_1,s_2}, \cdots,
\rbk{s_n,\frac{\sminvtemperature}{2}}$, we obtain
\begin{equation}
\begin{aligned}
&\int_{D(S_{\sminvtemperature};\ringratint_2)}
\prod_{j=1}^n
\fun{f_j}{\prbprocess_{s_j}^{\txtspin}}
\cdot
\fun{\fndef{\setone{\prbprocess_{\qmpaulispin_0,\sminvtemperature}^{\txtspin}=\qmpaulispin_0}}}
{{\prbprocess_{s_j}^{\txtspin}}}
\opdmsr{\fun{\msrlaw_{\txtspin,\sminvtemperature,\qmpaulispin_0}}
{\prbprocess_{s_j}^{\txtspin}}}
\\ %%%%%%%%%%%%%%%%
&=
\sum_{\qmpaulispin_1,\cdots,\qmpaulispin_n \in \ringratint_2}
p_{s_1+\frac{\sminvtemperature}{2}}(\qmpaulispin_0,\qmpaulispin_1)
f_1(\qmpaulispin_1)
p_{s_2-s_1}(\qmpaulispin_1,\qmpaulispin_2)
\cdots
p_{s_n-s_{n-1}}(\qmpaulispin_{n-1},\qmpaulispin_n)
f_n(\qmpaulispin_n)
p_{\frac{\sminvtemperature}{2}-s_n}(\qmpaulispin_n,\qmpaulispin_0).
\end{aligned}
\end{equation}
Summing further over $\qmpaulispin_0$ gives the representation by the probability kernel.

Next, regarding $f_j$ as multiplication operators on $\fun{\lp^2}{\ringratint_2}$, the identity
$$\funrbk{\napiernum^{t\physcplconst\qmpaulispin_x}g}{\qmpaulispin}
=
\sum_{\eta\in\ringratint_2} p_t(\qmpaulispin,\eta) g(\eta)$$
rewrites the preceding probability-kernel formula directly as
$$\sqfun{\trace_{\fun{\lp^{2}}{\ringratint_{2}}}}
{\napiernum^{\physcplconst \qmpaulispin_x \rbk{s_1+\frac{\sminvtemperature}{2}}}
f_1
\napiernum^{\physcplconst \qmpaulispin_x \rbk{s_2-s_1}}
f_2 \cdots
\napiernum^{\physcplconst \qmpaulispin_x \rbk{s_n-s_{n-1}}}
f_n
\napiernum^{\physcplconst \qmpaulispin_x \rbk{\frac{\sminvtemperature}{2}-s_n}}}$$
as claimed.
\end{proof}

\begin{prop}
Define translations on the circle $S_{\sminvtemperature}$ by $\rbk{u_{a} \prbprocess^{\txtspin}}_{t}
= \prbprocess_{t+a}^{\txtspin}$. Then the invariance
$$\msrprb_{\txtspin,\sminvtemperature}
\circ \inv{u_{a}}
=
\msrprb_{\txtspin,\sminvtemperature}$$
holds. Similarly, let $r$ be the time-reversal map on the circle $S_{\sminvtemperature}$, and define time reversal on paths by $(R \prbprocess^{\txtspin})_{t}
= \prbprocess_{-t}^{\txtspin}$. Then the reflection invariance
$$\msrprb_{\txtspin,\sminvtemperature}
\circ \inv{R}
=
\msrprb_{\txtspin,\sminvtemperature}$$
holds.
\end{prop}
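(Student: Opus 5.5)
The plan is to prove both invariances by reducing them to statements about finite-dimensional distributions, which the preceding proposition expresses as traces of products of operators on $\fun{\lp^{2}}{\ringratint_{2}}$. A probability measure on $D(S_{\sminvtemperature};\ringratint_{2})$ is determined by its finite-dimensional marginals, because the cylinder sets generate the Borel $\sigma$-algebra of the càdlàg path space. It therefore suffices to show that, for every time-ordered $\nfoldvar{s}{n}\in I_{\sminvtemperature,<}^{n}$ and bounded $f_1,\dots,f_n$, the expectation of $\prod_j f_j(\prbprocess^{\txtspin}_{s_j})$ is unchanged when the path is replaced by $u_a\prbprocess^{\txtspin}$ or by $R\prbprocess^{\txtspin}$. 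Here times are read modulo $\sminvtemperature$ on the circle, and the endpoints are identified through the periodic condition built into $\nu_{\txtspin,\sminvtemperature}$.

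For translations, fix $a$ and reduce modulo $\sminvtemperature$ the shifted times $s_j+a$ into $S_{\sminvtemperature}$. After a cyclic relabelling they become time-ordered again, say $s_{k+1}+a-\sminvtemperature<\dots<s_n+a-\sminvtemperature<s_1+a<\dots<s_k+a$. By the preceding proposition, the expectation is $\frac{1}{2\cosh(\physcplconst\sminvtemperature)}$ times the trace of the alternating product of the $f_j$ with the propagators $\napiernum^{\physcplconst\qmpaulispin_x(\cdot)}$. These propagators form a one-parameter semigroup, so the two boundary factors merge into a single propagator between the last and the first insertion, with length equal to $\sminvtemperature$ minus the total of the interior gaps. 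Cyclicity of the trace then shows that the result depends only on the cyclic order of the insertions and the gaps between them modulo $\sminvtemperature$. A rigid rotation by $a$ preserves both, so the two traces agree. The main bookkeeping obstacle is the index at which the wrap-around occurs. I would handle it by first proving the identity for $0<a$ small enough that no point crosses $\frac{\sminvtemperature}{2}$. In that case the statement is simply that merging $\napiernum^{\physcplconst\qmpaulispin_x(\frac{\sminvtemperature}{2}-s_n)}$ and $\napiernum^{\physcplconst\qmpaulispin_x(s_1+\frac{\sminvtemperature}{2})}$ gives the same product as the shifted boundary factors. Crossing a single point costs one use of cyclicity, and the general $a$ follows by composing these steps, using $u_a u_b=u_{a+b}$.

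For reflection, $R$ maps the times $s_j$ to $-s_j$, which are ordered as $-s_n<\dots<-s_1$. The expectation under $\msrprb_{\txtspin,\sminvtemperature}\circ\inv{R}$ is therefore the trace with the $f_j$ in reversed order and the same gaps in reversed order. Taking the transpose, $\trace(A)=\trace(A^{T})$, reverses the product. The factors $f_j$ are diagonal and $\napiernum^{t\physcplconst\qmpaulispin_x}$ is a real symmetric matrix, so each factor is its own transpose. The reversed product thus has the same trace as the original, and this is equivalent to the symmetry $p_t(\qmpaulispin_1,\qmpaulispin_2)=p_t(\qmpaulispin_2,\qmpaulispin_1)$ read off from the explicit transition probabilities.

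A technical point is the action of $R$ and $u_a$ at jump times, since reflection turns a càdlàg path into a càglàd one. I would resolve this by redefining $R\prbprocess^{\txtspin}$ as the right-continuous modification. This changes the path only on a countable set of jump times, and these are almost surely not among the fixed $s_j$: under the jump-time representation of the earlier proposition, the jump times have a density with respect to Lebesgue measure. The finite-dimensional distributions are therefore unaffected, and the argument above applies.
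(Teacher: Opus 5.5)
Your argument is correct, but it is organized differently from the paper's proof.

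\emph{The paper's route.} The paper works directly on the jump-time representation of $\nu_{\mathrm{spin},\beta}$. A rotation or reflection of the circle sends an even jump configuration $\{t_1,\dots,t_{2m}\}$ to another such configuration. The weight $\mathsf{g}^{2m}\,dt_1\cdots dt_{2m}$ is preserved because Lebesgue measure is, and the symmetric sum over the two initial spins absorbs the change of starting value. Cyclicity of the trace appears there only as an alternative justification for translations.

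\emph{Your route.} You reduce to finite-dimensional marginals and use the transfer-matrix trace formula. Your translation step is therefore the paper's aside. Your reflection step is genuinely different: you transpose the trace and use that the diagonal $f_j$ and the real symmetric $e^{t\mathsf{g}\sigma_x}$ are their own transposes, i.e.\ reversibility $p_t(\sigma_1,\sigma_2)=p_t(\sigma_2,\sigma_1)$. This replaces the paper's map $t_j\mapsto -t_j$ on jump times.

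\emph{What each buys.} Your route spares you from tracking how the initial spin and the jump set transform. Unlike the paper, it also deals with the fact that $R$ turns c\`adl\`ag paths into c\`agl\`ad ones. The price is the appeal to the fact that cylinder sets generate the Borel $\sigma$-algebra of Skorokhod space, whereas the paper's argument is an identity of path measures from the outset.

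\emph{Two simplifications.} First, the stepwise small-$a$ argument is unnecessary. Once cyclicity merges the two boundary propagators into $e^{\mathsf{g}\sigma_x(\beta-s_n+s_1)}$, the trace depends only on the cyclic order and the cyclic gaps, which any rotation preserves at once. Second, shifted times landing exactly on the cut point $\pm\beta/2$ can simply be excluded, since times avoiding it still form a determining family.
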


\begin{proof}
In the jump-time representation, translation invariance follows from the rotation invariance of Lebesgue measure on $\setone{t_1,\cdots,t_{2m}}
\subset S_{\sminvtemperature}$; in the operator trace representation, it follows from cyclicity of the trace. Thus this is not merely a bridge on an interval but a loop measure on the circle $S_{\sminvtemperature}$. In the jump-time representation, the sequence
$$-\frac{\sminvtemperature}{2}
< t_1
\cdots
< t_{2m}
\leq \frac{\sminvtemperature}{2}$$
is mapped by time reversal to
$$-\frac{\sminvtemperature}{2}
\leq -t_{2m}
\cdots
< -t_1
< \frac{\sminvtemperature}{2}.$$
The weight $\physcplconst^{2m} \opdmsr{t_{1}} \cdots \opdmsr{t_{2m}}$ is invariant, and the sum over the initial spin $\sum_{\qmpaulispin \in \ringratint_{2}}$ is invariant as well. Therefore the finite positive measure $\nu_{\txtspin,\sminvtemperature}$ is invariant under translations and reflection, and the probability measure $\msrprb_{\txtspin,\sminvtemperature}$, which differs from it by a constant factor, has the same invariance.
\end{proof}

From the preceding discussion, the KMS state of the non-interacting spin-only system is defined by \[\oastate[\psi_{\txtspin,\physcplconst,\sminvtemperature}](A)
=
\frac{1}{\sqfun{\trace_{\fun{\lp^{2}}{\ringratint_{2}}}}{\napiernum^{-\sminvtemperature \physham_{\txtspin}}}}
\sqfun{\trace_{\fun{\lp^{2}}{\ringratint_{2}}}}{\napiernum^{-\sminvtemperature \physham_{\txtspin}} A}
=
\frac{1}{2 \cosh (\physcplconst \sminvtemperature)}
\sqfun{\trace_{\fun{\lp^{2}}{\ringratint_{2}}}}{\napiernum^{-\sminvtemperature \physham_{\txtspin}} A}.\] For arbitrary \(\nfoldvar{s}{n}
\in I_{\sminvtemperature,<}^{n}\) and \(f_1,\cdots,f_n
\in \fun{\lp^{2}}{\ringratint_{2}}\), observables given by multiplication operators are represented by the thermal spin-path probability measure as \[\fun{\oastate[\psi_{\txtspin,\physcplconst,\sminvtemperature}]}
{\prod_{j=1}^{n} \fun{f_j}{\prbprocess_{s_j}^{\txtspin}}}
=
\int_{D(S_{\sminvtemperature};\ringratint_{2})}
\prod_{j=1}^{n}
\fun{f_j}{\prbprocess_{s_j}^{\txtspin}}
\opdmsr{\fun{\msrprb_{\txtspin,\sminvtemperature}}{\prbprocess^{\txtspin}}}.\]

Before turning to the spin-boson model, consider a perturbation by a time-dependent diagonal potential. For a bounded measurable function \(V
\colon S_{\sminvtemperature} \times \ringratint_{2}
\to \fldreal\), the Feynman-Kac weight is \(\fnexp{-\int_{S_{\sminvtemperature}}
\fun{V}{t,\prbprocess_{t}^{\txtspin}}
\opdmsr{t}}\). Define the partition function by \[\smpartitionfunc_{\txtspin,\physcplconst,V,\sminvtemperature}
=
\int_{D(S_{\sminvtemperature};\ringratint_{2})}
\fnexp{-\int_{S_{\sminvtemperature}}
\fun{V}{t,\prbprocess_{t}^{\txtspin}}
\opdmsr{t}}
\opdmsr{\msrprb_{\txtspin,\sminvtemperature}}\] and define the perturbed spin probability measure determined by it by \[\opdmsr{\msrprb_{\txtspin,V,\sminvtemperature}}
=
\frac{1}{\smpartitionfunc_{\txtspin,\physcplconst,V,\sminvtemperature}}
\fnexp{-\int_{S_{\sminvtemperature}}
\fun{V}{t,\prbprocess_{t}^{\txtspin}}
\opdmsr{t}}
\opdmsr{\msrprb_{\txtspin,\sminvtemperature}}.\]

The following identity holds for expectations.

\begin{prop}
For arbitrary $f,g
\in \sphilb{H}_{\txtspin}$,
$$\bkt{f}{\napiernum^{-t \physham_{\txtspin}} g}_{\sphilb{H}_{\txtspin}}
=
\napiernum^{\physcplconst t}
\sqfun{\prbexp_{\msrprb_{\txtpoisson}}}
{\physcplconst^{\prbpoissonprocess_{t}}
\sum_{\qmpaulispin \in \ringratint_{2}}
\cmpconj{f(\qmpaulispin)} \cdot g(\prbprocess_{\qmpaulispin,t}^{\txtspin})}$$
holds.
\end{prop}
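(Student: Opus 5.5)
The plan is to reduce both sides of the stated identity to the same power series in $t$ and compare them order by order, using only the operator form of the spin Hamiltonian and the explicit law of the Poisson process.

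\emph{Operator side.} Since $\physham_{\txtspin}=-\physcplconst\qmpaulispin_x$ and $\qmpaulispin_x^2=1$, I expand
$$\napiernum^{-t\physham_{\txtspin}}=\sum_{n=0}^\infty \frac{(\physcplconst t)^n}{n!}\qmpaulispin_x^n .$$
Under the identification $\sphilb{H}_{\txtspin}=\fun{\lp^2}{\ringratint_2}$ we have $(\qmpaulispin_x^n g)(\qmpaulispin)=g(\qmpaulispin(-1)^n)$. This gives
$$\bkt{f}{\napiernum^{-t\physham_{\txtspin}}g}=\sum_{\qmpaulispin\in\ringratint_2}\sum_{n\ge0}\frac{(\physcplconst t)^n}{n!}\,\cmpconj{f(\qmpaulispin)}\,g(\qmpaulispin(-1)^n).$$
All series converge absolutely because $\ringratint_2$ is finite and $f,g$ are bounded.

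\emph{Path side.} On $\setone{\prbpoissonprocess_t=n}$ we have $\prbprocess^{\txtspin}_{\qmpaulispin,t}=\qmpaulispin(-1)^n$. I would therefore condition on $\prbpoissonprocess_t$, write the expectation as a sum over $n$ weighted by the Poisson probabilities, and use Fubini to exchange $\sum_{\qmpaulispin}$ with the expectation. The weight $\physcplconst^{\prbpoissonprocess_t}$ then contributes a factor $\physcplconst^n$ to the $n$-th term, and the prefactor in front of the expectation is meant to cancel the factor $\napiernum^{-(\text{intensity})\,t}$ coming from the Poisson probabilities. Comparing the $n$-th terms with the operator side then gives the claim. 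As a cross-check I would also use the parity sums from the transition-probability proposition: summing the even and odd $n$ separately produces $\cosh(\physcplconst t)$ and $\sinh(\physcplconst t)$, which are exactly the matrix entries of $\napiernum^{\physcplconst t\qmpaulispin_x}$.

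\emph{Main obstacle.} The hard step is the normalisation bookkeeping between the prefactor, the weight $\physcplconst^{\prbpoissonprocess_t}$, and the intensity of the process. Matching the $n$-th coefficients requires
$$\text{prefactor}\times\physcplconst^n\times\msrprb_{\txtpoisson}(\prbpoissonprocess_t=n)=\frac{(\physcplconst t)^n}{n!}.$$
I would make this precise through the Radon--Nikodym density between Poisson laws of different intensities on $[0,t]$. The density of the intensity-$\physcplconst$ law with respect to the unit-intensity law is $\physcplconst^{\prbpoissonprocess_t}\napiernum^{(1-\physcplconst)t}$. With this density in hand, the weight $\physcplconst^{\prbpoissonprocess_t}$ and the exponential prefactor can be traded against each other. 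This is where I expect difficulty, and the computation has to be checked explicitly rather than assumed. With the process of intensity $\physcplconst$ as defined earlier in the paper, the $n$-th term of the right-hand side is
$$\napiernum^{\physcplconst t}\,\physcplconst^n\,\napiernum^{-\physcplconst t}\frac{(\physcplconst t)^n}{n!}=\frac{(\physcplconst^2 t)^n}{n!}.$$
This agrees with the operator side only when $\physcplconst=1$, or when $\prbpoissonprocess$ under $\msrprb_{\txtpoisson}$ is read as the unit-intensity process. Even with the unit-intensity reading, the $n$-th term becomes $\napiernum^{(\physcplconst-1)t}(\physcplconst t)^n/n!$. In that case the identity holds exactly once the prefactor is taken to be $\napiernum^{t}$, and the stated $\napiernum^{\physcplconst t}$ agrees with it only at $\physcplconst=1$.

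\emph{Write-up.} I would state explicitly which intensity convention makes the identity hold, together with the corresponding prefactor. I would then carry out the term-by-term identification under that convention, with Fubini justifying the exchange of sum and expectation.
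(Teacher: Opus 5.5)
Your route is the paper's route: expand $\napiernum^{-t\physham_{\txtspin}}=\sum_{n}\frac{(\physcplconst t)^{n}}{n!}\qmpaulispin_{x}^{n}$ and match the $n$-th term against $\msrprb_{\txtpoisson}(\prbpoissonprocess_{t}=n)$. Your bookkeeping is correct, and it exposes an error in the paper's proof. The paper asserts $\frac{(t\physcplconst)^{n}}{n!}=\napiernum^{\physcplconst t}\physcplconst^{n}\msrprb_{\txtpoisson}(\prbpoissonprocess_{t}=n)$. For the intensity-$\physcplconst$ process that the paper itself defines, the right side of that identity is $\physcplconst^{2n}t^{n}/n!$. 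Summing over $n$, the printed right-hand side equals $\bkt{f}{\napiernum^{t\physcplconst^{2}\qmpaulispin_{x}}g}$, not $\bkt{f}{\napiernum^{t\physcplconst\qmpaulispin_{x}}g}$. For instance, $f=g=\fndef{\setone{+1}}$ gives $\cosh(\physcplconst t)$ on the left and $\cosh(\physcplconst^{2}t)$ on the right. So the statement as written is false whenever $t>0$ and $\physcplconst\notin\setone{0,1}$. There is no gap in your reasoning; the defect is in the statement and in that line of the paper's proof.

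Your correction is right: a unit-intensity process, prefactor $\napiernum^{t}$, and weight $\physcplconst^{\prbpoissonprocess_{t}}$, which is the textbook convention. Another correction keeps the intensity $\physcplconst$ used elsewhere in the paper and drops the weight. Since the paper's generator is $L_{\physcplconst}=\physcplconst(\qmpaulispin_{x}-1)$, one has $\napiernum^{-t\physham_{\txtspin}}=\napiernum^{\physcplconst t}\napiernum^{tL_{\physcplconst}}$. Hence
\[
\bkt{f}{\napiernum^{-t\physham_{\txtspin}}g}=\napiernum^{\physcplconst t}\sum_{\qmpaulispin\in\ringratint_{2}}\cmpconj{f(\qmpaulispin)}\,\prbexp_{\msrprb_{\txtpoisson}}\bigl[g(\prbprocess^{\txtspin}_{\qmpaulispin,t})\bigr],
\]
and no series expansion is needed. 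Your Radon--Nikodym density $\physcplconst^{\prbpoissonprocess_{t}}\napiernum^{(1-\physcplconst)t}$ is exactly what shows the two corrected forms agree. It also explains the error: the printed formula applies the reweighting $\physcplconst^{\prbpoissonprocess_{t}}$ to a process that already has intensity $\physcplconst$, which in effect gives intensity $\physcplconst^{2}$.

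One editorial point. You say the identity holds ``when $\prbpoissonprocess$ is read as the unit-intensity process'', but your next sentence shows that with unit intensity the prefactor must be $\napiernum^{t}$, not $\napiernum^{\physcplconst t}$. In the write-up, state the corrected formula once, with its intensity convention and prefactor made explicit.
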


\begin{proof}
Using the matrix exponential $\napiernum^{t \physcplconst \qmpaulispin_{x}}
=
\sum_{n=0}^{\infty}
\frac{(t \physcplconst)^n}{n!}
\qmpaulispin_{x}^n$, the action of the Pauli matrix
$$\funrbk{\qmpaulispin_{x}^n g}{\qmpaulispin}
=
\fun{g}{(-1)^n \qmpaulispin},$$
and the definition of the Poisson distribution
$$\msrprb_{\txtpoisson}(\prbpoissonprocess_{t} = n)
=
\napiernum^{-\physcplconst t} \frac{(\physcplconst t)^n}{n!},$$
we obtain
$$\frac{(t \physcplconst)^n}{n!}
=
\napiernum^{\physcplconst t} \physcplconst^{n}
\msrprb_{\txtpoisson}(\prbpoissonprocess_{t} = n).$$
Combining these identities gives
\begin{equation}
\begin{aligned}
&\bkt{f}{\napiernum^{-t \physham_{\txtspin}} g}_{\sphilb{H}_{\txtspin}}
=
\sum_{\qmpaulispin \in \ringratint_{2}}
\cmpconj{f(\qmpaulispin)}
\sum_{n=0}^{\infty}
\frac{(t \physcplconst)^n}{n!}
\funrbk{\qmpaulispin_{x}^{n} g}{\qmpaulispin}
\\ %%%%%%%%%%%%%%%%
&=
\sum_{\qmpaulispin \in \ringratint_{2}}
\cmpconj{f(\qmpaulispin)}
\sum_{n=0}^{\infty}
\napiernum^{\physcplconst t}
\physcplconst^n
\msrprb_{\txtpoisson}(\prbpoissonprocess_{t} = n)
\fun{g}{\prbprocess_{\qmpaulispin,t}^{\txtspin}}
\\ %%%%%%%%%%%%%%%%
&=
\napiernum^{\physcplconst t}
\sum_{n=0}^{\infty}
\msrprb_{\txtpoisson}(\prbpoissonprocess_{t} = n)
\physcplconst^n
\rbk{\sum_{\qmpaulispin \in \ringratint_{2}}
\cmpconj{f(\qmpaulispin)}
\fun{g}{\prbprocess_{\qmpaulispin,t}^{\txtspin}}}
=
\napiernum^{\physcplconst t}
\sqfun{\prbexp_{\msrprb_{\txtpoisson}}}
{\physcplconst^{\prbpoissonprocess_{t}}
\sum_{\qmpaulispin \in \ringratint_{2}}
\cmpconj{f(\qmpaulispin)} \cdot g(\prbprocess_{\qmpaulispin,t}^{\txtspin})}
\end{aligned}
\end{equation}
as desired.
\end{proof}

\section{BEC via the Functional Integral Representation}\label{expedition0012422}

Using the result of the preceding section and the results for the free Bose gas, one obtains a functional integral representation of the free Hamiltonian \(\physham_{0}\). The perturbation theory in the textbook \cite{DerezinskiGerard001} then gives a functional integral representation of the interacting system at finite temperature. In this section we use this functional integral representation to discuss BEC.

The free Bose field can be described by using the existing discussion of the free Bose gas \cite{AsaoArai28,YoshitsuguSekine004}. To study the emergence of BEC, we first work in a bounded system.

\subsection{Functional Integral Representation of the Free Bose Field in a Bounded System}\label{expedition0012117}

Fix arbitrary \(L
> 0\) and use the setting of Subsection \ref{expedition0011278}. Applying the construction of the bounded-system free Bose gas in \cite{YoshitsuguSekine004} for each \(L\), one obtains a singular Gaussian \(\sminvtemperature\)-Markov path space with infrared cutoff, \[\pairbk{\prbqspace_{\txtbsn,\kappa,L},
\mblfmlfrak{S}_{\txtbsn,\kappa,L},
\mblfmlfrak{S}_{\txtbsn,\kappa,0,L},
U_{\txtbsn,t},
R_{\txtbsn},
\msrprb_{\txtbsn,\kappa,\sminvtemperature,L}}.\] Denote the projection onto the one-particle space of the bounded system by \(P_{L}\) and write \[\sphilb{H}_{\txtbsn,L}
=
P_{L}\sphilb{H}_{\txtbsn}.\] For each \(t
\in S_{\sminvtemperature}\), define the isometry \(j_t\) by \[j_t
\colon
\rbk{2\omega\tanh \frac{\sminvtemperature \omega}{2}}^{\onehalf}
\sphilb{H}_{\txtbsn,\txtreal}
\to
\faadjpresharp{{\prbqspace_{\txtbsn,\sminvtemperature}}},
\quad
j_t f
=
\diracdelta_t \otimes f.\] Then the sharp-time field at time \(t
\in
S_{\sminvtemperature}\), \(\opfocksegal_t(f)
= \opfocksegal(j_{t} f)\), is defined consistently for \(f
\in
\sphilb{H}_{\txtbsn,L}\), and for arbitrary \(0
\leq
\abs{t-s}
\leq
\sminvtemperature\) its covariance is \begin{equation}\label{expedition0012440}
\sqfun{\prbexp_{\msrprb_{\txtbsn,\kappa,\sminvtemperature,L}}}
{\opfocksegal_t(f)\opfocksegal_s(g)}
=
\frac{1}{2}
\bkt{f}
{\frac{\napiernum^{-\abs{t-s}\omega}
+ \napiernum^{-(\sminvtemperature-\abs{t-s})\omega}}
{1-\napiernum^{-\sminvtemperature\omega}}g}_{\sphilb{H}_{\txtbsn,L}}
\end{equation} given by the preceding formula. In particular, if \(\opform{q}_{\txtbsn,\txtnonzero,\sminvtemperature,L}
=
\fnrestr{\opform{q}_{\txtbsn,\txtnonzero,\sminvtemperature}}
{P_{L} \sphilb{H}_{\txtbsn,\sminvtemperature}}\), then at equal times \[\sqfun{\prbexp_{\msrprb_{\txtbsn,\kappa,\sminvtemperature,L}}}
{\opfocksegal_t(f)\opfocksegal_t(g)}
=
\frac{1}{2}
\opform{q}_{\txtbsn,\txtnonzero,\sminvtemperature,L}(f,g)\] holds, so this coincides with the quasi-bilinear form introduced for the free Bose gas. The Euclidean quasi-bilinear form of the nonzero component associated with the preceding covariance is defined by \[\fun{\opform{q}_{\txtbsn,\txtnonzero,\sminvtemperature,L}^{\txteuclid}}{j_t f,j_s g}
=
\bkt{f}
{\frac{\napiernum^{-\abs{t-s}\omega}
+ \napiernum^{-(\sminvtemperature-\abs{t-s})\omega}}
{1-\napiernum^{-\sminvtemperature\omega}}g}_{\sphilb{H}_{\txtbsn,L}}.\]

In the bounded-system discussion we impose the infrared cutoff, so the test functions are also projected to \(P_{L}(\dom \mathsf{m}_{\kappa})\), and the construction of the sample space and of the algebra of bosonic physical quantities must use appropriate projections such as \(P_{L} \sphilb{D}_{\txtbsn,\txtphys,\sminvtemperature}\). Accordingly, the path space of the free Bose field is taken to be the singular Gaussian \(\sminvtemperature\)-Markov path space \cite{YoshitsuguSekine004} \begin{equation}
\pairbk{\prbqspace_{\txtbsn,\txtphys,L},
\mblfmlfrak{S}_{\txtbsn,\txtphys,L},
\mblfmlfrak{S}_{\txtbsn,\txtphys,0,L},
U_{\txtbsn,t},
R_{\txtbsn},
\msrprb_{\txtbsn,\txtphys,\sminvtemperature,L}}
\end{equation} Here \(\mblfmlfrak{S}_{\txtbsn,\txtphys,L}\) is the \(\sigma\)-algebra generated by \(\fml{\opfocksegal_t(f)}
{t \in S_{\sminvtemperature},\, f \in P_{L} \sphilb{D}_{\txtbsn,\txtphys,\sminvtemperature}}\), \(\mblfmlfrak{S}_{\txtbsn,\txtphys,0,L}\) is the time-\(0\) \(\sigma\)-algebra generated by \(\fml{\opfocksegal_0(f)}
{f \in P_{L} \sphilb{D}_{\txtbsn,\txtphys,\sminvtemperature}}\), and \(\msrprb_{\txtbsn,\txtphys,\sminvtemperature,L}\) is the restriction of \(\msrprb_{\txtbsn,\kappa,\sminvtemperature,L}\) to \(\mblfmlfrak{S}_{\txtbsn,\txtphys,L}\).

\subsection{Feynman-Kac-Nelson Kernel and KMS State in a Bounded System}\label{feynman-kac-nelson-kernel-and-kms-state-in-a-bounded-system}

Introduce, in addition, a chemical potential \(\smchemicalpotential
< 0\). Define the Euclidean quasi-bilinear form of the nonzero component of the free Bose field by \begin{equation}\label{expedition0012441}
\begin{aligned}
\fun{\opform{q}_{\txtbsn,\txtnonzero,\sminvtemperature,\smchemicalpotential,L}^{\txteuclid}}
{j_t f,j_s g}
=
\bkt{f}
{\frac{\napiernum^{-\abs{t-s}(\omega-\smchemicalpotential)}
+ \napiernum^{-(\sminvtemperature-\abs{t-s})(\omega-\smchemicalpotential)}}
{1-\napiernum^{-\sminvtemperature(\omega-\smchemicalpotential)}}g}_{\sphilb{H}_{\txtbsn,L}}
\end{aligned}
\end{equation} and let the corresponding free Gaussian probability measure be \(\msrprb_{\txtbsn,\sminvtemperature,\smchemicalpotential,L}\). As the product of the spin loop measure constructed in Subsection \ref{expedition0012117} and the measure of the free Bose field with chemical potential, define \[\Omega_{\txtspinboson,\sminvtemperature,L}
=
D(S_{\sminvtemperature};\ringratint_2)
\times
\prbqspace_{\txtbsn,L},
\quad
\msrprb_{\txtspinboson,0,\sminvtemperature,\smchemicalpotential,L}
=
\msrprb_{\txtspin,\sminvtemperature}
\otimes
\msrprb_{\txtbsn,\sminvtemperature,\smchemicalpotential,L}.\] Defining time translations and reflections componentwise as products yields a \(\sminvtemperature\)-Markov path space \cite{DerezinskiGerard001}. We write the path variables as \((\prbprocess^{\txtspin},\opfocksegal)\).

Define the local source determined by the spin path by \begin{equation}\label{expedition0012477}
\begin{aligned}
\fun{\mathsf{J}_{\kappa,L,I}^{\txtspin}}
{\prbprocess^{\txtspin}}
=
\int_I
\prbprocess_t^{\txtspin}
j_t(P_L \omega \mathsf{m}_{\kappa})
\opdmsr{t}
\end{aligned}
\end{equation} and, on the above free path space, write the Euclidean action corresponding to the spin-boson interaction \(\qmpaulispin_z
\otimes
\fun{\phi_{\txtfock}}{P_{L} \omega \mathsf{m}_{\kappa}}\) as \[\int_{S_{\sminvtemperature}}
\prbprocess_t^{\txtspin}
\fun{\opfocksegal}{j_t P_{L} \omega \mathsf{m}_{\kappa}}
\opdmsr{t}
=
\fun{\opfocksegal}
{\fun{\mathsf{J}_{\kappa,L,I}^{\txtspin}}
{\prbprocess^{\txtspin}}}.\]

\begin{prop}\label{expedition0012084}
For each interval $I
\subset
S_{\sminvtemperature}$, use \eqref{expedition0012477} and set
\begin{equation}\label{expedition0012476}
\begin{aligned}
F_{\kappa,L,I}
=
\fnexp{-\int_I
\prbprocess_t^{\txtspin}\fun{\opfocksegal}{j_t P_{L} \omega \mathsf{m}_{\kappa}}
\opdmsr{t}}
=
\fnexp{-\fun{\opfocksegal}
{\fun{\mathsf{J}_{\kappa,L,I}^{\txtspin}}
{\prbprocess^{\txtspin}}}}.
\end{aligned}
\end{equation}
Then the family $\fml{F_{\kappa,L,I}}
{I \subset S_{\sminvtemperature}}$ is a Feynman-Kac-Nelson kernel \cite{DerezinskiGerard001}. In particular, it satisfies the following properties.
\begin{enumerate}
\item
Each $F_{\kappa,L,I}$ is measurable with respect to the $\sigma$-algebra generated by $I$, and is positive.

\item
For arbitrary $1
\leq p
< \infty$, one has $F_{\kappa,L,\closedinterval{0}{\frac{\sminvtemperature}{2}}}
\in
\fun{\lp^{p}}{\Omega_{\txtspinboson,\sminvtemperature,L},\msrprb_{\txtspinboson,0,\sminvtemperature,\smchemicalpotential,L}}$, and the perturbed probability measure
\begin{equation}\label{expedition0012443}
\begin{aligned}
\opdmsr{\msrprb_{\txtspinboson,\sminvtemperature,\kappa,\smchemicalpotential,L}}
&=
\frac{1}{\smpartitionfunc_{\txtspinboson,\sminvtemperature,\kappa,\smchemicalpotential,L}}
F_{\kappa,L,S_{\sminvtemperature}}
\opdmsr{\msrprb_{\txtspinboson,0,\sminvtemperature,\smchemicalpotential,L}},
\\ %%%%%%%%%%%%%%%%
\smpartitionfunc_{\txtspinboson,\sminvtemperature,\kappa,\smchemicalpotential,L}
&=
\int_{\Omega_{\txtspinboson,\sminvtemperature,L}}
F_{\kappa,L,S_{\sminvtemperature}}
\opdmsr{\msrprb_{\txtspinboson,0,\sminvtemperature,\smchemicalpotential,L}}
\end{aligned}
\end{equation}
is well-defined. In particular, the partition function $\smpartitionfunc_{\txtspinboson,\sminvtemperature,\kappa,\smchemicalpotential,L}$ is nonzero and finite.

\item
For disjoint intervals $I,J$, one has $F_{\kappa,L,I \cup J}
=
F_{\kappa,L,I} F_{\kappa,L,J}$.

\item
For the time translation $U_{\txtbsn,t}$, one has $U_{\txtbsn,t} F_{\kappa,L,I}
=
F_{\kappa,L,I+t}$, and for the reflection $R_{\txtbsn}$, one has $R_{\txtbsn} F_{\kappa,L,I}
=
F_{\kappa,L,-I}$.

\item
Cut the circle $S_{\sminvtemperature}$ and regard it as $\closedinterval{-\frac{\sminvtemperature}{2}}{\frac{\sminvtemperature}{2}}$. Choose $I_n
=
\closedinterval{a_n}{b_n}$ and $I
=
\closedinterval{a}{b}$ as closed arcs that do not cross the cut point. If $a_n
\to a$ and $b_n
\to b$, then for arbitrary $1
\leq p
< \infty$, $F_{\kappa,L,I_n}
\to
F_{\kappa,L,I}$ holds in $\lp^p$.
\end{enumerate}
\end{prop}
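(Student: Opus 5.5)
The core observation is that, conditionally on the spin path, the exponent $\opfocksegal(\mathsf{J}^{\txtspin}_{\kappa,L,I}(\prbprocess^{\txtspin}))$ is a centered Gaussian random variable under $\msrprb_{\txtbsn,\sminvtemperature,\smchemicalpotential,L}$. Its variance is controlled uniformly in the spin path, because $\abs{\prbprocess^{\txtspin}_t}=1$. Every item of the proposition will be reduced to explicit Gaussian moment formulas combined with Fubini over the product space $\Omega_{\txtspinboson,\sminvtemperature,L}$.

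\textbf{Step 1: variance bound.} By the Euclidean covariance \eqref{expedition0012441}, for fixed $\prbprocess^{\txtspin}$ the variance is
\[
V_I(\prbprocess^{\txtspin})=\tfrac12\int_I\int_I \prbprocess^{\txtspin}_t\prbprocess^{\txtspin}_s\,
\fun{\opform{q}^{\txteuclid}_{\txtbsn,\txtnonzero,\sminvtemperature,\smchemicalpotential,L}}{j_tP_L\omega\mathsf{m}_\kappa,\,j_sP_L\omega\mathsf{m}_\kappa}\,\opdmsr{t}\opdmsr{s}.
\]
The kernel satisfies
\[
\frac{\napiernum^{-\abs{t-s}(\omega-\smchemicalpotential)}+\napiernum^{-(\sminvtemperature-\abs{t-s})(\omega-\smchemicalpotential)}}{1-\napiernum^{-\sminvtemperature(\omega-\smchemicalpotential)}}\le \coth\frac{\sminvtemperature(\omega-\smchemicalpotential)}{2}+1 .
\]
Since $\smchemicalpotential<0$, this bound is at most $C_{\sminvtemperature,\smchemicalpotential}$, which is finite. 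We also have $P_L\omega\mathsf{m}_\kappa=P_L\omega^{-1/2}\varrho_\kappa\in\sphilb{H}_{\txtbsn}$ by the assumption $\varrho_\kappa\in\dom\omega^{-1/2}$. Hence
\[
0\le V_I\le \tfrac12\abs{I}^2C_{\sminvtemperature,\smchemicalpotential}\norm{P_L\omega\mathsf{m}_\kappa}^2,
\]
uniformly in the spin path. Joint measurability of $(\prbprocess^{\txtspin},\opfocksegal)\mapsto\opfocksegal(\mathsf{J})$ follows by approximating the time integral with Riemann sums, using the càdlàg paths and the $\lp^2$-continuity of $t\mapsto\opfocksegal(j_tf)$. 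The same approximation shows that $F_{\kappa,L,I}$ is measurable with respect to the $\sigma$-algebra generated by the fields and spins in $I$. Positivity of $F_{\kappa,L,I}$ is immediate since it is an exponential. This gives item 1.

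\textbf{Step 2: integrability and the partition function (item 2).} For $1\le p<\infty$, Fubini and the Gaussian formula give
\[
\int F^p_{\kappa,L,I}\,\opdmsr{\msrprb_{\txtspinboson,0,\sminvtemperature,\smchemicalpotential,L}}=\int \napiernum^{p^2V_I/2}\,\opdmsr{\msrprb_{\txtspin,\sminvtemperature}}<\infty .
\]
For $p=1$ and $I=S_{\sminvtemperature}$, the integrand satisfies $\napiernum^{V/2}\ge1$. Therefore $1\le\smpartitionfunc_{\txtspinboson,\sminvtemperature,\kappa,\smchemicalpotential,L}<\infty$, and the perturbed measure \eqref{expedition0012443} is well defined.

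\textbf{Step 3: multiplicativity and covariance (items 3 and 4).} Item 3 is additivity of the time integral over disjoint intervals, together with linearity of $\opfocksegal$. Item 4 follows from $U_{\txtbsn,t}\opfocksegal(j_sf)=\opfocksegal(j_{s+t}f)$ and $R_{\txtbsn}\opfocksegal(j_sf)=\opfocksegal(j_{-s}f)$, applied jointly with the componentwise shift and reflection of the spin path. After the change of variables $t\mapsto t-a$, resp. $t\mapsto -t$, the integral over $I$ becomes the integral over $I+t$, resp. $-I$.

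\textbf{Step 4: continuity (item 5); this is the main obstacle.} The difficulty is obtaining $\lp^p$ convergence rather than convergence in probability. I would first use $\abs{\napiernum^{x}-\napiernum^{y}}\le\abs{x-y}(\napiernum^{x}+\napiernum^{y})$ and Hölder's inequality to obtain
\[
\norm{F_{\kappa,L,I_n}-F_{\kappa,L,I}}_p\le \norm{X_n-X}_{2p}\,\bigl(\norm{F_{\kappa,L,I_n}}_{2p}+\norm{F_{\kappa,L,I}}_{2p}\bigr),
\]
where $X_n$ and $X$ denote the respective exponents. The second factor is bounded uniformly in $n$ by Step 2, because $\abs{I_n}\le\sminvtemperature$. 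The variable $X_n-X$ equals $\opfocksegal$ applied to the source integrated over the symmetric difference $I_n\triangle I$. Conditionally on the spin path it is Gaussian, with variance at most $\tfrac12\abs{I_n\triangle I}^2C_{\sminvtemperature,\smchemicalpotential}\norm{P_L\omega\mathsf{m}_\kappa}^2$. Its $2p$-th moment is therefore bounded by a constant times $\abs{I_n\triangle I}^{2p}$, which tends to $0$ as $a_n\to a$ and $b_n\to b$. The requirement that the arcs do not cross the cut point guarantees that $I_n\triangle I$ consists of two small intervals near $a$ and $b$. The Feynman--Kac--Nelson kernel property in the sense of \cite{DerezinskiGerard001} then follows from items 1--5.
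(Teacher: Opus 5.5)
Your proposal is correct and follows the paper's proof essentially step by step. You condition on the spin path, apply the Gaussian exponential-moment formula with a variance bound that is uniform in the path, and use Fubini. The paper bounds the variance by $|I|^2\,\mathsf{q}_{\mathrm{nz},\beta,\mu,L}(P_L\omega\mathsf{m}_\kappa)$; you bound it by $C_{\beta,\mu}\|P_L\omega\mathsf{m}_\kappa\|^2$, which is equivalent since $\mu<0$. For item 5 you combine the local Lipschitz bound for the exponential with H\"older and uniform exponential moments, as the paper does. The only minor difference is that you bound the $L^{2p}$ norm of the exponent difference directly via Gaussian moments, whereas the paper proves $L^2$ convergence of the exponents and then invokes Vitali's theorem.
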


\begin{proof}
(1): Since the maps $t
\mapsto \prbprocess_t^{\txtspin}$ and $t
\mapsto \fun{\opfocksegal}{j_t P_{L} \omega \mathsf{m}_{\kappa}}$ are both measurable, Riemann-sum approximation shows that $\fun{\mathsf{J}_{\kappa,L,I}^{\txtspin}}
{\prbprocess^{\txtspin}}$ is measurable with respect to the $\sigma$-algebra generated by $I$. Moreover, because $\prbprocess_t^{\txtspin}
\in \setone{\pm 1}$ and $\fun{\opfocksegal}{j_t P_{L} \omega \mathsf{m}_{\kappa}}$ is real, $\fun{\mathsf{J}_{\kappa,L,I}^{\txtspin}}
{\prbprocess^{\txtspin}}$ is real-valued and $F_{\kappa,L,I}
> 0$ follows.

(2): If the spin path $\prbprocess^{\txtspin}$ is fixed, then $\fun{\opfocksegal}
{\fun{\mathsf{J}_{\kappa,L,I}^{\txtspin}}
{\prbprocess^{\txtspin}}}$ is a centered Gaussian variable with respect to the boson field. Hence, for arbitrary $p
< \infty$,
$$\begin{aligned}
&\sqfun{\prbexp_{\msrprb_{\txtbsn,\sminvtemperature,\smchemicalpotential,L}}}
{\abs{F_{\kappa,L,I}(\prbprocess^{\txtspin},\opfocksegal)}^p}
=
\sqfun{\prbexp_{\msrprb_{\txtbsn,\sminvtemperature,\smchemicalpotential,L}}}
{\napiernum^{-p \fun{\opfocksegal}{\fun{\mathsf{J}_{\kappa,L,I}^{\txtspin}}
{\prbprocess^{\txtspin}}}}}
\\ %%%%%%%%%%%%%%%%
&=
\fnexp{\frac{p^2}{4}
\int_I \int_I
\prbprocess_{t}^{\txtspin}
\prbprocess_{s}^{\txtspin}
\fun{\opform{q}_{\txtbsn,\txtnonzero,\sminvtemperature,\smchemicalpotential,L}^{\txteuclid}}
{j_t(P_{L} \omega \mathsf{m}_{\kappa}),j_s(P_{L} \omega \mathsf{m}_{\kappa})}
\opdmsr{t}\opdmsr{s}}
\end{aligned}$$
holds. Moreover,
$$\int_I \int_I
\prbprocess_{t}^{\txtspin}\prbprocess_{s}^{\txtspin}
\fun{\opform{q}_{\txtbsn,\txtnonzero,\sminvtemperature,\smchemicalpotential,L}^{\txteuclid}}
{j_t(P_{L} \omega \mathsf{m}_{\kappa}),j_s(P_{L} \omega \mathsf{m}_{\kappa})}
\opdmsr{t}\opdmsr{s}
\leq
\abs{I}^2
\fun{\opform{q}_{\txtbsn,\txtnonzero,\sminvtemperature,\smchemicalpotential,L}}{P_{L} \omega \mathsf{m}_{\kappa}}
<
\infty$$
implies
$\int_{\Omega_{\txtspinboson,\sminvtemperature,L}}
\abs{F_{\kappa,L,I}}^p
\opdmsr{\msrprb_{\txtspinboson,0,\sminvtemperature,\smchemicalpotential,L}}
<
\infty.$
In particular, $F_{\kappa,L,I}
\in
\fun{\lp^p}{\Omega_{\txtspinboson,\sminvtemperature,L},\msrprb_{\txtspinboson,0,\sminvtemperature,\smchemicalpotential,L}}$.

(3): For disjoint intervals, the assertion follows from
$$\begin{aligned}
\int_{I \cup J}
\prbprocess_t^{\txtspin}\fun{\opfocksegal}{j_t P_{L} \omega \mathsf{m}_{\kappa}}
\opdmsr{t}
=
\int_I
\prbprocess_t^{\txtspin}\fun{\opfocksegal}{j_t P_{L} \omega \mathsf{m}_{\kappa}}
\opdmsr{t}
+
\int_J
\prbprocess_t^{\txtspin}\fun{\opfocksegal}{j_t P_{L} \omega \mathsf{m}_{\kappa}}
\opdmsr{t}
\end{aligned}$$
and gives $F_{\kappa,L,I \cup J}
= F_{\kappa,L,I} F_{\kappa,L,J}$.

(4): The time translation is handled by
$$\begin{aligned}
&\fun{U_{\txtbsn,a} F_{\kappa,L,I}}{\prbprocess^{\txtspin},\opfocksegal}
=
\fnexp{-\int_I
\prbprocess_{t+a}^{\txtspin}
\fun{\opfocksegal}{j_{t+a} P_{L} \omega \mathsf{m}_{\kappa}}
\opdmsr{t}}
\\ %%%%%%%%%%%%%%%%
&=
\fnexp{-\int_{I+a}
\prbprocess_u^{\txtspin}
\fun{\opfocksegal}{j_u P_{L} \omega \mathsf{m}_{\kappa}}
\opdmsr{u}}
=
\fun{F_{\kappa,L,I+a}}{\prbprocess^{\txtspin},\opfocksegal}
\end{aligned}.$$
The reflection is handled by
$$\begin{aligned}
&\fun{R_{\txtbsn} F_{\kappa,L,I}}{\prbprocess^{\txtspin},\opfocksegal}
=
\fnexp{-\int_I
\prbprocess_{-t}^{\txtspin}
\fun{\opfocksegal}{j_{-t} P_{L} \omega \mathsf{m}_{\kappa}}
\opdmsr{t}}
\\ %%%%%%%%%%%%%%%%
&=
\fnexp{-\int_{-I}
\prbprocess_u^{\txtspin}
\fun{\opfocksegal}{j_u P_{L} \omega \mathsf{m}_{\kappa}}
\opdmsr{u}}
=
\fun{F_{\kappa,L,-I}}{\prbprocess^{\txtspin},\opfocksegal}
\end{aligned}.$$

(5): The symmetric difference of the intervals $\Delta_n
= I_n \triangle I$ satisfies $\abs{\Delta_n}
\to 0$, and $$\abs{\fun{\mathsf{J}_{\kappa,L,I_n}^{\txtspin}}
{\prbprocess^{\txtspin}}
- \fun{\mathsf{J}_{\kappa,L,I}^{\txtspin}}
{\prbprocess^{\txtspin}}}$$ is bounded by $\fun{\mathsf{J}_{\kappa,L,\Delta_n}^{\txtspin}}
{\prbprocess^{\txtspin}}$. If the spin path is fixed, $\fun{\mathsf{J}_{\kappa,L,\Delta_n}^{\txtspin}}
{\prbprocess^{\txtspin}}$ is a centered Gaussian variable. The Gaussian isometry and the estimate in item (2) give
$$\begin{aligned}
\sqfun{\prbexp_{\msrprb_{\txtbsn,\sminvtemperature,\smchemicalpotential,L}}}
{\abs{\fun{\mathsf{J}_{\kappa,L,\Delta_n}^{\txtspin}}
{\prbprocess^{\txtspin}}}^2}
\leq
\onehalf
\abs{\Delta_n}^2
\fun{\opform{q}_{\txtbsn,\txtnonzero,\sminvtemperature,\smchemicalpotential,L}}
{P_L \omega \mathsf{m}_{\kappa}}
\end{aligned}$$
and therefore $\fun{\mathsf{J}_{\kappa,L,I_n}^{\txtspin}}
{\prbprocess^{\txtspin}}
\to \fun{\mathsf{J}_{\kappa,L,I}^{\txtspin}}
{\prbprocess^{\txtspin}}$ in $\lp^{2}$. The exponential-moment estimate in item (2) holds uniformly for all sufficiently large $n$. The local Lipschitz estimate for the exponential function, Holder's inequality, and Vitali's theorem imply $F_{\kappa,L,I_n}
\to F_{\kappa,L,I}$ in arbitrary $\lp^p$.
\end{proof}

By the local semigroup reconstruction in \cite[Definition 21.59, Proposition 21.62]{DerezinskiGerard001}, the local Feynman-Kac-Nelson kernel in Proposition \ref{expedition0012084} gives a local Hermitian semigroup and a kernel representation on its generator.

\begin{prop}[Local Hermitian semigroup and kernel representation on its generator]\label{expedition0012424}
For the local Feynman-Kac-Nelson kernel in Proposition \ref{expedition0012084}, let $\mblfmlfrak{S}_{I,L}$ be the $\sigma$-algebra of the composite system generated by the interval $I
\subset S_{\sminvtemperature}$, let $\prbexp_{I,\smchemicalpotential,L}$ denote conditional expectation with respect to the free measure $\msrprb_{\txtspinboson,0,\sminvtemperature,\smchemicalpotential,L}$, and write $\prbexp_{0,\frac{\sminvtemperature}{2},\smchemicalpotential,L}$ for conditional expectation with respect to the times $0$ and $\frac{\sminvtemperature}{2}$. For arbitrary $0
< t
< \frac{\sminvtemperature}{2}$, define the local domain by
$$\begin{aligned}
\sphilb{D}_{\txtspinboson,\kappa,\smchemicalpotential,L,t}
=
\prbexp_{0,\frac{\sminvtemperature}{2},\smchemicalpotential,L}
\fun{\splinspan}
{\bigcup_{0 \leq s < \frac{\sminvtemperature}{2} - t}
F_{\kappa,L,\closedinterval{0}{s}}
\fun{\lp^{\infty}}
{\Omega_{\txtspinboson,\sminvtemperature,L},
\mblfmlfrak{S}_{\closedinterval{0}{\frac{\sminvtemperature}{2}-t},L}}}
\end{aligned}.$$
Then for arbitrary $0
\leq s
\leq t
\leq \frac{\sminvtemperature}{2}$ and $f
\in
\fun{\lp^2}
{\Omega_{\txtspinboson,\sminvtemperature,L},
\msrprb_{\txtspinboson,0,\sminvtemperature,\smchemicalpotential,L}}$, there exists a unique linear operator
$$P_{\txtspinboson,\kappa,\smchemicalpotential,L,s}
\colon
\sphilb{D}_{\txtspinboson,\kappa,\smchemicalpotential,L,t}
\to
\sphilb{D}_{\txtspinboson,\kappa,\smchemicalpotential,L,t-s}$$
satisfying
$$\begin{aligned}
P_{\txtspinboson,\kappa,\smchemicalpotential,L,s}
\prbexp_{\setone{0,\frac{\sminvtemperature}{2}},\smchemicalpotential,L} f
=
\prbexp_{\setone{0,\frac{\sminvtemperature}{2}},\smchemicalpotential,L}
F_{\kappa,L,\closedinterval{0}{s}}
U_s f
\end{aligned}.$$
The family $\fml{P_{\txtspinboson,\kappa,\smchemicalpotential,L,t}}
{0 \leq t \leq \frac{\sminvtemperature}{2}}$ is a local Hermitian semigroup.

For bounded functions $A_{\txtspin},B_{\txtspin}$ at time $0$ and $f,g
\in
P_L\sphilb{D}_{\txtbsn,\txtphys,\sminvtemperature}$, define $A(\prbprocess^{\txtspin},\opfocksegal)
=
A_{\txtspin}(\prbprocess_0^{\txtspin})
\napiernum^{\imunit \opfocksegal(j_0 f)}$ and $B(\prbprocess^{\txtspin},\opfocksegal)
=
B_{\txtspin}(\prbprocess_0^{\txtspin})
\napiernum^{\imunit \opfocksegal(j_0 g)}$. Using $\fun{\mathsf{J}_{\kappa,L,I}^{\txtspin}}
{\prbprocess^{\txtspin}}$ from \eqref{expedition0012477}, for arbitrary $0
\leq t
\leq \frac{\sminvtemperature}{2}$ one has
\begin{equation}\label{expedition0012423}
\begin{aligned}
&\bkt{A}{P_{\txtspinboson,\kappa,\smchemicalpotential,L,t}B}
\\ %%%%%%%%%%%%%%%%
&=
\fnexp{-\oneoverfour
\fun{\opform{q}_{\txtbsn,\txtnonzero,\sminvtemperature,\smchemicalpotential,L}^{\txteuclid}}
{j_t g - j_0 f}}
\\ %%%%%%%%%%%%%%%%
&\quad\times
\prbexp_{\msrprb_{\txtspin,\sminvtemperature}}
\sqbkleft{\cmpconj{A_{\txtspin}}
B_{\txtspin}
\fnexp{-\frac{\imunit}{2}
\fun{\opform{q}_{\txtbsn,\txtnonzero,\sminvtemperature,\smchemicalpotential,L}^{\txteuclid}}
{j_t g - j_0 f,\mathsf{J}_{\kappa,L,\closedinterval{0}{t}}^{\txtspin}}}}
\\ %%%%%%%%%%%%%%%%
&\qquad\times
\sqbkright{\fnexp{\oneoverfour
\fun{\opform{q}_{\txtbsn,\txtnonzero,\sminvtemperature,\smchemicalpotential,L}^{\txteuclid}}
{\mathsf{J}_{\kappa,L,\closedinterval{0}{t}}^{\txtspin}}}}.
\end{aligned}
\end{equation}
\end{prop}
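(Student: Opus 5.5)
The statement has two parts. The existence of the local Hermitian semigroup comes from the general reconstruction theory. The kernel formula \eqref{expedition0012423} comes from a Gaussian computation in which the spin path is held fixed.

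\emph{Existence of the semigroup.} By Proposition \ref{expedition0012084}, $\fml{F_{\kappa,L,I}}{I\subset S_{\sminvtemperature}}$ is a Feynman-Kac-Nelson kernel on the $\sminvtemperature$-Markov path space $\Omega_{\txtspinboson,\sminvtemperature,L}$. This space is built as the product of the spin loop measure and the Gaussian field measure, and translation and reflection act componentwise. Items (1)--(5) of that proposition are exactly the hypotheses of \cite[Definition 21.59]{DerezinskiGerard001}: positivity and locality, $\lp^p$ integrability, multiplicativity over disjoint intervals, covariance under $U_t$ and $R$, and $\lp^p$ continuity in the endpoints. I would therefore apply \cite[Proposition 21.62]{DerezinskiGerard001} directly. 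It yields a unique operator $P_{\txtspinboson,\kappa,\smchemicalpotential,L,s}$ from $\sphilb{D}_{\txtspinboson,\kappa,\smchemicalpotential,L,t}$ to $\sphilb{D}_{\txtspinboson,\kappa,\smchemicalpotential,L,t-s}$ defined by the stated intertwining relation, and it gives the local Hermitian semigroup property. Well-definedness, meaning independence of the representative $f$, follows from the Markov property at times $0$ and $\frac{\sminvtemperature}{2}$ together with reflection positivity, exactly as in that reference. Hermiticity follows from $R F_{I}=F_{-I}$ and the invariance of the measure under $R$.

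\emph{Kernel formula.} I would write
\[
\bkt{A}{P_tB}=\prbexp_{\msrprb_{\txtspinboson,0,\sminvtemperature,\smchemicalpotential,L}}\bigl[\cmpconj{A}\,F_{\kappa,L,\closedinterval{0}{t}}\,U_tB\bigr].
\]
Here $\cmpconj{A}U_tB=\cmpconj{A_{\txtspin}(\prbprocess_0^{\txtspin})}B_{\txtspin}(\prbprocess_t^{\txtspin})\,\napiernum^{\imunit\opfocksegal(j_tg-j_0f)}$, and $F=\napiernum^{-\opfocksegal(\mathsf{J})}$ with $\mathsf{J}=\mathsf{J}^{\txtspin}_{\kappa,L,\closedinterval{0}{t}}(\prbprocess^{\txtspin})$. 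By Fubini, for a fixed spin path, I integrate over the boson field first. The variable $\opfocksegal(\imunit(j_tg-j_0f)-\mathsf{J})$ is complex Gaussian with covariance given by $\tfrac12\opform{q}^{\txteuclid}$, as in \eqref{expedition0012441} and \eqref{expedition0012440}. Its characteristic function is
\[
\prbexp\,\napiernum^{\opfocksegal(h)}=\exp\Bigl(\tfrac14\opform{q}^{\txteuclid}(h,h)\Bigr),
\qquad h=\imunit(j_tg-j_0f)-\mathsf{J},
\]
extended bilinearly. Expanding the bilinear form gives three terms: $-\tfrac14\opform{q}^{\txteuclid}(j_tg-j_0f)$, the cross term $-\tfrac{\imunit}{2}\opform{q}^{\txteuclid}(j_tg-j_0f,\mathsf{J})$, and $+\tfrac14\opform{q}^{\txteuclid}(\mathsf{J})$. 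The first term does not depend on the spin path and so factors out of the spin expectation. The remaining two terms stay inside $\prbexp_{\msrprb_{\txtspin,\sminvtemperature}}$, which gives \eqref{expedition0012423}. The paper writes $A_{\txtspin}B_{\txtspin}$ loosely; I read the spin factor as evaluated at times $0$ and $t$.

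\emph{Main obstacle.} The delicate step is justifying the analytic continuation, that is, using the Gaussian moment generating function with the complex argument $h$, and the interchange of integrals. The bound in Proposition \ref{expedition0012084}(2), $\opform{q}^{\txteuclid}(\mathsf{J})\le t^2\,\opform{q}_{\txtbsn,\txtnonzero,\sminvtemperature,\smchemicalpotential,L}(P_L\omega\mathsf{m}_\kappa)$, is uniform in the spin path. Because $|\napiernum^{\imunit\opfocksegal(\cdot)}|=1$, the integrand is then dominated by $F\in\lp^1$. This justifies Fubini and shows that both sides are entire in a complex scaling parameter, so the real-exponential Gaussian formula extends to $h$. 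Finally, since the test functions lie in $P_L\sphilb{D}_{\txtbsn,\txtphys,\sminvtemperature}$, the relevant forms are finite.
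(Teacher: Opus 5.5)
Your proposal is correct and follows essentially the same route as the paper. You obtain the local Hermitian semigroup from the Feynman--Kac--Nelson kernel properties of Proposition \ref{expedition0012084} via the reconstruction result \cite[Proposition 21.62]{DerezinskiGerard001}, and you get \eqref{expedition0012423} by substituting $U_tB$ and $F_{\kappa,L,[0,t]}$, then doing the Gaussian integral over the Bose field with the spin path fixed. Your added points, reading the spin factor as $\overline{A_{\mathrm{spin}}(X_0)}\,B_{\mathrm{spin}}(X_t)$ and justifying the complex Gaussian exponential formula by path-uniform domination, are sound refinements that the paper leaves implicit.
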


\begin{proof}
Applying Proposition \ref{expedition0012084}(5) gives $\lp^p$-continuity with respect to endpoint convergence. The multiplicativity in Proposition \ref{expedition0012084} implies
$$P_{\txtspinboson,\kappa,\smchemicalpotential,L,s}
P_{\txtspinboson,\kappa,\smchemicalpotential,L,u}
= P_{\txtspinboson,\kappa,\smchemicalpotential,L,s+u}$$
on the local domain, and reflection gives
$$\bkt{F}{P_{\txtspinboson,\kappa,\smchemicalpotential,L,s}G}
=
\bkt{P_{\txtspinboson,\kappa,\smchemicalpotential,L,s}F}{G}.$$
Therefore $\fml{P_{\txtspinboson,\kappa,\smchemicalpotential,L,t}}
{0 \leq t \leq \frac{\sminvtemperature}{2}}$ is a local Hermitian semigroup. The other properties follow from the local semigroup reconstruction argument in \cite[Proposition 21.62]{DerezinskiGerard001}.

Under the definitions of the functions in the statement, for arbitrary $0
\leq t
\leq \frac{\sminvtemperature}{2}$, substitute $U_tB
=
B_{\txtspin}(\prbprocess_t^{\txtspin})
\napiernum^{\imunit \opfocksegal(j_t g)}$ and $F_{\kappa,L,\closedinterval{0}{t}}
=
\fnexp{-\fun{\opfocksegal}
{\fun{\mathsf{J}_{\kappa,L,\closedinterval{0}{t}}^{\txtspin}}
{\prbprocess^{\txtspin}}}}$ into the defining formula for $P_{\txtspinboson,\kappa,\smchemicalpotential,L,t}
B$. Fixing the particle path and computing the centered Gaussian integral over the Bose field gives the desired formula \eqref{expedition0012423}.
\end{proof}

\begin{defn}
Let $\mblfmlfrak{F}_{\txtspinboson,\sminvtemperature,\kappa,\smchemicalpotential,L}$ be the set of all $\msrprb_{\txtspinboson,\sminvtemperature,\kappa,\smchemicalpotential,L}$-integrable Borel measurable functions on the path space $\Omega_{\txtspinboson,\sminvtemperature,L}$:
$$\mblfmlfrak{F}_{\txtspinboson,\sminvtemperature,\kappa,\smchemicalpotential,L}
=
\set{F \in \fun{\mblfn_{\txtborel}}{\Omega_{\txtspinboson,\sminvtemperature,L};\fldcmp}}
{\int_{\Omega_{\txtspinboson,\sminvtemperature,L}} \abs{F(\varpi)} \opdmsr{\msrprb_{\txtspinboson,\sminvtemperature,\kappa,\smchemicalpotential,L}} < \infty}.$$
Using the probability measure $\msr{\msrprb_{\txtspinboson,\sminvtemperature,\kappa,\smchemicalpotential,L}}$ defined in \eqref{expedition0012443}, define the linear functional $\psi_{\txtspinboson,\sminvtemperature,\kappa,\smchemicalpotential,L}$ on this space by
\begin{equation}\label{expedition0012444}
\fun{\psi_{\txtspinboson,\sminvtemperature,\kappa,\smchemicalpotential,L}}{F}
=
\sqfun{\prbexp_{\msrprb_{\txtspinboson,\sminvtemperature,\kappa,\smchemicalpotential,L}}}
{F(\prbprocess^{\txtspin},\opfocksegal)},
\quad
F
\in \mblfmlfrak{F}_{\txtspinboson,\sminvtemperature,\kappa,\smchemicalpotential,L}.
\end{equation}
\end{defn}

\begin{thm}\label{expedition0012085}
For arbitrary $L
> 0$, the probability measure $\msrprb_{\txtspinboson,\sminvtemperature,\kappa,\smchemicalpotential,L}$ in Proposition \ref{expedition0012084} gives a perturbed $\sminvtemperature$-Markov path space, and one obtains the KMS state $\oastate[\psi_{\txtspinboson,\sminvtemperature,\kappa,\smchemicalpotential,L}]$ for the time evolution defined by the local Hermitian semigroup in Proposition \ref{expedition0012424}. Moreover, for arbitrary $\nfoldvar{s}{n}
\in I_{\sminvtemperature,\leq}^{n}$ and functions $F_1,\cdots,F_n
\colon
\Omega_{\txtspinboson,\sminvtemperature,L}
\to
\fldcmp$ that are measurable with respect to the time-$0$ $\sigma$-algebra $\mblfmlfrak{S}_{0,L}
=
\mblfmlgenerated{\prbprocess_{0}^{\txtspin}}
\otimes
\mblfmlfrak{S}_{\txtbsn,0,L}$ and satisfy
$$\prod_{j=1}^n F_j \circ U_{s_j}
\in
\mblfmlfrak{F}_{\txtspinboson,\sminvtemperature,\kappa,\smchemicalpotential,L},$$
one has
\begin{equation}\label{expedition0012445}
\fun{\psi_{\txtspinboson,\sminvtemperature,\kappa,\smchemicalpotential,L}}
{\prod_{j=1}^n F_j \circ U_{s_j}}
=
\sqfun{\prbexp_{\msrprb_{\txtspinboson,\sminvtemperature,\kappa,\smchemicalpotential,L}}}
{\prod_{j=1}^{n}
F_j \circ U_{s_j}}.
\end{equation}
\end{thm}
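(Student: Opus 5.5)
The plan is to apply the abstract perturbation theory of $\sminvtemperature$-Markov path spaces in \cite[Chapter 21]{DerezinskiGerard001}. It says that a Feynman-Kac-Nelson kernel on a $\sminvtemperature$-Markov path space with $F_{I}\in\lp^p$ defines a perturbed $\sminvtemperature$-Markov path space. It also produces a KMS state for the dynamics generated by the reconstructed Hermitian semigroup, and the time-ordered correlations of that state are path-space expectations with respect to the perturbed measure. So the proof is mainly a verification of the hypotheses, using what we already have.

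First, I will check that the free product space
\[
\pairbk{\Omega_{\txtspinboson,\sminvtemperature,L},\msrprb_{\txtspinboson,0,\sminvtemperature,\smchemicalpotential,L}}
\]
is a $\sminvtemperature$-Markov path space. For the spin factor, the earlier propositions give translation and reflection invariance of $\msrprb_{\txtspin,\sminvtemperature}$ and the Markov property, through the transition kernel $p_t$ and the trace formula. For the boson factor, the singular Gaussian $\sminvtemperature$-Markov path space of \cite{YoshitsuguSekine004} supplies the same structure. The time-$0$ $\sigma$-algebra is $\mblfmlfrak{S}_{0,L}=\mblfmlgenerated{\prbprocess_{0}^{\txtspin}}\otimes\mblfmlfrak{S}_{\txtbsn,0,L}$. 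Independence of the two factors makes the Markov property, the $U_t$- and $R$-invariance, and the $\sminvtemperature$-periodicity pass to the product, because conditional expectations factorize.

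Second, Proposition \ref{expedition0012084} provides items (1)--(5). These are positivity and locality, $\lp^p$-integrability together with $0<\smpartitionfunc_{\txtspinboson,\sminvtemperature,\kappa,\smchemicalpotential,L}<\infty$, multiplicativity, covariance under $U_t$ and $R$, and $\lp^p$-continuity in the endpoints. Hence $\msrprb_{\txtspinboson,\sminvtemperature,\kappa,\smchemicalpotential,L}$ in \eqref{expedition0012443} is a probability measure. Using the multiplicativity $F_{S_\sminvtemperature}=F_{I}F_{I^c}$ and covariance, I will show that the perturbed measure is again $U_t$- and $R$-invariant and satisfies the $\sminvtemperature$-Markov property. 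The Markov property should follow from the factorization of $F_{\kappa,L,S_{\sminvtemperature}}$ across any splitting of the circle into two arcs.

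Third, I will identify the state. Proposition \ref{expedition0012424} gives the local Hermitian semigroup $P_{\txtspinboson,\kappa,\smchemicalpotential,L,t}$ on the time-$0$ space. Following \cite[Proposition 21.62]{DerezinskiGerard001}, I will extend it to a positive self-adjoint generator $\physham$ with $P_t=\napiernum^{-t\physham}$ on $0\le t\le\sminvtemperature/2$. The time-ordered expectations \eqref{expedition0012445} then equal the analytic continuation of
\[
\psi\bigl(F_1\,\napiernum^{-(s_2-s_1)\physham}F_2\cdots\bigr).
\]
To get this identity, I will insert $F_{S_\sminvtemperature}=\prod_j F_{[s_j,s_{j+1}]}$ and apply the Markov property repeatedly, conditioning at each $s_j$. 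Periodicity, i.e.\ translation invariance on the circle, then yields the KMS boundary condition in the Euclidean form: the $\sminvtemperature$-cyclic invariance of the time-ordered correlations. The general theorem of \cite{DerezinskiGerard001} converts this Euclidean form into the real-time KMS condition, which defines $\oastate[\psi_{\txtspinboson,\sminvtemperature,\kappa,\smchemicalpotential,L}]$.

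The main obstacle is that the $F_j$ are unbounded, for instance exponentials of the field. This makes it delicate to move from bounded time-$0$ functions to the integrable class $\mblfmlfrak{F}_{\txtspinboson,\sminvtemperature,\kappa,\smchemicalpotential,L}$. I would first prove \eqref{expedition0012445} for bounded $F_j$. I would then truncate, $F_j\mapsto F_j\fndef{\abs{F_j}\le N}$, and pass to the limit by dominated convergence under the assumed integrability of $\prod_j F_j\circ U_{s_j}$. The Gaussian exponential-moment bound of item (2) controls the weights. A second delicate point is the coincidence of times allowed in $I_{\sminvtemperature,\le}^n$; I would handle it by the endpoint continuity from item (5).
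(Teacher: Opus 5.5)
Your outline proves a stronger statement than the paper's proof does. Apart from one misidentified step (second paragraph), it is sound.

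In the paper, $\psi_{\mathrm{SB},\beta,\kappa,\mu,L}$ is \emph{defined} by \eqref{expedition0012444} as the expectation under the perturbed measure on the whole integrable class $\mathfrak{F}_{\mathrm{SB},\beta,\kappa,\mu,L}$. So \eqref{expedition0012445} is that definition evaluated at $F=\prod_j F_j\circ U_{s_j}$; the integrability hypothesis only makes the right-hand side finite. The perturbed $\beta$-Markov structure and the KMS property are taken from \cite[Proposition 21.62]{DerezinskiGerard001} via Proposition~\ref{expedition0012424}, together with $0<Z<\infty$.

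You instead read $\psi$ as the KMS state of the reconstructed dynamics and set out to show that its imaginary-time correlations are path-space expectations. That is, you re-derive the reconstruction theorem the paper only cites. This gives the identity real content: it is what licenses computing KMS correlations by the functional integral in Propositions~\ref{expedition0012086} and \ref{expedition0012468}. The cost is the local-semigroup machinery.

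Two of your steps are then unnecessary:
\begin{itemize}
\item For unbounded $F_j$, the operator-side value is itself only defined as the limit of truncations. Domination by $\lvert\prod_j F_j\circ U_{s_j}\rvert$ suffices, and the exponential-moment bound plays no role. So in that regime the identity is again definitional, which is why the paper defines $\psi$ on $\mathfrak{F}$ directly.
\item Coincident times are handled by merging, $(F_j\circ U_s)(F_{j+1}\circ U_s)=(F_jF_{j+1})\circ U_s$ with $F_jF_{j+1}$ still $\mathfrak{S}_{0,L}$-measurable. Endpoint continuity is not needed.
\end{itemize}

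The step that is wrong as written is the generator. At $\beta<\infty$ there is no positive self-adjoint $H$ with $P_t=e^{-tH}$ on the time-$0$ space. A single time slice does not separate the circle, since the $\beta$-Markov property conditions on both endpoints of an arc. That is why the semigroup of Proposition~\ref{expedition0012424} acts on the space of times $\{0,\beta/2\}$, and only on $t$-dependent local domains. Its generator is the (perturbed) Liouvillean $L$. This $L$ is self-adjoint but satisfies $JLJ=-L$ for the modular conjugation $J$, so its spectrum is symmetric and it is not bounded below.

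Consequently you cannot chain $F_1e^{-(s_2-s_1)H}F_2\cdots$ around a circle of total length $\beta$ using bounded operators on $L^2(\Omega,\mathfrak{S}_{0,L})$. There are two ways to repair this:
\begin{itemize}
\item Work in the standard form, using $L\Omega=0$ and splitting the imaginary time symmetrically so that no exponent exceeds $\beta/2$, as in the reconstruction of \cite{DerezinskiGerard001}.
\item If $e^{-\beta H}$ is trace class for the finite-volume spin-boson Hamiltonian $H$ (with chemical potential), compare directly with the Gibbs state $\operatorname{Tr}(e^{-\beta H}\,\cdot\,)/\operatorname{Tr}e^{-\beta H}$ on $\mathbb{C}^2\otimes\mathcal{F}$. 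There $H$ is bounded below, and your Markov chaining becomes a Feynman--Kac trace formula.
\end{itemize}
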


\begin{proof}
By Proposition \ref{expedition0012424}, the probability measure in Proposition \ref{expedition0012084} gives a perturbed $\sminvtemperature$-Markov path space. Since the kernel $F_{\kappa,L,S_{\sminvtemperature}}$ on the full circle has a finite partition function, the KMS state $\oastate[\psi
_{\txtspinboson,\sminvtemperature,\kappa,\smchemicalpotential,L}]$ of the spin-boson model is obtained together with the periodic boundary condition.

(Correlation-function representation): Since each $F_j$ is measurable with respect to the time-$0$ $\sigma$-algebra $\mblfmlfrak{S}_{0,L}$, $F_j\circ U_{s_j}$ is an observable at time $s_j$. The condition $\prod_{j=1}^{n}F_j\circ U_{s_j}
\in
\mblfmlfrak{F}_{\txtspinboson,\sminvtemperature,\kappa,\smchemicalpotential,L}$ ensures that the product is integrable with respect to the perturbed measure. Applying the definition \eqref{expedition0012444}, we obtain
$$\begin{aligned}
\fun{\psi_{\txtspinboson,\sminvtemperature,\kappa,\smchemicalpotential,L}}
{\prod_{j=1}^n F_j \circ U_{s_j}}
=
\sqfun{\prbexp_{\msrprb_{\txtspinboson,\sminvtemperature,\kappa,\smchemicalpotential,L}}}
{\prod_{j=1}^{n}
F_j \circ U_{s_j}}
\end{aligned}$$
which is \eqref{expedition0012445}.
\end{proof}

\begin{prop}[Characteristic functional in a bounded system]\label{expedition0012086}
Using $\fun{\mathsf{J}_{\kappa,L,S_{\sminvtemperature}}^{\txtspin}}
{\prbprocess^{\txtspin}}$ from \eqref{expedition0012477}, define the following functions of the spin path:
\begin{equation}\label{expedition0012446}
\begin{aligned}
W_{\txtspinboson,\sminvtemperature,\kappa,\smchemicalpotential,L}(t,s)
&=
\fun{\opform{q}_{\txtbsn,\txtnonzero,\sminvtemperature,\smchemicalpotential,L}^{\txteuclid}}
{j_t(P_{L} \omega \mathsf{m}_{\kappa}),j_s(P_{L} \omega \mathsf{m}_{\kappa})},
\\ %%%%%%%%%%%%%%%%
\fun{\mathsf{D}_{\txtspinboson,\sminvtemperature,\kappa,\smchemicalpotential,L}}
{\prbprocess^{\txtspin}}
&=
\fnexp{\oneoverfour
\fun{\opform{q}_{\txtbsn,\txtnonzero,\sminvtemperature,\smchemicalpotential,L}^{\txteuclid}}
{\fun{\mathsf{J}_{\kappa,L,S_{\sminvtemperature}}^{\txtspin}}
{\prbprocess^{\txtspin}}}}
\\ %%%%%%%%%%%%%%%%
&=
\fnexp{\oneoverfour
\int_{S_{\sminvtemperature}}\int_{S_{\sminvtemperature}}
\prbprocess_t^{\txtspin}\prbprocess_s^{\txtspin}
W_{\txtspinboson,\sminvtemperature,\kappa,\smchemicalpotential,L}(t,s)
\opdmsr{t}\opdmsr{s}}.
\end{aligned}
\end{equation}
Then the partition function can also be written as $\smpartitionfunc_{\txtspinboson,\sminvtemperature,\kappa,\smchemicalpotential,L}
=
\sqfun{\prbexp_{\msrprb_{\txtspin,\sminvtemperature}}}
{\mathsf{D}_{\txtspinboson,\sminvtemperature,\kappa,\smchemicalpotential,L}}$. Define the normalized interacting spin-path measure in the bounded system by
\begin{equation}\label{expedition0012473}
\opdmsr{\msrprb_{\txtspin,\txtspinboson,\sminvtemperature,\kappa,\smchemicalpotential,L}(\prbprocess^{\txtspin})}
=
\frac{1}{\smpartitionfunc_{\txtspinboson,\sminvtemperature,\kappa,\smchemicalpotential,L}}
\mathsf{D}_{\txtspinboson,\sminvtemperature,\kappa,\smchemicalpotential,L}(\prbprocess^{\txtspin})
\opdmsr{\msrprb_{\txtspin,\sminvtemperature}}(\prbprocess^{\txtspin}).
\end{equation}
Then, for arbitrary $f
\in
\sphilb{H}_{\txtbsn,L}$ and $t
\in
S_{\sminvtemperature}$, the regularized state of the full system $\psi_{\txtspinboson,\sminvtemperature,\kappa,\smchemicalpotential,L}$ is represented as
\begin{equation}
\begin{aligned}
&\fun{\psi_{\txtspinboson,\sminvtemperature,\kappa,\smchemicalpotential,L}}
{\napiernum^{\imunit \opfocksegal(j_t f)}}
=
\fnexp{-\oneoverfour
\fun{\opform{q}_{\txtbsn,\txtnonzero,\sminvtemperature,\smchemicalpotential,L}^{\txteuclid}}{j_t f}}
\\ %%%%%%%%%%%%%%%%
&\quad\times
\sqfun{\prbexp_{\msrprb_{\txtspin,\txtspinboson,\sminvtemperature,\kappa,\smchemicalpotential,L}}}
{\fnexp{-\frac{\imunit}{2}
\int_{S_{\sminvtemperature}}
\fun{\opform{q}_{\txtbsn,\txtnonzero,\sminvtemperature,\smchemicalpotential,L}^{\txteuclid}}
{j_t f,
\mathsf{J}_{\kappa,L,S_{\sminvtemperature}}^{\txtspin}}
\opdmsr{s}}}.
\end{aligned}
\end{equation}
In particular, the quadratic Gaussian factor in $f$ is given by the Euclidean quasi-bilinear form of the nonzero component of the free Bose field.
\end{prop}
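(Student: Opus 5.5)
The plan is to compute the characteristic functional by doing the Gaussian integral over the Bose field with the spin path held fixed, exactly as in the proof of Proposition \ref{expedition0012424}. The formula has three ingredients: the state as a path integral (Theorem \ref{expedition0012085}), the Feynman-Kac-Nelson weight $F_{\kappa,L,S_{\sminvtemperature}}$ from \eqref{expedition0012476}, and Fubini on the product measure $\msrprb_{\txtspin,\sminvtemperature}\otimes\msrprb_{\txtbsn,\sminvtemperature,\smchemicalpotential,L}$.

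\textbf{Step 1 (reduction to the boson field).} By Theorem \ref{expedition0012085} with $n=1$ and $F_1=\napiernum^{\imunit\opfocksegal(j_0 f)}$, followed by time translation $U_t$,
\[
\fun{\psi_{\txtspinboson,\sminvtemperature,\kappa,\smchemicalpotential,L}}{\napiernum^{\imunit\opfocksegal(j_t f)}}
=\frac{1}{\smpartitionfunc_{\txtspinboson,\sminvtemperature,\kappa,\smchemicalpotential,L}}
\prbexp_{\msrprb_{\txtspin,\sminvtemperature}}\Bigl[\prbexp_{\msrprb_{\txtbsn,\sminvtemperature,\smchemicalpotential,L}}\bigl[\napiernum^{\opfocksegal(\imunit j_t f-\mathsf{J}^{\txtspin}_{\kappa,L,S_{\sminvtemperature}})}\bigr]\Bigr].
\]
Fubini applies here. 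The integrand has modulus $F_{\kappa,L,S_{\sminvtemperature}}$, and the bound in Proposition \ref{expedition0012084}(2) shows this is integrable uniformly in the spin path.

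\textbf{Step 2 (Gaussian integral for a fixed spin path).} Fix the spin path. Then $\opfocksegal$ is a centered Gaussian field whose covariance $\prbexp[\opfocksegal(u)\opfocksegal(v)]$ equals $\tfrac12\opform{q}^{\txteuclid}_{\txtbsn,\txtnonzero,\sminvtemperature,\smchemicalpotential,L}(u,v)$, extended bilinearly to complex arguments. This is \eqref{expedition0012440}, with the chemical potential inserted as in \eqref{expedition0012441}. For such a field $\prbexp[\napiernum^{\opfocksegal(h)}]=\napiernum^{\frac14\opform{q}^{\txteuclid}(h,h)}$, and we take $h=\imunit j_tf-\mathsf{J}$. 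Expanding the quadratic form gives three terms:
\begin{enumerate}
\item $-\tfrac14\opform{q}^{\txteuclid}(j_tf)$, the free Gaussian factor;
\item $\tfrac14\opform{q}^{\txteuclid}(\mathsf{J})$, which is exactly $\log\mathsf{D}_{\txtspinboson,\sminvtemperature,\kappa,\smchemicalpotential,L}$ from \eqref{expedition0012446};
\item the cross term $-\tfrac{\imunit}{2}\opform{q}^{\txteuclid}(j_tf,\mathsf{J})$.
\end{enumerate}
The cross term uses symmetry of the real bilinear form. Since $\mathsf{J}=\int_{S_{\sminvtemperature}}\prbprocess^{\txtspin}_s j_s(P_L\omega\mathsf{m}_\kappa)\,\opdmsr{s}$ by \eqref{expedition0012477}, it can be written as an $\opdmsr{s}$-integral, which is the form displayed in the statement.

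\textbf{Step 3 (reassembly).} Setting $f=0$ in the Step 2 computation, or simply integrating $F_{\kappa,L,S_{\sminvtemperature}}$ over the boson field, gives $\smpartitionfunc_{\txtspinboson,\sminvtemperature,\kappa,\smchemicalpotential,L}=\prbexp_{\msrprb_{\txtspin,\sminvtemperature}}[\mathsf{D}_{\txtspinboson,\sminvtemperature,\kappa,\smchemicalpotential,L}]$. The factor $\napiernum^{-\frac14\opform{q}^{\txteuclid}(j_tf)}$ does not depend on the spin path, so it comes out of the spin expectation. What remains is $\mathsf{D}/\smpartitionfunc$ multiplying the phase, which is exactly the expectation under the measure \eqref{expedition0012473}. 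The final sentence of the statement then follows from the equal-time identity $\opform{q}^{\txteuclid}(j_tf)=\opform{q}_{\txtbsn,\txtnonzero,\sminvtemperature,\smchemicalpotential,L}(f)$, obtained by setting $t=s$ in \eqref{expedition0012441}.

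\textbf{Main obstacle.} The delicate step is justifying the Gaussian moment formula for a complex argument $h$. It holds for real $h$ and extends to complex $h$ by analyticity in the coefficient of $j_tf$: both sides are entire, and the exponential moments are finite by Proposition \ref{expedition0012084}(2). One also has to check that $\mathsf{J}$ lies in the form domain. This is guaranteed by the cutoffs: $\kappa>0$ together with the condition $\varrho_\kappa\in\dom\omega^{-1/2}$, and the chemical potential $\smchemicalpotential<0$, which keeps $\coth$ bounded on the range of $P_L$.
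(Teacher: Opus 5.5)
Your proposal is correct and follows essentially the same route as the paper. The paper also writes the state as a ratio of integrals over the product measure with weight $F_{\kappa,L,S_{\beta}}$, and performs the centered Gaussian integral over the Bose field for a fixed spin path using \eqref{expedition0012472} with $F=j_tf$, $G=\mathsf{J}$, $\lambda=-1$; this is exactly your complexified formula with $h=\imunit j_tf-\mathsf{J}$. It then obtains the partition function by setting $f=0$, as you do, while your justifications (Fubini via the spin-path-uniform bound of Proposition \ref{expedition0012084}(2), analytic continuation of the moment formula, and boundedness of the form for $\mu<0$) fill in points the paper leaves implicit.
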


\begin{proof}
By definition \eqref{expedition0012444},
$$\fun{\psi_{\txtspinboson,\sminvtemperature,\kappa,\smchemicalpotential,L}}
{\napiernum^{\imunit \opfocksegal(j_t f)}}
=
\frac{\int_{D(S_{\sminvtemperature};\ringratint_2)}
\int_{\prbqspace_{\txtbsn,L}}
\napiernum^{\imunit \opfocksegal(j_t f)}
\napiernum^{-\fun{\opfocksegal}{\fun{\mathsf{J}_{\kappa,L,S_{\sminvtemperature}}^{\txtspin}}
{\prbprocess^{\txtspin}}}}
\opdmsr{\msrprb_{\txtbsn,\sminvtemperature,\smchemicalpotential,L}}
\opdmsr{\msrprb_{\txtspin,\sminvtemperature}}}
{\smpartitionfunc_{\txtspinboson,\sminvtemperature,\kappa,\smchemicalpotential,L}}.$$
The integral with respect to the boson field reduces to $\sqfun{\prbexp_{\msrprb_{\txtbsn,\sminvtemperature,\smchemicalpotential,L}}}
{\napiernum^{\imunit \opfocksegal(j_t f)
-\fun{\opfocksegal}{\fun{\mathsf{J}_{\kappa,L,S_{\sminvtemperature}}^{\txtspin}}
{\prbprocess^{\txtspin}}}}}$. Similarly, the partition function in the denominator is written as $\smpartitionfunc_{\txtspinboson,\sminvtemperature,\kappa,\smchemicalpotential,L}
=
\sqfun{\prbexp_{\msrprb_{\txtspin,\sminvtemperature}}}
{\mathsf{D}_{\txtspinboson,\sminvtemperature,\kappa,\smchemicalpotential,L}}$. The centered Gaussian-measure formula
\begin{equation}\label{expedition0012472}
\begin{aligned}
\sqfun{\prbexp_{\msrprb_{\txtbsn,\sminvtemperature,\smchemicalpotential,L}}}
{\napiernum^{\imunit \opfocksegal(F)+\lambda \opfocksegal(G)}}
=
\fnexp{-\oneoverfour \fun{\opform{q}_{\txtbsn,\txtnonzero,\sminvtemperature,\smchemicalpotential,L}^{\txteuclid}}{F}
+\frac{\imunit \lambda}{2}\fun{\opform{q}_{\txtbsn,\txtnonzero,\sminvtemperature,\smchemicalpotential,L}^{\txteuclid}}{F,G}
+\frac{\lambda^2}{4}\fun{\opform{q}_{\txtbsn,\txtnonzero,\sminvtemperature,\smchemicalpotential,L}^{\txteuclid}}{G}}
\end{aligned}
\end{equation}
with $F=j_t f$,
$G=\fun{\mathsf{J}_{\kappa,L,S_{\sminvtemperature}}^{\txtspin}}
{\prbprocess^{\txtspin}}$,
and $\lambda=-1$ gives
\begin{equation}
\begin{aligned}
&\sqfun{\prbexp_{\msrprb_{\txtbsn,\sminvtemperature,\smchemicalpotential,L}}}
{\napiernum^{\imunit \opfocksegal(j_t f)}
\napiernum^{-\fun{\opfocksegal}{\fun{\mathsf{J}_{\kappa,L,S_{\sminvtemperature}}^{\txtspin}}
{\prbprocess^{\txtspin}}}}}
\\ %%%%%%%%%%%%%%%%
&=
\fnexp{-\oneoverfour
\fun{\opform{q}_{\txtbsn,\txtnonzero,\sminvtemperature,\smchemicalpotential,L}^{\txteuclid}}{j_t f}
-\frac{\imunit}{2}
\fun{\opform{q}_{\txtbsn,\txtnonzero,\sminvtemperature,\smchemicalpotential,L}^{\txteuclid}}
{j_t f,\fun{\mathsf{J}_{\kappa,L,S_{\sminvtemperature}}^{\txtspin}}
{\prbprocess^{\txtspin}}}}
\\ %%%%%%%%%%%%%%%%
&\quad\times
\fnexp{\oneoverfour
\fun{\opform{q}_{\txtbsn,\txtnonzero,\sminvtemperature,\smchemicalpotential,L}^{\txteuclid}}{\fun{\mathsf{J}_{\kappa,L,S_{\sminvtemperature}}^{\txtspin}}
{\prbprocess^{\txtspin}}}}.
\end{aligned}
\end{equation}
Here
\begin{equation}
\begin{aligned}
\fun{\opform{q}_{\txtbsn,\txtnonzero,\sminvtemperature,\smchemicalpotential,L}^{\txteuclid}}{\fun{\mathsf{J}_{\kappa,L,S_{\sminvtemperature}}^{\txtspin}}
{\prbprocess^{\txtspin}}}
&=
\int_{S_{\sminvtemperature}}\int_{S_{\sminvtemperature}}
\prbprocess_t^{\txtspin}\prbprocess_s^{\txtspin}
\fun{\opform{q}_{\txtbsn,\txtnonzero,\sminvtemperature,\smchemicalpotential,L}^{\txteuclid}}
{j_t(P_{L} \omega \mathsf{m}_{\kappa}),j_s(P_{L} \omega \mathsf{m}_{\kappa})}
\opdmsr{t}\opdmsr{s},
\\ %%%%%%%%%%%%%%%%
\fun{\opform{q}_{\txtbsn,\txtnonzero,\sminvtemperature,\smchemicalpotential,L}^{\txteuclid}}{j_t f,\fun{\mathsf{J}_{\kappa,L,S_{\sminvtemperature}}^{\txtspin}}
{\prbprocess^{\txtspin}}}
&=
\int_{S_{\sminvtemperature}}
\prbprocess_s^{\txtspin}
\fun{\opform{q}_{\txtbsn,\txtnonzero,\sminvtemperature,\smchemicalpotential,L}^{\txteuclid}}
{j_t f, j_s(P_{L} \omega \mathsf{m}_{\kappa})}
\opdmsr{s}
\end{aligned}
\end{equation}
and returning to the outer spin integral gives
\begin{equation}
\begin{aligned}
&\fun{\psi_{\txtspinboson,\sminvtemperature,\kappa,\smchemicalpotential,L}}
{\napiernum^{\imunit \opfocksegal(j_t f)}}
=
\fnexp{-\oneoverfour
\fun{\opform{q}_{\txtbsn,\txtnonzero,\sminvtemperature,\smchemicalpotential,L}^{\txteuclid}}{j_t f}}
\\ %%%%%%%%%%%%%%%%
&\quad\times
\sqfun{\prbexp_{\msrprb_{\txtspin,\txtspinboson,\sminvtemperature,\kappa,\smchemicalpotential,L}}}
{\fnexp{-\frac{\imunit}{2}
\int_{S_{\sminvtemperature}}
\fun{\opform{q}_{\txtbsn,\txtnonzero,\sminvtemperature,\smchemicalpotential,L}^{\txteuclid}}
{j_t f,
\mathsf{J}_{\kappa,L,S_{\sminvtemperature}}^{\txtspin}}
\opdmsr{s}}}
\end{aligned}
\end{equation}
as claimed.

Substituting $F=0$,
$G=\fun{\mathsf{J}_{\kappa,L,S_{\sminvtemperature}}^{\txtspin}}
{\prbprocess^{\txtspin}}$,
and $\lambda=-1$ into the same Gaussian integral formula gives
$$\sqfun{\prbexp_{\msrprb_{\txtbsn,\sminvtemperature,\smchemicalpotential,L}}}
{\fnexp{-\fun{\opfocksegal}{\fun{\mathsf{J}_{\kappa,L,S_{\sminvtemperature}}^{\txtspin}}
{\prbprocess^{\txtspin}}}}}
=
\fnexp{\oneoverfour
\fun{\opform{q}_{\txtbsn,\txtnonzero,\sminvtemperature,\smchemicalpotential,L}^{\txteuclid}}{\fun{\mathsf{J}_{\kappa,L,S_{\sminvtemperature}}^{\txtspin}}
{\prbprocess^{\txtspin}}}}.$$
This gives, in particular, the representation of the partition function.
\end{proof}

In the discussion below, the interacting spin-path measure of the infinite system \(\msrprb_{\txtspin,\txtspinboson,\sminvtemperature,\kappa}\) is defined as the weak limit of \(\msrprb_{\txtspin,\txtspinboson,\sminvtemperature,\kappa,\smchemicalpotential,L}\) under the infinite-volume limit \(L
\to \infty\) and the chemical-potential limit \(\smchemicalpotential
\to 0\).

\subsection{Infinite-Volume Limit and Formal Emergence of BEC}\label{expedition0012458}

As in the van Hove model \cite{YoshitsuguSekine006}, the interaction in the finite-temperature spin-boson model is linear in the Bose field. After carrying out the Gaussian integral in the characteristic functional, the Bose field remains Gaussian with only its mean changed, and the presence or absence of BEC is reduced to the zero-mode component of the covariance. In particular, the spin degree of freedom does not create a new quasi-bilinear form on the zero mode; as in the free Bose gas and the van Hove model, \(\opform{q}_{\txtbsn,0,\sminvtemperature}\) remains as the condensed component. By Proposition \ref{expedition0012086}, to detect the emergence of BEC it is enough to look at the limit of \(\fun{\opform{q}_{\txtbsn,\txtnonzero,\sminvtemperature,\smchemicalpotential,L}^{\txteuclid}}
{j_t f}\), and, as in the discussion of the free Bose gas, one first takes the infinite-volume limit and then takes the limit in which the chemical potential goes to \(0\). Concretely, for the characteristic functional at finite \(L,\smchemicalpotential\) obtained in Proposition \ref{expedition0012086}, we take the two-step limit \(L\to\infty\) and \(\smchemicalpotential\to 0\). The convergence of \(\opform{q}_{\txtbsn,\txtnonzero,\sminvtemperature,\smchemicalpotential,L}^{\txteuclid}\) and of the integral kernel of the nonzero-mode component produces \(\opform{q}_{\txtbsn,\txtbec,\sminvtemperature}\), and gives the decomposition into a BEC-type zero-mode factor and a noncondensed component.

\begin{prop}[Functional integral representation of the infinite-system KMS state and $n$-point functions]\label{expedition0012468}
Assume the conditions under which BEC appears in the free Bose gas \cite{AsaoArai28,YoshitsuguSekine004}, and define the Euclideanized quasi-bilinear forms by
\begin{equation}\label{expedition0012475}
\begin{aligned}
\fun{\opform{q}_{\txtbsn,\txtbec,\sminvtemperature}^{\txteuclid}}
{j_t f,j_s g}
&=
\fun{\opform{q}_{\txtbsn,0,\sminvtemperature}^{\txteuclid}}{j_t f, j_s g}
+\fun{\opform{q}_{\txtbsn,\txtnonzero,\sminvtemperature}^{\txteuclid}}{j_t f, j_s g}
\\ %%%%%%%%%%%%%%%%
\fun{\opform{q}_{\txtbsn,0,\sminvtemperature}^{\txteuclid}}{j_t f, j_s g}
&=
\fun{\opform{q}_{\txtbsn,0,\sminvtemperature}}{f,g}
=
2 (2 \pi)^d \smnumberdensity_{\txtbsn,0}(\sminvtemperature)
\cmpconj{\faftr{f}(0)}
\faftr{g}(0),
\\ %%%%%%%%%%%%%%%%
\fun{\opform{q}_{\txtbsn,\txtnonzero,\sminvtemperature}^{\txteuclid}}
{j_t f,j_s g}
&=
\bkt{f}
{\frac{\napiernum^{-\abs{t-s} \omega}
+\napiernum^{-(\sminvtemperature-\abs{t-s}) \omega}}
{1-\napiernum^{-\sminvtemperature \omega}}g}_{\sphilb{H}_{\txtbsn}},
\\ %%%%%%%%%%%%%%%%
\fun{\opform{q}_{\txtbsn,\txtnonzero,\sminvtemperature,\smchemicalpotential}^{\txteuclid}}
{j_t f,j_s g}
&=
\bkt{f}
{\frac{\napiernum^{-\sminvtemperature \smchemicalpotential} \napiernum^{-\abs{t-s} \omega}
+\napiernum^{-(\sminvtemperature-\abs{t-s}) \omega}}
{\napiernum^{-\sminvtemperature \smchemicalpotential}
-\napiernum^{-\sminvtemperature \omega}}g}_{\sphilb{H}_{\txtbsn}}.
\end{aligned}
\end{equation}
For the bounded-system perturbed measure $\msrprb
_{\txtspinboson,\sminvtemperature,\kappa,\smchemicalpotential,L}$ in \eqref{expedition0012443}, define the infinite-system perturbed measure $\msrprb_{\txtspinboson,\sminvtemperature,\kappa}$ as the weak limit under the infinite-volume limit $L
\to \infty$ and the chemical-potential limit $\smchemicalpotential
\to 0$ by
\begin{equation}\label{expedition0012474}
\begin{aligned}
\opdmsr{\msrprb_{\txtspinboson,\sminvtemperature,\kappa}}
&=
\frac{1}{\smpartitionfunc_{\txtspinboson,\sminvtemperature,\kappa}}
F_{\kappa,S_{\sminvtemperature}}
\opdmsr{\msrprb_{\txtspinboson,0,\sminvtemperature}},
\\ %%%%%%%%%%%%%%%%
\smpartitionfunc_{\txtspinboson,\sminvtemperature,\kappa}
&=
\sqfun{\prbexp_{\msrprb_{\txtspinboson,0,\sminvtemperature}}}
{F_{\kappa,S_{\sminvtemperature}}},
\\ %%%%%%%%%%%%%%%%
F_{\kappa,I}
&=
\fnexp{-\int_I
\prbprocess_t^{\txtspin}
\fun{\opfocksegal}{j_t \omega \mathsf{m}_{\kappa}}
\opdmsr{t}}
=
\fnexp{-\fun{\opfocksegal}
{\fun{\mathsf{J}_{\kappa,I}^{\txtspin}}
{\prbprocess^{\txtspin}}}},
\\ %%%%%%%%%%%%%%%%
\fun{\mathsf{J}_{\kappa,I}^{\txtspin}}
{\prbprocess^{\txtspin}}
&=
\int_I
\prbprocess_t^{\txtspin}
j_t(\omega \mathsf{m}_{\kappa})
\opdmsr{t}
\end{aligned}
\end{equation}
and denote the corresponding KMS state by $\oastate[\psi_{\txtspinboson,\sminvtemperature,\kappa}]$.

Furthermore, for arbitrary $n
\geq 1$, test functions $h_1, \ldots, h_n
\in \sphilb{D}_{\txtbsn,\txtphys,\sminvtemperature}$, times $t_1, \ldots, t_n
\in S_{\sminvtemperature}$, and real coefficients $a_1, \ldots, a_n
\in \fldreal$, the characteristic functional of the $n$-time exponential observable is
\begin{equation}\label{expedition0012469}
\begin{aligned}
&\fun{\oastate[\psi_{\txtspinboson,\sminvtemperature,\kappa}]}
{\fnexp{\imunit \sum_{k=1}^{n} a_k \opfocksegal(j_{t_k} h_k)}}
=
\fnexp{-\oneoverfour
\sum_{j,k=1}^{n} a_j a_k
\fun{\opform{q}_{\txtbsn,\txtbec,\sminvtemperature}}{j_{t_j} h_j, j_{t_k} h_k}}
\\ %%%%%%%%%%%%%%%%
&\quad\times
\sqfun{\prbexp_{\msrprb_{\txtspin,\txtspinboson,\sminvtemperature,\kappa}}}
{\fnexp{-\imunit \int_0^{\sminvtemperature} \prbprocess_t^{\txtspin}
\rbk{\sum_{k=1}^{n} a_k
\fun{\opform{q}_{\txtbsn,\txtbec,\sminvtemperature}^{\txteuclid}}
{j_{t_k} h_k, j_t(\omega\mathsf{m}_{\kappa})}} \opdmsr{t}}},
\\ %%%%%%%%%%%%%%%%
&\opdmsr{\msrprb_{\txtspin,\txtspinboson,\sminvtemperature,\kappa}(\prbprocess^{\txtspin})}
=
\frac{1}{\smpartitionfunc_{\txtspinboson,\sminvtemperature,\kappa}}
\mathsf{D}_{\txtspinboson,\sminvtemperature,\kappa}(\prbprocess^{\txtspin})
\opdmsr{\msrprb_{\txtspin,\sminvtemperature}}(\prbprocess^{\txtspin}),
\\ %%%%%%%%%%%%%%%%
&\smpartitionfunc_{\txtspinboson,\sminvtemperature,\kappa}
=
\sqfun{\prbexp_{\msrprb_{\txtspin,\sminvtemperature}}}
{\mathsf{D}_{\txtspinboson,\sminvtemperature,\kappa}}
\\ %%%%%%%%%%%%%%%%
&\mathsf{D}_{\txtspinboson,\sminvtemperature,\kappa}(\prbprocess^{\txtspin})
=
\fnexp{\oneoverfour
\int_{S_{\sminvtemperature}}\int_{S_{\sminvtemperature}}
\prbprocess_t^{\txtspin}\prbprocess_s^{\txtspin}
W_{\txtspinboson,\sminvtemperature,\kappa}(t,s)
\opdmsr{t}\opdmsr{s}},
\\ %%%%%%%%%%%%%%%%
&W_{\txtspinboson,\sminvtemperature,\kappa}(t,s)
=
\fun{\opform{q}_{\txtbsn,\txtbec,\sminvtemperature}^{\txteuclid}}
{j_t(\omega \mathsf{m}_{\kappa}),j_s(\omega \mathsf{m}_{\kappa})}.
\end{aligned}
\end{equation}
In particular, the $n$-point function $\fun{\oastate[\psi_{\txtspinboson,\sminvtemperature,\kappa}]}
{\prod_{k=1}^{n} \opfocksegal(j_{t_k} h_k)}$ is uniquely determined from \eqref{expedition0012469} as the partial derivative at $a
= 0$ with respect to the coefficients $a_1, \ldots, a_n$:
\begin{equation}\label{expedition0012470}
\fun{\oastate[\psi_{\txtspinboson,\sminvtemperature,\kappa}]}
{\prod_{k=1}^{n} \opfocksegal(j_{t_k} h_k)}
=
\fnrestr{(-\imunit)^n \oppd{a_1} \cdots \oppd{a_n}
\fun{\oastate[\psi_{\txtspinboson,\sminvtemperature,\kappa}]}
{\fnexp{\imunit \sum_{k=1}^{n} a_k \opfocksegal(j_{t_k} h_k)}}}
{a_1 = \cdots = a_n = 0}.
\end{equation}
\end{prop}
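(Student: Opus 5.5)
The statement to prove is Proposition \ref{expedition0012468}. I would begin with the bounded-system formula and generalize Proposition \ref{expedition0012086} from one sharp-time exponential to the finite sum $F=\sum_{k=1}^n a_k j_{t_k}h_k$. For fixed $L$, $\smchemicalpotential<0$ and $\kappa>0$, the random variable $\opfocksegal(F)$ is centered Gaussian under $\msrprb_{\txtbsn,\sminvtemperature,\smchemicalpotential,L}$, because it is linear in the field.

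\textbf{Finite-$L$ computation.} Apply the Gaussian formula \eqref{expedition0012472} with $G=\mathsf{J}^{\txtspin}_{\kappa,L,S_{\sminvtemperature}}(\prbprocess^{\txtspin})$ and $\lambda=-1$, with the spin path held fixed.
\begin{itemize}
\item By bilinearity, the quadratic term $\opform{q}^{\txteuclid}_{\txtbsn,\txtnonzero,\sminvtemperature,\smchemicalpotential,L}(F)$ expands into $\sum_{j,k}a_ja_k\,\opform{q}^{\txteuclid}(j_{t_j}h_j,j_{t_k}h_k)$.
\item The cross term becomes $\int \prbprocess^{\txtspin}_t\sum_k a_k\,\opform{q}^{\txteuclid}(j_{t_k}h_k,j_t(P_L\omega\mathsf{m}_\kappa))\,dt$.
\item The term quadratic in $G$ produces $\mathsf{D}_{\txtspinboson,\sminvtemperature,\kappa,\smchemicalpotential,L}$.
\end{itemize}
Dividing by $\smpartitionfunc=\prbexp_{\msrprb_{\txtspin,\sminvtemperature}}[\mathsf{D}]$ then gives exactly the finite-$(L,\smchemicalpotential)$ analogue of \eqref{expedition0012469}, with the normalized spin measure \eqref{expedition0012473}.

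\textbf{Limits $L\to\infty$, then $\smchemicalpotential\to0$.} Each ingredient is a deterministic kernel or a bounded spin functional, so I take the limits term by term.
\begin{itemize}
\item For $h_j,h_k\in\sphilb{D}_{\txtbsn,\txtphys,\sminvtemperature}$, the imported free-Bose-gas result \cite{AsaoArai28,YoshitsuguSekine004} gives that $\opform{q}^{\txteuclid}_{\txtbsn,\txtnonzero,\sminvtemperature,\smchemicalpotential,L}(j_th,j_sg)$ converges to $\opform{q}^{\txteuclid}_{\txtbsn,\txtbec,\sminvtemperature}(j_th,j_sg)$. The zero-mode part arises because in a finite box the $k=0$ momentum term $\napiernum^{-\sminvtemperature\smchemicalpotential}\cdots/(\napiernum^{-\sminvtemperature\smchemicalpotential}-1)$ carries the macroscopic occupation $\smnumberdensity_{\txtbsn,0}(\sminvtemperature)$. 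This term is independent of $t-s$, which explains $\opform{q}^{\txteuclid}_{\txtbsn,0}=\opform{q}_{\txtbsn,0}$.
\item The source $\omega\mathsf{m}_\kappa=\omega^{-1/2}\varrho_\kappa$ is infrared-cut. So $W_{\txtspinboson,\sminvtemperature,\kappa,\smchemicalpotential,L}(t,s)$ converges uniformly in $t,s\in S_{\sminvtemperature}$, using $\varrho\in\dom\omega^{-1/2}\cap\dom\omega^{-1}$ and the cutoff $\fndef{\kappa\le|k|}$ to bound $\coth$-type kernels. The zero-mode contribution to $W$ actually vanishes, since $\widehat{\omega\mathsf{m}_\kappa}(0)=0$.
\item Because $|\prbprocess^{\txtspin}_t|=1$, the density $\mathsf{D}$ converges uniformly on path space and stays bounded above and away from zero. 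Hence $\smpartitionfunc$ converges to a nonzero limit, and $\msrprb_{\txtspin,\txtspinboson,\sminvtemperature,\kappa,\smchemicalpotential,L}$ converges in total variation, in particular weakly.
\item The phase factor is a bounded function of the path and converges uniformly. Dominated convergence therefore yields \eqref{expedition0012469}, and the same argument identifies the weak limit \eqref{expedition0012474}.
\end{itemize}

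\textbf{KMS state and $n$-point functions.} The limiting functional is a limit of KMS states, so it is a state. By weak-$*$ closedness of the KMS condition, with the time evolution given by the limiting local semigroup, it is the KMS state $\psi_{\txtspinboson,\sminvtemperature,\kappa}$. For \eqref{expedition0012470}, the right side of \eqref{expedition0012469} is entire in $a$:
\begin{itemize}
\item its Gaussian prefactor is a quadratic exponential;
\item its spin expectation is of a phase that is linear in $a$ with bounded coefficients, so differentiation under $\prbexp$ is justified.
\end{itemize}
Finally, all moments of $\opfocksegal(j_{t_k}h_k)$ exist uniformly, since they are Gaussian with a bounded shift. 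So the characteristic function may be differentiated and determines the moments.

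\textbf{Main obstacle.} The hardest step is the uniform-in-$(L,\smchemicalpotential)$ control of $W$ together with the correct order of limits producing $\opform{q}_{\txtbsn,0,\sminvtemperature}$. I would first prove convergence at the level of the one-particle kernels by Riemann sums over the box momenta, separating the $k=0$ mode explicitly. Only afterwards would I pass the limits through the spin expectation.
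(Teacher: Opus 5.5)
Your proposal follows the paper's proof essentially step for step. It integrates out the Bose field at finite $(L,\smchemicalpotential)$ via \eqref{expedition0012472} with $F=\sum_k a_k j_{t_k}h_k$, $G=\mathsf{J}^{\txtspin}_{\kappa,L,S_{\sminvtemperature}}(\prbprocess^{\txtspin})$ and $\lambda=-1$, absorbs the $G$-quadratic factor into the normalized spin-path measure, takes the limits $L\to\infty$, $\smchemicalpotential\to 0$ with the zero-mode form imported from the free Bose gas, and differentiates in $a_1,\ldots,a_n$ for the $n$-point functions. Your additional justifications (uniform convergence of $\mathsf{D}$ giving total-variation convergence of the spin measures, and vanishing of the zero-mode contribution of the infrared-cut source) only make explicit what the paper asserts. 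One caveat: \eqref{expedition0012472} with $\lambda=-1$ actually produces the phase with a factor $\onehalf$, as in Proposition \ref{expedition0012086} and \eqref{expedition0012487}. So your computation does not yield \eqref{expedition0012469} exactly as written; the missing $\onehalf$ there is a normalization slip in the statement, which the paper's own proof shares.
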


\begin{proof}
Apply the Feynman-Kac-Nelson representation of the bounded-system KMS state from Theorem \ref{expedition0012085} and the kernel representation of the local Hermitian semigroup from Proposition \ref{expedition0012424} for arbitrary chemical potential $\smchemicalpotential
< 0$ and finite volume $L
> 0$. In the notation of Proposition \ref{expedition0012424}, the bounded-system characteristic functional is
$$\begin{aligned}
&\fun{\psi_{\txtspinboson,\sminvtemperature,\kappa,\smchemicalpotential,L}}
{\fnexp{\imunit \sum_{k=1}^{n} a_k \opfocksegal(j_{t_k} h_k)}}
\\ %%%%%%%%%%%%%%%%
&=
\frac{\int_{D(S_{\sminvtemperature};\ringratint_2)}
\int_{\prbqspace_{\txtbsn,L}}
\fnexp{\imunit \sum_k a_k \opfocksegal(j_{t_k} h_k)}
F_{\kappa,L,S_{\sminvtemperature}}
\opdmsr{\msrprb_{\txtbsn,\sminvtemperature,\smchemicalpotential,L}}
\opdmsr{\msrprb_{\txtspin,\sminvtemperature}}}
{\smpartitionfunc_{\txtspinboson,\sminvtemperature,\kappa,\smchemicalpotential,L}}.
\end{aligned}$$

Apply the centered Gaussian-measure integration formula \eqref{expedition0012472} with $F
=
\sum_k a_k j_{t_k} h_k$,
$G
=
\fun{\mathsf{J}_{\kappa,L,S_{\sminvtemperature}}^{\txtspin}}
{\prbprocess^{\txtspin}}$,
$\lambda
=
-1$, and integrate the Bose-field component. As in Proposition \ref{expedition0012086}, the positive Gaussian factor of the partition function, $\fnexp{\oneoverfour
\fun{\opform{q}_{\txtbsn,\txtnonzero,\sminvtemperature,\smchemicalpotential,L}^{\txteuclid}}
{\fun{\mathsf{J}_{\kappa,L,S_{\sminvtemperature}}^{\txtspin}}
{\prbprocess^{\txtspin}}}}$, cancels between the numerator and the partition function. The path-integral representation with respect to $\msrprb_{\txtspin,\txtspinboson,\sminvtemperature,\kappa,\smchemicalpotential,L}$ defined in \eqref{expedition0012473}, using $\opform{q}_{\txtbsn,\txtnonzero,\sminvtemperature,\smchemicalpotential,L}^{\txteuclid}$ defined in \eqref{expedition0012441}, is
$$\begin{aligned}
&\fun{\psi_{\txtspinboson,\sminvtemperature,\kappa,\smchemicalpotential,L}}
{\fnexp{\imunit \sum_{k=1}^{n} a_k \opfocksegal(j_{t_k} h_k)}}
\\ %%%%%%%%%%%%%%%%
&=
\fnexp{-\oneoverfour
\sum_{j,k=1}^{n} a_j a_k
\fun{\opform{q}_{\txtbsn,\txtnonzero,\sminvtemperature,\smchemicalpotential,L}^{\txteuclid}}
{j_{t_j} h_j, j_{t_k} h_k}}
\\ %%%%%%%%%%%%%%%%
&\quad\times
\sqfun{\prbexp_{\msrprb_{\txtspin,\txtspinboson,\sminvtemperature,\kappa,\smchemicalpotential,L}}}
{\fnexp{-\imunit \int_0^{\sminvtemperature} \prbprocess_t^{\txtspin}
\rbk{\sum_{k=1}^{n} a_k
\fun{\opform{q}_{\txtbsn,\txtnonzero,\sminvtemperature,\smchemicalpotential,L}^{\txteuclid}}
{j_{t_k} h_k, j_t(P_L\omega\mathsf{m}_{\kappa})}} \opdmsr{t}}}.
\end{aligned}$$

Now take the infinite-volume limit $L
\to \infty$ and the chemical-potential limit $\smchemicalpotential
\to 0$. The spin component $\msrprb_{\txtspin,\txtspinboson,\sminvtemperature,\kappa,\smchemicalpotential,L}$ of the bounded-system perturbed measure converges weakly by boundedness of the spin phase factor on the closed unit disk, and one obtains the interacting spin-path measure $\msrprb_{\txtspin,\txtspinboson,\sminvtemperature,\kappa}$ of the infinite system and the corresponding KMS state $\oastate[\psi_{\txtspinboson,\sminvtemperature,\kappa}]$. Moreover,
$$\begin{aligned}
\opform{q}_{\txtbsn,\txtnonzero,\sminvtemperature,\smchemicalpotential,L}^{\txteuclid}
&\to \opform{q}_{\txtbsn,\txtnonzero,\sminvtemperature,\smchemicalpotential}^{\txteuclid}, \\
P_L \omega\mathsf{m}_{\kappa}
&\to \omega\mathsf{m}_{\kappa}, \\
\msrprb_{\txtspin,\txtspinboson,\sminvtemperature,\kappa,\smchemicalpotential,L}
&\to \msrprb_{\txtspin,\txtspinboson,\sminvtemperature,\kappa}
\end{aligned}$$
hold, where the last convergence is weak convergence of the interacting spin-path measures. In the chemical-potential limit $\smchemicalpotential
\to 0$, the zero-mode component becomes singular, the zero-mode covariance $\fun{\opform{q}_{\txtbsn,0,\sminvtemperature}}
{h_j, h_k}$ due to the BEC condensate density $\smnumberdensity_{\txtbsn,0}(\sminvtemperature)$ separates off \cite{AsaoArai28,YoshitsuguSekine004}, and the Gaussian factor can be expressed by $\opform{q}_{\txtbsn,\txtbec,\sminvtemperature}$ as on the right-hand side of \eqref{expedition0012469} for $\oastate[\psi_{\txtspinboson,\sminvtemperature,\kappa}]$. The limit of the spin-path factor is reduced to the path integral on the right-hand side of \eqref{expedition0012469} by convergence of the integrand and weak convergence of the interacting spin-path measure.

The $n$-point function representation is the standard fact that it is obtained from the $n$th partial derivative of the characteristic functional with respect to the coefficients $a_1,
\ldots,
a_n$.
\end{proof}

\begin{prop}\label{expedition0012099}
Consider the infinite-system KMS state $\oastate[\psi_{\txtspinboson,\sminvtemperature,\kappa}]$ constructed in Proposition \ref{expedition0012468}. For arbitrary $f
\in
\sphilb{D}_{\txtbsn,\txtphys,\sminvtemperature}$ and $t
\in
S_{\sminvtemperature}$,
\begin{equation}\label{expedition0012447}
\begin{aligned}
\fun{\oastate[\psi_{\txtspinboson,\sminvtemperature,\kappa}]}
{\napiernum^{\imunit \opfocksegal(j_t f)}}
&=
\fnexp{-\oneoverfour
\fun{\opform{q}_{\txtbsn,\txtbec,\sminvtemperature}}{f}}
\cdot
\sqfun{\prbexp_{\msrprb_{\txtspin,\txtspinboson,\sminvtemperature,\kappa}}}
{\fnexp{-\frac{\imunit}{2}
\fun{\opform{q}_{\txtbsn,\txtnonzero,\sminvtemperature}^{\txteuclid}}
{j_t f,
\mathsf{J}_{\kappa,S_{\sminvtemperature}}^{\txtspin}}}}
\end{aligned}
\end{equation}
holds. Note that in the second factor of the product, the quasi-bilinear form changes to $\opform{q}_{\txtbsn,\txtnonzero,\sminvtemperature}^{\txteuclid}$. In particular, $\opform{q}_{\txtbsn,0,\sminvtemperature}$ appears in the same form as in the free Bose gas and the van Hove model, and BEC appears formally also in the spin-boson model.
\end{prop}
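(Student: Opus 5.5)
The plan is to read off \eqref{expedition0012447} from Proposition \ref{expedition0012468} by specialization. Take $n=1$, $a_1=1$, $t_1=t$, $h_1=f$ in \eqref{expedition0012469}; this is legitimate because $f\in\sphilb{D}_{\txtbsn,\txtphys,\sminvtemperature}$. The Gaussian prefactor becomes $\fnexp{-\oneoverfour\fun{\opform{q}^{\txteuclid}_{\txtbsn,\txtbec,\sminvtemperature}}{j_t f,j_t f}}$. I will then evaluate the Euclidean form at equal times using \eqref{expedition0012475}.

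The zero-mode part is time independent, so $\fun{\opform{q}^{\txteuclid}_{\txtbsn,0,\sminvtemperature}}{j_tf,j_tf}=\fun{\opform{q}_{\txtbsn,0,\sminvtemperature}}{f}$. The nonzero part at $\abs{t-s}=0$ has kernel $(1+\napiernum^{-\sminvtemperature\omega})/(1-\napiernum^{-\sminvtemperature\omega})=\coth(\sminvtemperature\omega/2)=K_{\sminvtemperature,0}$. Hence it equals $\fun{\opform{q}_{\txtbsn,\txtnonzero,\sminvtemperature}}{f}$, as already observed in the bounded system below \eqref{expedition0012440}. Summing the two parts gives the first factor $\fnexp{-\oneoverfour\fun{\opform{q}_{\txtbsn,\txtbec,\sminvtemperature}}{f}}$.

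For the spin factor, the phase in \eqref{expedition0012469} is linear in $\opform{q}^{\txteuclid}_{\txtbsn,\txtbec,\sminvtemperature}(j_tf,j_s(\omega\mathsf{m}_\kappa))$. I will split it into its zero-mode and nonzero parts.

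\textbf{Zero-mode part.} This contribution is $\fun{\opform{q}_{\txtbsn,0,\sminvtemperature}}{f,\omega\mathsf{m}_\kappa}\int\prbprocess^{\txtspin}_s\,\opdmsr{s}$, which involves $\widehat{\omega\mathsf{m}_\kappa}(0)$. Because of the infrared cutoff $\varrho_\kappa=\varrho\fndef{\kappa\le\abs{k}}$, the function $\omega\mathsf{m}_\kappa$ vanishes in a neighbourhood of $k=0$. So this Fourier value is $0$, and the zero-mode part of the phase drops out. The same observation, applied to $W_{\txtspinboson,\sminvtemperature,\kappa}$, shows that $\mathsf{D}_{\txtspinboson,\sminvtemperature,\kappa}$ involves only the nonzero form. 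Consequently $\msrprb_{\txtspin,\txtspinboson,\sminvtemperature,\kappa}$ is unaffected by the condensate.

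\textbf{Nonzero part.} By the definition of $\mathsf{J}^{\txtspin}_{\kappa,S_{\sminvtemperature}}$ in \eqref{expedition0012474} and bilinearity, the remaining phase is $\fun{\opform{q}^{\txteuclid}_{\txtbsn,\txtnonzero,\sminvtemperature}}{j_tf,\mathsf{J}^{\txtspin}_{\kappa,S_{\sminvtemperature}}}$. I will exchange the $\opdmsr{s}$-integral with the form. This is justified by Fubini, since the kernel $(\napiernum^{-\abs{t-s}\omega}+\napiernum^{-(\sminvtemperature-\abs{t-s})\omega})/(1-\napiernum^{-\sminvtemperature\omega})$ is bounded by $K_{\sminvtemperature,0}$ and $\omega\mathsf{m}_\kappa$ lies in the form domain thanks to the cutoff.

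The main obstacle is normalization. The coefficient of the phase in \eqref{expedition0012469} is $-\imunit$, whereas \eqref{expedition0012447} and Proposition \ref{expedition0012086} carry $-\frac{\imunit}{2}$, the value dictated by the Gaussian formula \eqref{expedition0012472}. I would resolve this by redoing the Gaussian integration directly from \eqref{expedition0012472} with $F=j_tf$, $G=\mathsf{J}^{\txtspin}$, $\lambda=-1$, exactly as in the proof of Proposition \ref{expedition0012086}, and then pass to the limit $L\to\infty$, $\smchemicalpotential\to0$ as in Proposition \ref{expedition0012468}. In that limit $\opform{q}^{\txteuclid}_{\txtbsn,\txtnonzero,\sminvtemperature,\smchemicalpotential,L}(j_tf)\to\opform{q}_{\txtbsn,\txtbec,\sminvtemperature}(f)$, with the condensate separating off as in \cite{AsaoArai28,YoshitsuguSekine004}. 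Meanwhile the cross term converges to the nonzero form, because the source vanishes near zero momentum. Passing the limit through the spin expectation uses the fact that the integrand has modulus $\le1$, together with weak convergence of the spin measures. This weak-convergence step is the second point I would need to check carefully.
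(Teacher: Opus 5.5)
Your proposal is correct and follows the paper's proof. The paper likewise specializes \eqref{expedition0012469} to $n=1$, identifies the equal-time Euclidean form with $\opform{q}_{\txtbsn,\txtbec,\sminvtemperature}(f)$, and uses the infrared cutoff to kill the zero-mode cross term $\opform{q}_{\txtbsn,0,\sminvtemperature}(j_tf,j_s\omega\mathsf{m}_\kappa)$. Your coefficient worry is justified: the paper's proof silently skips it, and \eqref{expedition0012472} with $\lambda=-1$ gives $-\frac{\imunit}{2}$, so \eqref{expedition0012469} has dropped a factor $\frac{1}{2}$ and taking the constant from the Gaussian formula is the right way to reach the stated identity.
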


\begin{proof}
Apply Proposition \ref{expedition0012468} with $n
=
1$,
$t_1
=
t$,
$h_1
=
f$,
$a_1
=
1$. At equal times,
$$\fun{\opform{q}_{\txtbsn,\txtnonzero,\sminvtemperature}^{\txteuclid}}
{j_t f, j_t f}
=
\fun{\opform{q}_{\txtbsn,\txtnonzero,\sminvtemperature}}{f, f}$$
holds, and by the definition $\opform{q}_{\txtbsn,\txtbec,\sminvtemperature}
=
\opform{q}_{\txtbsn,0,\sminvtemperature}
+\opform{q}_{\txtbsn,\txtnonzero,\sminvtemperature}$, the Gaussian factor on the right-hand side of \eqref{expedition0012469} reduces to $\fnexp{-\oneoverfour \fun{\opform{q}_{\txtbsn,\txtbec,\sminvtemperature}}{f}}$. Because the infrared cutoff restricts us to $\kappa
> 0$, the integrand in the spin-path factor vanishes at the origin in momentum space, and hence
$$\fun{\opform{q}_{\txtbsn,0,\sminvtemperature}}
{j_t f,
j_s \omega \mathsf{m}_{\kappa}}
= 0$$
holds. Thus $\opform{q}_{\txtbsn,\txtbec,\sminvtemperature}$ reduces to $\opform{q}_{\txtbsn,\txtnonzero,\sminvtemperature}$. Combining these facts gives \eqref{expedition0012447}.
\end{proof}

This proposition shows that, under the infrared cutoff, the spin degree of freedom neither destroys nor generates the BEC component, and the emergence of BEC itself at finite temperature is governed by the zero-mode structure of the free Bose gas.

\subsection{Removal of the Infrared Cutoff}\label{expedition0012574}

Removing the infrared cutoff means taking the limit of the cutoff parameter in the source. The objects whose limits are taken are field correlation functions on the physical test-function space \(\sphilb{D}_{\txtbsn,\txtphys,\sminvtemperature}
= \dom \mathsf{m} \cap \sphilb{D}_{\txtbsn,0,\sminvtemperature}\). We first establish the path-integral representation with the cutoff source, and then take the source limit.

We first distinguish two kinds of convergence of the source. Consider the infrared cutoff removal adapted to the assumption \(\varrho
\in \dom \omega^{-\onehalf}\cap\dom \inv{\omega}\): \begin{equation}\label{expedition0012480}
\begin{aligned}
\omega \mathsf{m}_{\kappa}
= \omega^{-\onehalf}\varrho_{\kappa}
&\to
\omega \mathsf{m}
= \omega^{-\onehalf}\varrho
\quad\text{in }\sphilb{H}_{\txtbsn},
\\ %%%%%%%%%%%%%%%%
\omega^{-\onehalf}(\omega \mathsf{m}_{\kappa})
=
\omega^{-1}\varrho_{\kappa}
&\to
\omega^{-1}\varrho
=
\omega^{-\onehalf}(\omega\mathsf{m})
\quad\text{in }\sphilb{H}_{\txtbsn}.
\end{aligned}
\end{equation} On the other hand, the infrared singularity appears not as convergence of \(\mathsf{m}_{\kappa}\) itself in \(\sphilb{H}_{\txtbsn}\), but as a shrinking of the domain of the linear functional \(f
\mapsto \mathsf{m}(f)\). Thus, if \(f
\in \dom \mathsf{m}\), then \(\mathsf{m}_{\kappa}(f)
=
\bkt{f}{\mathsf{m}_{\kappa}}_{\sphilb{H}_{\txtbsn}}
\to
\mathsf{m}(f)\) holds, while for \(f
\notin \dom \mathsf{m}\) the right-hand side is not defined as a finite physical quantity. Therefore, after removal of the infrared cutoff, fields are restricted from the outset to \(\sphilb{D}_{\txtbsn,\txtphys,\sminvtemperature}\). Note that \begin{equation}\label{expedition0012488}
\norm{\mathsf{m}_{\kappa}}_{\sphilb{H}_{\txtbsn}}^2
=
\int_{\fldreal^d}
\frac{\abs{\faftr{\varrho}_{\kappa}(k)}^2}{\omega(k)^3}
\opdmsr{k}
\end{equation} is not treated as a finite quantity after cutoff removal. Formula \eqref{expedition0012488} is the norm of the infrared displacement of van Hove type, and under the infrared singularity condition it generally diverges as \(\kappa
\downarrow 0\). Thus this term is not a finite self-interaction term absorbed into the spin-path weight, but an infrared singularity handled as the restriction to the domain \(\dom \mathsf{m}\). On the other hand, what appears in the quadratic term of the spin-path weight is not \(\mathsf{m}_{\kappa}\) but the coupling vector \(\omega\mathsf{m}_{\kappa}\); there, by the estimate in the next proposition, finiteness of \(\omega^{-1} \varrho\) is sufficient.

\begin{prop}[Removal of the infrared cutoff in the spin-path weight]\label{expedition0012482}
For arbitrary $t,s
\in S_{\sminvtemperature}$, define the cutoff-free objects by
\begin{equation}\label{expedition0012483}
\begin{aligned}
W_{\txtspinboson,\sminvtemperature}(t,s)
&=
\fun{\opform{q}_{\txtbsn,\txtnonzero,\sminvtemperature}^{\txteuclid}}
{j_t(\omega\mathsf{m}),j_s(\omega\mathsf{m})},
\\ %%%%%%%%%%%%%%%%
\mathsf{D}_{\txtspinboson,\sminvtemperature}(\prbprocess^{\txtspin})
&=
\fnexp{\oneoverfour
\int_{S_{\sminvtemperature}}
\int_{S_{\sminvtemperature}}
\prbprocess_t^{\txtspin}\prbprocess_s^{\txtspin}
W_{\txtspinboson,\sminvtemperature}(t,s)
\opdmsr{t}\opdmsr{s}},
\\ %%%%%%%%%%%%%%%%
\smpartitionfunc_{\txtspinboson,\sminvtemperature}
&=
\sqfun{\prbexp_{\msrprb_{\txtspin,\sminvtemperature}}}
{\mathsf{D}_{\txtspinboson,\sminvtemperature}},
\\ %%%%%%%%%%%%%%%%
\opdmsr{\msrprb_{\txtspin,\txtspinboson,\sminvtemperature}}
&=
\frac{1}
{\smpartitionfunc_{\txtspinboson,\sminvtemperature}}
\mathsf{D}_{\txtspinboson,\sminvtemperature}
\opdmsr{\msrprb_{\txtspin,\sminvtemperature}}.
\end{aligned}
\end{equation}
Then, as $\kappa \downarrow 0$, the convergence $W_{\txtspinboson,\sminvtemperature,\kappa}(t,s)\to W_{\txtspinboson,\sminvtemperature}(t,s)$ holds uniformly. Moreover, $\mathsf{D}_{\txtspinboson,\sminvtemperature,\kappa}\to\mathsf{D}_{\txtspinboson,\sminvtemperature}$ in $\lp^p(D(S_{\sminvtemperature};\ringratint_2),\msrprb_{\txtspin,\sminvtemperature})$ for arbitrary $1 \leq p < \infty$. In particular, $\smpartitionfunc_{\txtspinboson,\sminvtemperature,\kappa}\to\smpartitionfunc_{\txtspinboson,\sminvtemperature}$, and the normalized densities with respect to the reference measure $\msrprb_{\txtspin,\sminvtemperature}$ satisfy
\begin{equation}\label{expedition0012490}
\frac{1}
{\smpartitionfunc_{\txtspinboson,\sminvtemperature,\kappa}}
\mathsf{D}_{\txtspinboson,\sminvtemperature,\kappa}
\to
\frac{1}
{\smpartitionfunc_{\txtspinboson,\sminvtemperature}}
\mathsf{D}_{\txtspinboson,\sminvtemperature}
\quad\text{in }
\lp^1(D(S_{\sminvtemperature};\ringratint_2),\msrprb_{\txtspin,\sminvtemperature}).
\end{equation}
\end{prop}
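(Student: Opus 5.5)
The proof reduces everything to one uniform bound on the kernel. From that bound we get the weights, then the partition functions, then the normalized densities.

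\textbf{Step 1: uniform convergence of the kernel.} By the definition of $\opform{q}_{\txtbsn,\txtnonzero,\sminvtemperature}^{\txteuclid}$ in \eqref{expedition0012475},
$$W_{\txtspinboson,\sminvtemperature,\kappa}(t,s)=\bkt{\omega\mathsf{m}_{\kappa}}{G_{\abs{t-s}}\,\omega\mathsf{m}_{\kappa}}_{\sphilb{H}_{\txtbsn}},
\quad
G_u=\frac{\napiernum^{-u\omega}+\napiernum^{-(\sminvtemperature-u)\omega}}{1-\napiernum^{-\sminvtemperature\omega}},
\quad 0\le u\le\sminvtemperature.$$
Here we use that the zero-mode part drops out because $\kappa>0$, as in Proposition \ref{expedition0012099}. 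The key estimate is that $G_u$ is dominated uniformly in $u$. The numerator is at most $2$, and $1-\napiernum^{-\sminvtemperature\omega}\ge c_\sminvtemperature\min(1,\omega)$. Hence
$$0\le G_u\le C_\sminvtemperature\,(1+\omega^{-1}).$$
The operators $G_u$ commute with $\omega$ and are bounded in this form sense. The polarization identity then gives
$$\abs{W_{\kappa}(t,s)-W(t,s)}\le C_\sminvtemperature\big(\norm{\omega\mathsf{m}_\kappa-\omega\mathsf{m}}+\norm{\omega^{-1/2}(\omega\mathsf{m}_\kappa-\omega\mathsf{m})}\big)\big(\norm{\omega\mathsf{m}_\kappa}+\norm{\omega\mathsf{m}}+\norm{\omega^{-1/2}\omega\mathsf{m}_\kappa}+\norm{\omega^{-1/2}\omega\mathsf{m}}\big).$$
By \eqref{expedition0012480} the right-hand side tends to $0$, and the bound is independent of $t,s$. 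This gives the uniform convergence. The same bound shows $\sup_{\kappa,t,s}\abs{W_\kappa(t,s)}\le M<\infty$; this uses $\varrho\in\dom\omega^{-1/2}\cap\dom\omega^{-1}$ and monotonicity of the cutoff $\varrho_\kappa=\varrho\fndef{\kappa\le\abs{k}}$. Note that the divergent quantity \eqref{expedition0012488} never enters.

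\textbf{Step 2: $\lp^p$ convergence of the weights.} Since $\prbprocess^{\txtspin}_t\in\{\pm1\}$, the exponent of $\mathsf{D}_{\kappa}$ is bounded pathwise by $\frac14\sminvtemperature^2 M$. Therefore $0<\mathsf{D}_\kappa,\mathsf{D}\le \napiernum^{\sminvtemperature^2M/4}$ uniformly in $\kappa$ and in the path. The exponents converge uniformly, by at most $\frac14\sminvtemperature^2\sup_{t,s}\abs{W_\kappa-W}$. The mean value theorem for $\exp$ then gives $\sup_{\prbprocess^{\txtspin}}\abs{\mathsf{D}_\kappa-\mathsf{D}}\to0$. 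This is uniform convergence, which implies convergence in every $\lp^p(\msrprb_{\txtspin,\sminvtemperature})$ because $\msrprb_{\txtspin,\sminvtemperature}$ is a probability measure. No dominated convergence or Vitali argument is needed.

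\textbf{Step 3: partition functions and normalized densities.} Integrating gives $\smpartitionfunc_\kappa\to\smpartitionfunc$. Moreover $\mathsf{D}\ge\napiernum^{-\sminvtemperature^2M/4}$, so $\smpartitionfunc_\kappa,\smpartitionfunc$ are bounded below by this positive constant. Write
$$\frac{\mathsf{D}_\kappa}{\smpartitionfunc_\kappa}-\frac{\mathsf{D}}{\smpartitionfunc}=\frac{\mathsf{D}_\kappa-\mathsf{D}}{\smpartitionfunc_\kappa}+\mathsf{D}\Big(\frac1{\smpartitionfunc_\kappa}-\frac1{\smpartitionfunc}\Big).$$
Both terms tend to $0$ in $\lp^1$, which proves \eqref{expedition0012490}.

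\textbf{Main obstacle.} The delicate point is Step 1: the uniform domination $G_u\lesssim 1+\omega^{-1}$ near $\omega=0$. The factor $1/(1-\napiernum^{-\sminvtemperature\omega})\sim(\sminvtemperature\omega)^{-1}$ is exactly what forces the hypothesis $\omega^{-1/2}(\omega\mathsf{m})=\omega^{-1}\varrho\in\sphilb{H}_{\txtbsn}$. One must check that the numerator does not worsen this singularity for any $u\in[0,\sminvtemperature]$; it does not, since the numerator is at most $2$. Once that bound holds, the rest is elementary.
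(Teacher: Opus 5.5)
Your proposal is correct and follows essentially the same route as the paper's proof. The bound $\frac{1+\napiernum^{-x}}{1-\napiernum^{-x}}\le C(x^{-1}+1)$ dominates the kernel by $C_{\sminvtemperature}\bigl(\omega^{-2}+\omega^{-1}\bigr)\abs{\faftr{\varrho}_{\kappa}}^{2}$ uniformly in $(t,s)$, which gives uniform convergence of $W_{\txtspinboson,\sminvtemperature,\kappa}$; then $\abs{\prbprocess_t^{\txtspin}}=1$ bounds the exponent uniformly, and convergence of $\mathsf{D}_{\txtspinboson,\sminvtemperature,\kappa}$, of the partition functions and of the normalized densities follows. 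Your Cauchy--Schwarz estimate (instead of dominated convergence) and your explicit positive lower bound on the partition functions make quantitative two points the paper leaves implicit.
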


\begin{proof}
Using \eqref{expedition0012480} and the covariance $\fun{\opform{q}_{\txtbsn,\txtnonzero,\sminvtemperature}^{\txteuclid}}
{j_t f,j_s g}$, we get
\begin{equation}\label{expedition0012489}
\begin{aligned}
W_{\txtspinboson,\sminvtemperature,\kappa}(t,s)
=
\int_{\fldreal^d}
\frac{\napiernum^{-\abs{t-s}\omega(k)}
+\napiernum^{-(\sminvtemperature-\abs{t-s})\omega(k)}}
{1-\napiernum^{-\sminvtemperature\omega(k)}}
\frac{\abs{\faftr{\varrho}_{\kappa}(k)}^2}{\omega(k)}
\opdmsr{k}.
\end{aligned}
\end{equation}
The infrared singularity appearing in this expression is not $\omega^{-3}$ as in \eqref{expedition0012488}, but only up to $\omega^{-2}$ even with the thermal factor included. Indeed, for $0
< x
\leq 1$, the Taylor expansion or convexity of the exponential gives
$\napiernum^{-x}
\leq
1-x+\frac{x^2}{2}
\leq
1-\frac{x}{2}$.
Hence $1-\napiernum^{-x}
\geq
\frac{x}{2}$, and since $1+\napiernum^{-x}
\leq
2$, we obtain
$$
\frac{1+\napiernum^{-x}}
{1-\napiernum^{-x}}
\leq
\frac{2}{x/2}
=
4x^{-1}.
$$
For $x
\geq 1$, the numerator is at most $2$ and the denominator is at least $1-\napiernum^{-1}$, so the same ratio is bounded. Thus, for a suitable constant $C$ and all $x
>0$,
$$
\frac{1+\napiernum^{-x}}
{1-\napiernum^{-x}}
\leq
C\rbk{x^{-1}+1}.
$$
Therefore the integrand in \eqref{expedition0012489} is bounded by $C_{\sminvtemperature}
\rbk{\frac{\abs{\faftr{\varrho}_{\kappa}(k)}^2}{\omega(k)^2}
+\frac{\abs{\faftr{\varrho}_{\kappa}(k)}^2}{\omega(k)}}$. Hence, if $\varrho_{\kappa}
\to \varrho$ is chosen so that $\omega^{-1} \varrho_{\kappa}
\to \inv{\omega} \varrho$ and $\omega^{-\onehalf}\varrho_{\kappa}
\to \omega^{-\onehalf}\varrho$, then the dominated convergence theorem gives uniform convergence $W_{\txtspinboson,\sminvtemperature,\kappa}
\to W_{\txtspinboson,\sminvtemperature}$. Since the spin path satisfies $\abs{\prbprocess_t^{\txtspin}}
= 1$, the exponent in the double integral converges as a uniformly bounded continuous functional. Therefore the expectation of the exponential converges in every $\lp^p$. In particular, convergence of the partition functions and the $\lp^1$ convergence of the normalized densities in \eqref{expedition0012490} also follow.
\end{proof}

\begin{prop}[Removal of the infrared cutoff for finitely many field correlation functions]\label{expedition0012485}
For arbitrary $n \geq 1$,
$h_1,\ldots,h_n
\in \sphilb{D}_{\txtbsn,\txtphys,\sminvtemperature}$,
$t_1,\ldots,t_n
\in S_{\sminvtemperature}$,
$a_1,\ldots,a_n
\in \fldreal$, the characteristic functional with infrared cutoff converges, as $\kappa
\downarrow 0$, to
\begin{equation}\label{expedition0012486}
\begin{aligned}
&\fun{\oastate[\psi_{\txtspinboson,\sminvtemperature}]}
{\fnexp{\imunit \sum_{k=1}^{n} a_k \opfocksegal(j_{t_k} h_k)}}
\\ %%%%%%%%%%%%%%%%
&=
\fnexp{-\oneoverfour
\sum_{j,k=1}^{n} a_j a_k
\fun{\opform{q}_{\txtbsn,\txtbec,\sminvtemperature}^{\txteuclid}}
{j_{t_j} h_j,j_{t_k} h_k}}
\\
&\quad\times
\sqfun{\prbexp_{\msrprb_{\txtspin,\txtspinboson,\sminvtemperature}}}
{\fnexp{-\imunit \int_{0}^{\sminvtemperature}
\prbprocess_t^{\txtspin}
\rbk{\sum_{k=1}^{n} a_k
\fun{\opform{q}_{\txtbsn,\txtnonzero,\sminvtemperature}^{\txteuclid}}
{j_{t_k}h_k,j_t(\omega\mathsf{m})}}\opdmsr{t}}}.
\end{aligned}
\end{equation}
In particular, the one-variable non-Gaussian factor is
\begin{equation}\label{expedition0012487}
\sqfun{\prbexp_{\msrprb_{\txtspin,\txtspinboson,\sminvtemperature}}}
{\fnexp{-\frac{\imunit}{2}
\int_{0}^{\sminvtemperature}
\prbprocess_u^{\txtspin}
\fun{\opform{q}_{\txtbsn,\txtnonzero,\sminvtemperature}^{\txteuclid}}
{j_t f,j_u(\omega\mathsf{m})}
\opdmsr{u}}}.
\end{equation}
This formula defines the KMS state $\oastate[\psi_{\txtspinboson,\sminvtemperature}]$ after removal of the infrared cutoff on Weyl polynomials over $\sphilb{D}_{\txtbsn,\txtphys,\sminvtemperature}$. The corresponding finitely many sharp-time field correlation functions are obtained as partial derivatives at the origin with respect to the coefficients $a_1,\ldots,a_n$ in \eqref{expedition0012486}.
\end{prop}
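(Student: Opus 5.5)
I would start from the cutoff representation in Proposition \ref{expedition0012468} for fixed $\kappa>0$, and then pass to $\kappa\downarrow 0$ in each factor separately.

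Fix $n$, $h_1,\ldots,h_n\in\sphilb{D}_{\txtbsn,\txtphys,\sminvtemperature}$, times $t_k$ and real $a_k$. The first factor, $\fnexp{-\oneoverfour\sum a_ja_k\opform{q}^{\txteuclid}_{\txtbsn,\txtbec,\sminvtemperature}(j_{t_j}h_j,j_{t_k}h_k)}$, does not depend on $\kappa$. Nothing needs to be done there beyond noting that $h_k\in\sphilb{D}_{\txtbsn,0,\sminvtemperature}$ makes both the zero-mode and nonzero parts finite.

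For the second factor I would first remove the zero-mode contribution from the phase. As in the proof of Proposition \ref{expedition0012099}, $\faftr{\varrho}_\kappa$ vanishes near $k=0$ for $\kappa>0$. Hence $\opform{q}^{\txteuclid}_{\txtbsn,0,\sminvtemperature}(j_{t_k}h_k,j_t\omega\mathsf{m}_\kappa)=0$, and $\opform{q}_{\txtbsn,\txtbec}^{\txteuclid}$ may be replaced by $\opform{q}_{\txtbsn,\txtnonzero}^{\txteuclid}$ in the phase. Write
\[
\Phi_\kappa(\prbprocess^{\txtspin})=\fnexp{-\imunit\int\prbprocess^{\txtspin}_t\sum_k a_k\,\opform{q}^{\txteuclid}_{\txtbsn,\txtnonzero,\sminvtemperature}(j_{t_k}h_k,j_t\omega\mathsf{m}_\kappa)\,\opdmsr{t}}.
\]
Then the second factor equals $\prbexp_{\msrprb_{\txtspin,\sminvtemperature}}[\Phi_\kappa\,\mathsf{D}_{\txtspinboson,\sminvtemperature,\kappa}]/\smpartitionfunc_{\txtspinboson,\sminvtemperature,\kappa}$. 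Since $|\Phi_\kappa|\le1$, I would split the difference from the claimed limit as
\[
\bigl|\prbexp[\Phi_\kappa(\rho_\kappa-\rho)]\bigr|+\bigl|\prbexp[(\Phi_\kappa-\Phi)\rho]\bigr|,
\qquad \rho_\kappa=\mathsf{D}_{\txtspinboson,\sminvtemperature,\kappa}/\smpartitionfunc_{\txtspinboson,\sminvtemperature,\kappa},\quad \rho=\mathsf{D}_{\txtspinboson,\sminvtemperature}/\smpartitionfunc_{\txtspinboson,\sminvtemperature}.
\]
The first term tends to $0$ by the $\lp^1$ convergence \eqref{expedition0012490} of Proposition \ref{expedition0012482}.

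For the second term it suffices to show that the phase exponent converges uniformly in $t$ and in the spin path. I would write the kernel in momentum space as
\[
\opform{q}^{\txteuclid}_{\txtbsn,\txtnonzero}(j_{t_k}h,j_t g)=\bkt{h}{c_{|t_k-t|}(\omega)g}.
\]
Here $c_u(x)=(\napiernum^{-ux}+\napiernum^{-(\sminvtemperature-u)x})/(1-\napiernum^{-\sminvtemperature x})\le C(x^{-1}+1)$ by the bound in the proof of Proposition \ref{expedition0012482}. Then, by Cauchy--Schwarz,
\[
|\bkt{h}{c(\omega)(g_\kappa-g)}|\le C\bigl(\norm{h}\,\norm{\omega^{-1}(g_\kappa-g)}+\norm{h}\,\norm{g_\kappa-g}\bigr),
\]
with $g_\kappa=\omega\mathsf{m}_\kappa$. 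Both norms tend to $0$ by \eqref{expedition0012480}, uniformly in $t$. Because $|\prbprocess^{\txtspin}_t|=1$, the exponents converge uniformly and $|\Phi_\kappa-\Phi|\to0$ uniformly, so the second term vanishes too.

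The main obstacle is the $\omega^{-1}$ weight. Controlling it with the norm $\norm{\omega^{-1}(g_\kappa-g)}=\norm{\omega^{-\frac32}(\varrho_\kappa-\varrho)}$ would require $\varrho\in\dom\omega^{-\frac32}$, which the infrared singularity assumption excludes. So I would instead split the weight as $\omega^{-\onehalf}\cdot\omega^{-\onehalf}$ and use the bound $|\bkt{\omega^{-\onehalf}h}{\omega^{-\onehalf}(g_\kappa-g)}|$. This needs the second line of \eqref{expedition0012480} for $g_\kappa-g$, and the condition $h\in\dom\omega^{-\onehalf}$, which is exactly the finiteness of $\opform{q}_{\txtbsn,\txtnonzero,\sminvtemperature}(h)$ guaranteed by $h\in\sphilb{D}_{\txtbsn,0,\sminvtemperature}$ (where $\coth(\sminvtemperature\omega/2)\sim\omega^{-1}$).

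The one-variable formula \eqref{expedition0012487} follows by taking $n=1$ and $a_1=1$. Consistency and positivity of the limits on Weyl polynomials over $\sphilb{D}_{\txtbsn,\txtphys,\sminvtemperature}$ are preserved under pointwise limits, so the limit defines the state. The $n$-point functions follow by differentiating at $a=0$, as in \eqref{expedition0012470}. The moment bounds needed to exchange limit and derivative come from the Gaussian factor together with the boundedness of the phase derivatives, which are uniform in $\kappa$ by the same estimate.
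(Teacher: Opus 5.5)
Your argument is correct and has the same architecture as the paper's proof. You start from the cutoff formula of Proposition~\ref{expedition0012468} and observe that the Gaussian factor does not depend on $\kappa$. You handle the change of spin-path measure by the $\lp^1$ convergence \eqref{expedition0012490}, and the phase by uniform convergence of the cross-term kernel together with the fact that the phase has modulus one. Two of your steps only spell out what the paper leaves implicit: the explicit removal of the zero-mode part of the phase (because $\faftr{\varrho}_\kappa$ vanishes near $k=0$), and the two-term split in place of the paper's one-line appeal to dominated convergence.

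The one genuine difference is the estimate giving $\opform{q}^{\txteuclid}_{\txtbsn,\txtnonzero,\sminvtemperature}(j_{t_k}h_k,j_t(\omega\mathsf{m}_\kappa))\to\opform{q}^{\txteuclid}_{\txtbsn,\txtnonzero,\sminvtemperature}(j_{t_k}h_k,j_t(\omega\mathsf{m}))$ uniformly in $t$.
\begin{itemize}
\item The paper invokes $h_k\in\dom\mathsf{m}$. This controls the piece $\omega^{-1}\cdot\omega\mathsf{m}_\kappa=\mathsf{m}_\kappa$ of the thermal kernel near $k=0$.
\item You instead split $\omega^{-1}=\omega^{-\onehalf}\omega^{-\onehalf}$. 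You then use only $\norm{\omega^{-\onehalf}h_k}<\infty$, which follows from finiteness of $\opform{q}_{\txtbsn,\txtnonzero,\sminvtemperature}(h_k)$ because $\coth y\geq y^{-1}$, together with the second line of \eqref{expedition0012480}.
\end{itemize}
Only this version of your bound works. The first one you wrote involves $\norm{\omega^{-1}(g_\kappa-g)}$, which is infinite, as you yourself observe.

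What your route buys is that the hypothesis $h_k\in\dom\mathsf{m}$ is never used. The same Cauchy--Schwarz step gives $\abs{\mathsf{m}(h)}\leq\norm{\omega^{-\onehalf}h}\,\norm{\omega^{-1}\varrho}$. So under the standing assumption $\varrho\in\dom\omega^{-1}$, every $h\in\sphilb{D}_{\txtbsn,0,\sminvtemperature}$ already lies in $\dom\mathsf{m}$, and hence $\sphilb{D}_{\txtbsn,\txtphys,\sminvtemperature}=\sphilb{D}_{\txtbsn,0,\sminvtemperature}$. The paper's route treats $\dom\mathsf{m}$ as a genuine restriction and does not make this visible.

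One small correction: \eqref{expedition0012487} does not literally follow from \eqref{expedition0012486} with $n=1$, $a_1=1$. The printed \eqref{expedition0012486}, like \eqref{expedition0012469}, lacks the factor $\onehalf$ in the cross term that the Gaussian formula \eqref{expedition0012472} produces. The one-variable formula should instead be read off from that Gaussian computation, as in Proposition~\ref{expedition0012086}.
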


\begin{proof}
By \eqref{expedition0012469}, the only terms depending on the infrared cutoff are the spin weight and the cross term containing $\omega\mathsf{m}_{\kappa}$. Proposition \ref{expedition0012482} gives convergence of the spin weight, and since each $h_k
\in \dom \mathsf{m}$, uniformly in $t$ we have
$$\begin{aligned}
\fun{\opform{q}_{\txtbsn,\txtnonzero,\sminvtemperature}^{\txteuclid}}
{j_{t_k}h_k,j_t(\omega\mathsf{m}_{\kappa})}
\to
\fun{\opform{q}_{\txtbsn,\txtnonzero,\sminvtemperature}^{\txteuclid}}
{j_{t_k}h_k,j_t(\omega\mathsf{m})}.
\end{aligned}$$
The absolute value of the phase factor is $1$, and the normalized densities converge in $\lp^1$ by \eqref{expedition0012490}. Thus the dominated convergence theorem gives \eqref{expedition0012486}. The representation of $n$-point functions by partial differentiation is the same standard fact as in \eqref{expedition0012470}.
\end{proof}

This result separates infrared cutoff removal into two effects. First, the spin-path measure including the interaction converges as an ordinary probability measure through \(\omega\mathsf{m}_{\kappa}
\to \omega\mathsf{m}\). Second, the observable directions shrink to \(\sphilb{D}_{\txtbsn,\txtphys,\sminvtemperature}\), where \(\mathsf{m}(f)\) is finite. The latter is the core of the infrared singularity; outside this space, the cross term in \eqref{expedition0012486} has no finite value.

\subsection{Characterization of BEC by Off-Diagonal Long-Range Order}\label{expedition0012572}

Off-diagonal long-range order is a standard characterization of phase transitions, including BEC. Here we compute off-diagonal long-range order for the KMS state after removal of the infrared cutoff, and verify how BEC is characterized.

\begin{prop}[Characterization of BEC by off-diagonal long-range order]\label{expedition0012463}
For notational simplicity, we write the limit $\min(\min_k \abs{x_k}, \min_{i \neq j} \abs{x_i - x_j})
\to \infty$ simply as $\abs{x_i - x_j}
\to \infty$. For the spin-boson KMS state $\oastate[\psi_{\txtspinboson,\sminvtemperature}]$ after removal of the infrared cutoff, $n
\geq 1$,
$g_1, \ldots, g_n
\in \sphilb{D}_{\txtbsn,\txtphys,\sminvtemperature}$,
$x_1, \ldots, x_n
\in \fldreal^{d}$, the long-distance limit of the spatial $n$-point function is
\begin{equation}\label{expedition0012464}
\lim_{\abs{x_i - x_j} \to \infty}
\fun{\oastate[\psi_{\txtspinboson,\sminvtemperature}]}
{\prod_{k=1}^{n} \opfocksegal(j_0 \tau_{x_k} g_k)}
=
\begin{cases}
\sum_{P \in \mathcal{P}_n} \prod_{(i,j) \in P}
\onehalf \fun{\opform{q}_{\txtbsn,0,\sminvtemperature}}{g_i, g_j}
& n \in 2 \monnat \\
0 & n \in 2\monnat + 1
\end{cases}
\end{equation}
Here $\mathcal{P}_n$ denotes the set of all pair partitions of $\setone{1, \ldots, n}$. In particular, vanishing of the two-point off-diagonal long-range order for all $g_1,g_2
\in \sphilb{D}_{\txtbsn,\txtphys,\sminvtemperature}$ is equivalent to the impossibility of BEC, $\opform{q}_{\txtbsn,0,\sminvtemperature}
= 0$.
\end{prop}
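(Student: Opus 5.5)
We will read off the $n$-point function from the characteristic functional of Proposition \ref{expedition0012485}, specialised to $t_1=\cdots=t_n=0$ and $h_k=\tau_{x_k}g_k$. Throughout we use the shorthand
\[
c_k(u)=\fun{\opform{q}_{\txtbsn,\txtnonzero,\sminvtemperature}^{\txteuclid}}{j_0\tau_{x_k}g_k,\,j_u(\omega\mathsf{m})},
\qquad
Y_k=\int_0^{\sminvtemperature}\prbprocess^{\txtspin}_u\, c_k(u)\opdmsr{u}.
\]
With this notation the characteristic functional factorises as
\[
\fnexp{-\oneoverfour\sum_{j,k}a_ja_k\bigl(\fun{\opform{q}_{\txtbsn,0,\sminvtemperature}}{g_j,g_k}+\fun{\opform{q}_{\txtbsn,\txtnonzero,\sminvtemperature}}{\tau_{x_j}g_j,\tau_{x_k}g_k}\bigr)}
\cdot
\sqfun{\prbexp_{\msrprb_{\txtspin,\txtspinboson,\sminvtemperature}}}{\fnexp{-\imunit\textstyle\sum_k a_kY_k}}.
\]
Here we used translation invariance of the zero-mode form: $\faftr{\tau_xg}(0)=\faftr{g}(0)$, so $\opform{q}_{\txtbsn,0,\sminvtemperature}$ does not see the $x_k$. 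Applying \eqref{expedition0012470} then expresses the $n$-point function as a finite sum of products. Each term contains Gaussian pairings $\onehalf\opform{q}_{\txtbsn,\txtbec,\sminvtemperature}$ between some indices, together with a mixed moment $\prbexp[\prod_{k\in A}Y_k]$ of the remaining indices $A$, multiplied by $(-1)^{|A|}$ (up to phase). The strategy is to show that in the limit every term with $A\neq\emptyset$ vanishes, and that every cross nonzero-mode pairing vanishes.

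For the nonzero-mode pairings we use that $\opform{q}_{\txtbsn,\txtnonzero,\sminvtemperature}(\tau_xg_i,\tau_yg_j)$ equals $\int \cmpconj{\faftr{g_i}}\faftr{g_j}\coth(\sminvtemperature\omega/2)\,\napiernum^{\imunit k(y-x)}\opdmsr{k}$. The weight is integrable against $\cmpconj{\faftr{g_i}}\faftr{g_j}$ because $g_i,g_j\in\sphilb{H}_{\txtbsn,\sminvtemperature}$. Since $\coth(\sminvtemperature\omega/2)\sim 2/(\sminvtemperature\omega)$ near $0$, the density lies in $L^1$, and Riemann--Lebesgue gives decay as $|x_i-x_j|\to\infty$. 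The same argument handles the source terms: $c_k(u)$ is the Fourier integral of $\cmpconj{\faftr{g_k}}\widehat{\omega\mathsf{m}}$ times a thermal kernel bounded by $C(\omega^{-1}+1)$, multiplied by $\napiernum^{-\imunit kx_k}$. Integrability of this density follows from $g_k\in\dom\mathsf{m}$ (so that $\cmpconj{\faftr{g_k}}\faftr{\varrho}\,\omega^{-3/2}\in L^1$) together with $\varrho\in\dom\omega^{-1/2}\cap\dom\omega^{-1}$, which covers the $\omega^{-1/2}$ part of the bound. The same Cauchy--Schwarz estimate that gave Proposition \ref{expedition0012482} then yields $\sup_u|c_k(u)|\to 0$ as $|x_k|\to\infty$.

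Since $|\prbprocess_u^{\txtspin}|=1$, we get $|Y_k|\le \sminvtemperature\sup_u|c_k(u)|\to0$ deterministically, so every mixed moment containing a $Y_k$ tends to $0$. Only the terms built from pure $\opform{q}_{\txtbsn,0,\sminvtemperature}$ pairings survive. For even $n$ these produce the sum over pair partitions $\sum_{P\in\mathcal{P}_n}\prod_{(i,j)\in P}\onehalf\fun{\opform{q}_{\txtbsn,0,\sminvtemperature}}{g_i,g_j}$, and for odd $n$ no such pairing exists, so the limit is $0$. Differentiating under the expectation is justified because the $Y_k$ are bounded and the Gaussian factor is entire in $a$.

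For the final equivalence take $n=2$, where the limit is $\onehalf\fun{\opform{q}_{\txtbsn,0,\sminvtemperature}}{g_1,g_2}$. If this vanishes for all $g_1,g_2$, then taking $g_1=g_2$ gives $\fun{\opform{q}_{\txtbsn,0,\sminvtemperature}}{g}=0$ on $\sphilb{D}_{\txtbsn,\txtphys,\sminvtemperature}$. The converse is immediate by polarisation (or Cauchy--Schwarz for the nonnegative form). The main obstacle is the decay of $c_k$. It requires combining the integrability condition defining $\dom\mathsf{m}$ with the $\omega^{-1}$ thermal singularity, and the paper's definition of $\dom\mathsf{m}$ is only implicit. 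I would first try to prove $\cmpconj{\faftr{g}}\faftr{\varrho}\,\omega^{-3/2}\in L^1$ and the analogous $\omega^{-1/2}$ bound, and then invoke Riemann--Lebesgue.
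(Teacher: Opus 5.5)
Your proof is correct. It uses the same ingredients as the paper: specialising \eqref{expedition0012486} to $t_k=0$, $h_k=\tau_{x_k}g_k$; translation invariance of the zero-mode form via $\faftr{\tau_x g}(0)=\faftr{g}(0)$; and Riemann--Lebesgue for the cross nonzero-mode covariances and the source kernels. The difference is that you take the limit and the derivatives in the opposite order.

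The paper first lets $|x_i-x_j|\to\infty$ in the characteristic functional at fixed $a$. This gives a Gaussian with covariance $C_{jk}=\opform{q}_{\txtbsn,0,\sminvtemperature}(g_j,g_k)+\delta_{jk}\,\opform{q}_{\txtbsn,\txtnonzero,\sminvtemperature}(g_k)$, and the paper then reads off the pair-partition sum by differentiating the limit. That step tacitly swaps the spatial limit with $\partial_{a_1}\cdots\partial_{a_n}|_{a=0}$. Justifying the swap needs an extra argument, for instance locally uniform convergence on $\mathbb{C}^n$ of these entire functions of $a$ (true because the $Y_k$ are bounded and tend to $0$ uniformly), followed by Cauchy estimates.

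You expand by the Leibniz rule first, into finitely many terms $(-1)^{|A|}\mathbb{E}[\prod_{k\in A}Y_k]\prod_{(i,j)}\frac12 C_{ij}$ with $i\neq j$. This avoids any interchange, and the absence of diagonal entries and of all odd-$n$ terms falls out of the same bookkeeping.

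You also make explicit the $L^1$ hypotheses that Riemann--Lebesgue needs and the paper leaves implicit:
\begin{itemize}
\item For the cross terms, the clean justification is $\int|\faftr{g_i}||\faftr{g_j}|\coth(\sminvtemperature\omega/2)\le \opform{q}_{\txtbsn,\txtnonzero,\sminvtemperature}(g_i)^{1/2}\opform{q}_{\txtbsn,\txtnonzero,\sminvtemperature}(g_j)^{1/2}$, rather than the small-$\omega$ asymptotics of $\coth$.
\item For $c_k$, you read $\dom\mathsf{m}$ as absolute integrability of $\cmpconj{\faftr{g}}\faftr{\varrho}\,\omega^{-3/2}$. This is exactly what the paper's one-line appeal to Riemann--Lebesgue silently assumes.
\end{itemize}

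One small correction: Cauchy--Schwarz gives only a uniform bound on $c_k(u)$, not $\sup_u|c_k(u)|\to0$. You do not need the supremum, though. Pointwise decay in $u$ together with that bound gives $\int_0^{\sminvtemperature}|c_k(u)|\,du\to0$ by dominated convergence, hence $|Y_k|\to0$ uniformly in the spin path.
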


\begin{proof}
Apply \eqref{expedition0012486} from Proposition \ref{expedition0012485} with times $t_1
=
\cdots
=
t_n
=
0$,
and test functions $h_k
=
\tau_{x_k} g_k$. Together with the decomposition \eqref{expedition0012475}, the characteristic functional of the $n$-time exponential observable after removal of the infrared cutoff is written as
$$\begin{aligned}
&\fun{\oastate[\psi_{\txtspinboson,\sminvtemperature}]}
{\fnexp{\imunit \sum_{k=1}^{n} a_k \opfocksegal(j_0 \tau_{x_k} g_k)}}
\\ %%%%%%%%%%%%%%%%
&=
\fnexp{-\oneoverfour
\sum_{j,k=1}^{n} a_j a_k
\sqbk{\fun{\opform{q}_{\txtbsn,0,\sminvtemperature}}{\tau_{x_j} g_j, \tau_{x_k} g_k}
+\fun{\opform{q}_{\txtbsn,\txtnonzero,\sminvtemperature}^{\txteuclid}}{j_0 \tau_{x_j} g_j, j_0 \tau_{x_k} g_k}}}
\\ %%%%%%%%%%%%%%%%
&\quad\times
\sqfun{\prbexp_{\msrprb_{\txtspin,\txtspinboson,\sminvtemperature}}}
{\fnexp{-\imunit \sum_{k=1}^{n} a_k \xi_k(\prbprocess^{\txtspin}; x_k)}},
\\ %%%%%%%%%%%%%%%%
&\xi_k(\prbprocess^{\txtspin}; x_k)
=
\int_0^{\sminvtemperature} \prbprocess_t^{\txtspin}
\fun{\opform{q}_{\txtbsn,\txtnonzero,\sminvtemperature}^{\txteuclid}}
{j_0 \tau_{x_k} g_k, j_t(\omega\mathsf{m})} \opdmsr{t}.
\end{aligned}$$

We examine the behavior of each factor in the limit $\abs{x_i - x_j}
\to \infty$.

The zero-mode covariance $\fun{\opform{q}_{\txtbsn,0,\sminvtemperature}}
{\tau_{x_j} g_j, \tau_{x_k} g_k}$ reduces, by the momentum-space representation $\faftr{\tau_x g}(0)
=
\faftr{g}(0)$, to
$$\fun{\opform{q}_{\txtbsn,0,\sminvtemperature}}
{\tau_{x_j} g_j, \tau_{x_k} g_k}
=
\fun{\opform{q}_{\txtbsn,0,\sminvtemperature}}
{g_j, g_k}.$$

The cross terms of the nonzero-mode covariance, namely $\fun{\opform{q}_{\txtbsn,\txtnonzero,\sminvtemperature}^{\txteuclid}}
{j_0 \tau_{x_j} g_j, j_0 \tau_{x_k} g_k}$ for $j
\neq k$, have the kernel representation
$$\begin{aligned}
\bkt{\tau_{x_j} g_j}
{\frac{1 + \napiernum^{-\sminvtemperature\omega}}
{1 - \napiernum^{-\sminvtemperature\omega}} \tau_{x_k} g_k}_{\sphilb{H}_{\txtbsn}}
&=
\bkt{g_j}
{\frac{1 + \napiernum^{-\sminvtemperature\omega}}
{1 - \napiernum^{-\sminvtemperature\omega}} \tau_{x_k - x_j} g_k}_{\sphilb{H}_{\txtbsn}}
\end{aligned}$$
from Proposition \ref{expedition0012099}, and therefore converge to $0$ as $\abs{x_j - x_k}
\to \infty$ by the Riemann-Lebesgue lemma.

The diagonal terms
$\fun{\opform{q}_{\txtbsn,\txtnonzero,\sminvtemperature}^{\txteuclid}}
{j_0 \tau_{x_k} g_k, j_0 \tau_{x_k} g_k}
=
\fun{\opform{q}_{\txtbsn,\txtnonzero,\sminvtemperature}^{\txteuclid}}
{j_0 g_k, j_0 g_k}$
do not depend on $x$ and are retained.

For each $t
\in \closedinterval{0}{\sminvtemperature}$, the spin-path integrand kernel $\fun{\opform{q}_{\txtbsn,\txtnonzero,\sminvtemperature}^{\txteuclid}}
{j_0 \tau_{x_k} g_k, j_t(\omega\mathsf{m})}$ converges to $0$ as $\abs{x_k}
\to \infty$ by the Riemann-Lebesgue lemma. Moreover, since $\abs{\prbprocess_t^{\txtspin}}
=
1$ and the imaginary-time interval is bounded, dominated convergence gives $\xi_k(\prbprocess^{\txtspin}; x_k)
\to 0$. By Parseval's identity, $\abs{\xi_k}$ is uniformly bounded in $x_k$. Since the absolute value of the phase factor is bounded by $1$, another application of dominated convergence shows that the spin-path factor as a whole converges to $1$.

It follows that the limit of the characteristic functional is
$$\begin{aligned}
\fnexp{-\oneoverfour \sum_{j,k=1}^{n} a_j a_k C_{jk}},
\quad
C_{jk}
=
\fun{\opform{q}_{\txtbsn,0,\sminvtemperature}}{g_j, g_k}
+ \delta_{jk}
\fun{\opform{q}_{\txtbsn,\txtnonzero,\sminvtemperature}^{\txteuclid}}{j_0 g_k, j_0 g_k}.
\end{aligned}$$
By the derivative evaluation in \eqref{expedition0012470} and the evaluation of the Gaussian characteristic functional, each pair $(i,j)$ in a pair partition $P
\in \mathcal{P}_n$ satisfies $i
\neq j$, hence $\delta_{ij}
=
0$ and $C_{ij}
=
\fun{\opform{q}_{\txtbsn,0,\sminvtemperature}}
{g_i, g_j}$. Therefore \eqref{expedition0012464} follows.

To check vanishing of off-diagonal long-range order, it is enough to examine the two-point function. By the preceding argument, this is equivalent to $\opform{q}_{\txtbsn,0,\sminvtemperature}
= 0$, giving the equivalence with the impossibility of BEC.
\end{proof}

\subsection{Characterization of BEC by Order Parameters}\label{expedition0012573}

In Subsection \ref{expedition0012572}, the zero-mode covariance \(\opform{q}
_{\txtbsn,0,\sminvtemperature}\) was extracted through off-diagonal long-range order. In the free Bose gas studied in \cite{YoshitsuguSekine004}, the same zero mode can also be extracted from order parameters approximating the zero-momentum mode in finite volume. The same computation holds for the spin-boson model if one returns to the finite-volume functional integral representation. The reason is that the spin-path factor appears only through the nonzero-mode form \(\opform{q}
_{\txtbsn,\txtnonzero,\sminvtemperature,\smchemicalpotential,L}^{\txteuclid}\) and does not cross with the finite-volume zero-momentum mode.

\begin{defn}[Order parameter]
Let $I_L^d$ be the cube centered at the origin with side length $L$, and let $V = L^d$. As in the construction of order parameters in \cite{YoshitsuguSekine004}, define the family of functions
$$\mathsf{b}_L^{(0)}
=
\frac{1}{V^{1/2}}\fndef{I_L^d},
\quad
\mathsf{b}_L^{(1)}
=
\frac{1}{V}\fndef{I_L^d}.$$
For the bounded-system functional integral state $\psi
_{\txtspinboson,\sminvtemperature,\smchemicalpotential,L}$, define the order parameter, in accordance with the sign convention $\oaresolvent(1,0)
= -\imunit$, by
$$\mathsf{o}_{\txtspinboson,\sminvtemperature,L}^{(\#)}
=
\imunit
\fun{\psi_{\txtspinboson,\sminvtemperature,\smchemicalpotential,L}}
{\fun{\oaresolvent}{1,\mathsf{b}_L^{(\#)}}}.$$
\end{defn}

As discussed in \cite{AsaoArai28,YoshitsuguSekine004}, choose the chemical potential by the fixed-density condition and set \(y_L
= \napiernum^{-\sminvtemperature\smchemicalpotential_L}\) and \(N_0(y_L)
= \frac{1}{y_L-1}\). Then the fixed-density thermodynamic limit of the free Bose gas gives \[\smnumberdensity_{\txtbsn,0}(\sminvtemperature)
=
\lim_{L\to\infty}
\frac{N_0(y_L)}{V}.\]

\begin{prop}[Characterization of BEC by order parameters]\label{expedition0012575}
Consider the free-Bose-gas-side decomposition by the zero-mode form defined in Subsection \ref{expedition0011278} and the finite-volume nonzero-mode form defined in \eqref{expedition0012441}. Then
$$\begin{aligned}
\mathsf{o}_{\txtspinboson,\sminvtemperature,L}^{(0)}
&=
\int_{0}^{\infty}
\fnexp{-r}
\fnexp{-\frac{r^2}{4}\frac{y_L+1}{y_L-1}}
\opdmsr{r},
\\ %%%%%%%%%%%%%%%%
\mathsf{o}_{\txtspinboson,\sminvtemperature,L}^{(1)}
&=
\int_{0}^{\infty}
\fnexp{-r}
\fnexp{-\frac{r^2}{4}(y_L+1)\frac{N_0(y_L)}{V}}
\opdmsr{r}
\end{aligned}$$
holds. Moreover, if $\smnumberdensity_{\txtbsn,0}(\sminvtemperature)
> 0$, then
$$\lim_{L\to\infty}\mathsf{o}_{\txtspinboson,\sminvtemperature,L}^{(0)}
=
0,\quad
\lim_{L\to\infty}\mathsf{o}_{\txtspinboson,\sminvtemperature,L}^{(1)}
=
\int_0^\infty \fnexp{-r-\frac{1}{2}\smnumberdensity_{\txtbsn,0}(\sminvtemperature)r^2}\opdmsr{r}
<
1$$
holds, and if $\smnumberdensity_{\txtbsn,0}(\sminvtemperature)
= 0$, then
$$\lim_{L\to\infty}
\mathsf{o}_{\txtspinboson,\sminvtemperature,L}^{(1)}
=
1$$
holds.
\end{prop}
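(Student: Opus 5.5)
The plan is to reduce each order parameter to a one-dimensional Laplace-type integral of the finite-volume characteristic functional, and then evaluate that functional with Proposition \ref{expedition0012086}.

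\textbf{Step 1: Laplace representation of the resolvent.} In a regular representation with sign convention $\oaresolvent(1,0)=-\imunit$, the resolvent is $\oaresolvent(1,f)=(\imunit+\opfocksegal(f))^{-1}$. Functional calculus for the self-adjoint field then gives
$$\oaresolvent(1,f)=-\imunit\int_0^\infty \napiernum^{-r}\napiernum^{\imunit r\opfocksegal(f)}\opdmsr{r},$$
since $-\imunit\int_0^\infty\napiernum^{-r(1-\imunit x)}\opdmsr{r}=-\imunit/(1-\imunit x)=(\imunit+x)^{-1}$.

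The bounded-system state $\psi_{\txtspinboson,\sminvtemperature,\smchemicalpotential,L}$ is given by a probability measure on path space. Because the integrand is bounded by $\napiernum^{-r}$, Fubini lets me interchange the state and the $r$-integral. This yields
$$\mathsf{o}^{(\#)}_{\txtspinboson,\sminvtemperature,L}=\int_0^\infty\napiernum^{-r}\,\psi_{\txtspinboson,\sminvtemperature,\smchemicalpotential,L}\bigl(\napiernum^{\imunit r\opfocksegal(j_0\mathsf{b}_L^{(\#)})}\bigr)\opdmsr{r}.$$

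\textbf{Step 2: evaluate the characteristic functional.} I apply Proposition \ref{expedition0012086} with $f=r\mathsf{b}_L^{(\#)}$ and $t=0$. This produces a Gaussian factor and a spin-path factor.

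For the Gaussian factor, at equal times $\opform{q}^{\txteuclid}_{\txtbsn,\txtnonzero,\sminvtemperature,\smchemicalpotential,L}(j_0 f)=\bkt{f}{\coth\frac{\sminvtemperature(\omega-\smchemicalpotential)}{2}f}$. The function $\mathsf{b}_L^{(\#)}$ is a multiple of the finite-volume zero-momentum mode, where $\omega=0$. There the multiplier equals $\frac{1+\napiernum^{\sminvtemperature\smchemicalpotential}}{1-\napiernum^{\sminvtemperature\smchemicalpotential}}=\frac{y_L+1}{y_L-1}$.

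The squared norms are $\norm{\mathsf{b}_L^{(0)}}^2=1$ and $\norm{\mathsf{b}_L^{(1)}}^2=1/V$. Using $\frac{1}{y_L-1}=N_0(y_L)$, this gives exactly the two exponents in the statement.

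For the spin-path factor, its phase contains $\opform{q}^{\txteuclid}_{\txtbsn,\txtnonzero,\sminvtemperature,\smchemicalpotential,L}(j_0\mathsf{b}_L,j_s P_L\omega\mathsf{m}_\kappa)$. The form is diagonal in the finite-volume momentum basis, and $\mathsf{b}_L$ lives only on the zero mode. The source $P_L\omega\mathsf{m}_\kappa$ has vanishing zero-momentum component, since $\omega(0)=0$ and $\kappa>0$ cuts off a neighbourhood of the origin. Hence the cross term vanishes identically, the phase equals $1$, and the spin factor is trivial. This is precisely the statement that the spin path does not cross the zero mode.

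\textbf{Step 3: the limits $L\to\infty$.} These follow by dominated convergence with dominating function $\napiernum^{-r}$.

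If $\smnumberdensity_{\txtbsn,0}(\sminvtemperature)>0$, then $N_0(y_L)/V\to\smnumberdensity_{\txtbsn,0}$ forces $N_0(y_L)\to\infty$, i.e.\ $y_L\downarrow1$. Consequently $\frac{y_L+1}{y_L-1}\to\infty$, so the integrand of $\mathsf{o}^{(0)}$ tends to $0$ for every $r>0$ and $\mathsf{o}^{(0)}\to0$. Meanwhile $(y_L+1)N_0(y_L)/V\to2\smnumberdensity_{\txtbsn,0}$, which gives the stated limit of $\mathsf{o}^{(1)}$. That limit is strictly less than $\int_0^\infty\napiernum^{-r}\opdmsr{r}=1$, because its integrand is strictly smaller than $\napiernum^{-r}$ for $r>0$.

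If $\smnumberdensity_{\txtbsn,0}(\sminvtemperature)=0$, then $N_0(y_L)/V\to0$. Under the fixed-density choice of the chemical potential, $y_L\ge1$ stays bounded, converging to $\napiernum^{-\sminvtemperature\smchemicalpotential_\infty}$ with $\smchemicalpotential_\infty\le0$ finite. Hence $(y_L+1)N_0(y_L)/V\to0$ and $\mathsf{o}^{(1)}\to1$.

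\textbf{Main obstacle.} I expect the delicate step to be the vanishing of the spin cross term in Step 2. It requires that $\mathsf{b}_L$ is exactly an eigenvector of the finite-volume Euclidean covariance with $\omega=0$, orthogonal to the cut-off source. I would verify this first on the finite-volume momentum lattice, using the finite-volume boundary conditions of \cite{YoshitsuguSekine004}, and only then pass to $L\to\infty$. A second point to check is the boundedness of $y_L$ in the case $\smnumberdensity_{\txtbsn,0}=0$, which I would take from the fixed-density analysis in \cite{AsaoArai28}.
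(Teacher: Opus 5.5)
Your proposal is correct and follows essentially the paper's proof: the Laplace representation of $\oaresolvent(1,\cdot)$, Proposition \ref{expedition0012086} with a trivial spin factor, the zero-mode evaluation $\frac{y_L+1}{y_L-1}\norm{\mathsf{b}_L^{(\#)}}^2$ of the Gaussian factor, and dominated convergence as $L\to\infty$. The differences are confined to two sub-steps. First, the paper gets the vanishing cross term by treating \eqref{expedition0012441} as a nonzero-mode form, which annihilates the zero-mode-only $\mathsf{b}_L^{(\#)}$; this does not require the zero-momentum coefficient of $P_L\omega\mathsf{m}_\kappa$ to vanish exactly at finite $L$, a property that would depend on how $P_L$ is realized. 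Second, for $\smnumberdensity_{\txtbsn,0}(\sminvtemperature)=0$ the paper uses the identity $(y_L+1)N_0(y_L)=2N_0(y_L)+1$, so that $(y_L+1)N_0(y_L)/V=2N_0(y_L)/V+1/V\to 0$ without any boundedness of $y_L$.
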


\begin{proof}
In the finite system, $\mathsf{b}_L^{(0)}$ and $\mathsf{b}_L^{(1)}$ have only the zero-momentum mode, while $\opform{q}_{\txtbsn,\txtnonzero,\sminvtemperature,\smchemicalpotential,L}^{\txteuclid}$ is a nonzero-mode form. In particular, for $\#
= 0,1$,
\begin{equation}\label{expedition0012576}
\fun{\opform{q}_{\txtbsn,\txtnonzero,\sminvtemperature,\smchemicalpotential,L}^{\txteuclid}}
{j_0\mathsf{b}_L^{(\#)},j_t(\omega\mathsf{m})}
=
0
\end{equation}
holds.

The Laplace-transform representation and Proposition \ref{expedition0012086} give
$$\mathsf{o}_{\txtspinboson,\sminvtemperature,L}^{(\#)}
=
\int_{0}^{\infty}
\fnexp{-r}
\fun{\psi_{\txtspinboson,\sminvtemperature,\smchemicalpotential,L}}
{\napiernum^{\imunit\opfocksegal(rj_0\mathsf{b}_L^{(\#)})}}
\opdmsr{r}.$$
By \eqref{expedition0012576},
$$\sqfun{\prbexp_{\msrprb_{\txtspin,\txtspinboson,\sminvtemperature}}}
{\fnexp{-\imunit \mathsf{Y}_{\sminvtemperature,r\mathsf{b}_L^{(\#)}}}}
= 1$$
holds. Moreover, by the finite-volume characteristic functional of the free Bose field and \eqref{expedition0012441}, for $\mathsf{b}_L^{(\#)}$, which has only the zero-momentum mode, the above expectation reduces only to the zero-mode evaluation on the free-Bose-gas side.

Since only the zero-momentum mode contributes to $\mathsf{b}_L^{(0)}$ and $\mathsf{b}_L^{(1)}$, the preceding zero-mode evaluations on the free-Bose-gas side are $\frac{y_L+1}{y_L-1}$ and $(y_L+1)
\frac{N_0(y_L)}{V}$. Since $y_L
= 1 + \frac{1}{N_0(y_L)}$,
$$\frac{y_L+1}{y_L-1}
= 2N_0(y_L)+1,
\quad
(y_L+1)\frac{N_0(y_L)}{V}
= 2 \frac{N_0(y_L)}{V}+\frac{1}{V}.$$
Thus, if $\smnumberdensity_{\txtbsn,0}(\sminvtemperature)
> 0$, the first expression diverges to $\infty$, and the second expression converges to $2 \smnumberdensity_{\txtbsn,0}(\sminvtemperature)$. Since the integral kernel is always dominated by $\fnexp{-r}$, the dominated convergence theorem gives $\mathsf{o}_{\txtspinboson,\sminvtemperature,L}^{(0)}
\to 0$ and
$$\mathsf{o}_{\txtspinboson,\sminvtemperature,L}^{(1)}
\to
\int_0^\infty \fnexp{-r-\frac{1}{2}\smnumberdensity_{\txtbsn,0}(\sminvtemperature)r^2}\opdmsr{r}.$$
The integrand on the right-hand side is strictly smaller than $\fnexp{-r}$ for $r>0$, so the limit is smaller than $1$. If $\smnumberdensity_{\txtbsn,0}(\sminvtemperature)
= 0$, then the second expression converges to $0$, and the dominated convergence theorem gives $\mathsf{o}_{\txtspinboson,\sminvtemperature,L}^{(1)}
\to 1$.
\end{proof}

\subsection{Physical Field Operators}\label{physical-field-operators}

\begin{prop}\label{expedition0012118}
For every $f
\in \sphilb{D}_{\txtbsn,\txtphys,\sminvtemperature}$ and $s
\in \fldreal$,
\begin{equation}\label{expedition0012492}
\begin{aligned}
\fun{\psi_{\txtspinboson,\sminvtemperature}}
{\napiernum^{\imunit s \opfocksegal(f)}}
&=
\fnexp{-\frac{s^2}{4}\fun{\opform{q}_{\txtbsn,\txtbec,\sminvtemperature}}{f}}
\sqfun{\prbexp_{\msrprb_{\txtspin,\txtspinboson,\sminvtemperature}}}
{\fnexp{-\imunit s \mathsf{Y}_{\sminvtemperature,f}}},
\\ %%%%%%%%%%%%%%%%
\mathsf{Y}_{\sminvtemperature,f}
&=
\onehalf
\int_{S_{\sminvtemperature}}
\prbprocess_u^{\txtspin}
\fun{\opform{q}_{\txtbsn,\txtnonzero,\sminvtemperature}^{\txteuclid}}
{j_0 f,j_u(\omega \mathsf{m})}
\opdmsr{u}
\end{aligned}
\end{equation}
holds.
\end{prop}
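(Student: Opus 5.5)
This is a direct specialization of Proposition \ref{expedition0012485} to one test function at time $0$. Take $n=1$, $t_1=0$, $h_1=f\in\sphilb{D}_{\txtbsn,\txtphys,\sminvtemperature}$ and $a_1=s$ in \eqref{expedition0012486}, and read $\opfocksegal(f)$ as the sharp-time field $\opfocksegal(j_0 f)$ at time $0$, as in Proposition \ref{expedition0012099}. The Gaussian prefactor becomes $\exp\bigl(-\tfrac{s^2}{4}\opform{q}^{\txteuclid}_{\txtbsn,\txtbec,\sminvtemperature}(j_0f,j_0f)\bigr)$. By the decomposition \eqref{expedition0012475}, the zero-mode part equals $\opform{q}_{\txtbsn,0,\sminvtemperature}(f)$ independently of the times. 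At equal times the nonzero part reduces to $\opform{q}_{\txtbsn,\txtnonzero,\sminvtemperature}(f)$, exactly as in the proof of Proposition \ref{expedition0012099}. Their sum is $\opform{q}_{\txtbsn,\txtbec,\sminvtemperature}(f)$, which gives the first factor of \eqref{expedition0012492}.

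For the spin-path factor, the phase in \eqref{expedition0012486} with $n=1$ is
\[
-\imunit s\int_0^{\sminvtemperature}\prbprocess^{\txtspin}_u\,\opform{q}^{\txteuclid}_{\txtbsn,\txtnonzero,\sminvtemperature}\bigl(j_0f,j_u(\omega\mathsf{m})\bigr)\,\mathrm{d}u .
\]
It already involves only the nonzero-mode form, so no zero-mode cross term has to be discarded. By the circle invariance of the spin loop measure (translation invariance on $S_{\sminvtemperature}$), the integral over $[0,\sminvtemperature]$ can be rewritten as an integral over $S_{\sminvtemperature}$.

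The normalization must then be matched to the definition of $\mathsf{Y}_{\sminvtemperature,f}$, which carries the factor $\tfrac12$. The one-variable form \eqref{expedition0012487}, and the Gaussian formula \eqref{expedition0012472} with $\lambda=-1$ that produces the cross term $-\tfrac{\imunit}{2}\opform{q}^{\txteuclid}(F,G)$, both show the factor $\tfrac12$. I would therefore recompute the cross term directly from \eqref{expedition0012472} with $F=s\,j_0f$ and $G=\mathsf{J}^{\txtspin}_{S_{\sminvtemperature}}$, following the proof of Proposition \ref{expedition0012086}, rather than trusting the $n$-point display. This yields exactly $-\imunit s\,\mathsf{Y}_{\sminvtemperature,f}$.

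Finally, I would pass from $\kappa>0$ to $\kappa=0$ as in Proposition \ref{expedition0012485}. The cross kernel converges uniformly in $u$ because $f\in\dom\mathsf{m}$, the phase has modulus one, and the densities converge in $\lp^1$ by \eqref{expedition0012490}.

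The main obstacle is this bookkeeping of the factor $\tfrac12$ and the identification of $\opfocksegal(f)$ with $\opfocksegal(j_0f)$. The analytic content is entirely inherited from Propositions \ref{expedition0012482} and \ref{expedition0012485}.
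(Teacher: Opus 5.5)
Your proposal is correct and follows the paper's own argument: the paper also just sets $n=1$, $t_1=0$, $h_1=f$, $a_1=s$ in \eqref{expedition0012486} and takes the spin factor in the normalized form \eqref{expedition0012487}. Your recomputation of the cross term from \eqref{expedition0012472}, which gives the coefficient $-\frac{\imunit}{2}$ and hence exactly $-\imunit s\,\mathsf{Y}_{\sminvtemperature,f}$, is the right way to resolve the missing $\frac12$ in the $n$-point display.

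One small correction: replacing $\int_0^{\sminvtemperature}$ by $\int_{S_{\sminvtemperature}}$ is a pathwise identity. It comes from the $\sminvtemperature$-periodicity of both the loop $\prbprocess^{\txtspin}$ and the kernel $u\mapsto \opform{q}^{\txteuclid}_{\txtbsn,\txtnonzero,\sminvtemperature}(j_0f,j_u(\omega\mathsf{m}))$, not from translation invariance of the measure.
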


\begin{proof}
It is enough to set $n=1$,
$t_1=0$,
$h_1=f$,
$a_1=s$ in \eqref{expedition0012486} of Proposition \ref{expedition0012485}
and substitute \eqref{expedition0012487}.
\end{proof}

\begin{thm}[Centered physical field]\label{expedition0012136}
For every $f \in \sphilb{D}_{\txtbsn,\txtphys,\sminvtemperature}$, define
\begin{equation}\label{expedition0012493}
\ell_{\txtspin,\sminvtemperature}(f)
=
\sqfun{\prbexp_{\msrprb_{\txtspin,\txtspinboson,\sminvtemperature}}}
{\mathsf{Y}_{\sminvtemperature,f}}
\end{equation}
and define the centered spin random variable by $\widetilde{\mathsf{Y}}_{\sminvtemperature,f}
= \mathsf{Y}_{\sminvtemperature,f}
-\sqfun{\prbexp_{\msrprb_{\txtspin,\txtspinboson,\sminvtemperature}}}
{\mathsf{Y}_{\sminvtemperature,f}}$.
Then $\opfocksegal_{\txtphys,\sminvtemperature}(f)
=
\opfocksegal(f)
+\ell_{\txtspin,\sminvtemperature}(f)$ satisfies $\fun{\psi_{\txtspinboson,\sminvtemperature}}
{\opfocksegal_{\txtphys,\sminvtemperature}(f)}
=
0$.
Moreover, for every $s
\in \fldreal$,
\begin{equation}\label{expedition0012451}
\begin{aligned}
&\fun{\psi_{\txtspinboson,\sminvtemperature}}
{\napiernum^{\imunit s\opfocksegal_{\txtphys,\sminvtemperature}(f)}}
\\ %%%%%%%%%%%%%%%%
&=
\fnexp{-\frac{s^2}{4}\fun{\opform{q}_{\txtbsn,0,\sminvtemperature}}{f}}
\fnexp{-\frac{s^2}{4}\fun{\opform{q}_{\txtbsn,\txtnonzero,\sminvtemperature}}{f}}
\sqfun{\prbexp_{\msrprb_{\txtspin,\txtspinboson,\sminvtemperature}}}
{\fnexp{-\imunit s\widetilde{\mathsf{Y}}_{\sminvtemperature,f}}}
\end{aligned}
\end{equation}
holds.
In particular, the spin characteristic function does not disappear in general,
and it degenerates to $1$ only when the random variable
$\mathsf{Y}_{\sminvtemperature,f}$ is almost surely constant.
\end{thm}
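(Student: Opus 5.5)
The plan is to derive everything from Proposition \ref{expedition0012118}, which gives the characteristic function of $\opfocksegal(f)$ as a Gaussian factor times a spin characteristic function. First we check that $\ell_{\txtspin,\sminvtemperature}(f)$ is a finite real number. By Proposition \ref{expedition0012485}, the kernel $u\mapsto\fun{\opform{q}_{\txtbsn,\txtnonzero,\sminvtemperature}^{\txteuclid}}{j_0 f,j_u(\omega\mathsf{m})}$ is bounded uniformly in $u$ for $f\in\dom\mathsf{m}$, and $\abs{\prbprocess_u^{\txtspin}}=1$. Hence $\mathsf{Y}_{\sminvtemperature,f}$ is a bounded real random variable, and the expectation in \eqref{expedition0012493} is well-defined.

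Next we add the constant $\ell_{\txtspin,\sminvtemperature}(f)$, which is a multiple of the identity. It therefore commutes with $\opfocksegal(f)$, and we get
\[
\napiernum^{\imunit s\opfocksegal_{\txtphys,\sminvtemperature}(f)}=\napiernum^{\imunit s\ell_{\txtspin,\sminvtemperature}(f)}\napiernum^{\imunit s\opfocksegal(f)} .
\]
Substituting \eqref{expedition0012492} and pulling the deterministic phase inside the spin expectation turns $\napiernum^{-\imunit s\mathsf{Y}}$ into $\napiernum^{-\imunit s(\mathsf{Y}-\ell)}=\napiernum^{-\imunit s\widetilde{\mathsf{Y}}}$. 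Splitting $\opform{q}_{\txtbsn,\txtbec,\sminvtemperature}=\opform{q}_{\txtbsn,0,\sminvtemperature}+\opform{q}_{\txtbsn,\txtnonzero,\sminvtemperature}$ then factors the Gaussian exponential, which gives \eqref{expedition0012451}.

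For the centering, we differentiate \eqref{expedition0012451} at $s=0$, using
\[
\psi(\opfocksegal_{\txtphys})=-\imunit\,\partial_s\psi(\napiernum^{\imunit s\opfocksegal_{\txtphys}})|_{s=0},
\]
which is legitimate by \eqref{expedition0012470}. The Gaussian factors are even in $s$ and contribute nothing at first order. The spin factor has derivative $-\imunit\,\prbexp[\widetilde{\mathsf{Y}}]=0$ at $s=0$; differentiating under the expectation is justified by dominated convergence, since $\widetilde{\mathsf{Y}}$ is bounded. Hence $\psi(\opfocksegal_{\txtphys,\sminvtemperature}(f))=0$.

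The main obstacle is the sign convention. From \eqref{expedition0012492}, the mean of $\opfocksegal(f)$ comes out as $-\prbexp[\mathsf{Y}]$, so adding $+\ell$ does cancel it; this has to be checked against the sign of the cross term in \eqref{expedition0012472} at $\lambda=-1$. A related concern is the justification for differentiating the state on unbounded field operators. I would handle this through the existence of moments in the Gaussian-times-bounded representation.

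For the final remark, $\prbexp[\napiernum^{-\imunit s\widetilde{\mathsf{Y}}}]=1$ for all $s$ means $\widetilde{\mathsf{Y}}$ has a degenerate law. By uniqueness of characteristic functions, this forces $\widetilde{\mathsf{Y}}=0$ almost surely, i.e.\ $\mathsf{Y}_{\sminvtemperature,f}$ is almost surely constant. Conversely, if $\mathsf{Y}_{\sminvtemperature,f}$ is almost surely constant, then $\widetilde{\mathsf{Y}}=0$ and the spin factor is trivially $1$.
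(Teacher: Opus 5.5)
Your proposal is correct and follows essentially the same argument as the paper. You substitute \eqref{expedition0012492}, absorb the deterministic phase $\napiernum^{\imunit s\ell_{\txtspin,\sminvtemperature}(f)}$ into the spin expectation, split $\opform{q}_{\txtbsn,\txtbec,\sminvtemperature}=\opform{q}_{\txtbsn,0,\sminvtemperature}+\opform{q}_{\txtbsn,\txtnonzero,\sminvtemperature}$, and use uniqueness of characteristic functions for the degeneracy claim. The only difference is ordering: the paper first differentiates the uncentered formula at $s=0$ to get $\psi_{\txtspinboson,\sminvtemperature}(\opfocksegal(f))=-\ell_{\txtspin,\sminvtemperature}(f)$, whereas you derive \eqref{expedition0012451} first and read off the vanishing mean from $\prbexp[\widetilde{\mathsf{Y}}_{\sminvtemperature,f}]=0$.
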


\begin{proof}
Differentiate \eqref{expedition0012492} with respect to $s$ and set $s=0$.
The first derivative of the Gaussian factor vanishes.
The left-hand side gives
$\imunit\fun{\psi_{\txtspinboson,\sminvtemperature}}{\opfocksegal(f)}$,
and the right-hand side gives
$-\imunit\sqfun{\prbexp_{\msrprb_{\txtspin,\txtspinboson,\sminvtemperature}}}
{\mathsf{Y}_{\sminvtemperature,f}}$.
Hence $\fun{\psi_{\txtspinboson,\sminvtemperature}}{\opfocksegal(f)}
=-\ell_{\txtspin,\sminvtemperature}(f)$, and the definition gives
$\fun{\psi_{\txtspinboson,\sminvtemperature}}
{\opfocksegal_{\txtphys,\sminvtemperature}(f)}
=0$.
The centered spin random variable $\widetilde{\mathsf{Y}}_{\sminvtemperature,f}$ has mean $0$.
By \eqref{expedition0012493}, the phase of the centering constant cancels the mean component
in the spin characteristic function in \eqref{expedition0012492}.
Substituting the decomposition
$\opform{q}_{\txtbsn,\txtbec,\sminvtemperature}
= \opform{q}_{\txtbsn,0,\sminvtemperature}
+\opform{q}_{\txtbsn,\txtnonzero,\sminvtemperature}$ from Subsection \ref{expedition0011278}
gives \eqref{expedition0012451}.
The final assertion follows from the uniqueness of characteristic functions.
\end{proof}

\begin{rem}[Degeneracy condition for the spin characteristic function]
The final degeneracy condition in Theorem \ref{expedition0012136} is a condition for the fixed direction $f$.
By the second line of \eqref{expedition0012492},
$\mathsf{Y}_{\sminvtemperature,f}$ is a real random variable obtained by multiplying the spin path
$\prbprocess^{\txtspin}$ by the time-dependent kernel in that formula and integrating.
The condition that the observation direction $f$ does not detect the fluctuations of the interacting
spin-path measure is precisely constancy with respect to
$\msrprb_{\txtspin,\txtspinboson,\sminvtemperature}$.
If this condition holds in every direction of a sufficiently large test-function space that separates
the zero mode, and if the interacting spin-path measure has nontrivial path fluctuations, then one is
restricted to the degenerate situation in which the above kernel vanishes on that space.
In that case the corresponding part of the Bose field appears as a deterministic shift of free-Bose-gas
type or van-Hove-model type, and the fluctuations produced by a general spin-boson interaction do not
remain in the spin characteristic function in \eqref{expedition0012451}.
In this sense the degeneracy condition is a diagnostic condition measuring degeneration to a comparison
model, and it is not a condition expected in a general interacting model.
In particular, the spin-boson model should be interpreted as not being a model constructed according to
the design criterion of \cite{VIYukalov001}.
\end{rem}

\begin{prop}[Infrared correction kernel and the zero-temperature limit]
\label{expedition0012131}
For every $f \in \sphilb{D}_{\txtbsn,\txtphys,\sminvtemperature}$ and
$0 \leq \abs{s} \leq \sminvtemperature$,
\begin{equation}\label{expedition0012494}
\fun{\opform{q}_{\txtbsn,\txtnonzero,\sminvtemperature}^{\txteuclid}}
{j_0 f,j_s(\omega \mathsf{m})}
=
\bkt{f}
{\frac{\napiernum^{-\abs{s}\omega}
+ \napiernum^{-(\sminvtemperature-\abs{s})\omega}}
{1-\napiernum^{-\sminvtemperature\omega}}
\omega \mathsf{m}}_{\sphilb{H}_{\txtbsn}}.
\end{equation}
For fixed $s \in \fldreal$,
\begin{equation}
\lim_{\sminvtemperature \to \infty}
\fun{\opform{q}_{\txtbsn,\txtnonzero,\sminvtemperature}^{\txteuclid}}
{j_0 f,j_s(\omega \mathsf{m})}
=
\bkt{f}{\napiernum^{-\abs{s}\omega}\omega \mathsf{m}}_{\sphilb{H}_{\txtbsn}}.
\end{equation}
\end{prop}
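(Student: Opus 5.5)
The plan has two parts: derive the kernel formula \eqref{expedition0012494} directly from the definitions, then pass to the limit $\sminvtemperature\to\infty$ using a dominating function independent of $\sminvtemperature$.

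\textbf{The identity.} Substitute $t=0$, $g=\omega\mathsf{m}$ into the definition of $\opform{q}_{\txtbsn,\txtnonzero,\sminvtemperature}^{\txteuclid}$ in \eqref{expedition0012475}. This is legitimate because $\omega\mathsf{m}=\omega^{-\onehalf}\varrho\in\sphilb{H}_{\txtbsn}$ by \eqref{expedition0012480}. It remains to check that the pairing is finite for $f\in\sphilb{D}_{\txtbsn,\txtphys,\sminvtemperature}$. In momentum space the integrand is
$$\cmpconj{\faftr f(k)}\,\frac{\napiernum^{-\abs{s}\omega}+\napiernum^{-(\sminvtemperature-\abs{s})\omega}}{1-\napiernum^{-\sminvtemperature\omega}}\,\omega(k)\faftr{\mathsf{m}}(k).$$
Using the bound $(1+\napiernum^{-x})/(1-\napiernum^{-x})\le C(x^{-1}+1)$ from the proof of Proposition \ref{expedition0012482} with $x=\sminvtemperature\omega$, the modulus is at most
$$C\abs{\faftr f}\bigl(\sminvtemperature^{-1}\abs{\faftr{\mathsf{m}}}+\omega\abs{\faftr{\mathsf{m}}}\bigr).$$
The first term is integrable because $f\in\dom\mathsf{m}$, which we read as absolute integrability of $\faftr f\,\cmpconj{\faftr{\mathsf{m}}}$. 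The second term is integrable by Cauchy--Schwarz, since $\omega\mathsf{m}\in\lp^2$. This is exactly the defining covariance, restricted to the range $0\le\abs{s}\le\sminvtemperature$ where \eqref{expedition0012440} holds.

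\textbf{The limit.} Fix $s$ and take $\sminvtemperature>2\abs{s}$. Pointwise in $k\neq0$, the denominator tends to $1$ and $\napiernum^{-(\sminvtemperature-\abs{s})\omega}\to0$, so the kernel converges to $\napiernum^{-\abs{s}\omega}$. For domination, $\sminvtemperature\ge1$ gives
$$\frac{1}{1-\napiernum^{-\sminvtemperature\omega}}\le\frac{1}{1-\napiernum^{-\omega}}\le C(\omega^{-1}+1).$$
The numerator is at most $2$. Hence the integrand is bounded by $2C\abs{\faftr f}\bigl(\abs{\faftr{\mathsf{m}}}+\omega\abs{\faftr{\mathsf{m}}}\bigr)$, which does not depend on $\sminvtemperature$ and is integrable by the previous step. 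Dominated convergence then gives the limit $\bkt{f}{\napiernum^{-\abs{s}\omega}\omega\mathsf{m}}$.

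\textbf{Main obstacle.} The delicate step is the infrared region near $k=0$, where the thermal factor behaves like $(\sminvtemperature\omega)^{-1}$ and would turn $\omega\mathsf{m}$ back into $\mathsf{m}$. The integrability of $\faftr f\,\cmpconj{\faftr{\mathsf{m}}}$ is what controls this, so the interpretation of $\dom\mathsf{m}$ as absolute convergence is the key point. If only conditional convergence were available, I would instead split $\faftr{\mathsf{m}}$ into the parts with $\abs{k}<1$ and $\abs{k}\ge1$, and handle the low-momentum part by monotone convergence, using that the factor $(1-\napiernum^{-\sminvtemperature\omega})^{-1}$ is monotone decreasing in $\sminvtemperature$ and that $\varrho\ge0$.
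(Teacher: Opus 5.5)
Your proof is correct and follows the same route as the paper: substitute $t=0$ and $g=\omega\mathsf{m}$ into the Euclidean covariance \eqref{expedition0012475}, then let $\sminvtemperature\to\infty$. Your limit step is more careful than the paper's, which only appeals to the strong convergence $\napiernum^{-(\sminvtemperature-\abs{s})\omega}\to 0$. That appeal ignores the $\sminvtemperature$-dependent denominator $1-\napiernum^{-\sminvtemperature\omega}$, whose inverse is unbounded at $k=0$. As a result, the difference of kernels applied to $\omega\mathsf{m}$ behaves like $2\sminvtemperature^{-1}\mathsf{m}$ near $k=0$ and is not even in $\sphilb{H}_{\txtbsn}$. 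Your $\sminvtemperature$-independent dominating function, built from the absolute integrability of $\faftr{f}\,\cmpconj{\faftr{\mathsf{m}}}$ for $f\in\dom\mathsf{m}$, is what actually justifies the limit. By contrast, your monotone-convergence fallback would need a fixed sign of $\cmpconj{\faftr{f}}\faftr{\mathsf{m}}$, which general $f$ does not have, so you are right not to rely on it.
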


\begin{proof}
Substituting $t=0$ and $g=\omega\mathsf{m}$ into the Euclidean covariance formula for the nonzero modes
gives \eqref{expedition0012494}. The zero-temperature limit follows from the strong convergence
$\napiernum^{-(\sminvtemperature-\abs{s})\omega}\to 0$.
\end{proof}

The source in \eqref{expedition0012494} is \(\omega \mathsf{m}=\omega^{-\onehalf}\varrho\). It is the finite slope appearing in the spin-path measure, not the infrared displacement \(\mathsf{m}=\omega^{-3/2}\varrho\) of the van Hove model itself. After time integration, however, the zero-frequency component of the heat kernel contributes one power of \(\inv{\omega}\), and a linear term of the same order as \(\mathsf{m}(f)\) appears in the formal limit where the spin path degenerates to a constant. In the general spin-boson model this linear term is not constant; it remains as the probability distribution of \(\mathsf{Y}_{\sminvtemperature,f}\).

\subsection{Criterion for Self-Consistent Quasiparticle Equilibrium States and the Zero Mode}\label{criterion-for-self-consistent-quasiparticle-equilibrium-states-and-the-zero-mode}

For nonconserved quasiparticle fields, centering the first moment alone is not sufficient. One must also decide whether a macroscopic zero-energy component remains in the centered field. In the spin-boson model this criterion is expressed by the zero-mode quasi-bilinear form \(\opform{q}_{\txtbsn,0,\sminvtemperature}\) and the spin characteristic function in \eqref{expedition0012492}.

\begin{defn}[Self-consistent quasiparticle equilibrium state]
\label{expedition0012137}
Under an equilibrium state of the centered nonconserved quasiparticle field, let
$\smnumberdensity_{\mathrm{qp},0}$ be the macroscopic occupation density of the zero mode. A state
satisfying $\smnumberdensity_{\mathrm{qp},0}=0$ is called a self-consistent quasiparticle equilibrium
state.
\end{defn}

\begin{rem}[Transfer to the spin-boson model]
\label{expedition0012421}
In this paper the centered field in Theorem \ref{expedition0012136} is taken as the candidate
quasiparticle field, and the zero-mode macroscopic occupation density is measured by
$\smnumberdensity_{\txtbsn,0}(\sminvtemperature)$ from Subsection \ref{expedition0011278}. Thus the
verification of a self-consistent quasiparticle equilibrium state reduces to triviality of the zero-mode
Gaussian factor remaining in the centered characteristic functional. Centering kills only the one-point
function and does not derive the BEC disappearance condition
$\smnumberdensity_{\txtbsn,0}(\sminvtemperature)=0$. The order parameter
$\mathsf{o}_{\txtspinboson,\sminvtemperature,L}^{(1)}$ in Proposition \ref{expedition0012575} detects the
same zero-mode density and can therefore be used for this criterion as well.
\end{rem}

\begin{defn}[Test-function space separating the zero mode]
\label{expedition0012134}
A linear subspace
$\sphilb{E}\subset \sphilb{D}_{\txtbsn,\txtphys,\sminvtemperature}$ is said to separate the zero mode if
there exists $f_*\in\sphilb{E}$ satisfying $\faftr{f_*}(0)\neq 0$.
\end{defn}

\begin{prop}[Zero-mode verification of self-consistent quasiparticle equilibrium states]
\label{expedition0012135}
Assume that
$\sphilb{E}\subset \sphilb{D}_{\txtbsn,\txtphys,\sminvtemperature}$ separates the zero mode. Then the
following conditions are equivalent.
\begin{enumerate}
\item
For the KMS state $\psi_{\txtspinboson,\sminvtemperature}$, the centered field in Theorem
\ref{expedition0012136} defines a self-consistent quasiparticle equilibrium state in the sense of
Definition \ref{expedition0012137}.
\item
For every $f\in\sphilb{E}$, $\fun{\opform{q}_{\txtbsn,0,\sminvtemperature}}{f}=0$.
\item
$\smnumberdensity_{\txtbsn,0}(\sminvtemperature)=0$.
\item
$\lim_{L\to\infty}\mathsf{o}_{\txtspinboson,\sminvtemperature,L}^{(1)}=1$.
\end{enumerate}
\end{prop}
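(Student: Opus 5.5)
I will prove the equivalences by a cycle that routes everything through (3), $\smnumberdensity_{\txtbsn,0}(\sminvtemperature)=0$. This condition is the only scalar on which the zero-mode form depends.

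For (2)$\Leftrightarrow$(3), use the explicit definition $\opform{q}_{\txtbsn,0,\sminvtemperature}(f)=2(2\pi)^d\smnumberdensity_{\txtbsn,0}(\sminvtemperature)\abs{\faftr{f}(0)}^2$ from Subsection \ref{expedition0011278}. If $\smnumberdensity_{\txtbsn,0}(\sminvtemperature)=0$, this vanishes identically on $\sphilb{E}\subset\sphilb{D}_{\txtbsn,\txtphys,\sminvtemperature}$. Conversely, by Definition \ref{expedition0012134} there is $f_*\in\sphilb{E}$ with $\faftr{f_*}(0)\neq0$. Then $\opform{q}_{\txtbsn,0,\sminvtemperature}(f_*)=0$ forces $\smnumberdensity_{\txtbsn,0}(\sminvtemperature)=0$. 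This is the only place the separation hypothesis is used.

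For (3)$\Leftrightarrow$(4), I invoke Proposition \ref{expedition0012575} directly. If $\smnumberdensity_{\txtbsn,0}(\sminvtemperature)=0$, the limit of $\mathsf{o}^{(1)}_{\txtspinboson,\sminvtemperature,L}$ is $1$. If $\smnumberdensity_{\txtbsn,0}(\sminvtemperature)>0$, the limit is $\int_0^\infty e^{-r-\frac12\smnumberdensity_{\txtbsn,0}r^2}\,dr<1$. Since $\smnumberdensity_{\txtbsn,0}\ge0$, these two cases are exhaustive, which gives the contrapositive.

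For (1)$\Leftrightarrow$(3), the step is interpretive rather than computational, and I expect it to be the main obstacle. Definition \ref{expedition0012137} is phrased through an abstract zero-mode occupation density $\smnumberdensity_{\mathrm{qp},0}$ of the centered field. Following Remark \ref{expedition0012421}, I identify $\smnumberdensity_{\mathrm{qp},0}$ with $\smnumberdensity_{\txtbsn,0}(\sminvtemperature)$. To justify this identification rather than merely assert it, I will use Theorem \ref{expedition0012136}. There the centered characteristic functional factorizes into the zero-mode Gaussian $\exp(-\frac{s^2}{4}\opform{q}_{\txtbsn,0,\sminvtemperature}(f))$, the nonzero-mode Gaussian, and the spin factor in $\widetilde{\mathsf{Y}}_{\sminvtemperature,f}$.

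The nonzero-mode form carries no macroscopic zero-energy occupation, since it is given by an $L^2$ kernel integrated against $\abs{\faftr f}^2$. The spin factor also gives none: for $f$ concentrated at zero momentum (e.g.\ $\mathsf{b}_L^{(\#)}$), it is trivial by \eqref{expedition0012576}, which makes $\mathsf{Y}_{\sminvtemperature,f}=0$. Hence the only zero-mode macroscopic contribution to the centered field is $\opform{q}_{\txtbsn,0,\sminvtemperature}$. This shows $\smnumberdensity_{\mathrm{qp},0}=\smnumberdensity_{\txtbsn,0}(\sminvtemperature)$, so (1)$\Leftrightarrow$(3).

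If a more quantitative route is wanted, I would measure $\smnumberdensity_{\mathrm{qp},0}$ via the order-parameter limit of the centered field. Centering only shifts by the real constant $\ell_{\txtspin,\sminvtemperature}$, which vanishes on zero-momentum test functions by \eqref{expedition0012576}. The centered order parameter therefore coincides with $\mathsf{o}^{(1)}$, and this reduces (1) to (4).
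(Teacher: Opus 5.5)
Your proof matches the paper's: (2)$\Leftrightarrow$(3) comes from the explicit formula $\opform{q}_{\txtbsn,0,\sminvtemperature}(f)=2(2\pi)^d\smnumberdensity_{\txtbsn,0}(\sminvtemperature)\abs{\faftr{f}(0)}^2$ together with the separating vector $f_*$, (3)$\Leftrightarrow$(4) comes from Proposition~\ref{expedition0012575} plus nonnegativity of the condensate density, and (1)$\Leftrightarrow$(3) comes from the identification in Remark~\ref{expedition0012421}, which the paper simply cites. Your extra heuristic for that identification is not needed, and it is loose as written: the kernel $\coth(\beta\omega/2)$ is not $L^2$ (the relevant fact is that it has no point mass at $k=0$), and \eqref{expedition0012576} is a finite-volume identity, so in infinite volume $\mathsf{Y}_{\beta,\mathsf{b}_L^{(1)}}$ need not vanish identically. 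It is therefore safest to rest (1)$\Leftrightarrow$(3) on the Remark, as the paper does.
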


\begin{proof}
Remark \ref{expedition0012421} gives the equivalence of (1) and (3). By the definition in Subsection
\ref{expedition0011278},
$$\fun{\opform{q}_{\txtbsn,0,\sminvtemperature}}{f}
=2(2\pi)^d\smnumberdensity_{\txtbsn,0}(\sminvtemperature)\abs{\faftr{f}(0)}^2.$$
Since $\sphilb{E}$ separates the zero mode, (2) is equivalent to (3). The nonnegativity of the condensate
density and Proposition \ref{expedition0012575} give the equivalence of (3) and (4).
\end{proof}

\subsection{Absence of Off-Diagonal Long-Range Order, Order Parameters, and the Impossibility of BEC}\label{absence-of-off-diagonal-long-range-order-order-parameters-and-the-impossibility-of-bec}

\begin{prop}[Off-diagonal long-range order after removal of the infrared cutoff]
\label{expedition0012500}
For the KMS state $\oastate[\psi_{\txtspinboson,\sminvtemperature}]$ after removal of the infrared cutoff
constructed in Proposition \ref{expedition0012485}, and for
$f,g\in\sphilb{D}_{\txtbsn,\txtphys,\sminvtemperature}$,
\begin{equation}\label{expedition0012501}
\lim_{\abs{x} \to \infty}
\fun{\oastate[\psi_{\txtspinboson,\sminvtemperature}]}
{\opfocksegal(j_0 f)\opfocksegal(j_0 \tau_x g)}
=
\onehalf \fun{\opform{q}_{\txtbsn,0,\sminvtemperature}}{f,g}.
\end{equation}
In particular, absence of off-diagonal long-range order for all $f,g$ is equivalent to
$\opform{q}_{\txtbsn,0,\sminvtemperature}=0$.
\end{prop}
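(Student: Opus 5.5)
This is the case $n=2$ of Proposition \ref{expedition0012463}, with $x_1=0$, $x_2=x$, $g_1=f$, $g_2=g$. The plan is to redo that computation directly for two points, starting from the characteristic functional \eqref{expedition0012486} of Proposition \ref{expedition0012485}. We take $t_1=t_2=0$, $h_1=f$, $h_2=\tau_x g$. The two-point function is then obtained from \eqref{expedition0012470} as $-\partial_{a_1}\partial_{a_2}$ at $a=0$.

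First we expand the Gaussian factor and the spin factor to second order in $(a_1,a_2)$. Write $\xi_1=\xi_1(\prbprocess^{\txtspin};0)$ and $\xi_2=\xi_2(\prbprocess^{\txtspin};x)$ for the spin-path variables of Proposition \ref{expedition0012463}. The result is
$$\fun{\oastate[\psi_{\txtspinboson,\sminvtemperature}]}{\opfocksegal(j_0 f)\opfocksegal(j_0\tau_x g)}
=\onehalf\fun{\opform{q}_{\txtbsn,\txtbec,\sminvtemperature}^{\txteuclid}}{j_0 f,j_0\tau_x g}
-\sqfun{\prbexp_{\msrprb_{\txtspin,\txtspinboson,\sminvtemperature}}}{\xi_1\xi_2}.$$
The real part is taken for the symmetrized form; the imaginary commutator part $\onehalf\imunit\sigma$ also tends to $0$ by the same Riemann--Lebesgue argument below. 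Differentiating under the spin expectation is justified because $\abs{\xi_k}$ is bounded uniformly on paths, since $\abs{\prbprocess^{\txtspin}_t}=1$ and the kernel is bounded in $t$. This bound in turn uses $f,g\in\dom\mathsf{m}$ and the $\omega^{-1}$ estimate from the proof of Proposition \ref{expedition0012482}.

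Next we treat the three contributions separately.
\begin{enumerate}
\item The zero-mode term equals $\onehalf\fun{\opform{q}_{\txtbsn,0,\sminvtemperature}}{f,g}$ exactly, since $\faftr{\tau_x g}(0)=\faftr{g}(0)$.
\item The nonzero-mode term is $\bkt{f}{\coth(\sminvtemperature\omega/2)\,\tau_x g}$. In momentum space its integrand is $\napiernum^{\imunit k\cdot x}\,\overline{\faftr f}\,\faftr g\,\coth(\sminvtemperature\omega/2)$, so it tends to $0$ by Riemann--Lebesgue.
\item For the spin term, $\xi_2\to 0$ pointwise in the path, again by Riemann--Lebesgue applied to the kernel $\fun{\opform{q}^{\txteuclid}_{\txtbsn,\txtnonzero,\sminvtemperature}}{j_0\tau_x g,j_t(\omega\mathsf{m})}$, followed by dominated convergence in $t$. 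Since $\xi_1$ is bounded, a second dominated convergence gives $\prbexp[\xi_1\xi_2]\to0$.
\end{enumerate}
Summing the three limits gives \eqref{expedition0012501}.

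The main obstacle is integrability of the momentum-space integrands, which Riemann--Lebesgue requires. Near $k=0$ we must check
$$\overline{\faftr f}\,\faftr g\,\coth(\sminvtemperature\omega/2)\in\lp^1 \quad\text{and}\quad \overline{\faftr g}\,\coth(\sminvtemperature\omega/2)\,\omega\faftr{\mathsf{m}}\in\lp^1,$$
where $\coth(\sminvtemperature\omega/2)\lesssim\omega^{-1}+1$. I would handle this with the bound $(1+\napiernum^{-x})/(1-\napiernum^{-x})\le C(x^{-1}+1)$ from Proposition \ref{expedition0012482}. Cauchy--Schwarz then splits each weight as $\omega^{-1/2}\cdot\omega^{-1/2}$, or as $\omega^{-1}\cdot\omega^{-1/2}\varrho$. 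The resulting factors are finite because $f,g\in\sphilb{D}_{\txtbsn,0,\sminvtemperature}\subset\sphilb{H}_{\txtbsn,\sminvtemperature}$, i.e.\ they lie in the form domain of $\coth(\sminvtemperature\omega/2)$, and because $f,g\in\dom\mathsf{m}$ and $\varrho\in\dom\omega^{-1}$.

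Finally, the equivalence follows directly. If $\opform{q}_{\txtbsn,0,\sminvtemperature}=0$, every limit in \eqref{expedition0012501} vanishes. Conversely, if all limits vanish, taking $g=f$ gives $\fun{\opform{q}_{\txtbsn,0,\sminvtemperature}}{f}=0$ for every $f$, and polarization then yields $\opform{q}_{\txtbsn,0,\sminvtemperature}=0$.
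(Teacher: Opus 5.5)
Your proposal is essentially the paper's proof: apply \eqref{expedition0012486} with $n=2$, $t_1=t_2=0$, $h_1=f$, $h_2=\tau_x g$, note that the zero-mode covariance does not depend on $x$, send the nonzero-mode covariance and the spin-path kernel to $0$ by Riemann--Lebesgue, and differentiate twice. Your explicit $\lp^1$ check via $\coth(\sminvtemperature\omega/2)\lesssim\omega^{-1}+1$ is a useful addition that the paper leaves implicit; one harmless slip is that, with the factor $(-\imunit)^2$ from \eqref{expedition0012470} and the phase $\napiernum^{-\imunit\sum_k a_k\xi_k}$, the spin contribution is $+\prbexp[\xi_1\xi_2]$ rather than $-\prbexp[\xi_1\xi_2]$, which does not affect the limit because that term tends to $0$.
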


\begin{proof}
Apply \eqref{expedition0012486} of Proposition \ref{expedition0012485} with
$n=2$, $t_1=t_2=0$, $h_1=f$, and $h_2=\tau_xg$. The zero-mode covariance is independent of $x$ because
$\faftr{\tau_x g}(0)=\faftr{g}(0)$. The nonzero-mode covariance and the spin-path kernel with
$\tau_xg$ converge to $0$ as $\abs{x}\to\infty$ by the Riemann--Lebesgue lemma and the estimates of
Proposition \ref{expedition0012482}. Differentiating the characteristic functional twice at the origin
then gives \eqref{expedition0012501}. The final equivalence follows directly from the definition of
$\opform{q}_{\txtbsn,0,\sminvtemperature}$.
\end{proof}

\begin{cor}[Absence of off-diagonal long-range order, the order-parameter criterion, and impossibility of BEC]
\label{expedition0012509}
Assume that $\sphilb{D}_{\txtbsn,\txtphys,\sminvtemperature}$ separates the zero mode. For the KMS state
of Proposition \ref{expedition0012485}, the following conditions are equivalent: absence of
off-diagonal long-range order on the physical test-function space,
$\opform{q}_{\txtbsn,0,\sminvtemperature}=0$,
$\smnumberdensity_{\txtbsn,0}(\sminvtemperature)=0$, self-consistency of the centered quasiparticle
equilibrium state, and
$\lim_{L\to\infty}\mathsf{o}_{\txtspinboson,\sminvtemperature,L}^{(1)}=1$.
\end{cor}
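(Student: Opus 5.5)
The corollary is a bookkeeping assembly of two results already proved, so I will prove it by chaining equivalences through the single condition $\fun{\opform{q}_{\txtbsn,0,\sminvtemperature}}{f}=0$ for all $f\in\sphilb{D}_{\txtbsn,\txtphys,\sminvtemperature}$. I apply Proposition \ref{expedition0012135} with $\sphilb{E}=\sphilb{D}_{\txtbsn,\txtphys,\sminvtemperature}$. This is admissible because the hypothesis of the corollary is exactly that this space separates the zero mode in the sense of Definition \ref{expedition0012134}. The proposition then gives the mutual equivalence of four conditions: self-consistency of the centered quasiparticle equilibrium state; vanishing of $\opform{q}_{\txtbsn,0,\sminvtemperature}$ on the physical space; $\smnumberdensity_{\txtbsn,0}(\sminvtemperature)=0$; and $\lim_{L\to\infty}\mathsf{o}_{\txtspinboson,\sminvtemperature,L}^{(1)}=1$.

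It remains to attach absence of off-diagonal long-range order to this chain, which I do via Proposition \ref{expedition0012500}. By \eqref{expedition0012501}, the two-point long-range limit equals $\onehalf\fun{\opform{q}_{\txtbsn,0,\sminvtemperature}}{f,g}$. If the form vanishes identically, the limit is zero for all $f,g$, so there is no long-range order. Conversely, suppose the limit vanishes for all $f,g$. Taking $g=f$ gives $\fun{\opform{q}_{\txtbsn,0,\sminvtemperature}}{f}=0$ for all $f$, and hence the vanishing of the sesquilinear form by polarization. In fact polarization is not even needed, because $\fun{\opform{q}_{\txtbsn,0,\sminvtemperature}}{f,g}=2(2\pi)^d\smnumberdensity_{\txtbsn,0}(\sminvtemperature)\cmpconj{\faftr f(0)}\faftr g(0)$.

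Alternatively, one can pass directly from absence of long-range order to $\smnumberdensity_{\txtbsn,0}(\sminvtemperature)=0$. Choose $f=g=f_*$ with $\faftr{f_*}(0)\neq0$, which exists by the separation assumption. Then the diagonal value of the limit is $(2\pi)^d\smnumberdensity_{\txtbsn,0}(\sminvtemperature)\abs{\faftr{f_*}(0)}^2$, and this vanishes if and only if the condensate density does.

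There is no real obstacle. The only point needing care is that Proposition \ref{expedition0012500} evaluates the state $\psi_{\txtspinboson,\sminvtemperature}$ of Proposition \ref{expedition0012485}, while the order parameters in Proposition \ref{expedition0012575} are computed in the finite-volume states. The link between the two is the common quantity $\smnumberdensity_{\txtbsn,0}(\sminvtemperature)$, which enters both, so the chain closes through condition (3) of Proposition \ref{expedition0012135} rather than through the states themselves.
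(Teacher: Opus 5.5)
Your proposal is correct and follows the paper's proof, which just combines Proposition \ref{expedition0012500}, Proposition \ref{expedition0012135} (applied with $\sphilb{E}=\sphilb{D}_{\txtbsn,\txtphys,\sminvtemperature}$), and Proposition \ref{expedition0012575}. As in your version, Proposition \ref{expedition0012575} enters through the link $\smnumberdensity_{\txtbsn,0}(\sminvtemperature)=0$, and your choice $f=g=f_*$ spells out the off-diagonal step that the paper leaves implicit.
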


\begin{proof}
This is the combination of Proposition \ref{expedition0012500}, Proposition \ref{expedition0012135}, and
Proposition \ref{expedition0012575}.
\end{proof}

\subsection{Absence of Off-Diagonal Long-Range Order, the Order-Parameter Criterion, and Zero-Mode Vanishing}\label{expedition0012420}

\begin{defn}[Zero-mode vanishing condition]
\label{expedition0012100}
The KMS state is said to satisfy the zero-mode vanishing condition on
$\sphilb{E}\subset\sphilb{D}_{\txtbsn,\txtphys,\sminvtemperature}$ if
\begin{equation}\label{expedition0012495}
\fun{\opform{q}_{\txtbsn,0,\sminvtemperature}}{f}=0
\qquad
(f\in\sphilb{E})
\end{equation}
holds.
\end{defn}

\begin{thm}[Equivalent conditions for zero-mode vanishing]
\label{expedition0012132}
Assume that $\sphilb{D}_{\txtbsn,\txtphys,\sminvtemperature}$ separates the zero mode. For the KMS state
constructed in Proposition \ref{expedition0012485}, the following conditions are equivalent:
\begin{enumerate}
\item the zero-mode vanishing condition holds on
$\sphilb{D}_{\txtbsn,\txtphys,\sminvtemperature}$;
\item for every $f,g\in\sphilb{D}_{\txtbsn,\txtphys,\sminvtemperature}$,
\begin{equation}\label{expedition0012496}
\lim_{\abs{x}\to\infty}
\fun{\oastate[\psi_{\txtspinboson,\sminvtemperature}]}
{\opfocksegal(j_0f)\opfocksegal(j_0\tau_xg)}
=0;
\end{equation}
\item $\smnumberdensity_{\txtbsn,0}(\sminvtemperature)=0$;
\item the centered field of Theorem \ref{expedition0012136} defines a self-consistent quasiparticle
equilibrium state;
\item $\lim_{L\to\infty}\mathsf{o}_{\txtspinboson,\sminvtemperature,L}^{(1)}=1$.
\end{enumerate}
\end{thm}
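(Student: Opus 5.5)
The plan is to reduce everything to condition (3), $\smnumberdensity_{\txtbsn,0}(\sminvtemperature)=0$. Each of the other four conditions is linked to (3) by a result already in place, so the proof is mostly bookkeeping. The common tool is the explicit zero-mode form
$$\fun{\opform{q}_{\txtbsn,0,\sminvtemperature}}{f}=2(2\pi)^d\smnumberdensity_{\txtbsn,0}(\sminvtemperature)\abs{\faftr{f}(0)}^2$$
from Subsection \ref{expedition0011278}, together with the separation hypothesis, which supplies some $f_*\in\sphilb{D}_{\txtbsn,\txtphys,\sminvtemperature}$ with $\faftr{f_*}(0)\neq 0$.

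First, (1)$\Leftrightarrow$(3). If $\smnumberdensity_{\txtbsn,0}(\sminvtemperature)=0$, the form above vanishes identically, which is (1). Conversely, evaluating (1) at $f_*$ gives $\smnumberdensity_{\txtbsn,0}(\sminvtemperature)\abs{\faftr{f_*}(0)}^2=0$, and since $\faftr{f_*}(0)\neq 0$ this forces (3). This is exactly the (2)$\Leftrightarrow$(3) step of Proposition \ref{expedition0012135} with $\sphilb{E}=\sphilb{D}_{\txtbsn,\txtphys,\sminvtemperature}$.

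Second, (1)$\Leftrightarrow$(2). Proposition \ref{expedition0012500} gives the long-distance limit $\onehalf\fun{\opform{q}_{\txtbsn,0,\sminvtemperature}}{f,g}$ for all $f,g$ in the physical space. So (1) implies (2) by Cauchy--Schwarz for the nonnegative form. Conversely, (2) with $g=f$ gives $\fun{\opform{q}_{\txtbsn,0,\sminvtemperature}}{f}=0$ for every $f$. In the sesquilinear case one needs $f$ to be real, or a polarization argument, to stay inside the domain. The physical space is a real subspace, since the source is real, so I expect this to cause no difficulty.

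Third, (3)$\Leftrightarrow$(4) and (3)$\Leftrightarrow$(5) follow directly from Proposition \ref{expedition0012135} applied with $\sphilb{E}=\sphilb{D}_{\txtbsn,\txtphys,\sminvtemperature}$. That result in turn rests on Remark \ref{expedition0012421} for (4), and on the dichotomy of Proposition \ref{expedition0012575} for (5): the limit is strictly less than $1$ when $\smnumberdensity_{\txtbsn,0}>0$ and equals $1$ when it is zero. Alternatively, the whole theorem can be cited as Corollary \ref{expedition0012509}, rewritten with the zero-mode vanishing condition of Definition \ref{expedition0012100}.

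The only step requiring genuine care is the direction (5)$\Rightarrow$(3). Here the strict inequality $\int_0^\infty e^{-r-\frac12\smnumberdensity_{\txtbsn,0}r^2}\,dr<1$ for $\smnumberdensity_{\txtbsn,0}>0$ is what excludes a positive condensate, and it holds because the integrand is strictly below $e^{-r}$ on $(0,\infty)$. Everything else is direct substitution into earlier results.
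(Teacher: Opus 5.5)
Your proposal is correct and follows the paper's proof essentially verbatim. As in the paper, (1)$\Leftrightarrow$(2) comes from Proposition~\ref{expedition0012500}, (1)$\Leftrightarrow$(3)$\Leftrightarrow$(4) from Proposition~\ref{expedition0012135} with the separating vector $f_*$, and (3)$\Leftrightarrow$(5) from the nonnegativity of $\smnumberdensity_{\txtbsn,0}(\sminvtemperature)$ together with the dichotomy in Proposition~\ref{expedition0012575}. Your caveat that (2)$\Rightarrow$(1) needs real $f$ is unnecessary, because (2) already allows $g=f$, which gives $\onehalf\fun{\opform{q}_{\txtbsn,0,\sminvtemperature}}{f}=0$ directly.
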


\begin{proof}
The equivalence of (1) and (2) follows from Proposition \ref{expedition0012500}. The equivalence of
(1), (3), and (4) follows from Proposition \ref{expedition0012135}. The equivalence of (3) and (5)
follows from nonnegativity of the condensate density and Proposition \ref{expedition0012575}.
\end{proof}

\begin{rem}[Relation to Yukalov-type criteria]
The self-consistent quasiparticle equilibrium-state condition for the centered field corresponds to the
Hamiltonian design criterion in \cite{VIYukalov001}; however, we find that the spin-boson model is not a
model that follows this criterion.
\end{rem}

\begin{cor}[Condensate component under the zero-mode vanishing condition]
\label{expedition0012133}
If the equivalent conditions of Theorem \ref{expedition0012132} hold, then for every
$f\in\sphilb{D}_{\txtbsn,\txtphys,\sminvtemperature}$,
\begin{equation}
\fnexp{-\frac{s^2}{4}\fun{\opform{q}_{\txtbsn,0,\sminvtemperature}}{f}}=1
\qquad (s\in\fldreal).
\end{equation}
Thus the BEC zero-mode Gaussian factor in \eqref{expedition0012451} becomes trivial.
\end{cor}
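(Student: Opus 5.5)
The plan is to read the corollary as a direct consequence of Theorem \ref{expedition0012132} combined with the explicit formula for the zero-mode form from Subsection \ref{expedition0011278}. Under the hypotheses of Theorem \ref{expedition0012132}, condition (1) is the zero-mode vanishing condition on the whole of $\sphilb{D}_{\txtbsn,\txtphys,\sminvtemperature}$ in the sense of Definition \ref{expedition0012100}. So whichever of the five equivalent conditions is assumed, the first step is to pass to (1), or equivalently to (3): $\smnumberdensity_{\txtbsn,0}(\sminvtemperature)=0$.

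From (3), the definition
$$\fun{\opform{q}_{\txtbsn,0,\sminvtemperature}}{f}=2(2\pi)^d\smnumberdensity_{\txtbsn,0}(\sminvtemperature)\abs{\faftr{f}(0)}^2$$
gives $\fun{\opform{q}_{\txtbsn,0,\sminvtemperature}}{f}=0$ for every $f\in\opformdomain(\opform{q}_{\txtbsn,0,\sminvtemperature})$. This is well defined for every $f\in\sphilb{D}_{\txtbsn,\txtphys,\sminvtemperature}$, since $\sphilb{D}_{\txtbsn,\txtphys,\sminvtemperature}\subset\sphilb{D}_{\txtbsn,0,\sminvtemperature}\subset\fun{\lp^{1}}{\fldreal^{d}}\cap\fun{\lp^{2}}{\fldreal^{d}}$, so that $\faftr{f}(0)$ is finite. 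Consequently
$$\fnexp{-\frac{s^2}{4}\fun{\opform{q}_{\txtbsn,0,\sminvtemperature}}{f}}=\fnexp{0}=1$$
for every $s\in\fldreal$.

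Finally, I would substitute this into \eqref{expedition0012451} of Theorem \ref{expedition0012136}. The characteristic functional of the centered physical field then reduces to the nonzero-mode Gaussian factor times the centered spin characteristic function. This shows that the BEC zero-mode Gaussian factor is trivial, as claimed.

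There is no real obstacle here. The only point needing care is that $\faftr{f}(0)$ is finite on the physical test-function space, so that the product $\smnumberdensity_{\txtbsn,0}(\sminvtemperature)\abs{\faftr{f}(0)}^2$ is genuinely $0$ and not an indeterminate form; the $\lp^1$ inclusion above settles this.
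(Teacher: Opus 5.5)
Your proposal is correct and is essentially the paper's argument. The paper applies Definition \ref{expedition0012100} with $\sphilb{E}=\sphilb{D}_{\txtbsn,\txtphys,\sminvtemperature}$, i.e.\ condition (1) of Theorem \ref{expedition0012132}, which gives $\fun{\opform{q}_{\txtbsn,0,\sminvtemperature}}{f}=0$ directly; your detour through condition (3), the explicit formula, and the finiteness of $\faftr{f}(0)$ is harmless but not needed.
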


\begin{proof}
Apply Definition \ref{expedition0012100} with
$\sphilb{E}=\sphilb{D}_{\txtbsn,\txtphys,\sminvtemperature}$.
\end{proof}

\subsection{Comparison with the Free Bose Gas and the van Hove Model}\label{comparison-with-the-free-bose-gas-and-the-van-hove-model}

\begin{prop}[Criterion for free-Bose-gas type $c$-number substitution]
\label{expedition0012119}
For a linear subspace
$\sphilb{E}\subset\sphilb{D}_{\txtbsn,\txtphys,\sminvtemperature}$, the following are equivalent:
\begin{enumerate}
\item the characteristic function has a deterministic $c$-number phase on $\sphilb{E}$;
\item for every $f\in\sphilb{E}$, $\mathsf{Y}_{\sminvtemperature,f}$ is almost surely constant.
\end{enumerate}
In this case
$\ell_{\sminvtemperature}(f)=-\sqfun{\prbexp_{\msrprb_{\txtspin,\txtspinboson,\sminvtemperature}}}
{\mathsf{Y}_{\sminvtemperature,f}}$.
\end{prop}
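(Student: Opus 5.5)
The plan is to read everything off Proposition \ref{expedition0012118}, which gives for $f\in\sphilb{E}$ and $s\in\fldreal$
$$\fun{\psi_{\txtspinboson,\sminvtemperature}}{\napiernum^{\imunit s\opfocksegal(f)}}
=\fnexp{-\frac{s^2}{4}\fun{\opform{q}_{\txtbsn,\txtbec,\sminvtemperature}}{f}}
\,\phi_f(s),\qquad
\phi_f(s)=\sqfun{\prbexp_{\msrprb_{\txtspin,\txtspinboson,\sminvtemperature}}}{\fnexp{-\imunit s\mathsf{Y}_{\sminvtemperature,f}}}.$$
I will interpret condition (1), ``deterministic $c$-number phase on $\sphilb{E}$'', as meaning that for each $f\in\sphilb{E}$ the non-Gaussian factor $\phi_f$ has the form $\phi_f(s)=\napiernum^{\imunit s c_f}$ for some real constant $c_f$. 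Since the Gaussian factor is strictly positive and independent of the spin path, $\phi_f(s)$ is exactly the ratio of the characteristic function to the Gaussian factor. Condition (1) is therefore a statement about the law of the real random variable $\mathsf{Y}_{\sminvtemperature,f}$ alone.

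For (2)$\Rightarrow$(1), suppose $\mathsf{Y}_{\sminvtemperature,f}=y_f$ holds $\msrprb_{\txtspin,\txtspinboson,\sminvtemperature}$-almost surely. Then $\phi_f(s)=\napiernum^{-\imunit s y_f}$, which is a pure phase, and we take $c_f=-y_f$. For (1)$\Rightarrow$(2), suppose $\phi_f(s)=\napiernum^{\imunit sc_f}$. This is the characteristic function of the law of $-\mathsf{Y}_{\sminvtemperature,f}$, and it coincides with the characteristic function of the point mass at $c_f$. By uniqueness of characteristic functions (Lévy), $\mathsf{Y}_{\sminvtemperature,f}=-c_f$ almost surely. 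One can also see this more elementarily from $|\phi_f(s)|=1$: for each $s$, the variable $\napiernum^{-\imunit s\mathsf{Y}}$ must be almost surely constant, and taking small $s$ forces $\mathsf{Y}$ to be constant.

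For the formula for $\ell$, I will use the centering of Theorem \ref{expedition0012136}: $\fun{\psi_{\txtspinboson,\sminvtemperature}}{\opfocksegal(f)}=-\sqfun{\prbexp}{\mathsf{Y}_{\sminvtemperature,f}}$. In the constant case this equals $-y_f$, so the centering shift, namely the $c$-number substituted for the field, is $\ell_{\sminvtemperature}(f)=-\sqfun{\prbexp}{\mathsf{Y}_{\sminvtemperature,f}}$. Integrability of $\mathsf{Y}$ is not an issue: $\mathsf{Y}$ is bounded, because the spin path has modulus one and the kernel is bounded uniformly in $u$ by Proposition \ref{expedition0012482}.

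The main obstacle is the sign convention for $\ell$. In Theorem \ref{expedition0012136}, $\ell_{\txtspin,\sminvtemperature}$ is defined as $+\prbexp[\mathsf{Y}]$. The present statement instead identifies the $c$-number phase, which is the one-point function $\psi(\opfocksegal(f))=-\prbexp[\mathsf{Y}]$. I will therefore state explicitly that $\ell_{\sminvtemperature}(f)$ here denotes the deterministic expectation of the field, and note that it equals $-\ell_{\txtspin,\sminvtemperature}(f)$.
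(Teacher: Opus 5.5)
Your proposal is correct and follows the paper's own argument. Both proofs divide the characteristic function of Proposition \ref{expedition0012118} by the strictly positive Gaussian factor, and then use uniqueness of characteristic functions to identify a pure phase $\napiernum^{\imunit s c_f}$ with a point mass of $\mathsf{Y}_{\sminvtemperature,f}$ at $-c_f$. Your remark on notation is also correct and fills something the paper leaves implicit: the $\ell_{\sminvtemperature}$ in this statement is the one-point function $\fun{\psi_{\txtspinboson,\sminvtemperature}}{\opfocksegal(f)}=-\ell_{\txtspin,\sminvtemperature}(f)$, which has the opposite sign to the convention of Theorem \ref{expedition0012136}.
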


\begin{proof}
This follows by comparing Proposition \ref{expedition0012118} with a deterministic-phase characteristic
function and using uniqueness of characteristic functions.
\end{proof}

\begin{prop}[Vanishing condition for fluctuations of the additional factor]
\label{expedition0012122}
For every $f\in\sphilb{D}_{\txtbsn,\txtphys,\sminvtemperature}$, constancy of
$\mathsf{Y}_{\sminvtemperature,f}$ is equivalent to
$\mathrm{Var}_{\msrprb_{\txtspin,\txtspinboson,\sminvtemperature}}
(\mathsf{Y}_{\sminvtemperature,f})=0$.
\end{prop}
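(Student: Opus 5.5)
The statement is Proposition \ref{expedition0012122}: for $f\in\sphilb{D}_{\txtbsn,\txtphys,\sminvtemperature}$, almost sure constancy of $\mathsf{Y}_{\sminvtemperature,f}$ under $\msrprb_{\txtspin,\txtspinboson,\sminvtemperature}$ is equivalent to its variance being $0$. This is the standard probabilistic fact, applied once we know the variance exists. So the plan has two parts: first establish that $\mathsf{Y}_{\sminvtemperature,f}$ is a bounded real random variable, then invoke the elementary equivalence.

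\emph{Boundedness.} By \eqref{expedition0012492},
$$\mathsf{Y}_{\sminvtemperature,f}=\onehalf\int_{S_{\sminvtemperature}}\prbprocess_u^{\txtspin}\, K_f(u)\,\opdmsr{u},\qquad K_f(u)=\fun{\opform{q}_{\txtbsn,\txtnonzero,\sminvtemperature}^{\txteuclid}}{j_0 f,j_u(\omega\mathsf{m})}.$$
Proposition \ref{expedition0012131} gives $K_f(u)$ explicitly. Since $\abs{\prbprocess_u^{\txtspin}}=1$, we have $\abs{\mathsf{Y}_{\sminvtemperature,f}}\le \onehalf\int_{S_{\sminvtemperature}}\abs{K_f(u)}\opdmsr{u}$. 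It remains to show this integral is finite. This is the only step that is not purely formal, and I would take it in this order.
\begin{enumerate}
\item Bound the thermal multiplier pointwise by $\frac{1+\napiernum^{-\sminvtemperature\omega}}{1-\napiernum^{-\sminvtemperature\omega}}\le C_{\sminvtemperature}(\omega^{-1}+1)$, using the estimate in the proof of Proposition \ref{expedition0012482}.
\item Apply Cauchy--Schwarz to get
$$\abs{K_f(u)}\le C_{\sminvtemperature}\Bigl(\norm{\omega^{-\onehalf}f}\,\norm{\omega^{-1}\varrho}+\norm{f}\,\norm{\omega^{-\onehalf}\varrho}\Bigr).$$
\item Check finiteness of each factor. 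The source norms $\norm{\omega^{-1}\varrho}$ and $\norm{\omega^{-\onehalf}\varrho}$ are finite by the standing assumption $\varrho\in\dom\omega^{-\onehalf}\cap\dom\omega^{-1}$. For $\norm{\omega^{-\onehalf}f}$, use that $f$ lies in $\sphilb{H}_{\txtbsn,\sminvtemperature,0}$, the form domain of $\coth(\sminvtemperature\omega/2)\sim 2/(\sminvtemperature\omega)$ near $k=0$; this is exactly the statement $\omega^{-\onehalf}f\in\sphilb{H}_{\txtbsn}$.
\end{enumerate}
The resulting bound is uniform in $u$. Integrating over the bounded interval $S_{\sminvtemperature}$ then gives $\mathsf{Y}_{\sminvtemperature,f}\in\lp^\infty\subset\lp^2(\msrprb_{\txtspin,\txtspinboson,\sminvtemperature})$. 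Measurability in the spin path follows from the Riemann-sum argument of Proposition \ref{expedition0012084}(1). The one point needing care is that $\msrprb_{\txtspin,\txtspinboson,\sminvtemperature}$ is a genuine probability measure, and this is provided by Proposition \ref{expedition0012482}.

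\emph{The equivalence.} Once the variance is defined, it is elementary. If $\mathsf{Y}_{\sminvtemperature,f}=c$ almost surely, then $c$ equals its mean and the variance is $0$. Conversely, suppose $\mathrm{Var}(\mathsf{Y}_{\sminvtemperature,f})=\prbexp\bigl[\widetilde{\mathsf{Y}}_{\sminvtemperature,f}^{\,2}\bigr]=0$, with $\widetilde{\mathsf{Y}}_{\sminvtemperature,f}$ the centered variable of Theorem \ref{expedition0012136}. Then $\widetilde{\mathsf{Y}}_{\sminvtemperature,f}^{\,2}$ is a nonnegative function with zero integral, so it vanishes almost surely. Hence $\mathsf{Y}_{\sminvtemperature,f}=\ell_{\txtspin,\sminvtemperature}(f)$ almost surely, which is the required constancy.

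Optionally, I would close by linking this to Theorem \ref{expedition0012136}. Under either condition the spin characteristic function $\prbexp\bigl[\napiernum^{-\imunit s\widetilde{\mathsf{Y}}_{\sminvtemperature,f}}\bigr]$ equals $1$. This also matches Proposition \ref{expedition0012119}, whose condition (2) is thereby recast as vanishing of the variance for all $f\in\sphilb{E}$.
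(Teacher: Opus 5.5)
Your proposal is correct and follows the paper's route. The paper's entire proof is the observation that $\mathsf{Y}_{\beta,f}$ is a bounded real random variable, so almost sure constancy and vanishing variance are equivalent; your Cauchy--Schwarz estimate on the kernel, using $\varrho\in\dom\omega^{-1/2}\cap\dom\omega^{-1}$ and $\omega^{-1/2}f\in L^2$ from the form domain of $\coth(\beta\omega/2)$, supplies the uniform boundedness that the paper takes for granted.
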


\begin{proof}
The random variable is bounded and real, so constancy and vanishing variance are equivalent.
\end{proof}

\begin{prop}[Deviation estimate by two-point correlations]
\label{expedition0012123}
For every $f\in\sphilb{D}_{\txtbsn,\txtphys,\sminvtemperature}$ and $s\in\fldreal$,
\begin{equation}
\abs{\sqfun{\prbexp_{\msrprb_{\txtspin,\txtspinboson,\sminvtemperature}}}
{\fnexp{-\imunit s\mathsf{Y}_{\sminvtemperature,f}}}
-\fnexp{-\imunit s
\sqfun{\prbexp_{\msrprb_{\txtspin,\txtspinboson,\sminvtemperature}}}
{\mathsf{Y}_{\sminvtemperature,f}}}}
\leq
\frac{s^{2}}{2}
\mathrm{Var}_{\msrprb_{\txtspin,\txtspinboson,\sminvtemperature}}
(\mathsf{Y}_{\sminvtemperature,f}).
\end{equation}
\end{prop}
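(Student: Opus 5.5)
We need to prove: $\abs{E[e^{-isY}] - e^{-isE[Y]}} \le \frac{s^2}{2}\mathrm{Var}(Y)$. The plan is to factor out the mean phase and then apply the second-order Taylor bound for $e^{-ix}$.

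Write $\mu=\sqfun{\prbexp_{\msrprb_{\txtspin,\txtspinboson,\sminvtemperature}}}{\mathsf{Y}_{\sminvtemperature,f}}$ and use the centered variable $\widetilde{\mathsf{Y}}_{\sminvtemperature,f}=\mathsf{Y}_{\sminvtemperature,f}-\mu$ of Theorem \ref{expedition0012136}. The first step is to check that everything is well defined. By Proposition \ref{expedition0012482} and the bound $\abs{\prbprocess_u^{\txtspin}}=1$, the kernel $u\mapsto\fun{\opform{q}_{\txtbsn,\txtnonzero,\sminvtemperature}^{\txteuclid}}{j_0 f,j_u(\omega\mathsf{m})}$ is bounded on $S_{\sminvtemperature}$ for $f\in\dom\mathsf{m}$. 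Hence $\mathsf{Y}_{\sminvtemperature,f}$ is a bounded real random variable, and its mean and variance are finite; the same boundedness was already used in Proposition \ref{expedition0012122}.

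Next, since $\abs{\napiernum^{-\imunit s\mu}}=1$, the left-hand side equals
\[
\abs{\sqfun{\prbexp_{\msrprb_{\txtspin,\txtspinboson,\sminvtemperature}}}{\napiernum^{-\imunit s\widetilde{\mathsf{Y}}_{\sminvtemperature,f}}-1}}.
\]
Because $\widetilde{\mathsf{Y}}_{\sminvtemperature,f}$ has mean zero, we may subtract the vanishing linear term $-\imunit s\,\sqfun{\prbexp}{\widetilde{\mathsf{Y}}_{\sminvtemperature,f}}$ for free. This rewrites the quantity as
\[
\abs{\prbexp\bigl[\napiernum^{-\imunit x}-1+\imunit x\bigr]}, \qquad x=s\widetilde{\mathsf{Y}}_{\sminvtemperature,f}.
\]

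The key elementary estimate is $\abs{\napiernum^{-\imunit x}-1+\imunit x}\le x^2/2$ for real $x$. To prove it, write the remainder via Taylor's formula with integral remainder:
\[
\napiernum^{-\imunit x}-1+\imunit x=-\int_0^x (x-y)\,\napiernum^{-\imunit y}\,\opdmsr{y}.
\]
The modulus of the integrand is $\abs{x-y}$, and integrating gives the bound $x^2/2$. Applying this pointwise, taking expectations, and using the triangle inequality $\abs{\prbexp[\cdot]}\le\prbexp\abs{\cdot}$ yields
\[
\frac{s^2}{2}\,\prbexp\bigl[\widetilde{\mathsf{Y}}_{\sminvtemperature,f}^2\bigr]=\frac{s^2}{2}\,\mathrm{Var}_{\msrprb_{\txtspin,\txtspinboson,\sminvtemperature}}(\mathsf{Y}_{\sminvtemperature,f}).
\]

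There is no real obstacle here. The only point needing care is the integrability of $\mathsf{Y}_{\sminvtemperature,f}$ under the cutoff-free measure $\msrprb_{\txtspin,\txtspinboson,\sminvtemperature}$. This reduces to boundedness of the kernel in \eqref{expedition0012494} for $f\in\sphilb{D}_{\txtbsn,\txtphys,\sminvtemperature}$, which I would take from the estimates in the proof of Proposition \ref{expedition0012482} together with the fact that $\msrprb_{\txtspin,\txtspinboson,\sminvtemperature}$ is a probability measure.
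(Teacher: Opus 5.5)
Your proposal is correct and follows essentially the same route as the paper: factor out the mean phase, pass to the centered variable $\widetilde{\mathsf{Y}}_{\sminvtemperature,f}$ of Theorem \ref{expedition0012136}, insert the vanishing linear term, and integrate the pointwise bound $\abs{\napiernum^{-\imunit x}-1+\imunit x}\leq x^2/2$. You also prove that elementary inequality via Taylor's integral remainder and check that $\mathsf{Y}_{\sminvtemperature,f}$ is bounded; the paper leaves both implicit.
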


\begin{proof}
Use the centered variable from Theorem \ref{expedition0012136} and integrate
$\abs{\napiernum^{-\imunit x}-1+\imunit x}\leq x^2/2$.
\end{proof}

\begin{prop}[Comparison with the van Hove model]
\label{expedition0012121}
For every $f\in\sphilb{D}_{\txtbsn,\txtphys,\sminvtemperature}$, equality of the one-variable
characteristic functions of the spin-boson and van Hove models for all $s\in\fldreal$ is equivalent to
the almost sure identity
$\mathsf{Y}_{\sminvtemperature,f}=\opreal\mathsf{m}(f)$.
\end{prop}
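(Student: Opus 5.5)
The plan is to write both characteristic functions in the same form, cancel their common Gaussian factor, and reduce the statement to uniqueness of characteristic functions for a single bounded real random variable.

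\emph{Spin-boson side.} By Proposition~\ref{expedition0012118}, for $f\in\sphilb{D}_{\txtbsn,\txtphys,\sminvtemperature}$ and $s\in\fldreal$,
$$\fun{\psi_{\txtspinboson,\sminvtemperature}}{\napiernum^{\imunit s\opfocksegal(f)}}=\fnexp{-\frac{s^2}{4}\fun{\opform{q}_{\txtbsn,\txtbec,\sminvtemperature}}{f}}\,\sqfun{\prbexp_{\msrprb_{\txtspin,\txtspinboson,\sminvtemperature}}}{\fnexp{-\imunit s\mathsf{Y}_{\sminvtemperature,f}}}.$$

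\emph{Van Hove side.} The finite-temperature van Hove model of \cite{YoshitsuguSekine006} is a linear perturbation of the same free Bose gas. In the same thermodynamic and chemical-potential limit, and on the same test-function space $\dom\mathsf{m}\cap\sphilb{D}_{\txtbsn,0,\sminvtemperature}$, its one-variable characteristic function is
$$\fnexp{-\frac{s^2}{4}\fun{\opform{q}_{\txtbsn,\txtbec,\sminvtemperature}}{f}}\,\fnexp{-\imunit s\,\opreal\mathsf{m}(f)}.$$
The Gaussian factor is the same because the Gaussian integration in Propositions~\ref{expedition0012086} and~\ref{expedition0012468} does not depend on whether the source is multiplied by a spin path or by a constant. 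The phase is the deterministic displacement $\opreal\bkt{f}{\mathsf{m}}$ produced by the van Hove source.

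\emph{Main obstacle.} The step I expect to require the most care is checking that the van Hove phase carries exactly the normalization and sign of $\mathsf{Y}_{\sminvtemperature,f}$ (including the factor $\onehalf$ and the sign convention of the coupling), so that the comparison is with $\opreal\mathsf{m}(f)$ and not with a rescaled or negated quantity. I would do this by repeating the computation of Proposition~\ref{expedition0012118} with the spin path replaced by the constant path $\prbprocess_t^{\txtspin}\equiv 1$. After the time integration, the zero-frequency part of the heat kernel contributes the extra $\omega^{-1}$, as noted in the discussion after Proposition~\ref{expedition0012131}, which turns $\omega\mathsf{m}$ into $\mathsf{m}$. I would then match this against the van Hove formula in \cite{YoshitsuguSekine006}.

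\emph{Reduction.} The Gaussian factor is strictly positive for every $s$, so equality of the two characteristic functions for all $s\in\fldreal$ is equivalent to
$$\sqfun{\prbexp_{\msrprb_{\txtspin,\txtspinboson,\sminvtemperature}}}{\fnexp{-\imunit s\mathsf{Y}_{\sminvtemperature,f}}}=\fnexp{-\imunit s\,\opreal\mathsf{m}(f)}\qquad(s\in\fldreal).$$
The variable $\mathsf{Y}_{\sminvtemperature,f}$ is real and bounded, since $\abs{\prbprocess_u^{\txtspin}}=1$, the interval $S_{\sminvtemperature}$ is bounded, and the kernel is bounded by the estimates of Proposition~\ref{expedition0012482}. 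The left side is therefore the characteristic function of the law of $\mathsf{Y}_{\sminvtemperature,f}$ under $\msrprb_{\txtspin,\txtspinboson,\sminvtemperature}$, evaluated at $-s$. The right side is the characteristic function of the Dirac measure at $\opreal\mathsf{m}(f)$.

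\emph{Conclusion.} By the uniqueness theorem for characteristic functions, the identity holds for all $s$ if and only if the law of $\mathsf{Y}_{\sminvtemperature,f}$ is $\diracdelta_{\opreal\mathsf{m}(f)}$, that is, $\mathsf{Y}_{\sminvtemperature,f}=\opreal\mathsf{m}(f)$ almost surely. The converse direction is immediate by substitution. This is consistent with Propositions~\ref{expedition0012119} and~\ref{expedition0012122}: the identity forces $\mathsf{Y}_{\sminvtemperature,f}$ to be constant, hence to have zero variance, and the constant is pinned to $\opreal\mathsf{m}(f)$.
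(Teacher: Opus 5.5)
Your proposal is correct and matches the paper's proof: the paper compares \eqref{expedition0012492} with the van Hove characteristic function $\fnexp{-\frac{s^2}{4}\opform{q}_{\txtbsn,\txtbec,\sminvtemperature}(f)}\fnexp{-\imunit s\opreal\mathsf{m}(f)}$ and concludes by uniqueness of characteristic functions, as you do after cancelling the strictly positive Gaussian factor. The paper takes the van Hove formula as given, so your normalization check with the constant spin path is an optional consistency check; it does come out right, because the integral of the thermal kernel over $S_{\sminvtemperature}$ equals $2\omega^{-1}$, which gives $\mathsf{Y}_{\sminvtemperature,f}=\bkt{f}{\mathsf{m}}$ on that path.
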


\begin{proof}
Compare \eqref{expedition0012492} with the van Hove characteristic function
$\fnexp{-\frac{s^2}{4}\fun{\opform{q}_{\txtbsn,\txtbec,\sminvtemperature}}{f}}
\fnexp{-\imunit s\opreal\mathsf{m}(f)}$ and use uniqueness of characteristic functions.
\end{proof}

\subsection{Preparation for the Resolvent Algebra}\label{preparation-for-the-resolvent-algebra}

\begin{lem}[Laplace-transform representation]
\label{expedition0012102}
In a regular representation of
$\oaresolventalgebra(\sphilb{D}_{\txtbsn,\txtphys,\sminvtemperature},\sigma)$, for every
$\lambda,\mu\in\fldreal\setminus\setone{0}$ and
$f,g\in\sphilb{D}_{\txtbsn,\txtphys,\sminvtemperature}$,
\begin{align}
&\fun{\oastate[\psi_{\txtspinboson,\sminvtemperature}]}
{\oaresolvent(\lambda,f)}
=
-\imunit
\int_{0}^{(\operatorname{sgn}\lambda)\infty}
\fnexp{-\lambda s}
\fun{\psi_{\txtspinboson,\sminvtemperature}}
{\napiernum^{\imunit s \opfocksegal(f)}}
\opdmsr{s}, \label{expedition0012452}
\\
&\fun{\oastate[\psi_{\txtspinboson,\sminvtemperature}]}
{\oaresolvent(\lambda,f)\oaresolvent(\mu,g)}
=
-
\int_{0}^{(\operatorname{sgn}\lambda)\infty}
\int_{0}^{(\operatorname{sgn}\mu)\infty}
\fnexp{-\lambda s-\mu t}
\fun{\psi_{\txtspinboson,\sminvtemperature}}
{\napiernum^{\imunit s \opfocksegal(f)}
\napiernum^{\imunit t \opfocksegal(g)}}
\opdmsr{s}\opdmsr{t}. \label{expedition0012453}
\end{align}
\end{lem}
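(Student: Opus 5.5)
The plan is to prove the Laplace-transform formulas by the spectral theorem in the regular GNS representation. First fix the sign conventions. In a regular representation $\pi$ of $\oaresolventalgebra(\sphilb{D}_{\txtbsn,\txtphys,\sminvtemperature},\sigma)$, the kernel condition $\Ker \pi(\oaresolvent(1,f))=\setone{0}$ yields a self-adjoint field $\opfocksegal_\pi(f)$ with
$$\pi(\oaresolvent(\lambda,f))=(\imunit\lambda-\opfocksegal_\pi(f))^{-1}.$$
This choice is consistent with the convention $\oaresolvent(\lambda,0)=-\imunit/\lambda$. We identify $\napiernum^{\imunit s\opfocksegal_\pi(f)}$ with the Weyl operator whose expectation is given by Proposition \ref{expedition0012118}. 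For a real spectral variable $x$ and $\lambda>0$ we have the elementary identity
$$(\imunit\lambda-x)^{-1}=-\imunit\int_0^\infty \napiernum^{-\lambda s}\napiernum^{\imunit s x}\,\opdmsr{s},$$
and for $\lambda<0$ the same identity holds with the integral running over $(-\infty,0]$, written $\int_0^{-\infty}$ with the orientation absorbed. In each case the integrand is absolutely integrable, with modulus $\napiernum^{-\abs{\lambda}\abs{s}}$.

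With this identity, the first formula follows by functional calculus. We get
$$\pi(\oaresolvent(\lambda,f))=-\imunit\int_0^{(\operatorname{sgn}\lambda)\infty}\napiernum^{-\lambda s}\napiernum^{\imunit s\opfocksegal_\pi(f)}\,\opdmsr{s}$$
as a strongly convergent Bochner integral, since $s\mapsto \napiernum^{\imunit s\opfocksegal_\pi(f)}$ is strongly continuous by regularity. We then take the expectation in the GNS vector $\Omega_\psi$ and interchange the integral with the inner product, which is justified by boundedness and absolute integrability. This gives \eqref{expedition0012452}.

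For the two-point formula, we multiply the two Bochner representations. The two prefactors give $(-\imunit)^2=-1$. We then use Fubini on $\langle \Omega_\psi,\napiernum^{\imunit s\opfocksegal_\pi(f)}\napiernum^{\imunit t\opfocksegal_\pi(g)}\Omega_\psi\rangle$. This function is jointly continuous and bounded by $1$, so the double integral against $\napiernum^{-\abs{\lambda s}-\abs{\mu t}}$ converges absolutely. The order of the operator product is preserved, and no commutation is needed. This gives \eqref{expedition0012453}.

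The main obstacle is identifying the resolvent generator with the Segal field $\opfocksegal(f)$ that appears in the functional-integral characteristic functional. This means we must check that the state $\psi_{\txtspinboson,\sminvtemperature}$ defined on Weyl polynomials over $\sphilb{D}_{\txtbsn,\txtphys,\sminvtemperature}$ in Proposition \ref{expedition0012485} induces a regular state on the resolvent algebra whose field operators coincide with those of the Weyl GNS representation. The approach is to note that the characteristic functional in Proposition \ref{expedition0012118} is continuous in $s$, by dominated convergence in the spin factor, so the Weyl GNS representation is regular. Its generators then define, through the resolvent formula above, a representation of $\oaresolventalgebra$ satisfying the resolvent relations, because these relations are exactly the algebraic identities satisfied by resolvents of fields obeying the Weyl relations. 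That representation has trivial kernels by construction, which is the regularity required. Once this identification is in place, the remaining steps are the routine computations described above.
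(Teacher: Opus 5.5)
Your route is the one the paper intends: its proof is a single appeal to the standard Laplace representation of resolvents evaluated on the GNS vector. The analytic details you add are fine. These are the strong Bochner integral, Fubini with the bound $\napiernum^{-\abs{\lambda s}-\abs{\mu t}}$, and regularity of the Weyl GNS representation from continuity of $s\mapsto\psi_{\txtspinboson,\sminvtemperature}(\napiernum^{\imunit s\opfocksegal(f)})$.

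\textbf{The gap: the basic identity has the wrong sign.} For $\lambda>0$,
\[
-\imunit\int_0^\infty \napiernum^{-\lambda s}\napiernum^{\imunit s x}\,ds=\frac{-\imunit}{\lambda-\imunit x}=(\imunit\lambda+x)^{-1},
\]
not $(\imunit\lambda-x)^{-1}$. At $\lambda=x=1$ the left side is $(1-\imunit)/2$, while $(\imunit-1)^{-1}=-(1+\imunit)/2$. The same holds for $\lambda<0$ with the oriented integral. The correct identity is $(\imunit\lambda-x)^{-1}=-\imunit\int_0^{(\operatorname{sgn}\lambda)\infty}\napiernum^{-\lambda s}\napiernum^{-\imunit sx}\,ds$.

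So take the convention you explicitly adopt, $\oarepn(\oaresolvent(\lambda,f))=(\imunit\lambda-\phi_\pi(f))^{-1}$, together with your identification of $\napiernum^{\imunit s\phi_\pi(f)}$ with the functional-integral Weyl operator. Your argument then yields \eqref{expedition0012452} and \eqref{expedition0012453} with $\napiernum^{-\imunit s\opfocksegal(f)}$, $\napiernum^{-\imunit t\opfocksegal(g)}$ in place of $\napiernum^{\imunit s\opfocksegal(f)}$, $\napiernum^{\imunit t\opfocksegal(g)}$. That is not the stated lemma. Your remark that the sign is consistent with $\oaresolvent(\lambda,0)=-\imunit/\lambda$ does not settle this, since $(\imunit\lambda+\phi_\pi(f))^{-1}$ is equally consistent with it.

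\textbf{How to repair it.} The paper's one-line proof is silent on this sign as well, so you must settle it explicitly. There are two clean ways.

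(a) The resolvent relations fix the field only up to $f\mapsto-f$, because $\sigma(-f,-g)=\sigma(f,g)$. Hence $(\imunit\lambda+\phi_\pi(f))^{-1}=(\imunit\lambda-\phi_\pi(-f))^{-1}$ satisfies all of them too. In your last paragraph you may therefore \emph{define} the resolvent representation from the regular Weyl GNS representation by $\oarepn(\oaresolvent(\lambda,f))=-\imunit\int_0^{(\operatorname{sgn}\lambda)\infty}\napiernum^{-\lambda s}\oarepn(\napiernum^{\imunit s\opfocksegal(f)})\,ds$. This representation is regular, \eqref{expedition0012452} holds by construction, and your Fubini step gives \eqref{expedition0012453}.

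(b) Keep the Buchholz--Grundling sign and use a symmetry of this particular state. The spin loop measure is invariant under $\prbprocess^{\txtspin}\mapsto-\prbprocess^{\txtspin}$, and $\mathsf{D}_{\txtspinboson,\sminvtemperature}$ is quadratic in the path, while $\mathsf{Y}_{\sminvtemperature,h}$ is linear in it. So by Proposition \ref{expedition0012118}, $s\mapsto\psi_{\txtspinboson,\sminvtemperature}(\napiernum^{\imunit s\opfocksegal(h)})$ is even for every $h\in\sphilb{D}_{\txtbsn,\txtphys,\sminvtemperature}$. Since $W(\pm sf)W(\pm tg)=\napiernum^{-\frac{\imunit}{2}st\sigma(f,g)}W(\pm(sf+tg))$, the two sign versions of both formulas then coincide.
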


\begin{proof}
This is the standard Laplace-transform representation of resolvents, evaluated on the GNS vector.
\end{proof}

\begin{prop}
\label{expedition0012103}
For the KMS state of Proposition \ref{expedition0012485}, the resolvent expectations are obtained by
substituting Proposition \ref{expedition0012118} into \eqref{expedition0012452} and
\eqref{expedition0012453}. In particular,
\begin{equation}\label{expedition0012456}
\abs{\fun{\oastate[\psi_{\txtspinboson,\sminvtemperature}]}
{\oaresolvent(\lambda,f)}}
\leq \frac{1}{\abs{\lambda}},
\quad
\abs{\fun{\oastate[\psi_{\txtspinboson,\sminvtemperature}]}
{\oaresolvent(\lambda,f)\oaresolvent(\mu,g)}}
\leq \frac{1}{\abs{\lambda}\abs{\mu}}.
\end{equation}
\end{prop}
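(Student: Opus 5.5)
The plan is to substitute Proposition \ref{expedition0012118} into the Laplace-transform formulas \eqref{expedition0012452} and \eqref{expedition0012453} of Lemma \ref{expedition0012102}, and then bound the resulting integrals using only the facts that characteristic functions of states have modulus at most one and that $\int_0^{\infty}\napiernum^{-\abs{\lambda} s}\,ds=1/\abs{\lambda}$. Fix $\lambda,\mu\neq 0$ and $f,g\in\sphilb{D}_{\txtbsn,\txtphys,\sminvtemperature}$. By Proposition \ref{expedition0012118}, the one-variable integrand $\fun{\psi_{\txtspinboson,\sminvtemperature}}{\napiernum^{\imunit s\opfocksegal(f)}}$ equals the Gaussian factor $\fnexp{-\frac{s^2}{4}\fun{\opform{q}_{\txtbsn,\txtbec,\sminvtemperature}}{f}}$, which lies in $(0,1]$ because $\opform{q}_{\txtbsn,\txtbec,\sminvtemperature}\geq 0$, multiplied by a spin-path expectation of a unimodular phase. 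The second factor has modulus at most $1$ since $\mathsf{Y}_{\sminvtemperature,f}$ is real and $\msrprb_{\txtspin,\txtspinboson,\sminvtemperature}$ is a probability measure; by Proposition \ref{expedition0012482} the density is finite and normalized. So the integrand is continuous in $s$ and bounded by $1$. Substituting it into \eqref{expedition0012452}, integrating along the half-line $s\lambda>0$ where $\abs{\napiernum^{-\lambda s}}=\napiernum^{-\abs{\lambda}\abs{s}}$, gives the first bound.

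For the two-resolvent bound, I would not attempt to compute $\fun{\psi_{\txtspinboson,\sminvtemperature}}{\napiernum^{\imunit s\opfocksegal(f)}\napiernum^{\imunit t\opfocksegal(g)}}$ explicitly. In principle one could get it from \eqref{expedition0012486} with $n=2$ together with the Weyl relation, but that is unnecessary. It suffices to note that in the GNS representation, which is regular by hypothesis in Lemma \ref{expedition0012102}, the operators $\napiernum^{\imunit s\opfocksegal(f)}$ are unitary. The Cauchy--Schwarz inequality on the normalized GNS vector then gives modulus at most $1$, and measurability in $(s,t)$ follows from strong continuity. Fubini and the same exponential integral give $1/(\abs{\lambda}\abs{\mu})$.

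Alternatively, both bounds follow directly from $\abs{\omega(A)}\leq\norm{A}$ for states, together with $\norm{\oaresolvent(\lambda,f)}=1/\abs{\lambda}$ from \cite{BuchholzGrundling2} and submultiplicativity of the norm. I would mention this as a consistency check.

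The main point requiring care is justifying that the Laplace representation converges absolutely and that interchanging the spin-path expectation with the $s$-integral is legitimate. Absolute convergence follows from the uniform bound by $\napiernum^{-\abs{\lambda}\abs{s}}$. The interchange is justified by Fubini, since the joint integrand on $\fldreal\times D(S_{\sminvtemperature};\ringratint_2)$ is bounded by an integrable product. I expect no further obstacle.
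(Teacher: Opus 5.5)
Your proposal is correct, but for the two-resolvent bound it takes a different route from the paper.

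For the one-resolvent bound you do what the paper does. Its one-line proof notes that the Laplace integrand in \eqref{expedition0012452} has modulus at most one, since $\opform{q}_{\txtbsn,\txtbec,\sminvtemperature}\geq 0$ and the spin-path expectation of a unimodular phase has modulus at most one, and then integrates $\napiernum^{-\abs{\lambda}\abs{s}}$.

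For the two-resolvent bound the paper's proof implicitly applies the same model-specific estimate to the two-Weyl expectation. By the Weyl relation, that expectation is a unimodular phase times the characteristic functional of Proposition \ref{expedition0012118} at $sf+tg\in\sphilb{D}_{\txtbsn,\txtphys,\sminvtemperature}$. This is also what ``substituting Proposition \ref{expedition0012118} into \eqref{expedition0012453}'' means, so you do not need the $n=2$ formula \eqref{expedition0012486}. You instead bound the two-Weyl expectation abstractly, using unitarity of the Weyl operators in the regular GNS representation and Cauchy--Schwarz.

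Your route ignores the Gaussian/spin structure entirely and works for any state with a regular GNS representation. Your $\oacstar$-norm consistency check is more general still: it needs only that $\psi_{\txtspinboson,\sminvtemperature}$ be a state on the resolvent algebra and that $\norm{\oaresolvent(\lambda,f)}=1/\abs{\lambda}$. It does not even need the Laplace representation.

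The model-specific route, which you also use for the first bound, gives something the norm argument does not. It supplies pointwise domination of the Laplace integrand by $\napiernum^{-\abs{\lambda}\abs{s}}$ together with the explicit Gaussian factor. Corollary \ref{expedition0012104} relies on exactly this, not on \eqref{expedition0012456} itself, to run dominated convergence when $\opform{q}_{\txtbsn,\txtbec,\sminvtemperature}(f_n)\to\infty$.
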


\begin{proof}
The estimates follow from
$\abs{\sqfun{\prbexp}{\fnexp{-\imunit\mathsf{Y}_{\sminvtemperature,h}}}}\leq 1$ and positivity of
$\opform{q}_{\txtbsn,\txtbec,\sminvtemperature}$.
\end{proof}

\begin{cor}
\label{expedition0012104}
If
$\fun{\opform{q}_{\txtbsn,\txtbec,\sminvtemperature}}{f_n}\to\infty$, then for every
$\lambda\in\fldreal\setminus\setone{0}$,
$\fun{\oastate[\psi_{\txtspinboson,\sminvtemperature}]}
{\oaresolvent(\lambda,f_n)}\to 0$.
\end{cor}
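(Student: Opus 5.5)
We use the Laplace-transform representation \eqref{expedition0012452} of Lemma \ref{expedition0012102} together with the explicit characteristic function of Proposition \ref{expedition0012118}. For $\lambda>0$ (the case $\lambda<0$ is symmetric, integrating over $(-\infty,0]$) this gives
\[
\fun{\oastate[\psi_{\txtspinboson,\sminvtemperature}]}{\oaresolvent(\lambda,f_n)}
=
-\imunit\int_0^\infty \napiernum^{-\lambda s}
\fnexp{-\frac{s^2}{4}\fun{\opform{q}_{\txtbsn,\txtbec,\sminvtemperature}}{f_n}}
\sqfun{\prbexp_{\msrprb_{\txtspin,\txtspinboson,\sminvtemperature}}}{\fnexp{-\imunit s\mathsf{Y}_{\sminvtemperature,f_n}}}
\opdmsr{s}.
\]
The spin characteristic factor has modulus at most $1$, since $\mathsf{Y}_{\sminvtemperature,f_n}$ is real. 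The modulus of the integrand is therefore bounded by $\napiernum^{-\lambda s}\fnexp{-\frac{s^2}{4}\fun{\opform{q}_{\txtbsn,\txtbec,\sminvtemperature}}{f_n}}$.

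We then apply dominated convergence. The dominating function $\napiernum^{-\lambda s}$ is integrable on $[0,\infty)$ and does not depend on $n$. For each fixed $s>0$ the Gaussian factor tends to $0$, because $\fun{\opform{q}_{\txtbsn,\txtbec,\sminvtemperature}}{f_n}\to\infty$. The integrand therefore tends to $0$ pointwise off the null set $\{s=0\}$, and the integral tends to $0$.

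One can also get an explicit rate instead of invoking dominated convergence. Dropping $\napiernum^{-\lambda s}\le 1$ and evaluating the Gaussian integral gives
\[
\abs{\fun{\oastate[\psi_{\txtspinboson,\sminvtemperature}]}{\oaresolvent(\lambda,f_n)}}
\le \sqrt{\pi/\fun{\opform{q}_{\txtbsn,\txtbec,\sminvtemperature}}{f_n}}.
\]
This bound is uniform in $\lambda$.

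The only step needing care is whether the representation \eqref{expedition0012452} is legitimate for each $f_n$. This requires $f_n\in\sphilb{D}_{\txtbsn,\txtphys,\sminvtemperature}$ and regularity of the GNS representation, which we take to be implicit in the hypotheses, as in Proposition \ref{expedition0012103}. Once that is granted, the argument above is complete.
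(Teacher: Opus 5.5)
Your proof is correct and follows the paper's own argument: you combine the Laplace-transform representation of Lemma \ref{expedition0012102} with the characteristic function of Proposition \ref{expedition0012118}, then apply dominated convergence, using that the Gaussian factor vanishes for every fixed $s\neq 0$. Your dominating function $\napiernum^{-\lambda s}$ makes precise what the paper loosely attributes to the bound \eqref{expedition0012456}, and your explicit estimate $\sqrt{\pi/\fun{\opform{q}_{\txtbsn,\txtbec,\sminvtemperature}}{f_n}}$, uniform in $\lambda$, is a correct addition.
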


\begin{proof}
For $s\neq 0$, the Gaussian factor in the Laplace-transform integrand converges to $0$, and the bound
\eqref{expedition0012456} gives dominated convergence.
\end{proof}

\section{Discussion by the Resolvent Algebra}\label{discussion-by-the-resolvent-algebra}

Starting from the KMS state obtained by the functional integral, we organize the ideals of the resolvent algebra and of the algebra of all physical quantities. In particular, we distinguish the directions that are killed in the algebra from the outset because of infrared divergence from the directions excluded by the equivalent conditions in Theorem \ref{expedition0012132}. Unlike the van Hove model \cite{YoshitsuguSekine006}, which used a point source, the properties of the source \(\varrho\) have a strong effect here, and therefore we do not discuss behavior according to the degree \(s\) in the dispersion relation \(\omega_s(k)
= \abs{k}^{s}\).

As seen in the comparison in the preceding section, stochastic fluctuations of the spin factor and deterministic shifts of free-Bose-gas type or van-Hove-model type are diagnostic quantities measuring whether the spin-boson KMS state degenerates to a component state of a free-model type. On the other hand, by Theorem \ref{expedition0012132}, the absence of off-diagonal long-range order and the appropriate behavior of the order parameter are equivalent to the vanishing of the zero-mode quasi-bilinear form. Thus the remaining role of this section is to separate the directions removed from the algebra by infrared divergence from the BEC directions that are recorded auxiliary inside the physical space.

In this section we clearly distinguish three spaces. First, \(\sphilb{D}_{\txtbsn,0,\sminvtemperature}\) is the minimal domain on which the zero-mode quasi-bilinear form \(\opform{q}_{\txtbsn,0,\sminvtemperature}\) has meaning. At this stage we do not assume finiteness of the interaction-induced linear term \(\ell_{\txtspin,\sminvtemperature}\) or of the spin characteristic function in \eqref{expedition0012492}. Next, only on \(\sphilb{D}_{\txtbsn,\txtphys,\sminvtemperature}
=
\sphilb{D}_{\txtbsn,0,\sminvtemperature}\cap \dom \mathsf{m}\) are the resolvent expectations in Propositions \ref{expedition0012102} and \ref{expedition0012103} defined as finite quantities after removal of the infrared cutoff. Therefore \(\sphilb{D}_{\txtbsn,0,\sminvtemperature}
\setminus
\sphilb{D}_{\txtbsn,\txtphys,\sminvtemperature}\) consists of infrared directions excluded not by the zero-mode quasi-bilinear form but by divergence of the linear term that gives the mean. By contrast, BEC directions appear inside \(\sphilb{D}_{\txtbsn,\txtphys,\sminvtemperature}\) as directions on which \(\opform{q}_{\txtbsn,0,\sminvtemperature}\) is still positive. Below we separate how these two types of directions are reflected in ideals of the resolvent algebra.

As algebras of physical quantities, define \[\oa{A}_{\txtspinboson,0,\sminvtemperature}
=
M_2 \otimes \oaresolventalgebra(\sphilb{D}_{\txtbsn,0,\sminvtemperature},\sigma),
\quad
\oa{A}_{\txtspinboson,\txtphys,\sminvtemperature}
=
M_2 \otimes \oaresolventalgebra(\sphilb{D}_{\txtbsn,\txtphys,\sminvtemperature},\sigma).\] By Propositions \ref{expedition0012102} and \ref{expedition0012103}, the resolvent expectations for arbitrary \(\lambda,\mu
\in \fldreal \setminus \setone{0}\) and \(f,g
\in \sphilb{D}_{\txtbsn,\txtphys,\sminvtemperature}\) have already been given explicitly. It is therefore enough to extract the regular representation induced by the KMS state and the associated ideal structure.

\subsection{Infrared Quotient and the Kernel of the KMS State}\label{infrared-quotient-and-the-kernel-of-the-kms-state}

\begin{defn}
Define the infrared ideal for the Bose field by
\begin{equation}
\oaideal{J}_{\txtbsn,\txtirsingular}
=
\clos{
\opideal
\set{\oaresolvent(\lambda,f)}
{\lambda \in \fldreal \setminus \setone{0},\,
f \in \sphilb{D}_{\txtbsn,0,\sminvtemperature}
\setminus
\sphilb{D}_{\txtbsn,\txtphys,\sminvtemperature}}}
\subset
\oaresolventalgebra(\sphilb{D}_{\txtbsn,0,\sminvtemperature},\sigma)
\end{equation}
and set
$\oaideal{J}_{\txtspinboson,\txtirsingular}
=
M_2 \otimes \oaideal{J}_{\txtbsn,\txtirsingular}
\subset
\oa{A}_{\txtspinboson,0,\sminvtemperature}$.
\end{defn}

The directions killed by this definition are \(\sphilb{D}_{\txtbsn,0,\sminvtemperature}
\setminus
\sphilb{D}_{\txtbsn,\txtphys,\sminvtemperature}\), that is, directions for which \(\opform{q}_{\txtbsn,0,\sminvtemperature}\) has meaning but the interaction-induced linear functional \(\mathsf{m}\) is not finite. The one- and two-point resolvent expectations in Propositions \ref{expedition0012102} and \ref{expedition0012103} are defined after removal of the infrared cutoff only for \(f,g
\in
\sphilb{D}_{\txtbsn,\txtphys,\sminvtemperature}\), and hence these infrared directions have already lost physical meaning at the level of fixed generators. Unlike BEC directions, whose expectations vanish only in a large-scale limit as in Proposition \ref{expedition0012104}, these directions must be removed by quotienting out the generators themselves.

\begin{prop}\label{expedition0012125}
By the universality of the resolvent algebra, there exist infrared quotient maps
$$\Theta_{\txtbsn,\txtirsingular}
\colon
\oaresolventalgebra(\sphilb{D}_{\txtbsn,0,\sminvtemperature},\sigma)
\to
\oaresolventalgebra(\sphilb{D}_{\txtbsn,\txtphys,\sminvtemperature},\sigma),
\quad
\Theta_{\txtspinboson,\txtirsingular}
=
\idone_{M_2}\otimes \Theta_{\txtbsn,\txtirsingular}
\colon
\oa{A}_{\txtspinboson,0,\sminvtemperature}
\to
\oa{A}_{\txtspinboson,\txtphys,\sminvtemperature}
.$$
In particular, their kernels are respectively
$\Ker \Theta_{\txtbsn,\txtirsingular}
=
\oaideal{J}_{\txtbsn,\txtirsingular}$ and
$\Ker \Theta_{\txtspinboson,\txtirsingular}
=
\oaideal{J}_{\txtspinboson,\txtirsingular}$, and the isomorphisms
$$\begin{aligned}
\setquot{\oaresolventalgebra(\sphilb{D}_{\txtbsn,0,\sminvtemperature},\sigma)}{\oaideal{J}_{\txtbsn,\txtirsingular}}
\eqalgisom
\oaresolventalgebra(\sphilb{D}_{\txtbsn,\txtphys,\sminvtemperature},\sigma),
\quad
\setquot{\oa{A}_{\txtspinboson,0,\sminvtemperature}}
{\oaideal{J}_{\txtspinboson,\txtirsingular}}
\eqalgisom
\oa{A}_{\txtspinboson,\txtphys,\sminvtemperature}
\end{aligned}$$
hold.
\end{prop}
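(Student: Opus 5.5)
The proof will build the map $\Theta_{\txtbsn,\txtirsingular}$ from the universal property of the resolvent algebra. Then it will show that the ideal $\oaideal{J}_{\txtbsn,\txtirsingular}$ is contained in the kernel and that the induced map on the quotient is injective. Everything is then tensored with $M_2$.

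\textbf{Construction of the map.} On the generators I set
\[
\Theta_{\txtbsn,\txtirsingular}(\oaresolvent(\lambda,f))=
\begin{cases}
\oaresolvent(\lambda,f), & f\in \sphilb{D}_{\txtbsn,\txtphys,\sminvtemperature},\\
0, & f\in \sphilb{D}_{\txtbsn,0,\sminvtemperature}\setminus\sphilb{D}_{\txtbsn,\txtphys,\sminvtemperature}.
\end{cases}
\]
Universality of $\oaresolventalgebra_0$ gives a $\ast$-homomorphism only if this assignment respects every resolvent relation. I will check the relations in the following order.

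1. Relation (1) involves only $f=0\in\sphilb{D}_{\txtbsn,\txtphys,\sminvtemperature}$, so it is preserved.
2. Relations (2)--(4) and the commutator relation \eqref{expedition0012052} are homogeneous in the singular generators. Sending those generators to $0$ makes both sides vanish.
3. Relation (6) is the delicate one. If $f$ is physical and $g$ is singular, then $f+g$ is singular, since $\sphilb{D}_{\txtbsn,\txtphys,\sminvtemperature}$ is a subspace. The right-hand side becomes $0\cdot(\cdots)=0$ and the left-hand side is $\oaresolvent(\lambda,f)\cdot 0=0$, so the relation holds.
4. The real obstacle in (6) is two singular $f,g$ whose sum $f+g$ is physical. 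The left-hand side is then $0$, while the right-hand side is $\oaresolvent(\lambda+\mu,f+g)\cdot 0$, which is also $0$. So this case is consistent too.
5. The scaling relation (3) needs one further remark. $\nu f$ is singular exactly when $f$ is, for $\nu\neq0$, because $\dom\mathsf{m}$ is a linear subspace.

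Continuity for the $C^*$-norm follows from Proposition \ref{expedition0011838}, as follows.

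1. $\oaresolventalgebra(\sphilb{D}_{\txtbsn,0,\sminvtemperature},\sigma)$ is the inductive limit over finite-dimensional nondegenerate subspaces $S$.
2. On each $\oaresolventalgebra(S,\sigma)$, I realise the assignment by composing the universal representation with a degenerate direction. Concretely, the map is the limit $\lim_{t\to\infty}$ of the automorphism-like rescalings $\oaresolvent(\lambda,f)\mapsto\oaresolvent(\lambda,f+tg_\perp)$. This is the standard Buchholz--Grundling mechanism, by which resolvents of divergent directions tend to $0$; compare Corollary \ref{expedition0012104}.
3. Each such map is contractive, so the limit is contractive and extends to the completion.
4. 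Surjectivity is clear, because the generators of the physical algebra are hit.

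\textbf{Kernel, and the main obstacle.} By construction $\oaideal{J}_{\txtbsn,\txtirsingular}\subset\Ker\Theta_{\txtbsn,\txtirsingular}$. For the reverse inclusion I will use the inclusion map
\[
\iota\colon \oaresolventalgebra(\sphilb{D}_{\txtbsn,\txtphys,\sminvtemperature},\sigma)\to\oaresolventalgebra(\sphilb{D}_{\txtbsn,0,\sminvtemperature},\sigma),
\]
which is isometric by Proposition \ref{expedition0011838}(1) together with the inductive-limit structure. Composing $\iota$ with the quotient map $q$ gives $q\circ\iota$. I will show that $q\circ\iota$ is surjective onto the quotient, using the following steps.

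1. Every monomial containing a singular generator lies in $\oaideal{J}_{\txtbsn,\txtirsingular}$.
2. Monomials in physical generators alone lie in the image of $\iota$.
3. Relation (6) lets products be rewritten as linear combinations of such monomials.

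Since $\Theta_{\txtbsn,\txtirsingular}\circ\iota=\mathrm{id}$, $q\circ\iota$ is also injective, hence an isomorphism. Therefore $\Ker\Theta_{\txtbsn,\txtirsingular}=\oaideal{J}_{\txtbsn,\txtirsingular}$.

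I expect the main obstacle to be the well-definedness and continuity of $\Theta_{\txtbsn,\txtirsingular}$. The singular set is a set-theoretic complement, not a subspace, so the compatibility checks above, and the fact that the quotient is not accidentally zero, must be verified carefully. If the rescaling argument fails, my fallback is to define $\Theta_{\txtbsn,\txtirsingular}$ directly as the quotient map and read the isomorphism off the splitting $\iota$.

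Finally, I tensor with $M_2$. Since $M_2$ is nuclear and finite-dimensional, the map $\idone_{M_2}\otimes\Theta_{\txtbsn,\txtirsingular}$ is well defined and has kernel $M_2\otimes\Ker\Theta_{\txtbsn,\txtirsingular}=\oaideal{J}_{\txtspinboson,\txtirsingular}$. This gives the second isomorphism.
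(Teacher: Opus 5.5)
Your route is the paper's. Universality gives the restriction homomorphism $\Theta$ that keeps $R(\lambda,f)$ for $f$ physical and kills it otherwise, its kernel is identified with $\mathcal{J}_{\mathrm{b},\mathrm{irs}}$, and one then tensors with $M_2$.

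Write $\mathcal{D}_0$, $\mathcal{D}_{\mathrm{phys}}$, $\mathcal{J}$ for short. Your check of the relations on $\mathcal{R}_0$ is correct, including the case of two singular vectors with physical sum. Your kernel argument is more complete than the paper's, which only asserts $\ker\Theta=\mathcal{J}$. Each monomial in the generators either contains a singular generator, and so lies in $\mathcal{J}$, or lies in $\iota(\mathcal{R}(\mathcal{D}_{\mathrm{phys}}))$. The range of $q\circ\iota$ is closed because $q\circ\iota$ is a $*$-homomorphism of $C^{*}$-algebras; you should state this, since it is what turns your density argument into surjectivity. Injectivity then follows from $\Theta\circ\iota=\mathrm{id}$ together with $\mathcal{J}\subset\ker\Theta$. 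Neither Proposition~\ref{expedition0011838} nor the product relation is needed here, because injectivity of $\iota$ already follows from $\Theta\circ\iota=\mathrm{id}$.

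The continuity paragraph, however, fails as written.
\begin{itemize}
\item The assignments $R(\lambda,f)\mapsto R(\lambda,f+tg_\perp)$ are not $*$-homomorphisms: they send the scalar $R(\lambda,0)=-\tfrac{i}{\lambda}1$ to $R(\lambda,tg_\perp)$.
\item No norm limit can send a generator to $0$, since $\|R(\lambda,h)\|=1/|\lambda|$ for every $h$. Corollary~\ref{expedition0012104} is about decay of state expectations, not of norms.
\item Your fallback of taking $\Theta$ to be the quotient map is circular. Injectivity of $q\circ\iota$ is exactly what the existence of $\Theta$ supplies.
\end{itemize}

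The step is nevertheless immediate from the universality the statement invokes. The norm of $\mathcal{R}(X,\sigma)$ is the universal $C^{*}$-seminorm on $\mathcal{R}_0$; it is finite because every generator has norm at most $1/|\lambda|$ in every representation. Composing your algebraic map with a faithful representation of $\mathcal{R}(\mathcal{D}_{\mathrm{phys}},\sigma)$ therefore gives a representation of $\mathcal{R}_0(\mathcal{D}_0,\sigma)$. Hence $\|\Theta(A)\|\le\|A\|$ and $\Theta$ extends to the completion. With this replacement your proof is complete.
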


\begin{proof}
The physical one-particle space
$\sphilb{D}_{\txtbsn,\txtphys,\sminvtemperature}
\subset
\sphilb{D}_{\txtbsn,0,\sminvtemperature}$ is a linear subspace.
Hence, by the universality of the resolvent algebra \cite{DetlevBuchholz001}, one obtains a surjective
$\ast$-homomorphism that restricts the generators over
$\sphilb{D}_{\txtbsn,0,\sminvtemperature}$ to the generators over
$\sphilb{D}_{\txtbsn,\txtphys,\sminvtemperature}$.
Its kernel is the closed two-sided ideal generated by the resolvents corresponding to
$\sphilb{D}_{\txtbsn,0,\sminvtemperature}
\setminus
\sphilb{D}_{\txtbsn,\txtphys,\sminvtemperature}$.
For the algebra of all physical quantities, one takes the tensor product and uses the simplicity of
$M_2$.
\end{proof}

\begin{prop}\label{expedition0012128}
Let $J$ be a closed two-sided ideal of
$\oaresolventalgebra(\sphilb{D}_{\txtbsn,0,\sminvtemperature},\sigma)$ satisfying
$$\begin{aligned}
\oaresolvent(\lambda,f)\in J
\quad
\rbk{\lambda \in \fldreal \setminus \setone{0},\,
f \in \sphilb{D}_{\txtbsn,0,\sminvtemperature}
\setminus
\sphilb{D}_{\txtbsn,\txtphys,\sminvtemperature}}.
\end{aligned}$$
Then $\oaideal{J}_{\txtbsn,\txtirsingular}
\subset J$ holds.
Similarly, if a closed two-sided ideal $\widetilde{J}$ of
$\oa{A}_{\txtspinboson,0,\sminvtemperature}$ contains $M_2
\otimes
\oaresolvent(\lambda,f)$ for all $f$ as above, then
$\oaideal{J}_{\txtspinboson,\txtirsingular}
\subset \widetilde{J}$ holds.
In particular, $\oaideal{J}_{\txtbsn,\txtirsingular}$ and
$\oaideal{J}_{\txtspinboson,\txtirsingular}$ are the minimal closed two-sided ideals that remove the
infrared directions.
\end{prop}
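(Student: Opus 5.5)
The argument is essentially the universal property of the ideal generated by a set, applied first to the Bose algebra and then transported through $M_2\otimes\cdot$. For the bosonic part, write $\mathcal{G}=\set{\oaresolvent(\lambda,f)}{\lambda\in\fldreal\setminus\setone{0},\,f\in\sphilb{D}_{\txtbsn,0,\sminvtemperature}\setminus\sphilb{D}_{\txtbsn,\txtphys,\sminvtemperature}}$. By definition $\oaideal{J}_{\txtbsn,\txtirsingular}$ is the norm closure of the algebraic two-sided ideal generated by $\mathcal{G}$. That algebraic ideal consists of finite sums $\sum_i a_i r_i b_i$ with $r_i\in\mathcal{G}$ and $a_i,b_i$ in the (unital) resolvent algebra.

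Given a closed two-sided ideal $J$ containing $\mathcal{G}$, each such product lies in $J$ because $J$ is a two-sided ideal. Finite sums lie in $J$ because $J$ is a linear subspace. Hence the algebraic ideal generated by $\mathcal{G}$ is contained in $J$, and since $J$ is norm closed, so is its closure $\oaideal{J}_{\txtbsn,\txtirsingular}$. This gives the first inclusion and shows that $\oaideal{J}_{\txtbsn,\txtirsingular}$ is the smallest closed two-sided ideal containing $\mathcal{G}$. It is itself such an ideal, being the closure of a two-sided ideal in a $\oacstar$-algebra.

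For the composite algebra $\oa{A}_{\txtspinboson,0,\sminvtemperature}=M_2\otimes\oaresolventalgebra(\sphilb{D}_{\txtbsn,0,\sminvtemperature},\sigma)$, I would read the hypothesis as $e_{ij}\otimes\oaresolvent(\lambda,f)\in\widetilde J$ for all matrix units $e_{ij}$ and all $f$ as above. Since $M_2$ is finite dimensional, the (minimal) tensor product is simply $M_2(\oaresolventalgebra)$. Its closed two-sided ideals are exactly of the form $M_2\otimes I$ with $I$ a closed two-sided ideal of the Bose algebra; this is the standard correspondence for matrix algebras, and it is the same simplicity of $M_2$ already used in Proposition \ref{expedition0012125}.

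So I would write $\widetilde J=M_2\otimes I$, where one recovers $I$ by compressing with $e_{11}\otimes\idone$: $I=\set{x}{e_{11}\otimes x\in\widetilde J}$. The hypothesis gives $\mathcal{G}\subset I$, so the bosonic case yields $\oaideal{J}_{\txtbsn,\txtirsingular}\subset I$. Tensoring with $M_2$ then gives $\oaideal{J}_{\txtspinboson,\txtirsingular}=M_2\otimes\oaideal{J}_{\txtbsn,\txtirsingular}\subset M_2\otimes I=\widetilde J$.

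Alternatively, one can avoid the structure theorem altogether. A general element of $\oaideal{J}_{\txtspinboson,\txtirsingular}$ is a norm limit of sums $\sum_{i,j} e_{ij}\otimes x_{ij}$ with $x_{ij}\in\oaideal{J}_{\txtbsn,\txtirsingular}$. One checks $e_{ij}\otimes x\in\widetilde J$ for such $x$ by applying the bosonic argument to the closed ideal $\set{x}{e_{11}\otimes x\in\widetilde J}$, and then multiplying by $e_{i1}\otimes\idone$ and $e_{1j}\otimes\idone$. Minimality in both cases follows immediately: each ideal contains the generators, and any ideal containing them contains it.

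The only mildly delicate step is the identification of ideals of $M_2\otimes\oaresolventalgebra$. I would handle it by the explicit matrix-unit compression above rather than citing a general result, since that keeps the argument self-contained.
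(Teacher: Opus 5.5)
Your proof is correct and follows the paper's argument. The closed two-sided ideal generated by the infrared resolvents is, by construction, the smallest closed ideal containing them, so any closed ideal containing the generators contains all finite sums $\sum_i a_i r_i b_i$ and their norm limits. For the spin-boson algebra the paper only says that the same argument is applied in $M_2\otimes\oaresolventalgebra(\sphilb{D}_{\txtbsn,0,\sminvtemperature},\sigma)$; your compression by $e_{11}\otimes\idone$ and conjugation by matrix units spells out what the paper leaves implicit, namely that $M_2\otimes\oaideal{J}_{\txtbsn,\txtirsingular}$ is exactly the closed ideal generated by the tensored generators.
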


\begin{proof}
By definition, $\oaideal{J}_{\txtbsn,\txtirsingular}$ is the closed two-sided ideal generated by all
generator resolvents belonging to
$\sphilb{D}_{\txtbsn,0,\sminvtemperature}
\setminus
\sphilb{D}_{\txtbsn,\txtphys,\sminvtemperature}$.
Therefore any closed two-sided ideal $J$ containing all of them must contain the generated closed
two-sided ideal $\oaideal{J}_{\txtbsn,\txtirsingular}$.
For the algebra of all physical quantities, the same argument is applied in
$M_2\otimes \oaresolventalgebra(\sphilb{D}_{\txtbsn,0,\sminvtemperature},\sigma)$.
\end{proof}

Define a state \(\oastate[\psi_{\txtspinboson,\sminvtemperature}^{\mathrm{res}}]\) on the operator algebra \(\oa{A}_{\txtspinboson,\txtphys,\sminvtemperature}\) as a state having resolvent expectations of the form in Lemma \ref{expedition0012102} and Proposition \ref{expedition0012103}. Unless confusion is likely, we use the same notation for its restriction to \(\oaresolventalgebra(\sphilb{D}_{\txtbsn,\txtphys,\sminvtemperature},\sigma)\).

\begin{prop}\label{expedition0012126}
The state $\oastate[\psi_{\txtspinboson,\sminvtemperature}^{\mathrm{res}}]$ above is regular.
By Proposition \ref{expedition0011838}, its GNS representation is faithful; in particular,
$\Ker
\oarepn_{\oastate[\psi_{\txtspinboson,\sminvtemperature}^{\mathrm{res}}]}
=
\setone{0}$ holds.
Moreover, if
$\oastate[\psi_{\txtspinboson,\sminvtemperature}^{(0)}]
=
\oastate[\psi_{\txtspinboson,\sminvtemperature}^{\mathrm{res}}]
\circ
\Theta_{\txtspinboson,\txtirsingular}$, then
$\Ker
\oarepn_{\oastate[\psi_{\txtspinboson,\sminvtemperature}^{(0)}]}
=
\oaideal{J}_{\txtspinboson,\txtirsingular}$ holds.
\end{prop}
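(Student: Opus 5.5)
The proof splits into three steps: regularity, faithfulness, and the kernel of the pulled-back state.

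\textbf{Regularity.} We prove that $\oastate[\psi_{\txtspinboson,\sminvtemperature}^{\mathrm{res}}]$ is regular on $\sphilb{D}_{\txtbsn,\txtphys,\sminvtemperature}$. The plan is to use the Laplace-transform representation of Lemma \ref{expedition0012102}, together with the explicit characteristic function of Proposition \ref{expedition0012118}.

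For each $f\in\sphilb{D}_{\txtbsn,\txtphys,\sminvtemperature}$, the map $s\mapsto\psi_{\txtspinboson,\sminvtemperature}(\napiernum^{\imunit s\opfocksegal(f)})$ is the product of two factors:
\begin{itemize}
\item a Gaussian factor $\fnexp{-\frac{s^2}{4}\opform{q}_{\txtbsn,\txtbec,\sminvtemperature}(f)}$, with $\opform{q}_{\txtbsn,\txtbec,\sminvtemperature}(f)<\infty$;
\item the spin characteristic function of the bounded real variable $\mathsf{Y}_{\sminvtemperature,f}$.
\end{itemize}
Hence the map is continuous and positive definite and equals $1$ at $s=0$. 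By the GNS construction, this makes $s\mapsto\oarepn(\opfockweyl(sf))$ a weakly, hence strongly, continuous unitary group on the cyclic subspace. The resolvent $\oarepn(\oaresolvent(\lambda,f))$ is then the Laplace transform of this group, that is, the resolvent $(\imunit\lambda-\Phi(f))^{-1}$ of its self-adjoint generator, up to the sign convention $\oaresolvent(1,0)=-\imunit$.

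Such a resolvent has trivial kernel. So $\Ker\oarepn(\oaresolvent(1,f))=\setone{0}$ for all $f$, which is regularity. The $M_2$ tensor factor is handled by noting that the GNS representation of the full algebra restricts to a regular representation on the resolvent part, since the spin part is finite-dimensional.

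\textbf{Faithfulness.} Proposition \ref{expedition0011838}(3) says that every regular representation of $\oaresolventalgebra(\sphilb{D}_{\txtbsn,\txtphys,\sminvtemperature},\sigma)$ is faithful. This gives $\Ker\oarepn=\setone{0}$ on the bosonic factor.

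For $M_2\otimes\oaresolventalgebra$, we use simplicity of $M_2$. Every closed two-sided ideal of $M_2\otimes B$ has the form $M_2\otimes I$ with $I$ an ideal of $B$. The kernel of the GNS representation therefore has the form $M_2\otimes I$, and $I$ is contained in the kernel of the induced representation of $B$, which is trivial.

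\textbf{Kernel of the pulled-back state.} For $\oastate[\psi^{(0)}]=\oastate[\psi^{\mathrm{res}}]\circ\Theta_{\txtspinboson,\txtirsingular}$, we use the standard fact that the GNS representation of a state composed with a surjective $\ast$-homomorphism is unitarily equivalent to $\oarepn_{\psi^{\mathrm{res}}}\circ\Theta_{\txtspinboson,\txtirsingular}$. Its kernel is therefore $\Theta_{\txtspinboson,\txtirsingular}^{-1}(\Ker\oarepn_{\psi^{\mathrm{res}}})=\Ker\Theta_{\txtspinboson,\txtirsingular}$. By Proposition \ref{expedition0012125}, this equals $\oaideal{J}_{\txtspinboson,\txtirsingular}$.

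\textbf{Main obstacle.} The step I expect to be hardest is regularity itself. We must make sure that the state defined only through its resolvent expectations actually extends to a positive state on the resolvent algebra, and that its GNS Weyl group is continuous. The plan is to build it concretely as the limit of the cutoff KMS states from Proposition \ref{expedition0012485}, on the Weyl algebra over $\sphilb{D}_{\txtbsn,\txtphys,\sminvtemperature}$. Positivity of that limit comes from positivity at each $\kappa$. One then passes to resolvents through the regular Weyl representation, so that Lemma \ref{expedition0012102} applies.
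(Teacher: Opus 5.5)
Your proposal is correct and follows the paper's own three-step route. First, the state is regular because it is induced by the field operators on $\mathcal{D}_{\mathrm{b},\mathrm{phys},\beta}$. Second, faithfulness follows from Proposition~\ref{expedition0011838}. Third, $\Ker\pi_{\psi\circ\Theta}=\Theta^{-1}(\Ker\pi_{\psi})=\Ker\Theta$ for the surjection $\Theta_{\mathrm{SB},\mathrm{irs}}$, and Proposition~\ref{expedition0012125} identifies this kernel with $\mathcal{J}_{\mathrm{SB},\mathrm{irs}}$.

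You only fill in two points the paper leaves implicit. One is continuity of the Weyl group, derived from the explicit characteristic function. On the whole GNS space, rather than just the cyclic subspace of a single one-parameter group, this needs the Weyl relations on the dense vectors $\pi(W(h))\Omega$ and the commutation with $M_2\otimes 1$. The other is the ideal structure $M_2\otimes I$, which is what lets you apply the Buchholz--Grundling faithfulness result to $M_2\otimes\mathcal{R}$ rather than to $\mathcal{R}$ alone.
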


\begin{proof}
By definition,
$\oastate[\psi_{\txtspinboson,\sminvtemperature}^{\mathrm{res}}]$ is the restriction of the functional
integral state $\oastate[\psi_{\txtspinboson,\sminvtemperature}]$ to
$\oa{A}_{\txtspinboson,\txtphys,\sminvtemperature}$.
Propositions \ref{expedition0012102} and \ref{expedition0012103} give explicit expectations of the
resolvent generators.
The restricted state is a regular state induced by the field operators on
$\sphilb{D}_{\txtbsn,\txtphys,\sminvtemperature}$, and hence Proposition \ref{expedition0011838} implies
that its GNS representation is faithful.
In general, for a surjective $\ast$-homomorphism $\Theta
\colon \oa{A}
\to \oa{B}$ and a faithful state $\psi$ on $\oa{B}$, one has
$\Ker \oarepn_{\psi \circ \Theta}
=
\Ker \Theta$.
Apply this to $\Theta_{\txtspinboson,\txtirsingular}$.
\end{proof}

This proposition shows that, from the viewpoint of the KMS state, the only nontrivial ideal that first appears in the algebra of all physical quantities is \(\oaideal{J}_{\txtspinboson,\txtirsingular}\), which kills the infrared-divergent directions. At the level of the physical quotient algebra \(\oa{A}_{\txtspinboson,\txtphys,\sminvtemperature}\), no GNS kernel of the KMS state remains. Thus the treatment of the infrared directions should be understood not as a state-dependent accident but as a rewriting, by the universality of the resolvent algebra, of the restriction to the physical test-function space \(\sphilb{D}_{\txtbsn,\txtphys,\sminvtemperature}\).

\subsection{Order Parameters and Centers for Abstract KMS States}\label{expedition0012519}

By Proposition \ref{expedition0012126}, the KMS state constructed by the functional integral defines the regular state \(\oastate[\psi_{\txtspinboson,\sminvtemperature}^{\mathrm{res}}]\) on the physical resolvent algebra \(\oa{A}_{\txtspinboson,\txtphys,\sminvtemperature}\). On the other hand, an abstract regular KMS state need not come with the characteristic-functional representation of Proposition \ref{expedition0012485}, and unbounded field operators do not belong to the resolvent algebra. Therefore the BEC criterion for a general KMS state is formulated by representing an order parameter of the \(\mathsf{b}_L^{(1)}\) type from Proposition \ref{expedition0012575} as an element of the resolvent algebra and asking whether its strong limit in the GNS representation produces a center.

\begin{defn}[Resolvent order-parameter net]\label{expedition0012517}
Let $\fml{b_{\nu}}{\nu}$ be a directed family in the physical test-function space
$\sphilb{D}_{\txtbsn,\txtphys,\sminvtemperature}$ satisfying the following conditions.
\begin{enumerate}
\item
The family $\fml{b_{\nu}}{\nu}$ has the same zero-mode normalization as
$\mathsf{b}_L^{(1)}$ in Proposition \ref{expedition0012575}.
In particular, $\faftr{b_{\nu}}(0)$ is a normalization constant independent of $\nu$.

\item
For every $h
\in \sphilb{D}_{\txtbsn,\txtphys,\sminvtemperature}$,
$\sigma(b_{\nu},h)
\to 0$ holds.

\item
For the nonzero modes,
$\fun{\opform{q}_{\txtbsn,\txtnonzero,\sminvtemperature}}{b_{\nu}}
\to 0$ holds.
\end{enumerate}
A family $\fml{b_{\nu}}{\nu}$ satisfying these conditions is called a resolvent order-parameter net.
The nonempty family of all resolvent order-parameter nets is denoted by
$\mathcal{N}_{\mathrm{ov}}^{\mathrm{res}}$.
For each $\fml{b_{\nu}}{\nu}\in\mathcal{N}_{\mathrm{ov}}^{\mathrm{res}}$,
$$\mathsf{o}_{\psi,\nu}=\imunit\fun{\oastate[\psi]}{\idone_{M_2}\otimes\oaresolvent(1,b_{\nu})}$$
is called the resolvent order parameter of $\oastate[\psi]$.
\end{defn}

\begin{ex}[Finite-volume zero-mode type net]\label{expedition0012536}
Consider $\mathsf{b}_L^{(1)}=V^{-1}\fndef{I_L^d}$ from Proposition \ref{expedition0012575}.
For each $L$, $\mathsf{b}_L^{(1)}\in\sphilb{D}_{\txtbsn,\txtphys,\sminvtemperature}$ holds.
If $\sigma(\mathsf{b}_L^{(1)},h)
\to 0$ holds for every $h
\in \sphilb{D}_{\txtbsn,\txtphys,\sminvtemperature}$, then
$\fml{\mathsf{b}_L^{(1)}}{L>0}$ is a resolvent order-parameter net.
In a finite system, $\mathsf{b}_L^{(1)}$ has only the zero-momentum mode, and hence its contribution to
the nonzero-mode covariance vanishes.
For example, in a situation where each element of
$\sphilb{D}_{\txtbsn,\txtphys,\sminvtemperature}$ is represented by an $\lp^1$ function, the second
condition follows from
$$\abs{\sigma(\mathsf{b}_L^{(1)},h)}\leq\frac{1}{V}\onenorm{h}\to 0.$$
\end{ex}

\begin{prop}[Asymptotic centrality of order parameters]\label{expedition0012524}
Let $\fml{b_{\nu}}{\nu}\in\mathcal{N}_{\mathrm{ov}}^{\mathrm{res}}$ be a resolvent order-parameter net.
Then, for every $A
\in \oa{A}_{\txtspinboson,\txtphys,\sminvtemperature}$,
\begin{equation}\label{expedition0012540}
\norm{\commutator{\idone_{M_2}\otimes\oaresolvent(1,b_{\nu})}{A}}\to 0
\end{equation}
holds.
\end{prop}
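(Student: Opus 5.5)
The plan is to reduce the statement to generators and use a density argument. First, because $\norm{\commutator{X}{A}}$ is linear and continuous in $A$, and because $\norm{\idone_{M_2}\otimes\oaresolvent(1,b_{\nu})}=1$ uniformly in $\nu$, the norm of the commutator with $A$ differs from the norm of the commutator with an approximant $A'$ by at most $2\norm{A-A'}$. It therefore suffices to prove \eqref{expedition0012540} for $A$ in a dense $\ast$-subalgebra. In $\oa{A}_{\txtspinboson,\txtphys,\sminvtemperature}=M_2\otimes\oaresolventalgebra(\sphilb{D}_{\txtbsn,\txtphys,\sminvtemperature},\sigma)$, the elementary tensors $a\otimes \oaresolvent(\mu_1,h_1)\cdots\oaresolvent(\mu_k,h_k)$, with $a\in M_2$, span a dense subalgebra. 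Since $\idone_{M_2}\otimes\oaresolvent(1,b_\nu)$ commutes with $a\otimes\idone$, the Leibniz rule $\commutator{X}{BC}=\commutator{X}{B}C+B\commutator{X}{C}$, together with $\norm{\oaresolvent(\mu,h)}=1/\abs{\mu}$, reduces the claim to single generators. Concretely, it is enough to show
$$\norm{\commutator{\oaresolvent(1,b_\nu)}{\oaresolvent(\mu,h)}}\to 0\qquad(\mu\in\fldreal\setminus\setone{0},\ h\in\sphilb{D}_{\txtbsn,\txtphys,\sminvtemperature}).$$

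For a single generator, the resolvent relation \eqref{expedition0012052} gives
$$\commutator{\oaresolvent(1,b_\nu)}{\oaresolvent(\mu,h)}=\imunit\,\sigma(b_\nu,h)\,\oaresolvent(1,b_\nu)\oaresolvent(\mu,h)^2\oaresolvent(1,b_\nu).$$
Bounding the product using $\norm{\oaresolvent(\lambda,f)}=1/\abs{\lambda}$ (Buchholz--Grundling) yields
$$\norm{\commutator{\oaresolvent(1,b_\nu)}{\oaresolvent(\mu,h)}}\le \abs{\sigma(b_\nu,h)}\,\mu^{-2}.$$
By condition (2) of Definition \ref{expedition0012517}, the right-hand side tends to $0$. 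Only condition (2) of the net is needed here. Conditions (1) and (3) are state-related and are not used for norm centrality.

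Two points need care. The first is that the cited norm identity is valid in $\oaresolventalgebra(\sphilb{D}_{\txtbsn,\txtphys,\sminvtemperature},\sigma)$ even though $\sphilb{D}_{\txtbsn,\txtphys,\sminvtemperature}$ is infinite-dimensional and possibly degenerate on subspaces. This follows from Proposition \ref{expedition0011838}(1)--(2) by passing to a finite-dimensional non-degenerate $S$ containing $b_\nu$ and $h$; if non-degeneracy fails, the inequality $\norm{\oaresolvent(\lambda,f)}\le 1/\abs{\lambda}$ holds anyway via regular representations, and that inequality suffices. The second is the uniform control in the density step when $A$ is a product: the bound is $\sum_j\prod_{i\ne j}\abs{\mu_i}^{-1}\cdot\abs{\sigma(b_\nu,h_j)}\mu_j^{-2}$, a finite sum of terms tending to zero.

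I expect the main obstacle to be this $\varepsilon/3$ density argument for a net rather than a sequence. It works because the approximation error $2\norm{A-A'}$ is independent of $\nu$, while for fixed $A'$ the commutator tends to $0$ along the net.
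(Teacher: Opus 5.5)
Your proposal is correct and follows the paper's own route. The resolvent commutator relation bounds the commutator on elementary tensors by $\norm{A_0}\abs{\sigma(b_\nu,h)}/\abs{\lambda}^2$, the Leibniz rule extends this to finite products of generators, and density together with the uniform bound $\norm{\idone_{M_2}\otimes\oaresolvent(1,b_\nu)}\le 1$ gives the general case using only condition (2) of the net. You also spell out two points the paper leaves implicit: the approximation error in the density step does not depend on $\nu$, and only the upper bound $\norm{\oaresolvent(\lambda,f)}\le 1/\abs{\lambda}$ is needed.
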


\begin{proof}
It is enough to prove the assertion for the tensor product
$A
=
A_0
\otimes
\oaresolvent(\lambda,h)$.
The commutator formula in the resolvent relations gives
\begin{equation}\label{expedition0012541}
\norm{\commutator{\idone_{M_2}\otimes\oaresolvent(1,b_{\nu})}{A_0\otimes\oaresolvent(\lambda,h)}}
\leq
\frac{\norm{A_0}\abs{\sigma(b_{\nu},h)}}{\abs{\lambda}^{2}}.
\end{equation}
By Definition \ref{expedition0012517}, the right-hand side converges to $0$.
For finite products, the same conclusion follows from the Leibniz formula for commutators.
Since finite products of resolvent generators are dense and
$\norm{\idone_{M_2} \otimes \oaresolvent(1,b_{\nu})}
\leq 1$, condition (2) in the definition of a resolvent order-parameter net gives
\eqref{expedition0012540}.
\end{proof}

\begin{prop}[Criterion for emergence of a center in the representation space]\label{expedition0012544}
Let $\pairbk{\sphilb{H}_{\psi},\oarepn_{\psi},\oagnsvector_{\psi}}$ be the GNS triple associated with a
state $\oastate[\psi]$ on the algebra of physical quantities
$\oa{A}_{\txtspinboson,\txtphys,\sminvtemperature}$, and let the von Neumann algebra in the representation
space be
$\oa{M}_{\psi}
=
\oadoublecommutant{\oarepn_{\psi}(\oa{A}_{\txtspinboson,\txtphys,\sminvtemperature})}$.
For any $\fml{b_{\nu}}{\nu}
\in \mathcal{N}_{\mathrm{ov}}^{\mathrm{res}}$, if
\begin{equation}\label{expedition0012545}
\slim_{\nu}\oarepn_{\psi}(\idone_{M_2}\otimes\oaresolvent(1,b_{\nu}))
=
Z_{\psi}
\end{equation}
exists, then $Z_{\psi}
\in \oacenter(\oa{M}_{\psi})$ holds.
In particular, if $Z_{\psi}$ is not a scalar operator, then a nontrivial center emerges from the order
parameter in the representation space of $\oastate[\psi]$.
\end{prop}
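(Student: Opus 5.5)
The plan is to prove that $Z_{\psi}$ lies in $\oa{M}_{\psi}$ and commutes with $\oarepn_{\psi}(\oa{A}_{\txtspinboson,\txtphys,\sminvtemperature})$. Once it commutes with this generating set it commutes with the weak closure $\oa{M}_{\psi}$, and then $Z_{\psi}\in\oa{M}_{\psi}\cap\oa{M}_{\psi}'=\oacenter(\oa{M}_{\psi})$.

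Membership is immediate. Each $\oarepn_{\psi}(\idone_{M_2}\otimes\oaresolvent(1,b_{\nu}))$ lies in $\oarepn_{\psi}(\oa{A}_{\txtspinboson,\txtphys,\sminvtemperature})\subset\oa{M}_{\psi}$, and von Neumann algebras are strongly closed. Hence the strong limit $Z_{\psi}$ in \eqref{expedition0012545} belongs to $\oa{M}_{\psi}$.

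For commutation, fix $A\in\oa{A}_{\txtspinboson,\txtphys,\sminvtemperature}$ and write $R_{\nu}=\oarepn_{\psi}(\idone_{M_2}\otimes\oaresolvent(1,b_{\nu}))$. For any vector $\xi\in\sphilb{H}_{\psi}$ we decompose
$$Z_{\psi}\oarepn_{\psi}(A)\xi-\oarepn_{\psi}(A)Z_{\psi}\xi
=(Z_{\psi}-R_{\nu})\oarepn_{\psi}(A)\xi+\oarepn_{\psi}\bigl(\commutator{\idone_{M_2}\otimes\oaresolvent(1,b_{\nu})}{A}\bigr)\xi+\oarepn_{\psi}(A)(R_{\nu}-Z_{\psi})\xi .$$
The first term tends to zero by strong convergence, applied at the vector $\oarepn_{\psi}(A)\xi$. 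The third term tends to zero by strong convergence at $\xi$ together with boundedness of $\oarepn_{\psi}(A)$. The middle term is bounded in norm by $\norm{\commutator{\idone_{M_2}\otimes\oaresolvent(1,b_{\nu})}{A}}\,\norm{\xi}$, because representations are contractive, and this tends to zero by Proposition \ref{expedition0012524}. The left side does not depend on $\nu$, so it must vanish. Thus $Z_{\psi}\in\oarepn_{\psi}(\oa{A}_{\txtspinboson,\txtphys,\sminvtemperature})'$, and this commutant equals $\oa{M}_{\psi}'$ by the bicommutant theorem.

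The only point needing care is whether the strong limit of a net (not merely a sequence) interacts correctly with the three-term estimate. It does: all three terms are controlled along the same directed set, and no uniform-boundedness argument is needed, since $\norm{R_{\nu}}\le 1$ anyway. The final assertion is then a reformulation: a non-scalar element of $\oacenter(\oa{M}_{\psi})$ means the center is nontrivial.
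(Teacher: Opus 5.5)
Your proof is correct and follows the paper's argument. The paper puts $Z_{\psi}$ in $\mathcal{M}_{\psi}$, then gets $[Z_{\psi},\pi_{\psi}(A)]=0$ from the norm asymptotic centrality of Proposition \ref{expedition0012524} combined with strong convergence, and concludes $Z_{\psi}\in\mathcal{M}_{\psi}\cap\mathcal{M}_{\psi}'$. Your three-term decomposition and the strong-closedness remark simply spell out the steps the paper states in one line.
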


\begin{proof}
By Proposition \ref{expedition0012524}, for every $A
\in \oa{A}_{\txtspinboson,\txtphys,\sminvtemperature}$,
$\commutator{\oarepn_{\psi}(\idone_{M_2}\otimes\oaresolvent(1,b_{\nu}))}
{\oarepn_{\psi}(A)}
\to 0$ holds in operator norm.
The strong operator convergence in \eqref{expedition0012545} implies
$\commutator{Z_{\psi}}{\oarepn_{\psi}(A)}
= 0$.
Thus $Z_{\psi}$ belongs to
$\oarepn_{\psi}(\oa{A}_{\txtspinboson,\txtphys,\sminvtemperature})'$.
By definition, $Z_{\psi}$ clearly belongs to $\oa{M}_{\psi}$, and hence
$Z_{\psi}
\in
\oa{M}_{\psi}
\cap
\oacommutant{\oa{M}_{\psi}}
= \oacenter(\oa{M}_{\psi})$.
\end{proof}

\begin{defn}[Order-parameter zero-mode compatibility]\label{expedition0012542}
Let $\oastate[\psi]$ be a state on the algebra of physical quantities
$\oa{A}_{\txtspinboson,\txtphys,\sminvtemperature}$.
We say that $\oastate[\psi]$ is order-parameter zero-mode compatible if, for every
$\fml{b_{\nu}}
{\nu}\in\mathcal{N}_{\mathrm{ov}}^{\mathrm{res}}$, the GNS representation satisfies
\begin{equation}\label{expedition0012543}
\slim_{\nu}\oarepn_{\psi}(\idone_{M_2}\otimes\oaresolvent(1,b_{\nu}))
=
-\imunit
\idone_{\sphilb{H}_{\psi}}.
\end{equation}
\end{defn}

\begin{thm}[Order-parameter criterion for general KMS states]\label{expedition0012550}
For a KMS state $\oastate[\psi]$ on the algebra of physical quantities
$\oa{A}_{\txtspinboson,\txtphys,\sminvtemperature}$, the following statements hold.
\begin{enumerate}
\item
For any $\fml{b_{\nu}}{\nu}\in\mathcal{N}_{\mathrm{ov}}^{\mathrm{res}}$, if the strong operator limit
$\slim_{\nu}\oarepn_{\psi}(\idone_{M_2}\otimes\oaresolvent(1,b_{\nu}))$ exists, then that limit belongs
to the center of $\oa{M}_{\psi}$.

\item
If $\oastate[\psi]$ is order-parameter zero-mode compatible, then for every
$\fml{b_{\nu}}
{\nu}\in\mathcal{N}_{\mathrm{ov}}^{\mathrm{res}}$ no nontrivial center emerges from the order parameter.
In this case $\mathsf{o}_{\psi,\nu}
\to 1$ holds.

\item
If, for some $\fml{b_{\nu}}
{\nu}\in\mathcal{N}_{\mathrm{ov}}^{\mathrm{res}}$, there exists
$\slim_{\nu}
\oarepn_{\psi}(\idone_{M_2}\otimes\oaresolvent(1,b_{\nu}))
= Z_{\psi}$ and
$Z_{\psi}\notin\fldcmp\idone_{\sphilb{H}_{\psi}}$, then the state $\oastate[\psi]$ is not
order-parameter zero-mode compatible.
\end{enumerate}
\end{thm}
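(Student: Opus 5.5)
The three assertions follow from the two preceding propositions plus one elementary computation with the vector state. Throughout, I write $R_\nu=\idone_{M_2}\otimes\oaresolvent(1,b_\nu)$ for brevity.

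Part (1) is Proposition \ref{expedition0012544} applied verbatim. By Proposition \ref{expedition0012524}, $\norm{\commutator{R_\nu}{A}}\to 0$ for every $A\in\oa{A}_{\txtspinboson,\txtphys,\sminvtemperature}$. Now fix $\xi\in\sphilb{H}_\psi$. The vector $\oarepn_\psi(R_\nu)\oarepn_\psi(A)\xi-\oarepn_\psi(A)\oarepn_\psi(R_\nu)\xi$ converges to $Z_\psi\oarepn_\psi(A)\xi-\oarepn_\psi(A)Z_\psi\xi$, because strong limits commute with the fixed bounded operator $\oarepn_\psi(A)$ on either side. Its norm is also at most $\norm{\commutator{R_\nu}{A}}\norm{\xi}\to0$. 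Hence $Z_\psi\in\oarepn_\psi(\oa{A})'$.

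Since $Z_\psi$ is a strong limit of elements of $\oarepn_\psi(\oa{A})$, it lies in $\oa{M}_\psi$, and therefore in $\oa{M}_\psi\cap\oa{M}_\psi'$. The KMS property plays no role in this part.

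For part (2), assume order-parameter zero-mode compatibility. Then for each net the strong limit exists and equals $-\imunit\idone$, which is a scalar, so by part (1) the only central element produced is trivial. For the numerical statement I will use $\oastate[\psi](X)=\bkt{\oagnsvector_\psi}{\oarepn_\psi(X)\oagnsvector_\psi}$. Strong convergence implies weak convergence against the GNS vector, so
$$\mathsf{o}_{\psi,\nu}=\imunit\bkt{\oagnsvector_\psi}{\oarepn_\psi(R_\nu)\oagnsvector_\psi}\to \imunit\cdot(-\imunit)\norm{\oagnsvector_\psi}^2=1.$$
This is consistent with the normalization $\oaresolvent(1,0)=-\imunit\idone$.

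For part (3), suppose for contradiction that $\oastate[\psi]$ were compatible. Then for the given net the strong limit would be $-\imunit\idone$. Strong limits are unique, so $Z_\psi=-\imunit\idone\in\fldcmp\idone$, which contradicts the hypothesis $Z_\psi\notin\fldcmp\idone_{\sphilb{H}_\psi}$.

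The only point requiring care is in part (1): the commutator tends to zero in norm only, not uniformly in any other sense, so the passage to the limit must go through the fixed vector $\xi$ as done above. Interchanging the two limits abstractly would not be justified. Since $\oarepn_\psi(A)$ is bounded and fixed, the argument as written suffices, and I expect no genuine obstacle.
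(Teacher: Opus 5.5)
Your proposal is correct and follows the paper's proof. Part (1) is Proposition \ref{expedition0012544}, whose commutator/strong-limit argument you reproduce through a fixed vector. Part (2) evaluates the GNS vector state on the compatibility limit $-\imunit\idone$, and part (3) is the negation of Definition \ref{expedition0012542}. Your observation that the KMS property is never actually used is also accurate.
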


\begin{proof}
(1): This is Proposition \ref{expedition0012544}.

(2): This follows immediately from Definition \ref{expedition0012542}.
Indeed,
$\mathsf{o}_{\psi,\nu}
=
\imunit
\bkt{\oagnsvector_{\psi}}
{\fun{\oarepn_{\psi}}
{\idone_{M_2} \otimes \oaresolvent(1,b_{\nu})}
\oagnsvector_{\psi}}$, and hence \eqref{expedition0012543} gives
$\mathsf{o}_{\psi,\nu}
\to 1$.

(3): This is the negation of Definition \ref{expedition0012542}.
\end{proof}

\begin{prop}[Strong convergence of order parameters for the functional-integral KMS state]\label{expedition0012553}
Assume $\smnumberdensity_{\txtbsn,0}(\sminvtemperature)
= 0$.
The state $\oastate[\psi_{\txtspinboson,\sminvtemperature}^{\mathrm{res}}]$ of Proposition
\ref{expedition0012126} is order-parameter zero-mode compatible.
\end{prop}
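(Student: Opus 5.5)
We need to show that for every resolvent order-parameter net $(b_\nu)$, $\pi_\psi(1\otimes R(1,b_\nu)) \to -i\,\mathbf 1$ strongly, where $\psi=\psi^{\mathrm{res}}_{\mathrm{sb},\beta}$. The plan is a standard $L^2$-argument in the GNS space: first show convergence on the cyclic vector, then propagate it to the dense set $\pi_\psi(\mathcal A)\Omega_\psi$ using asymptotic centrality (Proposition \ref{expedition0012524}), and finally extend to all of $\mathcal H_\psi$ using the uniform bound $\|R(1,b_\nu)\|\le 1$. Write $X_\nu=\pi_\psi(1\otimes R(1,b_\nu))+i\,\mathbf 1$.

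\emph{Step 1 (vector estimate on $\Omega_\psi$).} Compute
$$\|X_\nu\Omega_\psi\|^2=\psi(R(1,b_\nu)^*R(1,b_\nu))+ i\,\psi(R(1,b_\nu)^*)-i\,\psi(R(1,b_\nu))+1 .$$
The first term is reduced to a single resolvent. By the resolvent relations, $R(1,b)^*=R(-1,b)$, and the resolvent identity gives $R(-1,b)R(1,b)=\frac{i}{2}\bigl(R(-1,b)-R(1,b)\bigr)$ (sign to be checked against the stated convention). Hence everything is a combination of $\psi(R(\pm1,b_\nu))$, and it suffices to show $\psi(R(\lambda,b_\nu))\to -i/\lambda$ for $\lambda=\pm1$. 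Inserting this limit gives $\tfrac{i}{2}(i+i)+i(i)-i(-i)+1=-1-1+1+1=0$, which is consistent. To compute $\psi(R(\lambda,b_\nu))$, use Lemma \ref{expedition0012102} together with Proposition \ref{expedition0012118}:
$$\psi(e^{is\phi(b_\nu)})=\exp\!\bigl(-\tfrac{s^2}{4}q_{0}(b_\nu)-\tfrac{s^2}{4}q_{\mathrm{nz}}(b_\nu)\bigr)\,\mathbb E[e^{-isY_{\beta,b_\nu}}].$$
Here $q_0(b_\nu)=0$ because $\rho_{\mathrm{bsn},0}(\beta)=0$, and $q_{\mathrm{nz}}(b_\nu)\to0$ by condition (3) of Definition \ref{expedition0012517}. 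The integrand is dominated by $e^{-\lambda s}$ on the half-line, so dominated convergence reduces the claim to $\mathbb E[e^{-isY_{\beta,b_\nu}}]\to1$ pointwise in $s$.

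\emph{Step 2 (the main obstacle: the spin term).} We need $Y_{\beta,b_\nu}\to0$, for instance uniformly on spin paths, which holds if
$$\sup_u |q^{\mathrm{eucl}}_{\mathrm{nz}}(j_0b_\nu,j_u\omega\mathsf m)|\to0 .$$
By Proposition \ref{expedition0012131} this quantity is $\langle b_\nu, K_u\,\omega\mathsf m\rangle$ with
$$0\le K_u\le \frac{1+e^{-\beta\omega}}{1-e^{-\beta\omega}}=K_{\beta,0}.$$
Cauchy--Schwarz in the $K_u$-form then bounds it by $q_{\mathrm{nz}}(b_\nu)^{1/2}\,\|K_{\beta,0}^{1/2}\omega\mathsf m\|$. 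The second factor is finite by the estimate in the proof of Proposition \ref{expedition0012482}: $\int (\omega^{-2}+\omega^{-1})|\hat\varrho|^2<\infty$ because $\varrho\in\operatorname{dom}\omega^{-1/2}\cap\operatorname{dom}\omega^{-1}$. The first factor tends to $0$ by condition (3). Integrating over $u\in S_\beta$ with $|X_u|=1$ gives $|Y_{\beta,b_\nu}|\to0$ uniformly, and hence the spin characteristic function tends to $1$. I expect this step to need the most care, namely justifying the Cauchy--Schwarz step for the indefinite-time kernel (using positivity of $K_u$ for $0\le u\le\beta$).

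\emph{Step 3 (density).} For $A\in\mathcal A_{\mathrm{sb,phys},\beta}$, write
$$X_\nu\pi_\psi(A)\Omega_\psi=\pi_\psi(A)X_\nu\Omega_\psi+\bigl[\pi_\psi(1\otimes R(1,b_\nu)),\pi_\psi(A)\bigr]\Omega_\psi .$$
The first term tends to $0$ by Step 1, and the second tends to $0$ in norm by Proposition \ref{expedition0012524}. Since $\|X_\nu\|\le2$ uniformly and $\pi_\psi(\mathcal A)\Omega_\psi$ is dense, an $\varepsilon/3$ argument gives $X_\nu\to0$ strongly on $\mathcal H_\psi$. This establishes \eqref{expedition0012543}, i.e.\ order-parameter zero-mode compatibility.
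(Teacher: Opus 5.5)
Your proof is correct and is essentially the paper's argument. You show convergence on the GNS vector through the Laplace-transform representation, using $q_0(b_\nu)=0$, $q_{\mathrm{nz}}(b_\nu)\to 0$ and a Cauchy--Schwarz bound on the spin phase, and then propagate it by density using Proposition~\ref{expedition0012524} and $\|X_\nu\|\le 2$.

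One deviation from the paper: you reduce $\psi(R(-1,b_\nu)R(1,b_\nu))$ to one-point resolvents via the resolvent identity instead of the two-point Laplace formula. This works once the sign is fixed: $R(-1,b)R(1,b)=-\tfrac{i}{2}\bigl(R(-1,b)-R(1,b)\bigr)$, whose limit is $1$ (as positivity of $R^*R$ requires). Your consistency check should then read $1-1-1+1=0$, since $-i\cdot(-i)=-1$.
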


\begin{proof}
Fix a resolvent order-parameter net
$\fml{b_{\nu}}
{\nu}\in\mathcal{N}_{\mathrm{ov}}^{\mathrm{res}}$, and let
$\pairbk{\sphilb{H}_{\psi},\oarepn_{\psi},\oagnsvector_{\psi}}$ be the GNS triple of
$\oastate[\psi_{\txtspinboson,\sminvtemperature}^{\mathrm{res}}]$.

First we prove
$\oarepn_{\psi}(\idone_{M_2}
\otimes
\oaresolvent(1,b_{\nu}))\oagnsvector_{\psi}
\to -\imunit\oagnsvector_{\psi}$.
By Proposition \ref{expedition0012103}, for every $\lambda
\in \fldreal \setminus \setone{0}$,
\begin{equation}\label{expedition0012554}
\begin{aligned}
&\fun{\oastate[\psi_{\txtspinboson,\sminvtemperature}^{\mathrm{res}}]}
{\idone_{M_2}\otimes\oaresolvent(\lambda,b_{\nu})}
\\ %%%%%%%%%%%%%%%%
&=
-\imunit
\int_{0}^{(\operatorname{sgn}\lambda)\infty}
\fnexp{-\lambda s-\oneoverfour s^2\fun{\opform{q}_{\txtbsn,\txtbec,\sminvtemperature}}{b_{\nu}}}
\sqfun{\prbexp_{\msrprb_{\txtspin,\txtspinboson,\sminvtemperature}}}
{\fnexp{-\imunit\mathsf{Y}_{\sminvtemperature,s b_{\nu}}}}
\opdmsr{s}
\end{aligned}
\end{equation}
holds.
The assumption $\smnumberdensity_{\txtbsn,0}(\sminvtemperature)
= 0$ implies
$\opform{q}_{\txtbsn,\txtbec,\sminvtemperature}
= \opform{q}_{\txtbsn,\txtnonzero,\sminvtemperature}$.
By Definition \ref{expedition0012517},
$\fun{\opform{q}_{\txtbsn,\txtbec,\sminvtemperature}}{b_{\nu}}
\to 0$ holds.
Moreover, by \eqref{expedition0012487} and the Cauchy--Schwarz inequality, for each $s$,
$$\begin{aligned}
\abs{\frac{s}{2}\int_{0}^{\sminvtemperature}\prbprocess_u^{\txtspin}\fun{\opform{q}_{\txtbsn,\txtnonzero,\sminvtemperature}^{\txteuclid}}{j_0 b_{\nu},j_u(\omega\mathsf{m})}\opdmsr{u}}
\leq
\frac{\abs{s}}{2}
\sminvtemperature
\fun{\opform{q}_{\txtbsn,\txtnonzero,\sminvtemperature}}{b_{\nu}}^{1/2}
\fun{\opform{q}_{\txtbsn,\txtnonzero,\sminvtemperature}}{\omega\mathsf{m}}^{1/2}
\to 0
\end{aligned}$$
holds, and therefore
$\sqfun{\prbexp_{\msrprb_{\txtspin,\txtspinboson,\sminvtemperature}}}
{\fnexp{-\imunit\mathsf{Y}_{\sminvtemperature,s b_{\nu}}}}
\to 1$.
Furthermore, since
$\abs{\sqfun{\prbexp_{\msrprb_{\txtspin,\txtspinboson,\sminvtemperature}}}
{\fnexp{-\imunit\mathsf{Y}_{\sminvtemperature,s b_{\nu}}}}}
\leq 1$, the dominated convergence theorem applied to \eqref{expedition0012554} gives
\begin{equation}\label{expedition0012558}
\fun{\oastate[\psi_{\txtspinboson,\sminvtemperature}^{\mathrm{res}}]}
{\idone_{M_2}\otimes\oaresolvent(\lambda,b_{\nu})}
\to
-\frac{\imunit}{\lambda}.
\end{equation}
In particular, substituting $\lambda=1$ and $\lambda=-1$ into \eqref{expedition0012558} and applying
Proposition \ref{expedition0012103} to the two-point resolvent gives
\begin{equation}\label{expedition0012551}
\fun{\oastate[\psi_{\txtspinboson,\sminvtemperature}^{\mathrm{res}}]}
{\rbk{\idone_{M_2}\otimes\oaresolvent(-1,b_{\nu})}\rbk{\idone_{M_2}\otimes\oaresolvent(1,b_{\nu})}}
\to
1.
\end{equation}
From \eqref{expedition0012558} and \eqref{expedition0012551},
$$\begin{aligned}
\norm{\rbk{\oarepn_{\psi}(\idone_{M_2}\otimes\oaresolvent(1,b_{\nu}))
+\imunit\idone_{\sphilb{H}_{\psi}}}\oagnsvector_{\psi}}^2
\to
0
\end{aligned}$$
holds, and hence
$\oarepn_{\psi}(\idone_{M_2}
\otimes
\oaresolvent(1,b_{\nu}))\oagnsvector_{\psi}
\to -\imunit\oagnsvector_{\psi}$.

For every $A
\in \oa{A}_{\txtspinboson,\txtphys,\sminvtemperature}$,
$$\begin{aligned}
&\rbk{\oarepn_{\psi}(\idone_{M_2}\otimes\oaresolvent(1,b_{\nu}))
+\imunit
\idone_{\sphilb{H}_{\psi}}}\oarepn_{\psi}(A)\oagnsvector_{\psi}
\\
&=
\fun{\oarepn_{\psi}(A)}
{\oarepn_{\psi}(\idone_{M_2}\otimes\oaresolvent(1,b_{\nu}))+\imunit\idone_{\sphilb{H}_{\psi}}}\oagnsvector_{\psi}
+\commutator{\oarepn_{\psi}(\idone_{M_2}\otimes\oaresolvent(1,b_{\nu}))}{\oarepn_{\psi}(A)}\oagnsvector_{\psi}
\end{aligned}$$
holds.
The first term converges to $0$, and the second term converges to $0$ by Proposition
\ref{expedition0012524}.
By definition, $\oarepn_{\psi}(\oa{A}_{\txtspinboson,\txtphys,\sminvtemperature})
\oagnsvector_{\psi}$ is dense, and
$\norm{\oarepn_{\psi}(\idone_{M_2}\otimes\oaresolvent(1,b_{\nu}))
+\imunit\idone_{\sphilb{H}_{\psi}}}
\leq 2$.
Thus, in the strong operator topology,
$\oarepn_{\psi}(\idone_{M_2}\otimes\oaresolvent(1,b_{\nu}))
\to
-\imunit
\idone_{\sphilb{H}_{\psi}}$.
\end{proof}

\begin{rem}\label{expedition0012521}
Theorem \ref{expedition0012550} shows that the KMS condition alone does not control the behavior of the
order-parameter limit.
The KMS condition concerns time evolution, whereas macroscopic classicalization of the zero-momentum mode
appears separately as a strong limit in the GNS representation.
If this strong limit is nonscalar, a center emerges; if it converges to $-\imunit\idone$ as in
\eqref{expedition0012543}, then the order parameter does not produce a center.
Proposition \ref{expedition0012553} shows that, under
$\smnumberdensity_{\txtbsn,0}(\sminvtemperature)
= 0$, the spin-boson KMS state constructed by the functional integral belongs to the latter case.
\end{rem}

In \cite{ArakiHaagKastlerTakesaki1,BratteliRobinson2}, asymptotic commutativity and weak clustering are used in arguments related to the stability of KMS states. We briefly discuss them in relation to the preceding argument.

\begin{defn}[$\lp^{1}$ asymptotic commutativity and weak clustering]\label{expedition0012581}
Consider a $\oacstar$-dynamical system $\pairbk{\oa{A},\tau}$ with spatial translations
$\fml{\tau_x}{x \in \fldreal^d}$ and a dense $\ast$-subalgebra
$\oa{A}_{\txtloc}
\subset \oa{A}$.
\begin{enumerate}
\item
If
$$\int_{\fldreal^d}\norm{\commutator{A}{\tau_x(B)}}\opdmsr{x}<\infty$$
holds for every $A,B
\in \oa{A}_{\txtloc}$, then $\pairbk{\oa{A},\tau}$ is said to satisfy $\lp^{1}$ asymptotic
commutativity on $\oa{A}_{\txtloc}$.

\item
A spatial-translation invariant state $\oastate[\psi]$ is said to satisfy weak clustering on
$\oa{A}_{\txtloc}$ if, for every $A,B
\in \oa{A}_{\txtloc}$,
$$\lim_{L\to\infty}
\frac{1}{\abs{\Lambda_L}}
\int_{\Lambda_L}
\fun{\oastate[\psi]}{A\tau_x(B)}
\opdmsr{x}
=
\fun{\oastate[\psi]}{A}\fun{\oastate[\psi]}{B}$$
holds.
Here $\Lambda_L$ is a Folner sequence of cubes in $\fldreal^d$.
Weak clustering is a Cesàro-average type condition, not a condition assuming pointwise long-distance
limits.
\end{enumerate}
\end{defn}

\begin{prop}[Asymptotic centrality of order parameters from $\lp^{1}$ commutativity]\label{expedition0012582}
Assume the $\lp^{1}$ asymptotic commutativity of Definition \ref{expedition0012581}.
Fix any $B
\in \oa{A}_{\txtloc}$, and assume that the finite-volume average
$B_L
=
\frac{1}{\abs{\Lambda_L}}
\int_{\Lambda_L}
\tau_x(B)
\opdmsr{x}$ is an element of $\oa{A}$.
Then, for every $A
\in \oa{A}_{\txtloc}$,
$\norm{\commutator{B_L}{A}}
\to 0$ holds.
In particular, if a zero-momentum order parameter on the resolvent algebra is represented by this
finite-volume average, then $\lp^{1}$ asymptotic commutativity is a sufficient condition guaranteeing the
asymptotic centrality of Proposition \ref{expedition0012524}.
\end{prop}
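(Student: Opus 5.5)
The plan is to estimate the commutator directly: pull the norm inside the Bochner integral, then use $\lp^{1}$ asymptotic commutativity together with the normalization $1/\abs{\Lambda_L}$.

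Fix $A\in\oa{A}_{\txtloc}$. Since $B_L$ is assumed to lie in $\oa{A}$ as the Bochner-type integral of the norm-continuous (or at least strongly measurable and bounded, $\norm{\tau_x(B)}=\norm{B}$) map $x\mapsto\tau_x(B)$, the commutator with $A$ commutes with the integral:
\[
\commutator{B_L}{A}
=
\frac{1}{\abs{\Lambda_L}}\int_{\Lambda_L}\commutator{\tau_x(B)}{A}\opdmsr{x}.
\]
Multiplication by the fixed element $A$ on either side is a bounded linear map, so it passes through the integral. This gives
\[
\norm{\commutator{B_L}{A}}
\leq
\frac{1}{\abs{\Lambda_L}}\int_{\Lambda_L}\norm{\commutator{A}{\tau_x(B)}}\opdmsr{x}
\leq
\frac{1}{\abs{\Lambda_L}}\int_{\fldreal^d}\norm{\commutator{A}{\tau_x(B)}}\opdmsr{x}.
\]
By $\lp^{1}$ asymptotic commutativity the last integral is a finite constant $C_{A,B}$. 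Since $\Lambda_L$ is a Følner sequence of cubes, $\abs{\Lambda_L}\to\infty$, and therefore $\norm{\commutator{B_L}{A}}\leq C_{A,B}/\abs{\Lambda_L}\to 0$.

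For the ``in particular'' clause, suppose the zero-momentum order parameter is realized by such averages. Then the above gives norm convergence of commutators on the dense $\ast$-subalgebra $\oa{A}_{\txtloc}$. To pass to all of $\oa{A}$, approximate $A\in\oa{A}$ by $A_\epsilon\in\oa{A}_{\txtloc}$ with $\norm{A-A_\epsilon}<\epsilon$. The averages satisfy the uniform bound $\norm{B_L}\leq\norm{B}$, so
\[
\norm{\commutator{B_L}{A}}\leq 2\norm{B}\epsilon+\norm{\commutator{B_L}{A_\epsilon}},
\]
and letting $L\to\infty$ and then $\epsilon\to0$ yields the asymptotic centrality statement of the same type as Proposition~\ref{expedition0012524}.

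The only delicate point is justifying that the commutator passes through the integral. This is where the assumption $B_L\in\oa{A}$, understood as a norm-convergent Bochner integral, is used. If the translation action is only strongly continuous in a representation, I would instead argue weakly by pairing with vectors and obtain the same bound, since the $\lp^{1}$ estimate is norm-based.
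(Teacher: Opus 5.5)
Your proposal is correct and follows the paper's proof. Both pull the norm inside the average to get \(\norm{\commutator{B_L}{A}}\leq \frac{1}{\abs{\Lambda_L}}\int_{\Lambda_L}\norm{\commutator{\tau_x(B)}{A}}\,dx\), and both conclude from \(\lp^{1}\) asymptotic commutativity. You are more explicit than the paper, which only says the right-hand side tends to zero and gestures at a density argument and the resolvent commutator formula: you spell out the bound \(C_{A,B}/\abs{\Lambda_L}\to 0\), and you extend to all of \(\oa{A}\) using the uniform bound \(\norm{B_L}\leq\norm{B}\).
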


\begin{proof}
The norm estimate for commutators and integrals gives
$\norm{\commutator{B_L}{A}}
\leq
\frac{1}{\abs{\Lambda_L}}
\int_{\Lambda_L}
\norm{\commutator{\tau_x(B)}{A}}
\opdmsr{x}$.
By $\lp^{1}$ asymptotic commutativity, the right-hand side converges to $0$.
By density, the same conclusion holds for the local subalgebra generated by finite products.
In the resolvent algebra, it is enough to apply the commutator formula for generator resolvents as in the
proof of Proposition \ref{expedition0012524}.
\end{proof}

A state \(\oastate[\psi]\) whose weak closure in the GNS representation is a factor is called a primary state.

\begin{prop}[Scalarization of central limits by weak clustering]\label{expedition0012583}
Consider a spatial-translation invariant state $\oastate[\psi]$ satisfying the weak clustering of
Definition \ref{expedition0012581}.
For the finite-volume average $B_L$ obtained from an arbitrarily fixed $B
\in \oa{A}_{\txtloc}$, assume that the strong limit
$\slim_{L\to\infty}
\oarepn_{\psi}(B_L)
=
Z$ exists in the GNS representation and satisfies
$Z
\in \oacenter(\oa{M}_{\psi})$.
Then
$Z \oagnsvector_{\psi}
=
\fun{\oastate[\psi]}{B}
\oagnsvector_{\psi}$ holds.
If $\oastate[\psi]$ is a primary state, then
$Z
=
\fun{\oastate[\psi]}{B}
\idone_{\sphilb{H}_{\psi}}$ holds.
\end{prop}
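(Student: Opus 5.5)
The argument runs as follows. We work in the GNS triple $\pairbk{\sphilb{H}_{\psi},\oarepn_{\psi},\oagnsvector_{\psi}}$ and first identify $Z\oagnsvector_{\psi}$ through its inner products with the dense set $\oarepn_{\psi}(\oa{A}_{\txtloc})\oagnsvector_{\psi}$. For $A\in\oa{A}_{\txtloc}$, the strong convergence $\oarepn_{\psi}(B_L)\to Z$ gives
$$\bkt{\oarepn_{\psi}(\faadj{A})\oagnsvector_{\psi}}{Z\oagnsvector_{\psi}}
=\lim_{L\to\infty}\fun{\oastate[\psi]}{A B_L}
=\lim_{L\to\infty}\frac{1}{\abs{\Lambda_L}}\int_{\Lambda_L}\fun{\oastate[\psi]}{A\tau_x(B)}\opdmsr{x}.$$
The interchange of the state with the Bochner integral defining $B_L$ is justified by continuity of $\oastate[\psi]$ and the standing assumption that $B_L\in\oa{A}$. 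Weak clustering then evaluates the right-hand side as $\fun{\oastate[\psi]}{A}\fun{\oastate[\psi]}{B}$. This equals $\bkt{\oarepn_{\psi}(\faadj{A})\oagnsvector_{\psi}}{\fun{\oastate[\psi]}{B}\oagnsvector_{\psi}}$. Since $\oa{A}_{\txtloc}$ is a dense $\ast$-subalgebra, the vectors $\oarepn_{\psi}(\faadj{A})\oagnsvector_{\psi}$ are dense in $\sphilb{H}_{\psi}$, and therefore $Z\oagnsvector_{\psi}=\fun{\oastate[\psi]}{B}\oagnsvector_{\psi}$.

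Centrality is used only in the second assertion. Assume $\oastate[\psi]$ is primary, so that $\oacenter(\oa{M}_{\psi})=\fldcmp\idone$. Since $Z\in\oacenter(\oa{M}_{\psi})$, we have $Z=c\idone$ for some $c\in\fldcmp$. Applying this to $\oagnsvector_{\psi}$ and comparing with the first part gives $c\oagnsvector_{\psi}=\fun{\oastate[\psi]}{B}\oagnsvector_{\psi}$. Because $\oagnsvector_{\psi}\neq0$, it follows that $c=\fun{\oastate[\psi]}{B}$.

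For completeness, even without primarity one can propagate the identity using only centrality: $Z\oarepn_{\psi}(A)\oagnsvector_{\psi}=\oarepn_{\psi}(A)Z\oagnsvector_{\psi}=\fun{\oastate[\psi]}{B}\oarepn_{\psi}(A)\oagnsvector_{\psi}$. By density this forces $Z$ to be scalar. However, the statement only asks for the vector identity together with the primary case, so the argument above suffices.

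The one point needing care is the first step, namely passing the limit through the Cesàro average. This is only a matter of linearity and continuity of the state together with the fact that $\oarepn_{\psi}(B_L)=\frac{1}{\abs{\Lambda_L}}\int_{\Lambda_L}\oarepn_{\psi}(\tau_x(B))\opdmsr{x}$ in the Bochner (or weak) sense. Weak continuity of $x\mapsto\tau_x(B)$ is implicit in the assumption that $B_L\in\oa{A}$ is well defined.
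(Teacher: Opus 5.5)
Your argument is correct and is essentially the paper's proof. You pair $Z\oagnsvector_{\psi}$ against the dense vectors $\oarepn_{\psi}(\faadj{A})\oagnsvector_{\psi}$ with $A\in\oa{A}_{\txtloc}$, evaluate $\lim_{L}\fun{\oastate[\psi]}{AB_L}$ by weak clustering, conclude by density, and in the primary case use $\oacenter(\oa{M}_{\psi})=\fldcmp\idone_{\sphilb{H}_{\psi}}$. Your extra remark is also right and slightly sharpens the statement: $Z$ commutes with every $\oarepn_{\psi}(A)$ and $\oagnsvector_{\psi}$ is cyclic, so $Z\oagnsvector_{\psi}=\fun{\oastate[\psi]}{B}\oagnsvector_{\psi}$ already forces $Z=\fun{\oastate[\psi]}{B}\idone_{\sphilb{H}_{\psi}}$, and primarity is not actually needed for the second assertion.
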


\begin{proof}
For every $A
\in \oa{A}_{\txtloc}$, weak clustering gives
$$\bkt{\oarepn_{\psi}(A^*)\oagnsvector_{\psi}}
{Z\oagnsvector_{\psi}}
=
\lim_{L\to\infty}\fun{\oastate[\psi]}{A B_L}
=
\fun{\oastate[\psi]}{A}
\fun{\oastate[\psi]}{B}
=
\bkt{\oarepn_{\psi}(A^*)\oagnsvector_{\psi}}
{\fun{\oastate[\psi]}{B} \oagnsvector_{\psi}}.$$
By density of the dense algebra $\oa{A}_{\txtloc}$ and cyclicity of the GNS vector,
$Z
\oagnsvector_{\psi}
=
\fun{\oastate[\psi]}{B}
\oagnsvector_{\psi}$.
If the state is primary, then
$\oacenter(\oa{M}_{\psi})
=
\fldcmp\idone_{\sphilb{H}_{\psi}}$, and hence the central operator $Z$ equals
$\fun{\oastate[\psi]}{B}
\idone_{\sphilb{H}_{\psi}}$.
\end{proof}

\begin{thm}[Araki--Haag--Kastler--Takesaki type criterion for the impossibility of BEC]\label{expedition0012584}
Consider the KMS state $\oastate[\psi_{\txtspinboson,\sminvtemperature}]$ constructed in Proposition
\ref{expedition0012485}.
Assume that the physical test-function space
$\sphilb{D}_{\txtbsn,\txtphys,\sminvtemperature}$ separates the zero mode in the sense of Definition
\ref{expedition0012134}.
Assume further the following.
\begin{enumerate}
\item
The $\lp^{1}$ asymptotic commutativity of Definition \ref{expedition0012581} holds on a suitable local
subalgebra of the resolvent algebra.

\item
On the same local subalgebra, $\oastate[\psi_{\txtspinboson,\sminvtemperature}]$ satisfies the weak
clustering of Definition \ref{expedition0012581}.

\item
The state $\oastate[\psi_{\txtspinboson,\sminvtemperature}]$ is primary.

\item
For the two-point function of the centered field in Theorem \ref{expedition0012136}, for every $f,g
\in \sphilb{D}_{\txtbsn,\txtphys,\sminvtemperature}$, the Cesàro average obtained from weak clustering is
$0$, and its pointwise long-distance limit has the same value as in Proposition \ref{expedition0012500},
namely $\onehalf\fun{\opform{q}_{\txtbsn,0,\sminvtemperature}}{f,g}$.
\end{enumerate}
Then $\opform{q}_{\txtbsn,0,\sminvtemperature}
=
0$ holds.
In particular,
$\smnumberdensity_{\txtbsn,0}(\sminvtemperature)
=
0$, and BEC is impossible in the sense of Theorem \ref{expedition0012132}.
\end{thm}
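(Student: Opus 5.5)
The plan is to use assumption (4) directly and treat (1)--(3) as the structural justification that the Cesàro average is the right object to compare with the pointwise limit. The pointwise long-distance limit of the centered two-point function is $\onehalf\fun{\opform{q}_{\txtbsn,0,\sminvtemperature}}{f,g}$ by Proposition \ref{expedition0012500}. Whenever a function of $x$ has a pointwise limit as $\abs{x}\to\infty$, its Cesàro averages over the Følner cubes $\Lambda_L$ converge to the same limit. So the Cesàro average must equal $\onehalf\fun{\opform{q}_{\txtbsn,0,\sminvtemperature}}{f,g}$. By assumption (4) that average is $0$, hence $\fun{\opform{q}_{\txtbsn,0,\sminvtemperature}}{f,g}=0$ for all $f,g\in\sphilb{D}_{\txtbsn,\txtphys,\sminvtemperature}$.

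To make this honest at the level of the centered field, the first step is to check that centering does not affect the long-distance limit. We have $\opfocksegal_{\txtphys,\sminvtemperature}(h)=\opfocksegal(h)+\ell_{\txtspin,\sminvtemperature}(h)$. The cross terms in the two-point function of the centered field are therefore $\ell(f)\psi(\opfocksegal(\tau_xg))$, $\ell(\tau_xg)\psi(\opfocksegal(f))$ and $\ell(f)\ell(\tau_xg)$. Since $\ell(\tau_x g)$ is an average of the spin-path kernel with $\tau_xg$, it tends to $0$ by the Riemann--Lebesgue argument in the proof of Proposition \ref{expedition0012463}. Thus the centered and uncentered two-point functions have the same pointwise limit, which is consistent with the value quoted in (4). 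The Cesàro step is elementary: given $\varepsilon$, choose $R$ with $\abs{F(x)-c}<\varepsilon$ for $\abs{x}>R$. Then $\abs{\Lambda_L}^{-1}\int_{\Lambda_L}\abs{F-c}\opdmsr{x}\le \varepsilon+2\sup\abs{F}\,\abs{B_R}/\abs{\Lambda_L}$. Boundedness of $F$ follows from Proposition \ref{expedition0012103}-type bounds, or more simply from the Cauchy--Schwarz bound on the Gaussian covariance together with the uniform bound on $\xi_k$ from Proposition \ref{expedition0012463}.

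The role of (1)--(3) is interpretive. By Proposition \ref{expedition0012582}, a zero-momentum order parameter is asymptotically central. By Proposition \ref{expedition0012583} and primarity, any strong limit is scalar and equal to the product of expectations. This is what identifies the Cesàro average with the disconnected (zero) value stated in (4). I would remark on this but not use it beyond (4).

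Finally, I would apply the conclusion with $f=g=f_*$, where $f_*$ is the separating vector from Definition \ref{expedition0012134}. This gives $2(2\pi)^d\smnumberdensity_{\txtbsn,0}(\sminvtemperature)\abs{\faftr{f_*}(0)}^2=0$, hence $\smnumberdensity_{\txtbsn,0}(\sminvtemperature)=0$. Theorem \ref{expedition0012132} then yields the impossibility of BEC. The main obstacle is the uniform boundedness in $x$ needed for the Cesàro averaging. It must be checked that the spin-path factor and its second derivative in $a$ stay bounded uniformly in $x$. I would obtain this from the explicit differentiation of \eqref{expedition0012486}, which produces only $\opform{q}$-entries and second moments of the bounded variables $\xi_k$.
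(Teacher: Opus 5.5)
Your proposal is correct and follows essentially the same route as the paper. Like the paper, you compare the pointwise limit $\frac{1}{2}\mathsf{q}_{\mathrm{b},0,\beta}(f,g)$ from Proposition~\ref{expedition0012500} with the vanishing Ces\`aro average in assumption (4) to get $\mathsf{q}_{\mathrm{b},0,\beta}=0$, then use the separating vector $f_*$ to conclude $\rho_{\mathrm{b},0}(\beta)=0$, with (1)--(3) playing only a consistency role. Your extra checks fill in details the paper leaves implicit: that centering does not change the long-distance limit, and that the two-point function is bounded uniformly in $x$ so the Ces\`aro step is justified.
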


\begin{proof}
Take arbitrary $f,g
\in \sphilb{D}_{\txtbsn,\txtphys,\sminvtemperature}$.
By assumption (4), the Cesàro-averaged long-distance correlation of the centered field in Theorem
\ref{expedition0012136} is $0$.
On the other hand, the pointwise two-point off-diagonal long-range order in Proposition
\ref{expedition0012500} detects the same zero-mode form also for the centered field, and its limiting
value is $\onehalf\fun{\opform{q}_{\txtbsn,0,\sminvtemperature}}{f,g}$.
When the pointwise limit exists, the Cesàro limit has the same value, and hence
$\fun{\opform{q}_{\txtbsn,0,\sminvtemperature}}{f,g}
=
0$ follows.
Since $f,g$ are arbitrary, $\opform{q}_{\txtbsn,0,\sminvtemperature}
=
0$.
The zero-mode separating property and Theorem \ref{expedition0012132} give
$\smnumberdensity_{\txtbsn,0}(\sminvtemperature)
=
0$ and the impossibility of BEC.

Assumption (1) gives asymptotic centrality of the order parameter by Proposition
\ref{expedition0012582}.
Assumptions (2) and (3) scalarize the central limit in the primary representation by Proposition
\ref{expedition0012583}.
This is consistent with the zero-mode vanishing criterion above.
\end{proof}

\begin{rem}[Relation to Theorem \ref{expedition0012132}]\label{expedition0012585}
From the viewpoint of stability, Bratteli--Robinson II \cite[Section 5.4.2]{BratteliRobinson2} and the
Araki--Haag--Kastler--Takesaki type theorem \cite{ArakiHaagKastlerTakesaki1} allow
$\lp^{1}$ asymptotic commutativity and weak clustering to be regarded as sufficient conditions for the
absence of off-diagonal long-range order.
In particular, Theorem \ref{expedition0012584} is not an alternative proof of Theorem
\ref{expedition0012132}: the off-diagonal long-range order in Proposition \ref{expedition0012500} is a
model-specific calculation obtained from the functional integral representation.
The impossibility of BEC is not obtained directly from the operator-algebraic conditions alone.
Rather, combining the operator-algebraic conditions with Proposition \ref{expedition0012500} is regarded
as verifying the hypotheses of Theorem \ref{expedition0012132}.

Conversely, in a phase satisfying $\opform{q}_{\txtbsn,0,\sminvtemperature}
\neq 0$, one may have failure of weak clustering, non-primary behavior, a nonscalar central limit of the
order parameter, or survival of BEC directions as defined in Subsection \ref{expedition0012586}.
This possibility is consistent with Proposition \ref{expedition0012544}, Theorem \ref{expedition0012550},
and Theorem \ref{expedition0012132}.
\end{rem}

\subsection{BEC Directions and Zero-Mode Vanishing}\label{expedition0012586}

Even after the infrared directions are removed by quotienting, there may remain directions inside the physical one-particle space \(\sphilb{D}_{\txtbsn,\txtphys,\sminvtemperature}\) on which the zero mode \(\opform{q}_{\txtbsn,0,\sminvtemperature}\) has a positive value. These are the BEC directions. However, by Proposition \ref{expedition0012126}, the GNS representation of the KMS state on \(\oa{A}_{\txtspinboson,\txtphys,\sminvtemperature}\) is faithful, so for a fixed \(f
\in
\sphilb{D}_{\txtbsn,\txtphys,\sminvtemperature}\) the resolvent \(\oaresolvent(\lambda,f)\) does not immediately fall into the kernel by itself. What is meaningful for BEC is the decay of expectations when the scale is enlarged along such a direction.

The free-Bose-gas type \(c\)-number substitution and the van-Hove-model type linear shift in the preceding section are diagnostics measuring to which deterministic value \(\mathsf{Y}_{\sminvtemperature,f}\) degenerates. The BEC ideal of the resolvent algebra is not an ideal used to represent that deterministic shift as a quotient. It records only the directions inside the physical space on which the zero-mode quasi-bilinear form has a positive value. If the zero-mode vanishing condition is imposed, then Corollary \ref{expedition0012133} and Theorem \ref{expedition0012132} give \(\fun{\opform{q}_{\txtbsn,0,\sminvtemperature}}{f}
=
0\), and hence the BEC ideal becomes trivial. This trivialization is a condition separate from the fluctuations of \(\mathsf{Y}_{\sminvtemperature,f}\) and from the presence or absence of deterministic shifts.

\begin{defn}
Define the BEC directions inside the physical one-particle space by
$$\sphilb{X}_{\txtbsn,\txtbec,\sminvtemperature}
=
\set{f \in \sphilb{D}_{\txtbsn,\txtphys,\sminvtemperature}}
{\fun{\opform{q}_{\txtbsn,0,\sminvtemperature}}{f}>0}$$
and define the BEC ideal of the Bose algebra of physical quantities
$\oaresolventalgebra(\sphilb{D}_{\txtbsn,\txtphys,\sminvtemperature},\sigma)$ by
$$\oaideal{J}_{\txtbsn,\txtbec}
=
\clos{
\opideal
\set{\oaresolvent(\lambda,f)}
{\lambda \in \fldreal \setminus \setone{0},\,
f \in \sphilb{X}_{\txtbsn,\txtbec,\sminvtemperature}}}
\subset
\oaresolventalgebra(\sphilb{D}_{\txtbsn,\txtphys,\sminvtemperature},\sigma).$$
For the algebra of all physical quantities, set
$$\oaideal{J}_{\txtspinboson,\txtbec}
=
M_2 \otimes \oaideal{J}_{\txtbsn,\txtbec}
\subset
\oa{A}_{\txtspinboson,\txtphys,\sminvtemperature}.$$
\end{defn}

\begin{prop}\label{expedition0012129}
For every $f
\in
\sphilb{X}_{\txtbsn,\txtbec,\sminvtemperature}$ and
$\lambda \in \fldreal \setminus \setone{0}$,
$\lim_{\abs{t} \to \infty}
\fun{\oastate[\psi_{\txtspinboson,\sminvtemperature}^{\mathrm{res}}]}
{\oaresolvent(\lambda,tf)}
=
0$ holds.
\end{prop}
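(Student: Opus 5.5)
The plan is to reduce the claim to Corollary \ref{expedition0012104}. That corollary says the resolvent expectation vanishes whenever $\opform{q}_{\txtbsn,\txtbec,\sminvtemperature}(f_n)\to\infty$. The state $\oastate[\psi_{\txtspinboson,\sminvtemperature}^{\mathrm{res}}]$ is by definition the restriction of the functional-integral state, so its resolvent expectations are exactly those of Lemma \ref{expedition0012102} and Proposition \ref{expedition0012103}. It therefore suffices to take any sequence $t_n$ with $\abs{t_n}\to\infty$ and check the hypothesis of the corollary for $f_n=t_nf$.

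For this I use homogeneity. Since $\opform{q}_{\txtbsn,0,\sminvtemperature}$ and $\opform{q}_{\txtbsn,\txtnonzero,\sminvtemperature}$ are quadratic forms,
$$\opform{q}_{\txtbsn,\txtbec,\sminvtemperature}(t f)=t^{2}\bigl(\opform{q}_{\txtbsn,0,\sminvtemperature}(f)+\opform{q}_{\txtbsn,\txtnonzero,\sminvtemperature}(f)\bigr)\geq t^{2}\,\opform{q}_{\txtbsn,0,\sminvtemperature}(f).$$
The inequality uses the nonnegativity of the nonzero-mode form. Because $f\in\sphilb{X}_{\txtbsn,\txtbec,\sminvtemperature}$ means $\opform{q}_{\txtbsn,0,\sminvtemperature}(f)>0$, the right-hand side diverges as $\abs{t}\to\infty$. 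Also, $tf\in\sphilb{D}_{\txtbsn,\txtphys,\sminvtemperature}$ because this is a linear subspace, so Corollary \ref{expedition0012104} applies along every such sequence. Since the sequence is arbitrary, the limit over $\abs{t}\to\infty$ is $0$.

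For robustness I would also write the argument out explicitly rather than rely only on the corollary. By \eqref{expedition0012452} and Proposition \ref{expedition0012118},
$$\oastate[\psi^{\mathrm{res}}_{\txtspinboson,\sminvtemperature}](\oaresolvent(\lambda,tf))=-\imunit\int_0^{(\operatorname{sgn}\lambda)\infty}\napiernum^{-\lambda s}\napiernum^{-\frac{s^2t^2}{4}\opform{q}_{\txtbsn,\txtbec,\sminvtemperature}(f)}\,\prbexp\bigl[\napiernum^{-\imunit st\mathsf{Y}_{\sminvtemperature,f}}\bigr]\opdmsr{s}.$$
Here I use $\mathsf{Y}_{\sminvtemperature,sf}=s\mathsf{Y}_{\sminvtemperature,f}$, which holds because $\mathsf{Y}$ is linear in $f$. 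The spin expectation has modulus at most $1$. For each fixed $s\neq0$ the Gaussian factor tends to $0$, and the integrand is dominated by $\napiernum^{-\abs{\lambda}\abs{s}}$, which is integrable. Dominated convergence then gives the limit $0$.

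A cruder alternative is the explicit bound $\abs{\cdot}\leq\int_0^\infty\napiernum^{-\abs{\lambda}s-cs^2t^2}\opdmsr{s}\leq \sqrt{\pi}/(2\abs{t}\sqrt{c})$ with $c=\opform{q}_{\txtbsn,0,\sminvtemperature}(f)/4>0$. This gives an $O(\abs{t}^{-1})$ rate and avoids dominated convergence altogether.

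The only step needing care is the justification that the Laplace representation \eqref{expedition0012452} applies to $\oastate[\psi^{\mathrm{res}}_{\txtspinboson,\sminvtemperature}]$. This rests on the regularity of this state from Proposition \ref{expedition0012126}, and I will simply cite it. Everything else is routine.
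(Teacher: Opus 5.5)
Your argument is correct and is essentially the paper's proof. The paper also uses $\opform{q}_{\txtbsn,\txtbec,\sminvtemperature}(tf)=t^2\opform{q}_{\txtbsn,\txtbec,\sminvtemperature}(f)\geq t^2\opform{q}_{\txtbsn,0,\sminvtemperature}(f)\to\infty$ and then applies Corollary \ref{expedition0012104} along sequences $t_n f$ with $\abs{t_n}\to\infty$; your explicit dominated-convergence version and the bound $\sqrt{\pi}/(2\abs{t}\sqrt{c})$ are correct but unnecessary extras, the latter giving an $O(\abs{t}^{-1})$ rate.
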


\begin{proof}
If $f
\in \sphilb{X}_{\txtbsn,\txtbec,\sminvtemperature}$, then
$\fun{\opform{q}_{\txtbsn,0,\sminvtemperature}}{f}
> 0$ holds.
Therefore, for every $t
\in \fldreal$,
$$\begin{aligned}
\fun{\opform{q}_{\txtbsn,\txtbec,\sminvtemperature}}{tf}
=
t^2 \fun{\opform{q}_{\txtbsn,\txtbec,\sminvtemperature}}{f}
\geq
t^2 \fun{\opform{q}_{\txtbsn,0,\sminvtemperature}}{f}
\to
\infty
\quad
(\abs{t}\to\infty)
\end{aligned}$$
holds.
Applying Proposition \ref{expedition0012104} with $f_n
= t_n f$ and $\abs{t_n}
\to \infty$ gives the desired conclusion.
\end{proof}

\begin{prop}\label{expedition0012127}
Assume that the zero-mode vanishing condition of Definition \ref{expedition0012100} holds on
$\sphilb{D}_{\txtbsn,\txtphys,\sminvtemperature}$.
Then
$\sphilb{X}_{\txtbsn,\txtbec,\sminvtemperature}
=
\emptyset$ holds.
In particular, for the ideals,
$\oaideal{J}_{\txtbsn,\txtbec}
=
\setone{0}$ and
$\oaideal{J}_{\txtspinboson,\txtbec}
=
\setone{0}$ hold.
\end{prop}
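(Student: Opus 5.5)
The plan is to argue directly from the definitions; no analytic estimate is needed.

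First, I will show that $\sphilb{X}_{\txtbsn,\txtbec,\sminvtemperature}$ is empty. By Definition \ref{expedition0012100} with $\sphilb{E}=\sphilb{D}_{\txtbsn,\txtphys,\sminvtemperature}$, every $f\in\sphilb{D}_{\txtbsn,\txtphys,\sminvtemperature}$ satisfies $\fun{\opform{q}_{\txtbsn,0,\sminvtemperature}}{f}=0$. The set $\sphilb{X}_{\txtbsn,\txtbec,\sminvtemperature}$ consists of those $f$ in the same space with $\fun{\opform{q}_{\txtbsn,0,\sminvtemperature}}{f}>0$. Such $f$ therefore cannot exist, and $\sphilb{X}_{\txtbsn,\txtbec,\sminvtemperature}=\emptyset$. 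Equivalently, one can cite Corollary \ref{expedition0012133} or condition (1) of Theorem \ref{expedition0012132}. No zero-mode separation hypothesis is needed for this direction of the argument.

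Second, I will handle the ideals. Since $\sphilb{X}_{\txtbsn,\txtbec,\sminvtemperature}=\emptyset$, the generating set
$$\set{\oaresolvent(\lambda,f)}{\lambda\in\fldreal\setminus\setone{0},\,f\in\sphilb{X}_{\txtbsn,\txtbec,\sminvtemperature}}$$
is empty. The two-sided ideal generated by the empty set is $\setone{0}$: it is the smallest ideal containing the set, and $\setone{0}$ is an ideal. Its norm closure is again $\setone{0}$, so $\oaideal{J}_{\txtbsn,\txtbec}=\setone{0}$. Then
$$\oaideal{J}_{\txtspinboson,\txtbec}=M_2\otimes\setone{0}=\setone{0},$$
because the algebraic tensor product with the zero space is zero and its closure in the $\oacstar$-tensor product (unique, since $M_2$ is nuclear) is still zero.

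There is no genuine obstacle. The only point needing care is conventional: one must read the generated closed ideal of an empty family as $\setone{0}$ rather than as undefined. I would state this explicitly.
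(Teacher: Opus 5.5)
Your proposal is correct and follows essentially the same route as the paper. The paper likewise shows $\sphilb{X}_{\txtbsn,\txtbec,\sminvtemperature}=\emptyset$ directly from Definition \ref{expedition0012100} by contradiction, and then says only that the vanishing of the ideals ``follows from the definition''; your remarks that the closed ideal generated by an empty family is $\setone{0}$ and that $M_2\otimes\setone{0}=\setone{0}$ spell out that last step explicitly.
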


\begin{proof}
If $f
\in \sphilb{X}_{\txtbsn,\txtbec,\sminvtemperature}$, then
$f
\in \sphilb{D}_{\txtbsn,\txtphys,\sminvtemperature}$ and
$\fun{\opform{q}_{\txtbsn,0,\sminvtemperature}}{f}>0$.
However, by Definition \ref{expedition0012100}, for every $f
\in \sphilb{D}_{\txtbsn,\txtphys,\sminvtemperature}$ one must have
$\fun{\opform{q}_{\txtbsn,0,\sminvtemperature}}{f}
= 0$.
Thus
$\sphilb{X}_{\txtbsn,\txtbec,\sminvtemperature}
=
\emptyset$.
The vanishing of the ideals follows from the definition.
\end{proof}

\begin{rem}
When the KMS state is taken as the base point, the only essential ideal in the resolvent algebra is
$\oaideal{J}_{\txtspinboson,\txtirsingular}$, which kills the infrared-divergent directions.
The KMS state on the physical quotient algebra $\oa{A}_{\txtspinboson,\txtphys,\sminvtemperature}$ is
regular and its GNS representation is faithful, so no additional state-dependent kernel appears in the
representation space.

The BEC problem is not the creation of another abstract quotient.
It is the question of whether directions satisfying
$\fun{\opform{q}_{\txtbsn,0,\sminvtemperature}}{f}
> 0$ remain inside the physical space $\sphilb{D}_{\txtbsn,\txtphys,\sminvtemperature}$.
By Proposition \ref{expedition0012129}, even if such directions remain, what is observed is not the
vanishing of fixed generators but the decay of resolvent expectations in the large-amplitude limit.
It is therefore natural to regard $\oaideal{J}
_{\txtbsn,\txtbec}$ not as an essential quotient ideal like the infrared ideal
$\oaideal{J}
_{\txtbsn,\txtirsingular}$ but as an auxiliary ideal recording the condensate directions that remain
inside the physical space.
If the zero-mode vanishing condition is assumed, such directions are empty by Proposition
\ref{expedition0012127}; consequently
$\oaideal{J}_{\txtbsn,\txtbec}
=
\setone{0}$.
Furthermore, the free-Bose-gas type $c$-number substitution and the van-Hove-model type linear shift are
auxiliary comparisons measuring the degenerate value of the spin random variable
$\mathsf{Y}_{\sminvtemperature,f}$.
Therefore a nonzero deterministic shift is not identified with the presence or absence of BEC directions.
\end{rem}

\bibliography{myref.bib}

\end{document}

%% file: mycommands.tex
\makeatletter
\providecommand*{\dashv}{\mathrel{\mathpalette\@Dashv\vDash}}
\newcommand*{\@dashv}[2]{\reflectbox{$\m@th#1#2$}}
\makeatother
\newcommand{\abscard}[1]{\abs{#1}} % 絶対値記号による濃度の記号
\newcommand{\abs}[1]{\left| #1 \right|} % 絶対値
\NewDocumentCommand{\weaknorm}{O{\dbk} m}{#1{#2}} % 弱(準)ノルム
\newcommand{\norm}[1]{\left\Vert #1 \right\Vert} % ノルム
\newcommand{\onenorm}[1]{\norm{#1}_1} % 1-ノルム
\newcommand{\bkt}[2]{\left\langle #1,\,#2 \right\rangle} % 内積
\newcommand{\rbkt}[2]{\left( #1,\,#2 \right)} % 内積に利用
\newcommand{\slim}{\mathrm{s} \hyphen \lim} % 強極限
\newcommand{\isomto}{\mathrel{\rightarrowtail\kern-1.9ex\twoheadrightarrow}} % 全単射, 同型射, ホモロジー代数・圏論でよく使う
\newcommand{\cbk}[1]{\left\{ #1 \right\}} % 中括弧
\newcommand{\dbk}[1]{\left\langle #1 \right\rangle} % <f>形式の括弧, Dirac bracketの意味でdbk
\newcommand{\pairbk}[1]{\rbk{#1}} % 適当な対のための括弧
\newcommand{\rbkleft}[1]{\left( #1 \right.} % 片側括弧
\newcommand{\rbkright}[1]{\left. #1 \right)} % T片側括弧
\newcommand{\rbk}[1]{\left( #1 \right)} % 丸括弧, 共通
\newcommand{\sqbkleft}[1]{\left[ #1 \right.} % sqbkに対する片側括弧
\newcommand{\sqbkright}[1]{\left. #1 \right]} % sqbkに対する片側括弧
\newcommand{\sqbk}[1]{\left[ #1 \right]} % いわゆる大括弧
\newcommand{\funrbk}[2]{\fun{\rbk{#1}}{#2}} % 関数の部分に丸括弧を入れてまとめる
\newcommand{\fun}[2]{#1 \rbk{#2}} % 関数テンプレート
\newcommand{\sqfun}[2]{#1 \sqbk{#2}} % 角括弧を使う関数, 確率論の期待値や完全な熱力学関数に利用
\newcommand{\closedinterval}[2]{\sqbk{#1,\,#2}} % 閉区間
\newcommand{\commutator}[2]{\sqbk{#1,\,#2}} % 交換子
\NewDocumentCommand{\imunit}{O{\mathsf{i}}}{#1} % 虚数単位, 共通
\NewDocumentCommand{\placeholder}{O{\bullet}}{#1} % 関手やホモロジーなどでハイフンや点で表す要素
\NewDocumentCommand{\trace}{O{\operatorname{Tr}}}{#1} % トレース
\newcommand{\Ker}{\operatorname{Ker}} % 核の集合
\newcommand{\bigmiddleslash}[2]{\left. #1 \middle/ #2 \right.} % 主に商集合\setquotに利用
\newcommand{\clos}[1]{\overline{#1}} % 外測度など何らかの意味で拡張・閉包のような趣があるものの, 具体的に「---閉包」のような名前がない対象に利用
\newcommand{\cmpconj}[1]{\overline{#1}} % 複素共役
\newcommand{\diracdelta}{\delta} % ディラックのデルタ関数
\newcommand{\dom}{\operatorname{dom}} % 定義域
\newcommand{\eqcsq}[1]{\sqbk{#1}} % 同値類の記号
\newcommand{\fml}[2]{\cbk{#1}_{#2}} % 族の基本系
\newcommand{\hyphen}{\hbox{-}} % 記号定義の補助
\newcommand{\idone}{1} % 1で表した恒等写像
\newcommand{\inv}[1]{#1^{-1}} % 逆像・逆写像, \invもよく使うため独立させる
\newcommand{\napiernum}{\mathsf{e}} % ネイピア数, 自然対数の底, 素電荷など他にもeが出てきて紛らわしい場合に利用
\newcommand{\onehalf}{\frac{1}{2}} % 1/2
\newcommand{\oneoverfour}{\frac{1}{4}} % 1/4
\newcommand{\opideal}{\operatorname{Ideal}} % イデアル
\newcommand{\opimag}{\operatorname{Im}} % 複素数の虚部
\newcommand{\opod}[1]{\frac{d}{d #1}} % 一階の常微分作用素
\newcommand{\oppd}[1]{\frac{\partial}{\partial #1}} % 偏微分作用素
\newcommand{\opreal}{\operatorname{Re}} % 複素数の実部
\newcommand{\setSymbolDownLeft}[2]{{\vphantom{#2}}_{#1}{#2}} % 左下に添字をつけるvphantomのエイリアス
\newcommand{\setSymbolUpLeft}[2]{{\vphantom{#2}}^{#1}{#2}} % 左上に添字をつけるvphantomのエイリアス
\DeclareMathOperator{\eqalgisom}{\cong} % 代数的な同型
\NewDocumentCommand{\agvariety}{O{\mathcal}}{#1} % 代数多様体, 特にイデアルが生成する代数多様体
\NewDocumentCommand{\cmdrel}{O{\omega}}{#1} % 分散関係
\NewDocumentCommand{\dfsp}{O{A}}{#1} % 微分形式の空間
\NewDocumentCommand{\eqcpointed}{O{\eqcsq} m}{#1{#2}_{\ast}} % 基点つき同値類
\NewDocumentCommand{\fnheaviside}{O{H}}{#1} % ヘビサイド関数
\NewDocumentCommand{\fthol}{O{\mathcal{O}}}{#1} % 正則関数の空間用
\NewDocumentCommand{\ftmero}{O{\mathcal{M}}}{#1} % 有理型関数の空間用
\NewDocumentCommand{\grcentralizer}{O{Z}}{#1} % 中心化群, expedition0005395
\NewDocumentCommand{\grmetform}{O{2} m m}{\grmet[#1] \! \rbkt{#2}{#3}} % (一般)相対性理論, 計量の二次形式
\NewDocumentCommand{\grmet}{O{2}}{\setSymbolDownLeft{#1}{g}} % (一般)相対性理論, 計量の文字
\NewDocumentCommand{\grnormalizer}{O{N}}{#1} % 正規化群, expedition0005395
\NewDocumentCommand{\gropasym}{O{A}}{#1} % 反対称化作用素, expedition0001365
\NewDocumentCommand{\gropsym}{O{S}}{#1} % 対称化作用素, expedition0001365
\NewDocumentCommand{\grpermorderedpair}{O{\mathcal{P}}}{#1} % 偶数個の要素の異なる順序対への分解の全体, expedition0009453
\NewDocumentCommand{\grsym}{O{\mathfrak{S}} m}{#1_{#2}} % 対称群, expedition0001319
\NewDocumentCommand{\gtbase}{O{\mathcal}}{#1} % 位相の基底, 開集合の基底, expedition0000259
\NewDocumentCommand{\gtfilter}{O{\mathcal}}{#1} % フィルター, expedition0000349
\NewDocumentCommand{\gtfmlclosed}{O{\mathcal}}{#1} % 閉集合系だけではなく, ある集合上の一般の閉集合族を表す, expedition0000227
\NewDocumentCommand{\gtfmlopen}{O{\mathcal}}{#1} % 「位相」の意味での開集合系だけではなく, ある集合上の一般の開集合族を表す, expedition0000227
\NewDocumentCommand{\gtopenball}{O{U}}{#1} % 開球
\NewDocumentCommand{\gtopencover}{O{\mathcal}}{#1} % 開被覆
\NewDocumentCommand{\gtopennbh}{O{\mathcal}}{#1} % 開近傍系
\NewDocumentCommand{\gtpreopencover}{O{\mathcal}}{#1} % 内部を取ると開被覆になる集合の族, expedition0009107
\NewDocumentCommand{\gtsubbase}{O{\mathcal}}{#1} % 位相の準基, expedition0005835
\NewDocumentCommand{\gtvicinity}{O{\mathcal}}{#1} % 一様構造での近縁
\NewDocumentCommand{\lasp}{O{\mathcal}}{#1} % 線型空間
\NewDocumentCommand{\latprightrbk}{O{\top} m}{\rbk{#2}^{#1}} % 転置, expedition0009877
\NewDocumentCommand{\latpright}{O{\top} m}{#2^{#1}} % 転置, expedition0009877
\NewDocumentCommand{\latp}{O{t} m}{\setSymbolUpLeft{#1}{#2}} % 転置, expedition0009877
\NewDocumentCommand{\lpdistribution}{O{\mu} m}{#2_{\ast,#1}} % 分布関数, expedition0010310
\NewDocumentCommand{\lpmollifier}{O{\rho}}{#1} % フリードリクスの軟化子
\NewDocumentCommand{\lpofpositive}{O{\chi}}{#1} % 正型関数
\NewDocumentCommand{\manliederiv}{O{L}}{#1} % リー微分
\NewDocumentCommand{\mansmoothnbh}{O{\mathcal}}{#1} % 多様体上の滑らかな開近傍の族
\NewDocumentCommand{\mblfmldsysgenerated}{O{d} m}{\fun{#1}{#2}} % 生成された$d$-系
\NewDocumentCommand{\mblfmlgenerated}{O{\sigma} m}{\fun{#1}{#2}} % 生成された加法族
\NewDocumentCommand{\oacorrfn}{O{\Gamma}}{#1} % 汎関数積分表示とも関わる相関関数, expedition0011645
\NewDocumentCommand{\oagnsvector}{O{\Omega}}{#1} % GNS表現のベクトル状態
\NewDocumentCommand{\oaideal}{O{\mathcal}}{#1} % 作用素環のイデアル：環や代数のイデアルとは分けておく
\NewDocumentCommand{\oanumberoperator}{O{A}}{#1} % レゾルベント環での個数作用素
\NewDocumentCommand{\oaposcone}{O{\mathcal{P}}}{#1} % 冨田-竹崎理論での正錐
\NewDocumentCommand{\oapressure}{O{P}}{#1} % トレースによる通常の圧力は大文字でP: expedition0009842, トレース状態による修正圧力は小文字でp: expedition0009797
\NewDocumentCommand{\oarepn}{O{\pi}}{#1} % 作用素環の表現
\NewDocumentCommand{\oaspnormalstate}{O{N}}{#1} % 作用素環上の正規状態の空間
\NewDocumentCommand{\oasppurestate}{O{P}}{#1} % 作用素環上の純粋状態の空間
\NewDocumentCommand{\oaspstate}{O{E}}{#1} % 作用素環の状態がなす空間
\NewDocumentCommand{\oastatevector}{O{\Omega}}{#1} % GNS表現に付随する状態ベクトル
\NewDocumentCommand{\oastate}{O{\omega}}{#1} % 作用素環の状態
\NewDocumentCommand{\opdilation}{O{\delta}}{#1} % スケーリング(伸張)の作用素, expedition0010830
\NewDocumentCommand{\opdmat}{O{\rho}}{#1} % 密度行列(密度作用素)
\NewDocumentCommand{\opfockan}{O{a}}{#1} % 場の量子論での消滅作用素
\NewDocumentCommand{\opfockcran}{O{a}}{#1^{\#}} % 場の量子論での生成・消滅作用素をまとめて表す記法
\NewDocumentCommand{\opfockcrdagger}{O{a}}{#1^{\dagger}} % 場の量子論での生成作用素のバージョン
\NewDocumentCommand{\opfockcr}{O{a}}{#1^{\ast}} % 場の量子論での生成作用素
\NewDocumentCommand{\opfocknumber}{O{N}}{#1} % 個数作用素
\NewDocumentCommand{\opfocksegalconj}{O{\pi}}{#1} % 場の量子論でのシーガルの場の作用素, expedition0009962
\NewDocumentCommand{\opfocksegal}{O{\phi}}{#1} % 場の量子論でのシーガルの場の作用素
\NewDocumentCommand{\opspecmeas}{O{E}}{#1} % スペクトル測度の標準的な記号
\NewDocumentCommand{\opspec}{O{} m}{\fun{\sigma_{#1}}{#2}} % 作用素のスペクトル, expedition0001551
\NewDocumentCommand{\optransl}{O{\tau}}{#1} % 平行移動の作用素, expedition0010829
\NewDocumentCommand{\physaction}{O{\mathcal{A}}}{#1} % 物理の作用
\NewDocumentCommand{\physcharge}{O{e}}{\mathrm{#1}} % 電荷, 素電荷などの物理量を表す. e以外にもqに利用する
\NewDocumentCommand{\physcplconst}{O{\mathsf{g}}}{#1} % 結合定数, coupling constant
\NewDocumentCommand{\physelectrostaticcapasity}{O{\mathrm{Cap}}}{#1} % 静電容量, expedition0010917
\NewDocumentCommand{\physenergy}{O{E}}{#1} % エネルギー
\NewDocumentCommand{\physgse}{O{E}}{#1_{0}} % 基底エネルギー, ground state energy
\NewDocumentCommand{\physham}{O{H}}{#1} % ハミルトニアンの雛形, たいていは下つき添字で水素原子, 自由電磁場などを表す
\NewDocumentCommand{\physlagdensity}{O{\mathcal{L}}}{#1} % ラグランジアン密度の雛形
\NewDocumentCommand{\physlag}{O{L}}{#1} % ラグランジアンの雛形
\NewDocumentCommand{\physliouvilean}{O{L}}{#1} % リウビリアンの雛形
\NewDocumentCommand{\physmass}{O{m}}{#1} % 物質の質量
\NewDocumentCommand{\prbbrownmv}{O{B}}{#1} % ブラウン運動
\NewDocumentCommand{\prbcharfun}{O{\chi}}{#1} % 特性関数
\NewDocumentCommand{\prbdist}{O{\mathcal{P}}}{#1} % 確率分布
\NewDocumentCommand{\prbgaussianmeasure}{O{\msrcal{N}}}{#1} % ガウス測度
\NewDocumentCommand{\prbnormaldist}{O{N}}{#1} % 正規分布
\NewDocumentCommand{\prbpoissonprocess}{O{N}}{#1} % ポアソン過程
\NewDocumentCommand{\prbprocess}{O{X}}{#1} % 確率過程
\NewDocumentCommand{\prbqspace}{O{\mathcal{Q}}}{#1} % $Q$-表示の土台の集合を表す
\NewDocumentCommand{\prbspsample}{O{\Omega}}{#1} % 確率空間に対する標本空間, expedition0010114
\NewDocumentCommand{\psh}{O{\mathfrak}}{#1} % 前層
\NewDocumentCommand{\qtquantumchannel}{O{\mathcal{L}}}{#1} % 量子通信路, expedition0005561, expedition0006844
\NewDocumentCommand{\repn}{O{\pi}}{#1} % 表現
\NewDocumentCommand{\schattencls}{O{\mathbb{K}}}{#1} % コンパクト作用素またはシャッテンクラス
\NewDocumentCommand{\setfmlcylinder}{O{\mathcal{C}}}{#1} % 筒集合族, expedition0000313, expedition0004973
\NewDocumentCommand{\setfml}{O{\mathcal}}{#1} % 集合族
\NewDocumentCommand{\setindex}{O{\mathcal} m}{#1{#2}} % 添字集合
\NewDocumentCommand{\setlattice}{O{\Gamma}}{#1} % 格子, expedition0010278
\NewDocumentCommand{\setspecial}{O{\mathcal} m}{#1{#2}} % 特別な集合を表すための記号, 一点集合・二点集合などを太字で表すといった用途
\NewDocumentCommand{\shdiffform}{O{\sheaf{A}}}{#1} % 微分形式がなす層
\NewDocumentCommand{\sheaf}{O{\mathfrak}}{#1} % 層の記号
\NewDocumentCommand{\smchemicalpotential}{O{\mu}}{#1} % 化学ポテンシャル
\NewDocumentCommand{\smenergydensity}{O{\varrho}}{#1} % エネルギー密度, expedition0011356
\NewDocumentCommand{\smfluctuationwithdmat}{O{\beta} m}{\smuncertaintywithdmat[#1]{#2}^2} % 密度行列に付随する正規状態でのゆらぎ, expedition0011059
\NewDocumentCommand{\sminvtemperature}{O{\beta}}{#1} % 逆温度
\NewDocumentCommand{\smlocaldensityoperator}{O{\rho}}{#1} % 大正準状態に対する局所密度作用素, expedition0011294
\NewDocumentCommand{\smmicrocanonicalstate}{O{\beta} m}{\physmean{#2}_{#1}} % 密度行列に付随する正規状態, expedition0011059
\NewDocumentCommand{\smnumberdensity}{O{\rho}}{#1} % 粒子数密度, expedition0011351
\NewDocumentCommand{\smooth}{O{\mathcal{E}}}{#1} % 滑らかな関数の空間, 特に微分形式関連でも利用, expedition0010649
\NewDocumentCommand{\smparticlenumber}{O{N}}{#1} % 粒子粒子数, expedition0011351
\NewDocumentCommand{\smpressure}{O{p}}{#1} % 統計力学での圧力, expedition0011341
\NewDocumentCommand{\smspecificfreeenergy}{O{\bar{f}}}{#1} % (ヘルムホルツの)比自由エネルギー
\NewDocumentCommand{\smthermalvac}{O{\beta}}{\Omega_{#1}} % 熱的真空, expedition0011060
\NewDocumentCommand{\smuncertaintywithdmat}{O{\beta} m}{\rbk{\triangle #2}_{#1}} % 密度行列に付随する正規状態での不確かさ, expedition0011059
\NewDocumentCommand{\sphilb}{O{\mathcal}}{#1} % ヒルベルト空間
\NewDocumentCommand{\splowerhalf}{O{\mathbb{H}}}{#1_{\txtneg}} % 開下半空間
\NewDocumentCommand{\spupperhalf}{O{\mathbb{H}}}{#1_{\txtnonneg}} % 開上半空間
\NewDocumentCommand{\topmetric}{O{d}}{#1} % 位相空間上の距離関数
\NewDocumentCommand{\vaoutnormal}{O{\widehat}}{#1} % 外向き外法線
\newcommand{\category}[1]{\mathop{\mathsf{#1}}} % 圏を生成する
\newcommand{\catpresheaf}[1]{\category{PSh}} % 前層がなす圏
\newcommand{\dstrapiddec}{\mathcal{S}} % 急減少関数の空間
\newcommand{\dsttempered}{\dstrapiddec^{\ast}} % 緩増加超関数の空間
\newcommand{\faadjpresharp}[1]{#1_{\#}} % 行列・作用素に対する前共役(前双対)
\newcommand{\faadj}[1]{#1^{\ast}} % 行列・作用素に対する共役・随伴. 線型代数でも内積の必要性からこれに統合. 双対空間にはdualをあてる
\newcommand{\faftr}[1]{\widehat{#1}} % フーリエ変換
\newcommand{\fldcmp}{\fld{C}} % 複素数体
\newcommand{\fldmultiplicativegroup}[1]{#1^{\times}} % 体の乗法群
\newcommand{\fldreal}{\fld{R}} % 実数体
\newcommand{\fld}[1]{\mathbb{#1}} % 実数体・複素数体・四元数体, またはこれらをまとめて表す体用のコマンド
\newcommand{\fndef}[1]{\boldsymbol{1}_{#1}} % 定義関数
\newcommand{\fnexp}[1]{\fun{\exp}{#1}} % 指数関数
\newcommand{\fnrestr}[2]{\left. #1 \right|_{#2}} % 写像の制限
\newcommand{\greuctr}[2]{\fldreal_{#1}^{#2}} % ユークリッド空間の空間並進群
\newcommand{\gtclos}[1]{\overline{#1}} % 一般位相の閉包
\newcommand{\lp}{L} % ルベーグ空間, L_{\txtloc}^p(X), L^p(X)
\newcommand{\mblfmlborel}{\mblfml{B}} % ボレル集合族
\newcommand{\mblfmlfrak}[1]{\mathfrak{#1}} % 加法族の mathfrak 表記
\newcommand{\mblfml}[1]{\mathcal{#1}} % 可測集合族に関わる記号の基礎
\newcommand{\mblfn}{M} % 可測関数
\newcommand{\monnat}{\mathbb{N}} % 0を含めた自然数全体がなすモノイド
\newcommand{\msrcal}[1]{\mathcal{#1}} % mathcalで表す測度
\newcommand{\msrlaw}{\mathrm{Law}} % 法則
\newcommand{\msrprb}{\mathrm{Pr}} % たまに役立つであろう, 立体のPrで表す確率測度
\newcommand{\msr}[1]{#1} % 測度の識別用
\newcommand{\nfoldvar}[2]{\underline{#1}_{#2}} % #2個の変数をまとめて下線つきで表す
\newcommand{\oacenter}{\mathcal{Z}} % 作用素環の中心
\newcommand{\oacommutant}[1]{#1^{\prime}} % 作用素環での可換子環
\newcommand{\oacstar}{C^{\ast}} % $C^{\ast}$-環
\newcommand{\oadoublecommutant}[1]{#1^{\prime \prime}} % 二重可換子環
\newcommand{\oaresolventalgebra}{\oa{R}} % レゾルベント環
\newcommand{\oaresolvent}{R} % レゾルベント環の要素としてのレゾルベント
\newcommand{\oastaralgebra}{\operatorname{\ast \hyphen \mathrm{alg}}} % 指定した集合が生成する$\ast$-環
\newcommand{\oaweyl}{\oa{W}} % 抽象的なワイル環
\newcommand{\oa}[1]{\mathcal{#1}} % 作用素環を表す記号用
\newcommand{\opdmsr}[1]{\mathop{d #1}} % 積分で現れる測度用のコマンド, dxのような単純な対象も含む
\newcommand{\opfocksndqntdiff}{d \Gamma} % 第一種の第二量子化作用素, 微分版
\newcommand{\opfocksndqnt}{\Gamma} % 第二種の第二量子化作用素
\newcommand{\opfockvac}{\Omega} % フォック真空
\newcommand{\opfockweyl}{W} % フォック空間上のワイル作用素
\newcommand{\opformdomain}{\mathop{Q}} % 準双線型形式の定義域, expedition0009907
\newcommand{\opform}[1]{\mathsf{#1}} % 二次形式
\newcommand{\opspecint}[1]{\mathcal{E}} % スペクトル積分, expedition0001621
\newcommand{\physmean}[1]{\dbk{#1}} % 物理, 特に統計力学によく出てくる平均の記号
\newcommand{\prbexp}{\mathbb{E}} % 期待値, 条件つき期待値にも応用
\newcommand{\qmpaulispin}{\sigma} % パウリ行列
\newcommand{\ringratint}{\mathbb{Z}} % 有理整数環
\newcommand{\seq}[2]{\if\relax\detokenize{#1}\relax \rbk{#1} \else \rbk{#1}_{#2} \fi} % 点列の基本系
\newcommand{\setisomorphism}[1]{\operatorname{Iso}} % 同型写像の集合
\newcommand{\setone}[1]{\cbk{#1}}
\newcommand{\setquot}[2]{\bigmiddleslash{#1}{#2}} % 商集合
\newcommand{\set}[2]{\left\{#1 \, \middle| \, #2\right\}}
\newcommand{\smpartitionfunc}{Z} % 分配関数
\newcommand{\spfock}{\mathcal{F}} % フォック空間
\newcommand{\splinspan}{\operatorname{span}} % 線型包
\newcommand{\txtbec}{\mathrm{BEC}} % BEC, ボース-アインシュタイン凝縮
\newcommand{\txtborel}{\mathrm{b}} % ボレル可測性に関わる対象を指す
\newcommand{\txtbsn}{\mathrm{b}} % ボソンのb
\newcommand{\txteuclid}{\mathrm{E}} % ユークリッド, ユークリッド化の添字
\newcommand{\txtfock}{\mathrm{F}} % フォック
\newcommand{\txtfr}{\mathrm{fr}} % 自由場などを表すfr
\newcommand{\txtirsingular}{\mathrm{irs}} % 赤外特異
\newcommand{\txtloc}{\mathrm{loc}} % 局所
\newcommand{\txtneg}{\mathrm{-}} % 負の部分
\newcommand{\txtnonneg}{\mathrm{+}} % 正または非負の部分
\newcommand{\txtnonzero}{\mathrm{nz}} % 非零, 非零モードを表す
\newcommand{\txtphys}{\mathrm{phys}} % 物理
\newcommand{\txtpoisson}{\mathrm{poisson}} % ポアソン
\newcommand{\txtreal}{\mathrm{real}} % 実
\newcommand{\txtspinboson}{\mathrm{SB}} % スピン-ボソン模型
\newcommand{\txtspin}{\mathrm{spin}} % スピン
\newcommand{\txtsym}{\mathrm{s}} % 対称, テンソル積に使う